\g@addto@macro\bfseries{\boldmath}
\g@addto@macro\mdseries{\unboldmath}
\g@addto@macro\normalfont{\unboldmath}
\g@addto@macro\rmfamily{\unboldmath}
\g@addto@macro\upshape{\unboldmath}
\renewcommand*{\multicitedelim}{\addcomma\space}
\newcommand{\myhref}[1]{%
  \iffieldundef{doi}
    {\iffieldundef{url}
       {#1}
       {\href{\strfield{url}}{#1}}}
    {\href{http://dx.doi.org/\strfield{doi}}{#1}}%
}
    \newlength{\temp@x}%
    \newlength{\temp@y}%
    \newlength{\temp@w}%
    \newlength{\temp@h}%
    \def\my@coords#1#2#3#4{%
      \setlength{\temp@x}{#1}%
      \setlength{\temp@y}{#2}%
      \setlength{\temp@w}{#3}%
      \setlength{\temp@h}{#4}%
      \adjustlengths{}%
      \my@pdfliteral{\strip@pt\temp@x\space\strip@pt\temp@y\space\strip@pt\temp@w\space\strip@pt\temp@h\space re}}%
      \def\my@pdfliteral#1{\pdfliteral page{#1}}
      \def\adjustlengths{}%
      \def\my@pdfliteral #1{}
      \def\adjustlengths{\setlength{\temp@h}{-\temp@h}\addtolength{\temp@y}{1in}\addtolength{\temp@x}{-1in}}%
    \def\Hy@colorlink#1{%
      \begingroup
        \ifHy@ocgcolorlinks
          \def\Hy@ocgcolor{#1}%
          \my@pdfliteral{q}%
          \my@pdfliteral{7 Tr}
        \else
          \HyColor@UseColor#1%
        \fi
    }%
    \def\Hy@endcolorlink{%
      \ifHy@ocgcolorlinks%
        \my@pdfliteral{/OC/OCPrint BDC}%
        \my@coords{0pt}{0pt}{\pdfpagewidth}{\pdfpageheight}%
        \my@pdfliteral{F}
        %
        \my@pdfliteral{EMC/OC/OCView BDC}%
        \begingroup%
          \expandafter\HyColor@UseColor\Hy@ocgcolor%
          \my@coords{0pt}{0pt}{\pdfpagewidth}{\pdfpageheight}%
          \my@pdfliteral{F}
        \endgroup%
        \my@pdfliteral{EMC}%
        \my@pdfliteral{0 Tr}
        \my@pdfliteral{Q}%
      \fi
      \endgroup
    }%
\declaretheorem[numberwithin=section]{theorem}
\declaretheorem[numberlike=theorem]{lemma}
\declaretheorem[numberlike=theorem]{corollary}
\declaretheorem[numberlike=theorem]{definition}
\declaretheorem[numberlike=theorem]{claim}
\declaretheorem[numberlike=theorem]{fact}
\declaretheorem[numberlike=theorem]{invariant}
\declaretheorem[numberlike=theorem]{remark}
\DeclareMathOperator*{\argmax}{arg\,max}
\newcommand{\polylog}{\text{polylog}}
\newcommand{\poly}{\text{poly}}
\declaretheorem[numberwithin=section,refname={Theorem,Theorems},Refname={Theorem,Theorems},name={Theorem}]{thm}
\declaretheorem[style=definition,numberlike=thm,refname={Definition,Definitions},Refname={Definition,Definitions},name={Definition}]{defn}
\global\long\def\Otil{\tilde{O}}
\global\long\def\Ohat{\widehat{O}}
\global\long\def\poly{\mathrm{poly}}
\global\long\def\val{\mathrm{val}}
\global\long\def\vol{\mathrm{vol}}
\global\long\def\dist{\mathbf{dist}}
\global\long\def\congest{\mathsf{cong}}
\global\long\def\len{\mathsf{len}}
\global\long\def\davg{d_{avg}}
\global\long\def\pset{\mathcal{P}}
\global\long\def\Shat{\hat{S}}
\newcommand{\Omegahat}{\widehat{\Omega}}
\newcommand{\Omegatil}{\tilde{\Omega}}
\newcommand{\rmatching}{\textrm{{\sc Robust-Matching}}}
\newcommand{\rwitness}{\textrm{{\sc Robust-Witness}}}
\newcommand{\certifywitness}{\textrm{{\sc Certify-Witness}}}
\newcommand{\pathtowitness}{\textrm{{\sc Forest-From-Witness}}}
\newcommand{\shortoracle}{\textrm{{\sc Path-Inside-Expander}}}
\newcommand{\alphaex}{\alpha_{\textrm{ex}}}
\newcommand{\eps}{\epsilon}
\newcommand{\ignore}[1]{}
\newcommand{\gstar}{G^*}
\newcommand{\vstar}{V^*}
\newcommand{\fout}{\mathcal{F}_{out}}
\newcommand{\fin}{\mathcal{F}_{in}}
\renewcommand{\deg}[0]{\mathbf{deg}}
\def\@makechapterhead#1{%
  \vspace*{50\p@}%
  {\parindent \z@ \raggedright \normalfont
    \ifnum \c@secnumdepth >\m@ne
        \Large\bfseries \@chapapp\space \thechapter
        \par\nobreak
        \vskip 20\p@
    \fi
    \interlinepenalty\@M
    \huge \bfseries #1\par\nobreak
    \vskip 40\p@
  }}
\title{Near-Optimal Algorithms for Reachability, Strongly-Connected Components and Shortest Paths in Partially Dynamic Digraphs}
\author{
	Maximilian Probst Gutenberg	
}
\date{}
\begin{document}

\begin{titlepage}
   \begin{center}
       \vspace*{1cm}

       {\Huge \textbf{Near-Optimal Algorithms for Reachability, Strongly-Connected Components and Shortest Paths in Partially Dynamic Digraphs}}

        \vspace{0.5cm}

       \textbf{by}
       
       \vspace{0.5cm}

       \Large{\textbf{Maximilian Probst Gutenberg}}
       
       \vspace{0.5cm}
        \textbf{supervised by }\\
        \textbf{Christian Wulff-Nilsen and Mikkel Thorup}

       \vspace{1cm}
        This thesis has been submitted to the PhD School of The Faculty of Science,\\ \vspace{0.2cm}
         University of Copenhagen \\ \vspace{0.2cm} September, 2020
            
       \vspace{0.5cm}

   \end{center}
\end{titlepage}
\pagebreak
\pagenumbering{roman}

\begin{table}[h]
    \begin{tabular}{lp{11cm}}
         \textbf{Thesis Title:} & Near-Optimal Algorithms for Reachability, Strongly-Connected Components and Shortest Paths in Partially Dynamic Digraphs \\
         \textbf{Author:} & Maximilian Probst Gutenberg \\
         \textbf{Affiliation:} & University of Copenhagen, Faculty of Science, Department of Computer Science (DIKU) and Basic Algorithm Research Center (BARC).\\
         \textbf{Advisors:} & Christian Wulff-Nilsen (DIKU and BARC) and Mikkel Thorup (DIKU and BARC) \\
         \textbf{Thesis Committee:} & Valerie King (University of Victoria), Jakob Nordström (DIKU and Lund University), Uri Zwick (Tel Aviv University) \\
         \textbf{Date of Submission:} & September, 2020
    \end{tabular}
\end{table}

\clearpage
\pagebreak

\vspace*{3cm} 
\begin{center}
    \textit{To my wife, Johanna.}
\end{center}

\pagebreak

\pagebreak

\section*{Abstract}

In this thesis, we present new techniques to deal with fundamental algorithmic graph problems where graphs are directed and partially dynamic, i.e. undergo either a sequence of edge insertions \emph{or} deletions:
\begin{itemize}
    \item Single-Source Reachability (SSR): given a distinct source vertex $r$ in a graph, the objective is to maintain the set of vertices that $r$ can reach throughout the entire update sequence.
    \item Strongly-Connected Components (SCC): the goal is to maintain a partition of the vertex set $X_1, X_2, \dots, X_k$, such that every two vertices in the same partition set $X_i$ are on a common cycle, while no two vertices across different partition sets do.
    \item Single-Source Shortest Paths (SSSP): given a dedicated source vertex $s$, the objective is to maintain the distance from $s$ to every other vertex in the graph.
\end{itemize}

These problems have recently received an extraordinary amount of attention due to their role as subproblems in various more complex and notoriously hard graph problems, especially to compute flows, bipartite matchings and cuts.

Our techniques lead to the first near-optimal data structures for these problems in various different settings. Letting $n$ denote the number of vertices in the graph and by $m$ the maximum number of edges in any version of the graph, we obtain
\begin{itemize}
    \item  the first randomized data structure to maintain SSR and SCCs in near-optimal total update time $\tilde{O}(m)$ in a graph undergoing edge deletions.
    \item  the first randomized data structure to maintain SSSP in partially dynamic graphs in total update time $\tilde{O}(n^2)$ which is near-optimal in dense graphs.
    \item  the first deterministic data structures for SSR and SCC for graphs undergoing edge deletions, and for SSSP in partially dynamic graphs that improve upon the $O(mn)$ total update time by Even and Shiloach from 1981 that is often considered to be a fundamental barrier. 
\end{itemize}

\pagebreak

\section*{Abstrakt}

I denne afhandling præsenterer vi nye teknikker til at håndtere grundlæggende algoritmiske grafteoretiske problemer, hvor grafer er orienterede og delvist dynamiske, dvs. enten gennemgår en sekvens af kantindsættelser \emph{eller} sletninger:
\begin{itemize}
    \item Single-Source Reachability (SSR): Givet en særskilt kilde $r$ i en graf, er målet at opretholde et mængde af de knuder, $r$ kan nå gennem hele opdateringssekvensen.
    \item Strongly-Connected Components (SCC): Målet er at opretholde en partition af knudemængder $ X_1, X_2, \dots, X_k $, således at hvert par af knuder i samme partitionssæt $ X_i $ er på en fælles kreds, mens intet par af knuder i forskellige partitionsmængder har denne egenskab.
    \item Single-Source Shortest Paths (SSSP): Givet en dedikeret kilde $ s $, er målet at opretholde afstanden fra $ s $ til hvert enhver anden knude.
\end{itemize}

Disse problemer har for nylig fået ekstraordinær opmærksomhed på grund af deres rolle som delproblemer i forskellige mere komplekse og notorisk hårde grafproblemer, især til beregning af flows, bipartite matchings og cuts.

Vores teknikker fører til de første næsten-optimale datastrukturer til disse problemer i forskellige scenarier. For $n$, den antal af knuder i grafen og $m$, den maximale antal kanter i enhver version af grafen, vi præsenterer
\begin{itemize}
    \item den første randomiserede datastruktur, der opretholder SSR og SCC'er i næsten optimal total opdateringstid $ \tilde {O}(m) $ i en graf, der gennemgår kantsletninger.
    \item den første randomiserede datastruktur, der opretholder SSSP i delvist dynamiske grafer i den samlede opdateringstid $  \tilde{O}(n^2) $, hvilket er næsten optimalt i tætte grafer.
    \item de første deterministiske datastrukturer for SSR og SCC for grafer, der gennemgår kantsletninger, og for SSSP i delvist dynamiske grafer, der forbedrer den samlede opdateringstid $ O(mn)$ af Even og Shiloach fra 1981.
\end{itemize}

\pagebreak

\section*{Acknowledgments}

I would like to start my acknowledgment section with the person that clearly  had the most influence on me throughout my Ph.D. and who made it all possible in the first place: my principal advisor Christian Wulff-Nilsen. 

It was first through him that I got to know research in Theoretical Computer Science, and it was him who led me through my first projects as a master student which even culminated in a Best Student Paper at ESA 2018 \cite{probst2018complexity}. 

From the beginning of my Ph.D., Christian has always encouraged me to work on hard problems, but never left me alone with them even in the face of my many initial misconceptions and mistakes. During the first year, we often worked together in long sessions and nothing impresses me more - still nowadays - than Christian's incredible determination to solve problems even after severe setbacks. And indeed, after some initial setbacks, we developed alone in the first year new algorithms for worst-case update time all-pairs shortest paths \cite{probstWulffNilsenwcAPSP}, decremental SSSP in directed graphs \cite{GutenbergW20a} and deterministic decremental SSSP in undirected graphs \cite{DetDecrementalSSSP}. All of which, were accepted to SODA 2020. 

At the end of the first year, Christian introduced me to Aaron Bernstein who visited Christian for a week at the time. I had previously tried to improve an algorithm for decremental SSR and SCC, without much luck, but we decided to work on it and try some new approaches. From there, it only took us a few days to get through and solve the problem and we spend the last days going pedantically (but with great enthusiasm) through every proof while eating as much Sushi as Christian's Funding would cover. The paper appeared later in STOC 2019 \cite{bernstein2019decremental} and even got an invitation to the SICOMP special issue. 

In January 2019, I took to Boston, where I was more than fortunate to be hosted by Virginia Vassilevska Williams at MIT, much due to the help of my advisor Mikkel Thorup who initiated the contact and to STIBOFONDEN who financed the trip. The months I spent at MIT were among the most interesting of my entire Ph.D.; I met a whole range of open and interesting people to discuss research with, to cycle around Boston, play soccer or to go out for some dinner on Cambridge Street (special thanks to Anders Aamand, Yuval Dagan and Siddhartha Jayanti for making this time so pleasant, but also to all the others I met there!). I also met Nicole Wein, and we worked hard that spring together with Virginia to get some exciting new results for the incremental SSSP problem (and even harder that summer to write up the result). The resulting paper \cite{GutenbergWW20} was later published at STOC 2020.

Before leaving the U.S., I invited myself to be hosted by Aaron (and for the record, I am his first ever visitor!), who not only agreed to be visited for almost two weeks but also got Thatchaphol Saranurak to join us. It is hard to overestimate how many great papers were started during this time. When we left, we had discussed the algorithmic ideas for an exciting new cut sparsifier, an algorithm to compute a directed expander decomposition and a nice proof sketch for the first adaptive decremental SSR, SCC and SSSP data structures that improve upon the classic ES-tree. 

It took only a few weeks after coming back to Copenhagen to find various holes in our arguments, and in the end, we spend almost the entire next year on overcoming shortcomings in our ideas, reinventing components that did not work and pushing even further. The paper \cite{detDiSSSP} accepted to FOCS 2020 was one of the results of this hard (but incredibly fun) work, as well, as the paper \cite{bernstein2020fully} which is currently under submission after merging our cut sparsifier with a technique developed by Jan van den Brand, Danupon Nanongkai, Aaron Sidford and He Sun.

Further, upon returning to Copenhagen, Aaron, Christian and I decided to work on the decremental SSSP algorithm for directed graphs presented at SODA 2020 and throughout the next months, we developed a new perspective on the previous algorithm and finally were able to push through all the way to obtain a near-optimal algorithm for the problem in dense graphs. The resulting paper was accepted recently to FOCS 2020 \cite{nearOptDenseSSSP}. 

This concludes a sketch of the history behind the papers that were published during the Ph.D. project. I am very grateful to have met all the great people that have helped to achieve these exciting results, make this an amazing experience and who have taught me a lot along the way. Much in the same way I want to acknowledge all the great people that are co-authors on manuscripts and who I have worked together with over the years: Thiago Bergamaschi, Debarati Das, Thomas Dueholm Hansen, Jacob Evald, Viktor Fredslund-Hansen, Siddhartha Jayanti, Monika Henzinger, Nikos Parotsidis, Slobodan Mitrović. 

I also want to thank Mikkel Thorup who is one of the people who really inspired me in a way that made me want to go for this Ph.D. in the first place and for his general support throughout these years. I am much indebted to Uri Zwick, Virginia and Christian for writing me reference letters that helped to land a Post Doc position at ETH Zurich with Rasmus Kyng. I would further like to thank Uri and Valerie King and Jakob Nordström, for being on the Thesis Committee. I want to thank the people at BARC for creating an environment for high quality research that has helped to meet a lot of really interesting people. I also want to thank both Danupon Nanongkai and Fabian Kuhn for inviting me to give presentations in Stockholm and Freiburg, who were both beyond hospitable.

Finally, I would like to thank all the people who helped me to start the next chapter of my career, in particular, I would like to thank Rasmus Kyng, Sushant Sachdeva, Nikos Parotsidis, Vincent Cohen-Addad and Sebastian Forster. 

\pagebreak

\tableofcontents

\pagebreak
\pagenumbering{arabic}

\chapter{Introduction}

\label{chap:intro}
\section{Problem Statement}

In this thesis, we are concerned with three algorithmic graph problems where a weighted directed graph $G=(V,E,w)$ is inputted and a property of the graph is computed and returned by an algorithm. 

\paragraph{Single-Source Reachability:} In the Single-Source Reachability (SSR) problem, additionally to the weighted digraph $G=(V,E,w)$, a dedicated source vertex $r \in V$ is inputted, and the goal is to compute the set of reachable vertices $R$ such that for every vertex $v \in R$, $r$ can \emph{reach} vertex $v$, i.e. there exists a path from $r$ to $v$ in $G$, and for every vertex $v \not\in R$, $r$ cannot reach $v$. 

\paragraph{Strongly-Connected Components:} In the Strongly-Connected Components (SCC) problem, the goal is to output a partition of the vertex set $V$ into sets $X_1, X_2, \dots, X_k$ for some $k$ such that for every pair of vertices $u \in X_i$ and $v \in X_j$, if $i=j$, then $u$ can reach $v$ and vice versa, in which case we say that $u$ and $v$ are \emph{strongly-connected}, and if $i \neq j$, $u, v$ are not strongly-connected.

\paragraph{Single-Source Shortest Paths:}  In the Single-Source Shortest Paths (SSSP) problem, an additional vertex $r \in V$, is inputted and the goal is to return the distance $\dist_G(r,v)$ from $r$ to $v$ , for each $v \in V$.

These problems are among the most fundamental problems in Computer Science due to their various applications in navigation and networks and their role as subroutines in more complex graph algorithmic problems like Maximum Flow and Bipartite Matching \cite{tarjan1983data, cormen2009introduction}. Therefore, they are typically covered in undergraduate courses \cite{cormen2009introduction} and are subject to intensive research in theory and practice. Letting $n = |V|$ and $m = |E|$, the SSSP problem was solved already in 1959 by Dijkstra's algorithm \cite{dijkstra1959note} which only requires running time $O(m \log n)$. For SSR and SCC even faster algorithms are known due to Tarjan \cite{tarjan1972depth} since 1972 that achieve running time $O(m)$.

In this thesis, we are interested in the generalization of these graph algorithmic problems to consider a graph $G=(V,E,w)$ that is subject to edge insertions/ deletions and thus there are different \emph{versions} of $G$ where two consecutive versions only differ in a single edge. Such a graph $G$ is called a \emph{dynamic} graph in contrast to a \emph{static} graph $G$ that does not undergo changes. The corresponding dynamic problem versions transform the static algorithmic problems into data structure problems. We define the data structure problems as follows.

\paragraph{Dynamic SSR, SCC and SSSP.} In the dynamic SSR problem, the goal is to maintain a data structure $\mathcal{E}_r$ that supports the operations:
\begin{itemize}
    \item $\textsc{Preprocess}(G, r)$: The data structure is initialized by inputting an initial weighted digraph $G$, and a source vertex $r \in V$.
    \item $\textsc{Insert}(u,v)$: The data structure processes that the edge $(u,v)$ is inserted into $G$.
    \item $\textsc{Delete}(u,v)$: The data structure processes that the edge $(u,v)$ is deleted from $G$.
    \item $\textsc{Query}(v)$: The data structure returns whether $r$ can reach $v$ in the current version of $G$.
\end{itemize}
In the dynamic SCC problem, a similar data structure is maintained with the difference that an initial source vertex $r$ does not have to be inputted and the query operation takes two vertices $u,v \in V$ as input and returns whether $u,v$ are strongly-connected or not.

In the dynamic SSSP problem, a similar data structure is maintained with the key difference that on query, it returns the distance from $r$ to $v$ instead of outputting whether $r$ reaches $v$.

We point out that two trivial implementations for each kind of dynamic problem exist: we can rerun the static algorithm on the current version of graph $G$ after \emph{every} update, which leads to high update times but constant query times if implemented carefully. The other extreme is to use a lazy approach where updates are not processed at all but only stored and upon a query operation, the static algorithm is invoked on the current version of $G$. This leads to constant update times but high query times. The goal of the area of \emph{dynamic graph algorithms} is to improve this trivial trade-off by reusing the information computed for a graph $G$ to compute fundamental graph properties in $G$ after the insertion/ deletion in small time.\\

In this thesis, we restrict our attention to the setting where the dynamic graph $G$ only undergoes insertions, in which case we say that $G$ is \emph{incremental}, or exclusively undergoes edge deletions, in which case we say that $G$ is \emph{decremental}. If $G$ is either \emph{incremental} or \emph{decremental}, we also say that $G$ is \emph{partially dynamic} which is in contrast to the \emph{fully dynamic} setting where a mix of insertions and deletions is allowed. We also use these graph descriptors to further specify the data structure problem that we are concerned with, for example, the incremental SSSP problem, refers to the data structure problem of dynamic SSSP where the operation $\textsc{Delete}(u,v)$ does not have to be implemented since $G$ is guaranteed to be incremental. We also point out that we often say that there is an \emph{algorithm} for the incremental SSSP problem, which refers to the algorithm that maintains the underlying data structure.

\paragraph{Adversary Model.} We distinguish between two adversary models: we say that the adversary is \emph{non-adaptive} if the sequence of updates to a graph $G$ is already determined by the adversary before the data structure is initialized (however, this update sequence is still only revealed one-by-one). In contrast, we say the adversary is \emph{adaptive} if it creates the update sequence on-the-go, i.e. determines the $(i+1)^{th}$ update to $G$ only after the data structure has processed update $i$. We point out that a data structure that is deterministic, i.e. does not make any use of randomization, works in the model of an adaptive adversary. Informally, we say that a data structure is non-adaptive/adaptive if it works in the non-adaptive/adaptive adversary model, and we also sometimes say it works \emph{against} a non-adaptive/adaptive adversary.

\section{Motivation}

Partially dynamic SSR and SCC are of great interest for their direct applications especially for detecting cycles or unreachable vertices in dynamic dependency graphs. The incremental SCC problem is further a generalization of the famous incremental Cycle Detection problem which has practical applications in pointer analysis and circuit evaluation (see \cite{Haeupler12} for a discussion of these applications). The decremental SCC problem has further applications to compute Streett Objectives in Graphs and Markov Decision Processes \cite{chatterjee2019near}, and computing Weak Bisimilarity on Markov Chains \cite{jansen_et_al:LIPIcs:2020:12820}. 

\noindent The partially dynamic SSSP has wide-ranging applications:
\begin{itemize}
    \item Partially dynamic SSSP data structures are often used as internal data structures to solve fully dynamic SSSP and fully dynamic All-Pairs Shortest Paths (APSP), see for example \cite{king1999fully, roditty2004dynamic, henzinger2016dynamic}, which in turn can be used to maintain properties of real-world graphs undergoing changes.
    \item Partially dynamic SSSP is often employed as internal data structure for more complex dynamic algorithmic problems such as maintaining the diameter in partially dynamic graphs \cite{ancona2018algorithms, DBLP:journals/corr/abs-1812-01602} or matchings in incremental bipartite graphs \cite{bernstein2018online}.
    \item Many static algorithms use partially dynamic SSSP algorithms as a subroutine. In particular, a recent line of research shows that many flow problems can be efficiently reduced to decremental SSSP, and recent progress has already led to faster algorithms for multi-commodity flow \cite{madry2010faster}, vertex-capacitated flow, and sparsest vertex-cut \cite{Chuzhoy:2019:NAD:3313276.3316320, ChuzhoyGLNPS_det_cut} and min-cost flow problems \cite{ChuzhoyS20_apsp}. These reductions are based on algorithms by Garg and Könemann \cite{garg2007faster} and \cite{Fleischer00} who refined the multiplicative update framework to flow problems and use their duals, shortest path problems, to make progress. 
\end{itemize}

Further, we also point out that the theory for Dynamic SSR, SCC and SSSP problems has led to a plethora of new theoretical concepts and general methods. For example, the quest for deterministic Decremental SSR, SCC and SSSP data structures has recently driven theory  to develop a useful notion of directed expanders \cite{detDiSSSP} and for generalizing many of the tools from the undirected setting where expander decompositions (see \cite{SaranurakW19}) has recently fuelled many exciting new algorithms for a broad variety of problems \cite{chu2020graph, saranurak2020simple, forster2020computing}. 

Another example is the development of fast and robust dynamic sparsifiers \cite{bernstein2020fully} which were developed to remedy problems in the theory of Decremental SSSP algorithms for undirected graphs and have since been used in the recent breakthrough result \cite{bipartiteMathching} for static bipartite matchings, negative-weight shortest paths and the transshipment problem. 

We note that in order to be used in applications in a black-box fashion, a data structure has to work in the adaptive adversary model. While a data structure that only works in the non-adaptive adversary model might still be of use to applications, all of the above applications use adaptive data structures even though non-adaptive data structures often have significantly faster update times.

\section{Previous Work}

In our review of previous work, we focus on the dynamic versions of the problems in general directed graphs. For readers interested in the static settings, we refer the reader to the excellent article by Tarjan \cite{tarjan1972depth} and two recent surveys on SSSP \cite{sommer2014shortest, madkour2017survey}. We further point out that in adherence to convention in the field, we state results for partially dynamic algorithms as the cumulative running time over the total update sequence (excluding queries) and point out that all data structures of interest ensure constant query time. We let $n$ denotes the number of vertices, $m$ denote the maximum number of edges in any version of the graph (and we will assume that $m \geq n$), and $W$ be the weight ratio of the graph, that is, the ratio of largest edge weight by smallest edge weight and which for simplicity we assume to be bound by $n^c$ for some large constant $c > 0$ for the rest of this section. We also use $\tilde{O}$-notation and $\Omegatil(\cdot)$ to hide logarithmic factors. Similarly, we use $\Ohat(\cdot)$ and $\Omegahat(\cdot)$ to hide $n^{o(1)}$ factors.

\paragraph{Fully Dynamic SSR, SCC and SSSP.} While the fully dynamic setting, that allows for a mix of edge insertions and deletions is clearly a much more general setting than the restricted partially dynamic setting, a series of extremely strong conditional lower bounds for this model renders polynomial improvements over the trivial update/ query trade-off achieved by recomputing from scratch after every update or during every query as unlikely \cite{roditty2004dynamic, abboud2014popular,henzinger2015unifying}.

\paragraph{The ES-tree for partially dynamic SSR and SSSP.} The first improvement over the trivial total update time of $\tilde{O}(m^2)$ for partially dynamic SSR and SSSP was given by Even and Shiloach \cite{shiloach1981line} in 1981 which has total update time $O(mnW)$ and thereby improves over the trivial approach for dense graphs of small weight ratio. The key idea behind the data structure is a simple trick that allows to maintain, from a fixed root vertex $r$, a BFS tree of depth $\delta$ with total update time $O(m\delta)$ where the above running time follows since $\delta \leq nW$. Therefore the data structure is often referred to as the \emph{ES-tree}. We point out that originally, the data structure by Even and Shiloach was devised to work only for undirected, unweighted graphs, however, Henzinger and King \cite{henzinger1995fully, King99} later realized that small modifications suffice to extend the data structure to maintain a directed BFS out-tree from a fixed source $r$ in a directed graphs and thereby made clear that it can be used to certify reachability and to maintain shortest paths. Ever since, the ES-tree has been one of the most widely used and adapted data structures in dynamic graph algorithms and we therefore give a full introduction to the algorithm in \Cref{sec:esTree}.

\paragraph{Incremental SSR and SCC.} For incremental SSR, it is rather straight-forward to improve the total update time to $\tilde{O}(m)$ total update time by using a dynamic tree data structure (see for example \cite{tarjan1983data, alstrup2005maintaining}). 

For the incremental SCC problem, two classic results exists, the first by Haeupler et al. \cite{haeupler2008faster, Haeupler12} gives total update time $\tilde{O}(\min\{mn^{2/3}, m^{3/2}\})$ while the second approach by Bender et al. \cite{bender2009new, bender2016new} achieves total update time $\tilde{O}(n^2)$. While this still marks the state-of-the-art, we point out that various new approaches \cite{cohen2013labeling, bernstein2018incremental, bhattacharya2018improved} exist for the incremental cycle detection problem where rather than maintaining the SCCs in the incremental graph $G$, the data structure only has to report when the first non-trivial SCC forms. However, even for this simpler problem, the current state-of-the-art data structures cannot match the near optimal running time bound for incremental SSR.

\paragraph{Decremental SSR and SCC.} The first non-trivial data structure for Decremental SSR was the ES-tree with total update time $O(mn)$. Roditty and Zwick further observed in \cite{roditty2008improved} that placing a Decremental SSR data structure at a root $r$ and run it on $G$ and the reverse graph $G^{(rev)}$, the data allowed to maintain the set of all vertices that are strongly-connected to $r$. Further, by using a clever random root trick, they derived a reduction from Decremental SCC to Decremental SSR at the additional cost in total update time of only $O(\log n)$, thus, obtaining total update time $\tilde{O}(mn)$ for the SCC problem. In \cite{lkacki2013improved}, {\L}{\k{a}}cki gave a different approach that decomposed the problem of Decremental SCC into up to $n$ subproblems of Decremental SSR on directed \emph{acyclic} graphs (DAGs) which can be solved in total time $O(m)$ \cite{italiano1988finding}. This constituted the first deterministic algorithm for Decremental SCC with total update time $O(mn)$. This data structure was later further extended by Georgiadis et al. \cite{georgiadis2017decremental} to handle more advanced query operations.

The total running time of $O(mn)$, often seen as a fundamental barrier for partially dynamic SSR, SCC and SSSP problems, was finally broken by Henzinger et al. \cite{henzinger2014sublinear, henzinger2015improved} to $mn^{0.9+o(1)}$. The authors developed a new algorithm for the Decremental SSR based on clever hitting set techniques. Briefly after, Chechik et al. \cite{chechik2016decremental} showed that a clever combination of the algorithms in \cite{roditty2008improved} and \cite{lkacki2013improved} can be used to improve the total update time to $\tilde{O}(m \sqrt{n})$. 

\paragraph{Partially Dynamic SSSP.} While improvements on the upper bounds given by the ES-tree was made for all the partially dynamic SSR and SCC problems, Roditty and Zwick showed in \cite{roditty2004dynamic} that on undirected, weighted graphs, $\tilde{\Omega}(mn)$ total running time is required, or otherwise substantial progress on the static APSP problem was possible. This lower bound was latter strengthened by Abboud and Vassilevska Williams in \cite{abboud2014popular} and extended to hold even using algebraic techniques although with a weaker lower bound as proven by Henzinger et al. \cite{henzinger2015unifying}.

Since handling graphs with large weights is highly desirable for applications, research focused subsequently on the relaxed problem of only reporting distance to an $(1+\epsilon)$-approximation. A simple rounding trick (probably first stated in \cite{Madry10, bernstein2009fully}) can be used to transform the ES-tree into a data structure that achieves total running time $\tilde{O}(mn)$ for $(1+\epsilon)$-approximate SSSP. While Henzinger et al. \cite{henzinger2014sublinear, henzinger2015improved} showed that their approach could be extended to give $(1+\epsilon)$-approximate SSSP in partially dynamic graphs\footnote{While the data structure is only shown to work for decremental graphs, we believe that it can be adapted to the incremental setting.} in total update time $mn^{0.9+o(1)}$, this result remained state-of-the-art until this thesis.

\section{Related Work}

We also give a brief overview of related work. While we state only the best update bounds, we give citations of the current state-of-the art articles in various settings (for example, we sometimes cite the best result for the deterministic setting but only mention the bound achieved in the randomized setting for brevity).

\paragraph{Planar Directed Graphs.} For planar graphs, decremental algorithms are known to solve Single-Source Reachability deterministically in near-linear update time \cite{italiano2017decremental} and SSSP in directed graphs in total update time $\tilde{O}(n^{4/3})$ as shown by \cite{karczmarz2018decrementai}. The incremental versions of these problems have not received attention yet. The fully-dynamic SSSP problem was recently considered by Charalampopoulos and Karczmarz \cite{charalampopoulos_et_al:LIPIcs:2020:12897} who achieve worst-case update time $\tilde{O}(n^{4/5})$. 

\paragraph{Directed Dynamic All-Pairs Shortest Paths.} Decremental APSP was first considered by Baswana et al. \cite{baswana2007improved} where a data structure with $\tilde{O}(n^3)$ total update time is presented. The authors further present an algorithm for  $(1+\epsilon)$-approximate APSP which was subsequently improved to near-optimal total update time $\tilde{O}(mn)$ by Bernstein \cite{bernstein2016maintaining}. The techniques by Bernstein were further recently adapted to the incremental setting in \cite{karczmarz2019reliable}, however, only achieving total update time $\tilde{O}(mn^{4/3})$. In the fully dynamic setting, the state-of-the art data structure achieves amortized update time $\tilde{O}(n^2)$ \cite{demetrescu2004new, demetrescu2006fully, thorup2004fully} and the best worst-case update bound is currently $\tilde{O}(n^{2+2/3})$ \cite{thorup2005worst,abraham2017fully}. Again, allowing for a $(1+\epsilon)$-approximation in the distance estimates, a recent data structure \cite{brand2019dynamic} achieves worst-case update time $\tilde{O}(n^{2.045})$ ($\tilde{O}(n
^2)$ for undirected graphs). We also note that a substantial amount of research was dedicated to obtaining good dynamic APSP algorithms for planar digraphs where Fakcharoenphol and Rao \cite{fakcharoenphol2006planar} obtained an algorithm with update and query time $\tilde{O}(n^{2/3})$. Subsequently, Abboud and Dahlgaard \cite{abboud2016popular} obtained a conditional lower bound of $\tilde{\Omega}(\sqrt{n})$, however, this gap remains unresolved even after considerable effort \cite{kaplan2012submatrix, klein2005multiple, abraham2012fully, gawrychowski2018improved}. Finally, there also exists a wide literature on the topic of sensitivity oracles \cite{demetrescu2008oracles, bernstein2009nearly, baswana2015fault, chechik20171, choudhary2016optimal,baswana2019efficient, van2019sensitive, chechik2020distance} where a data structure is computed (often with large preprocessing time) to allow for few updates that can be processed very efficiently and then allows to answer all-pairs shortest paths queries.

\paragraph{Undirected Graphs.} In undirected graphs, the problem of connectivity, i.e. whether there exists an undirected path between two query vertices $u$ and $v$, is the pendant to the SSR and SCC problems. Fully dynamic connectivity is a well-studied problem where the fastest data structures take polylogarithmic (worst-case) update time \cite{holm2001poly, Thorup00, Wulff-Nilsen:2013:FDF:2627817.2627943, KapronKM13,nanongkai2017dynamicFirst, wulff2017fully, nanongkai2017dynamic}. For partially dynamic $(1+\epsilon)$-approximate SSSP, a breakthrough result by Henzinger et al. \cite{henzinger2014decremental} achieves total update time $m^{1+o(1)}$ and a recent goal in this setting has become to obtain deterministic or at least adaptive algorithms \cite{bernstein2016deterministic,bernstein2017deterministic, bernstein2017deterministicweighted, Chuzhoy:2019:NAD:3313276.3316320}. There further exist a plethora of work on the dynamic decremental APSP problem with varying approximation guarantees \cite{bernstein2011improved, henzinger2014decremental, henzinger2016dynamic, chechik2018near, karczmarz2020simple, ChuzhoyS20_apsp} and in the fully dynamic setting \cite{henzinger1995fully, King99, demetrescu2006fully, demetrescu2004new, roditty2004dynamic, thorup2005worst,  roditty2012dynamic, abraham2013dynamic, roditty2016fully, henzinger2016dynamic, abraham2017fully, brand2019dynamic} and we also point out that algebraic techniques were developed for both fully dynamic SSSP and APSP \cite{sankowski2005subquadratic, brand2019dynamic}.

\section{Contribution}
\label{sec:contribution}
In this thesis we present significant progress on the partially dynamic SSR, SCC and SSSP problems. In fact, we present the first data structures for all three problems that are near-optimal (although the data structures for SSSP are only near-optimal in very dense graphs).

\paragraph{A Near-Linear Update Time Randomized Algorithm for Decremental SSR and SCCs.} Our first contribution is an efficient randomized near-linear update time data structure for decremental SSR and SCC. This update time should be compared to the currently best total update time of $\tilde{O}(m\sqrt{n})$ \cite{chechik2016decremental}.

\begin{restatable}{theorem}{sccMain}[see \cite{bernstein2019decremental}]
\label{thm:ContributionSCCmain}
Given a decremental, directed graph $G=(V,E)$ and a dedicated source vertex $r \in V$, then we can maintain explicitly
\begin{itemize}
    \item the strongly-connected components of $G$, and 
    \item the set $R$ of vertices that are reachable from $r$.
\end{itemize}
The data structure runs in total expected update time $\tilde{O}(m)$ and works against an adaptive adversary, however, if path-queries are allowed it only works against a non-adaptive adversary.
\end{restatable}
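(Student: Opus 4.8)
The plan is to split the task into maintaining the SCC partition and maintaining the reachable set $R$, and to reduce the latter to the former: once the SCC partition is known, $R$ is exactly the union of those SCCs reachable from $r$'s SCC in the \emph{condensation} of $G$ (contract each current SCC), and — via the reductions recalled in the previous-work section (Roditty--Zwick, \lacki) — this reduces to decremental single-source reachability in a DAG, solvable in near-linear total time with dynamic-tree/Italiano-style structures. So the crux is to maintain the SCC partition, and report it explicitly (a per-vertex component identifier, reassigned on splits), in total expected time $\Otil(m)$. Following \lacki's framework, I would keep the condensation as an explicit DAG with a topological order and reduce everything to the following core task: whenever deletions might break a current SCC $C$, \emph{detect} this and recompute the sub-SCCs of $C$. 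Recomputation is cheap — Tarjan costs $O(|E(C)|)$ — and a standard charging argument (a vertex is recomputed only when its component shrinks by a constant factor, hence $O(\log n)$ times) bounds the total recomputation cost by $\Otil(m)$. The real difficulty, responsible for the $\Omega(mn)$ barrier in prior deterministic approaches, is \emph{detecting breaks cheaply}.

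\paragraph{Detecting breaks via bounded-depth ES-trees and small separators.}
For a current SCC $C$ I would choose a center $c \in C$ and maintain two \emph{depth-truncated} ES-trees rooted at $c$, one in $G[C]$ and one in $\rev{G}[C]$, each truncated at depth $d$ for a parameter $d = \polylog n$; each pair costs $\Otil(|E(C)| \cdot d)$ over its lifetime. If both truncated trees span all of $C$, then $C$ has in- and out-BFS-trees of depth $\le d$ from $c$, so the ES-trees report \emph{immediately} when $c$ stops reaching some $v \in C$ or vice versa, i.e.\ they detect every break of $C$ at no extra cost; on such a report we run Tarjan on $G[C]$, output the sub-SCCs, and restart. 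If instead a truncated tree ever fails to span $C$, then among its $\le d$ nonempty BFS layers, pigeonhole gives a layer $L^*$ with $|L^*| \le |C|/d$, and $L^*$ is a directed separator: every path inside $C$ from the ``near'' part (within the truncation depth of $c$) to the ``far'' part must pass through $L^*$. I would then recurse — recover the SCCs of $C$ from the SCCs of $C \setminus L^*$ together with reachability between $L^*$ and the rest, maintaining the latter by a recursive instance of the same data structure in which $L^*$ is a set of \emph{terminals} whose pairwise reachability (and reachability to/from the rest) is tracked.

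\paragraph{The recursion and its accounting.}
This recursion is the heart of the argument and, I expect, the main obstacle. Each separator extraction must (i) express the reachability of the current piece in terms of the strictly simpler subpieces plus the $\le |C|/d$ terminals, so that queries can be answered through the recursive structure without re-exploration, and (ii) be cheap enough that the total edge count summed over a bounded number of recursion levels stays $\Otil(m)$. I would bound the number of levels by $O(\log n/\log\log n)$, charging each level's bounded-depth ES-trees and each Tarjan call against the factor-$\Omega(d)$ of progress it forces (a piece either shrinks geometrically or a constant fraction of its edges become internal to permanently-smaller pieces). It is exactly here that a careless bound degrades to $\Otil(m\sqrt{n})$ or worse; keeping it at $\Otil(m)$ requires a delicate potential function over the recursion tree, together with the use of randomness — entering only internally, via a random-root reduction in the style of Roditty--Zwick and/or a random hitting set controlling path lengths — to guarantee in expectation that the ``spanning'' case usually applies and that the extracted separators compose favorably.

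\paragraph{Query time and the adversary model.}
Both $\textsc{Query}(u,v)$ (SCC) and $\textsc{Query}(v)$ (SSR) are answered in $O(1)$ from the per-vertex component identifiers and an $R$-membership bit kept up to date above. For the adversary model: the only randomness is internal (centers, hitting sets, random root), while the \emph{output} at every step — the SCC partition and the set $R$ — is a deterministic function of the current graph. Hence an adversary observing only outputs and query answers learns nothing about the internal bits, the update sequence stays independent of them, and the $\Otil(m)$ expected bound holds \emph{against an adaptive adversary}. If path-queries are supported, however, the returned path depends on which sampled center/ES-tree produced it, leaking internal randomness; an adaptive adversary could then aim deletions at the revealed structures, which is why that variant is claimed only against a non-adaptive adversary.
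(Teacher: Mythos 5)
There is a genuine gap, and it lies exactly where you flag ``the main obstacle.'' Your proposal runs ordinary depth-$d$ ES-trees and measures ordinary BFS depth; but the paper's Chapter~3 is explicit that this cannot work, because once separator vertices are peeled off and pushed into a recursive structure, you cannot control the \emph{ordinary} diameter of the SCCs you are left with at higher levels — only the diameter measured in terms of how many separator vertices a path must cross. The paper's central technical invention, which your proposal does not contain, is the \emph{generalized ES-tree} (GES-tree, \Cref{lma:SimpleGES}): a data structure maintaining $S$-distances — the number of vertices in a feedback vertex set $S$ encountered along a path — in total time $O(m\delta)$, obtained by marrying the classic ES-tree with Italiano's linear-time DAG reachability structure, the latter handling the $0$-weight edges created by making only $S$-outgoing edges unit-weight. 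The graph hierarchy $\hat{G}_0,\dots,\hat{G}_{\lfloor\lg n\rfloor+1}$ is then built so that at level $i$ the set $S_i$ is always a feedback vertex set of $\hat{G}_i$, each $S_i$ halves in size (so there are $O(\log n)$ levels, not your $O(\log n/\log\log n)$), and separators are taken over $S_i$-distances, giving $|S_{i+1}|\le \frac{32\lg^2 n}{\delta}|S_i|$ with $\delta=64\lg^2 n$. Without this distance notion your pigeonhole-layer separator controls the wrong quantity and the accounting collapses — which is why you are forced to invoke an unspecified ``delicate potential function.''

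There is a second, related mismatch: your plan to recover SCCs of $C$ from the SCCs of $C\setminus L^*$ ``together with reachability between $L^*$ and the rest, maintaining the latter by a recursive instance \dots in which $L^*$ is a set of terminals'' is essentially the Chechik et al.\ design — maintain SCCs of $G\setminus S$ with short-depth ES-trees and separately handle a global separator set $S$ via Łącki's data structure — which the paper explicitly states it \emph{avoids}, because balancing that trade-off gives $\tilde{O}(m\sqrt n)$, not $\tilde{O}(m)$. The paper's hierarchy does something different: a vertex $s$ promoted to $S_{i+1}$ does not move into a side structure; its edges are simply carried up to level $i+1$, where $s$ is a single-vertex node, the GES-tree is rerun with $S_{i+1}$-distances, and the same machinery applies uniformly. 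Your remaining ingredients — the random-root argument (\Cref{lma:EStreeprob}), the $O(\log n)$-recomputations-per-vertex charging, and the observation that outputs alone leak nothing while returned paths do, so the structure is adaptive for updates but non-adaptive for path queries — do match the paper; but without the $S$-distance/GES-tree idea and the edge-lifting hierarchy, the proposal as written does not close the gap to $\tilde{O}(m)$.
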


\paragraph{Near-Optimal Partially Dynamic SSSP in Dense Digraphs.} Further, we present the first near-optimal data structure for partially dynamic SSSP in weighted directed graphs. This bound should be compared to the data structure by Henzinger et al. \cite{henzinger2014sublinear, henzinger2015improved}.

\begin{theorem}[see \cite{GutenbergW20a, nearOptDenseSSSP, GutenbergWW20}]
\label{thm:ContributionSSSPResult}
Given a partially dynamic input graph $G=(V,E,w)$, a dedicated source $r \in V$ and $\epsilon > 0$, there is a randomized algorithm that maintains a distance estimate $\widetilde{\mathbf{dist}}(r,x)$, for every $x \in V$, such that
\[
    {\mathbf{dist}}_G(r,x) \leq \widetilde{\mathbf{dist}}(r,x) \leq (1+\epsilon) {\mathbf{dist}}_G(r,x)
\]
at any stage w.h.p. The algorithm has total expected update time $\tilde{O}(n^2 \log W/\poly(\epsilon))$. Distance queries are answered in $O(1)$ time, and a corresponding path $P$ can be returned in $O(|P|)$ time. For the decremental version the data structure is non-adaptive, however, for the incremental version of the problem, our data structure is even deterministic.
\end{theorem}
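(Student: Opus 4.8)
I would route the incremental and decremental directions through a shared front end and then treat them separately.

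\emph{Shared reductions.} First apply the standard scaling reduction: it suffices, for each of the $O(\log W)$ distance scales $D\in\{1,2,4,\dots\}$, to maintain $(1+\eps)$-approximate values of $\dist_G(r,x)$ for exactly those $x$ with $\dist_G(r,x)\in[D,2D)$, returning $+\infty$ otherwise; taking the minimum over scales of the per-scale estimates is the final answer, at an $O(\log W)$ overhead. Inside a fixed scale, discard every edge of weight $>2D$, round every surviving weight up to the nearest multiple of $\rho:=\eps D/n$ and cap at $2D$. Any $r$--$x$ path relevant to the scale has at most $n$ edges, so rounding inflates its length by at most $n\rho=\eps D\le\eps\dist_G(r,x)$; hence the rounded graph is a $(1+\eps)$-emulator for the scale, and after dividing weights by $\rho$ we obtain an integer-weighted instance with all distances of interest in $[n,\,2n/\eps]$. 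On this instance the classical ES-tree of \Cref{sec:esTree} already solves the problem in $O(m\cdot n/\eps)$ total time, so the entire point of the theorem is to remove the product $m\cdot n$. The vehicle for this is a dynamically maintained auxiliary graph $H$ on $V$ with only $\Otil(n/\poly(\eps))$ edges such that (i) $\dist_H(r,x)$ is a $(1+\eps)$-approximation of $\dist_G(r,x)$ for every $x$, and (ii) these values are witnessed in $H$ by paths of only $\Otil(1/\poly(\eps))$ hops; running the \emph{monotone} ES-tree on (a bucketed) $H$ up to that shallow depth then costs $\Otil(|E(H)|/\poly(\eps))=\Otil(n^2/\poly(\eps))$ per scale, giving the claimed bound after summing over scales.

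\emph{Incremental direction (deterministic).} Since edges are only inserted, all distances are monotonically non-increasing, so each vertex's $(1+\eps)$-bucketed estimate changes only $\Otil(\log W/\eps)$ times in total; it therefore suffices to spend $\Otil(n/\poly(\eps))$ amortized work per estimate change. I would realize $H$ as a small, \emph{never-shrinking} set of ``hub'' shortcut edges maintained greedily: whenever an insertion $(u,v)$ lowers some estimate, run a bounded Dijkstra-style propagation restricted to the $\Otil(n)$ hub edges together with the new edge, touching only vertices whose estimate strictly improves, and charge the work to those improvements; periodically promote newly useful shortcuts into the hub set. Because hubs are never removed, no sampling is needed to protect the hub set against the adversary, which is exactly what makes this direction deterministic; the remaining work is a careful amortization of heap operations and lazy deletions, not a new structural idea.

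\emph{Decremental direction (randomized, non-adaptive).} Here distances only increase and the difficulty is that a single deletion can force a global recomputation. I would build $H$ from a random hitting set: sample $\Otil(n/h)$ ``centers'' so that, w.h.p., every relevant $r$--$x$ shortest path of more than $h$ hops passes through a center; then $H$ consists of the shortcut edges $(c,v)$ of weight equal to the current $h$-hop distance from $c$ to $v$ (for each center $c$ and each $v$ within $h$ hops), plus the $h$-hop shortcuts out of $r$. The $h$-hop distances from $r$ and from each center are maintained by depth-$h$ ES-trees --- run not on $G$ but recursively on sparsified copies of the same kind of structure, so that the hop budget can be taken as $h=\Otil(1/\poly(\eps))$ while the total cost of \emph{all} the center ES-trees is held to $\Otil(n^2/\poly(\eps))$. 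The hitting-set guarantee breaks under an adaptive adversary, which is why this direction is only non-adaptive.

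\textbf{Main obstacle and loose ends.} The hard part, and where essentially all the technical weight of \cite{GutenbergW20a,nearOptDenseSSSP} lies, is the simultaneous control of (a) the $\Otil(n^2/\poly(\eps))$ budget for maintaining $H$ itself --- which forces one to charge the re-expansion of center ES-trees and the creation of fresh shortcut edges against deletions and against the $\Otil(n)$ per-vertex distance-increase events --- and (b) the correctness of the monotone ES-tree running on an $H$ that both loses edges and gains shortcut edges over time. Point (b) is precisely the notorious failure mode of the monotone ES-tree on general \emph{directed} graphs, so one must argue that a stale shortcut can only ever induce an over-estimate, never an under-estimate, and that shortcuts are refreshed often enough that the estimate stays sandwiched in $[\dist_G(r,x),(1+\eps)\dist_G(r,x)]$ at every stage; balancing the refresh rate in (b) against the amortization in (a) is where I expect the real fight to be. Finally, the path-reporting guarantee follows by threading parent pointers through $H$ and recursively unfolding each shortcut into its underlying $\le h$-hop path in $G$, which visits $O(|P|)$ vertices; the incremental case drops the randomization throughout since no hitting set is used.
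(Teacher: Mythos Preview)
Your shared front end (scale reduction and weight rounding) is fine and matches the paper. But the core engines you propose for both directions are not what the paper does, and in the decremental case your plan runs into a genuine obstruction.

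\textbf{Decremental.} You propose the hitting-set/emulator route: sample $\Otil(n/h)$ centers, maintain $h$-hop ES-trees from each, recurse to drive $h$ down to $\Otil(1/\poly(\eps))$. This is the undirected playbook, and it does not transfer to directed graphs: directed $(1+\eps)$-hopsets provably require $h=\Omega(n^{1/17})$ hops (Hesse), and the best known existential upper bound is only $\Otil(\sqrt{n})$. So you cannot take $h$ polylogarithmic, and the recursion does not close; this is exactly why the prior directed work using hitting sets was stuck at $mn^{0.9+o(1)}$. The paper sidesteps hopsets entirely via an \emph{approximate topological order} (ATO): it maintains a partition $\mathcal{V}$ of $V$ into low-diameter pieces together with a near-topological numbering $\tau$, so that for every shortest path $\pi$ the total ``topological order difference'' $\mathcal{T}(\pi,\tau)=\sum_{(u,v)\in\pi}|\tau(X^u)-\tau(X^v)|$ is at most $\Otil(n)$ times a quality factor. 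Given such a $\tau$, an ES-tree can afford to scan an edge $(u,v)$ only once every $\approx |\tau(X^u)-\tau(X^v)|\cdot \eps\delta/n$ levels, and a pigeonhole over $\mathcal{T}$ bounds the accumulated additive error; this is what yields $\Otil(n^2)$ without any hop-count hypothesis. The randomness (and the non-adaptive restriction) enters not through a hitting set but through a separator procedure that grows balls to an exponentially distributed random radius, so that each edge lands in a separator with probability proportional to its weight; this is what gives the ATO its $\Otil(n/\delta)$ quality. A recursive bootstrapping then builds high-quality ATOs from lower-quality ones.

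\textbf{Incremental.} The paper does not maintain a hub/shortcut graph at all. Instead it runs a ``lazy'' ES-tree directly on $G$: for each vertex $u$ it defines the \emph{forward neighborhood} $\mathcal{FN}(u)$ as those out-neighbors whose current estimate exceeds $u$'s, assigns $u$ a \emph{heaviness level} $h(u)$ based on $|\mathcal{FN}(u)|$, and rescans $\mathcal{FN}(u)$ only when $\widetilde{\dist}(r,u)$ crosses a multiple of $2^{h(u)}$. The correctness hinges on a directed analogue of ``vertices far apart have disjoint neighborhoods'': if $\mathcal{FN}(u)\cap\mathcal{FN}(v)\neq\emptyset$ then $|\widetilde{\dist}(r,u)-\widetilde{\dist}(r,v)|\le 2^{h(u)}+2^{h(v)}$, so heavy vertices on a shortest path cluster into few groups and contribute bounded additive error. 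Everything is deterministic because there is no sampling anywhere. Your ``never-shrinking hub set with greedy promotion'' is a different object, and the amortization you sketch (charging all work to estimate improvements) does not by itself explain why a single estimate change costs only $\Otil(n)$ rather than touching all of $\mathcal{N}^{out}(u)$; the heaviness mechanism is precisely what supplies that bound.
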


We point out that in \cite{nearOptDenseSSSP}, a data structure for the decremental SSSP problem is given with total update time $\tilde{O}(mn^{2/3}\log W/\poly(\epsilon))$ which constitutes the best result on sparse digraphs. However, we only present the result on dense graphs in this thesis.

\paragraph{Deterministic Algorithms for Decremental SSR, SCC and SSSP.} Finally, we present the first \emph{deterministic} data structures for partially dynamic SSR, SCC and SSSP that improve upon the total update time of Even and Shiloach's data structure from 1981. In fact, this even is the first adaptive data structure that improves on the $O(mn)$ total running time achieved in \cite{shiloach1981line} when we want to support path-queries which are often crucial for applications. 

\begin{restatable}{theorem}{ContributionDetSCCResult}[see \cite{detDiSSSP}]
\label{thm:ContributionDetSCCResult}
Given a decremental, directed graph $G=(V,E)$, we can deterministically maintain the strongly-connected components of $G$ in total update time $mn^{2/3+o(1)}$. Further the data structure allows for constant time queries which upon inputting two vertices $u,v \in V$, outputs whether they are strongly-connected or not. 

The data structure can also, given a dedicated source vertex $r \in V$, maintain within the same update time the set of vertices reachable from $r$ and offers a constant time query which upon inputting a vertex $v \in V$, returns whether $r$ reaches $v$.
\end{restatable}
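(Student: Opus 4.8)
The plan is to reduce decremental SCC (and decremental SSR) to the task of maintaining reachability inside a \emph{directed expander} undergoing deletions, to solve that special case with shallow Even--Shiloach trees plus expander pruning, and to bootstrap from a general digraph to expanders via a \emph{deterministic} directed expander decomposition --- the tool flagged in the introduction as one of the new contributions of \cite{detDiSSSP}.

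\textbf{Reducing SCC to shallow ES-trees inside each component.} First I would recall the condensation-based reduction of decremental SCC to ``detect when a component splits'': for each current strongly connected component $C$, fix an arbitrary center $c \in C$ and maintain two Even--Shiloach trees rooted at $c$, one in $G[C]$ and one in its reverse. A vertex $v$ is strongly connected to $c$ exactly while it lies in both trees; when $v$ drops out of one of them the component breaks, and we recurse on the resulting pieces, refreshing only the stored component identifiers and, for SSR, the stored reachable/unreachable bit, so queries stay $O(1)$. Since an ES-tree of depth $d$ costs $O(md)$ over its whole deletion sequence, this is efficient only if every component we ever touch has diameter $\Otil(d)$ for small $d$, which fails for general digraphs --- and this is where expanders enter.

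\textbf{The expander core.} Suppose $H$ is strongly connected and, in the directed sense developed earlier, a $\phi$-expander --- equivalently, it admits an embedded expander \emph{witness} certifying that no directed cut of $H$ is $\phi$-sparse. Then $H$ has diameter $O(\phi^{-1}\log n)$, so the two ES-trees above may be truncated at depth $O(\phi^{-1}\log n)$, costing only $O(m\phi^{-1}\log n)$ in total. Because $H$ need not stay an expander under deletions, I would invoke a deterministic directed analogue of \emph{expander pruning}: maintain a shrinking core $W \subseteq V(H)$ with $H[W]$ still an $\Omega(\phi)$-expander and total pruned volume $\Ohat(D/\phi)$ after $D$ deletions; all of $W$ stays inside one small-diameter SCC certified by the truncated trees, while vertices pruned out of $W$ are removed from $H$ and fed recursively back into the first step. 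For general $G$ I would first run, deterministically, a directed expander decomposition: delete an edge set $E'$ of size $\Ohat(\phi m)$ so that every SCC of $G - E'$ is a $\phi$-expander; contracting those SCCs and adding $E'$ back yields a recursively smaller instance handled by a cheaper reachability structure over the (nearly DAG) condensation, while inside each expander SCC we run the core maintenance above and rebuild the decomposition on the affected pieces as cores shrink. Choosing $\phi$ around $n^{-1/3+o(1)}$ makes the ES-tree bill $\Ohat(m\phi^{-1}) = \Ohat(mn^{1/3})$; a potential argument bounding how many times a vertex can be demoted through the recursion before it is isolated contributes a second $n^{1/3}$-type factor, giving $mn^{2/3+o(1)}$ total. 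The SSR bound is the one-sided version of the same argument (keep only the out-tree from the core containing $r$ together with reachability through the contracted condensation), and constant-time queries follow from the stored root identifiers and reachability bits.

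\textbf{Main obstacle.} The crux is making every expander step \emph{deterministic}: computing and, more importantly, \emph{re-computing} the directed expander decomposition under deletions, and performing directed expander pruning, with no randomization and within the $\Ohat(m\phi^{-1})$ budget. This is precisely where the directed cut-matching / balanced-sparse-cut machinery of the preceding chapters must be plugged in, and tuning its parameters so that the pruned volume \emph{and} the recursion depth both stay inside the $n^{2/3+o(1)}$ target is the delicate part; by comparison the condensation bookkeeping and the DAG subroutines are routine.
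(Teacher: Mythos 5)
Your high-level architecture matches the paper's: decompose the digraph into directed vertex-expanders plus a small separator set, run depth-$\Ohat(\phi^{-1})$ ES-trees (in and out) inside each expander, maintain the expander core under deletions via directed expander pruning, feed the separator vertices into a \L{}\k{a}cki-style reduction (which the paper states as \Cref{thm:lacki}, costing $O(m|S|\log n)$ for a monotone separator set $S$), and set $\phi \approx n^{-1/3}$ so that $\phi^{-2} = n^{2/3}$. You also correctly flag that the SSR bound is the one-sided specialization and that the whole difficulty is making the decomposition maintenance deterministic and dynamic.

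The gap is in the step you wave at under ``re-computing the directed expander decomposition under deletions.'' The paper's central obstacle --- and its main new idea --- is that a standard static witness embedding (cut-matching game giving an unweighted $\Omegahat(1)$-expander $W$ with $O(n\log n)$ embedding paths of value $1$) is fragile: an adaptive adversary can destroy the entire embedding with only $O(n\log n)$ deletions, forcing up to $\tilde\Omega(n)$ re-embeddings before a genuinely sparse balanced cut appears, which blows the budget. The fix is \emph{congestion balancing} (stated as \Cref{thm:robust witness}, ``Robust Witness Maintenance''): put a capacity $\kappa(e)$ on each edge, embed a \emph{weighted} witness respecting both vertex congestion $\tilde O(1/\phi)$ and edge capacities, and when embedding fails double the capacities across the returned low-capacity cut. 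A potential function $\Pi(G,\kappa)$ (min-cost $\phi$-witness embedding, with $c(e)=\log(d\kappa(e))$) shows that only $\Ohat(\phi^{-1})$ doubling steps and only $\Ohat(\phi^{-1})$ re-embedding phases can occur, giving total time $\Ohat(m\phi^{-2})$ for witness maintenance --- this is where the $n^{2/3}$ factor really comes from. Your ``potential argument bounding how many times a vertex can be demoted through the recursion'' is not this argument and would not by itself bound the number of witness rebuilds; recursion depth alone does not control re-embedding frequency. So the proposal is structurally right but missing the one lemma that makes the time bound hold: without congestion balancing (or an equivalent argument bounding re-embeddings to $\Ohat(\phi^{-1})$), the scheme does not give $mn^{2/3+o(1)}$.
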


\begin{restatable}{theorem}{ContributionDetSSSPResult}[see \cite{detDiSSSP}]
\label{thm:ContributionDetSSSPResult}
Given a decremental dynamic input graph $G=(V,E,w)$, a dedicated source $r \in V$ and $\epsilon > 0$, there is a deterministic algorithm that maintains a distance estimate $\widetilde{\mathbf{dist}}(r,x)$, for every $x \in V$, such that
\[
    {\mathbf{dist}}_G(r,x) \leq \widetilde{\mathbf{dist}}(r,x) \leq (1+\epsilon) {\mathbf{dist}}_G(r,x)
\]
at any stage of the graph. The algorithm has total update time $n^{2+2/3+o(1)} \log W/\poly(\epsilon)$. Distance queries are answered in $O(1)$ time, and a corresponding path $P$ can be returned in $|P|n^{o(1)}$ time.
\end{restatable}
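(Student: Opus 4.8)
I would prove the stronger bound $mn^{2/3+o(1)}\log W/\poly(\epsilon)$ for deterministic decremental $(1+\epsilon)$-approximate SSSP; substituting $m=\Theta(n^2)$ then gives the stated dense bound. The first step is the standard reduction to a bounded, hop-limited problem: at each of $O(\log(nW))$ geometric distance scales $2^j$, round edge weights up to multiples of roughly $\epsilon 2^j/n$, so that at scale $j$ every relevant shortest path has integer weight $\Theta(n/\epsilon)$ and hence at most $h=O(n/\epsilon)$ hops; since distances only increase under deletions, each scale's structure needs to be rebuilt only $O(\log(nW))$ times, for a total overhead of $\log W/\poly(\epsilon)$. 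It therefore suffices to deterministically and decrementally maintain $(1+\epsilon)$-approximate $r$-to-$v$ distances realized by $\le h$ hops within time $mn^{2/3+o(1)}$ --- i.e.\ to beat the $O(mh)=O(mn/\epsilon)$ cost of the hop-bounded variant of Even and Shiloach's ES-tree.

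The engine for this is a \emph{directed expander hierarchy}. A $\phi$-expander has hop-diameter $O(\log n/\phi)=n^{o(1)}$ when $\phi=1/n^{o(1)}$, so running an ES-tree on such a graph up to $n^{o(1)}$ hops already captures all of its true distances, at cost $m\cdot n^{o(1)}$ --- independently of the weights. In general I would deterministically compute a directed expander decomposition --- a partition of $V$ into near-$\phi$-expanders plus a sparse set of $\tilde O(\phi m)$ crossing edges --- contract each expander to a super-vertex, and recurse, obtaining a hierarchy of $L=n^{o(1)}$ levels. The structural claim to establish is that any shortest $r$-$v$ path of $\le h$ hops decomposes, level by level, into progressively fewer segments each lying inside a single expander of that level, each such segment being $(1+\epsilon/L)$-approximated by the depth-$n^{o(1)}$ ES-tree run on that level's contracted graph, so that $L=n^{o(1)}$ levels reduce to a single top-level segment and the compounded error stays $(1+\epsilon)$. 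Since each level carries $\le m\cdot n^{o(1)}$ edges, the ES-tree work over all levels is $m\cdot n^{o(1)}$, and stitching the level estimates together yields the distance estimate.

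The crux is maintaining the hierarchy \emph{deterministically} under edge deletions. Deletions erode expansion, so after a batch of deletions a cluster may cease to be a $\phi$-expander; the remedy is deterministic \emph{expander pruning}: excise from that cluster a vertex set of volume proportional to the edges deleted so far --- leaving behind a slightly weaker expander --- and \emph{demote} the excised piece to the levels below, where it is re-decomposed from scratch. The accounting is then: each vertex is demoted at most $L=n^{o(1)}$ times, and re-decomposing and re-embedding a demoted piece of volume $v$ costs $\tilde O(v/\phi)$, so the total maintenance cost is $m\cdot n^{o(1)}/\phi$ plus a term charging global recomputation to each of the $\tilde O(\phi m)$ crossing edges deleted per level. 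The step I expect to be the main obstacle is supplying the deterministic directed expander decomposition with a pruning guarantee in the first place: the directed analogue of undirected expander decomposition/pruning was not previously available, and obtaining it requires running a cut-matching game whose matchings are produced by an explicitly maintained flow/embedding that is kept at low congestion (``congestion balancing''), so that no randomization is required. I would invoke this primitive as a black box, as it is one of the main technical contributions of the same work.

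Finally I would balance parameters and handle queries. The three competing costs --- the $m\cdot n^{o(1)}$ ES-tree term, the $m\cdot n^{o(1)}/\phi$ re-decomposition term, and the term charging work to the $\tilde O(\phi m)$ crossing edges deleted per level (the source of the honest polynomial factor, rather than $n^{o(1)}$) --- are equalized by a choice of $\phi$ that yields total update time $mn^{2/3+o(1)}$; reinstating the weight-scaling overhead gives $mn^{2/3+o(1)}\log W/\poly(\epsilon)$, i.e.\ $n^{2+2/3+o(1)}\log W/\poly(\epsilon)$ when $m=\Theta(n^2)$. A distance query reads off the stitched estimate in $O(1)$ time; a path query unwinds the hierarchy level by level, each level contributing an intra-expander subpath of $n^{o(1)}$ edges recovered from that level's ES-tree, so a path $P$ is returned in $|P|\cdot n^{o(1)}$ time. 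Aside from the directed-expander machinery, the remaining work --- the weight-scaling reduction, the structural claim bounding the number of intra-expander segments at each level of a short path, and the demotion accounting --- is careful but routine hierarchical bookkeeping over the ES-tree.
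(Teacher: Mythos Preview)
Your plan diverges substantially from the paper's proof, and it has a genuine gap.

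\textbf{The gap.} Your engine relies on the claim that ``a $\phi$-expander has hop-diameter $O(\log n/\phi)=n^{o(1)}$ \ldots\ so running an ES-tree on such a graph up to $n^{o(1)}$ hops already captures all of its true distances.'' This is false in weighted graphs. Small hop-diameter means every pair is connected by \emph{some} path of $n^{o(1)}$ edges, but the \emph{shortest weighted} path may still have $\Omega(n)$ edges; a depth-$n^{o(1)}$ (hop-bounded) ES-tree does not recover, or even approximate, weighted distances inside an expander. Your rounding step only brings the global hop bound down to $h=O(n/\epsilon)$, not to $n^{o(1)}$, so the claimed per-level cost $m\cdot n^{o(1)}$ does not follow. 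Relatedly, your ``structural claim'' that a short path decomposes into progressively fewer intra-expander segments as you go up the hierarchy is not established for directed graphs (a shortest path can re-enter the same SCC/expander many times), and your parameter balance producing exactly $n^{2/3}$ is asserted rather than derived. Finally, note that the paper does \emph{not} prove $mn^{2/3+o(1)}$ for SSSP; the bound is $n^{2+2/3+o(1)}$ because there is an $n^{8/3}$ term that dominates in sparse graphs --- so claiming to prove the stronger bound is already a red flag.

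\textbf{What the paper actually does.} The paper does not build an expander hierarchy and does not run ES-trees inside expanders. It uses a single-level \emph{Approximate Topological Order} (ATO) framework: maintain $G'=G\setminus E(\hat S)\setminus E^{\mathrm{heavy}}$ so that every SCC $X$ of $G'$ has small \emph{weighted} diameter $\le |X|\eta\nu/n$ (heavy edges are removed precisely so that expander hop-diameter translates to a weighted-diameter bound), and take $(\mathcal{V},\tau)$ to be the generalized topological order of $G'$. The expander machinery (robust witness, pruning, path-to-witness) is used only as the black-box data structure $\mathcal{A}$ that keeps each SCC's diameter small and emits sparse vertex-cuts into $\hat S$ when diameter grows; its cost is $T(m,n,\delta)=\Ohat(m\delta^2)$ with $\delta$ the diameter parameter. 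The ES-tree speedup comes from the ATO, not from expander diameter: one bounds $\mathcal{T}(\pi,(\mathcal{V},\tau))=\Ohat(n^2/\eta+n\cdot\dist/\nu)$ and uses topological-order bucketing to get an SSSP structure costing $\Ohat(n^3/\eta+n^2\eta)$ per distance scale (Theorem~5.1.4). Combining via Proposition~5.1.1 and choosing $\delta=n^{1/3}$ balances $m\delta^2$, $n^3/\delta$, and the $mn^{2/3}$ term from the deterministic SCC/topological-order maintenance, yielding $n^{2+2/3+o(1)}\log W/\poly(\epsilon)$.
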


Both results for decremental SSSP (i.e. the results from \cite{nearOptDenseSSSP} and \cite{detDiSSSP}) build heavily upon the article \cite{GutenbergW20a} which was part of the Ph.D. project but is not featured in this thesis since its results were superseded by the results presented in this thesis. Further, we point out that \Cref{thm:ContributionDetSCCResult} and \Cref{thm:ContributionDetSSSPResult} indeed apply to partially dynamic graphs (instead of only decremental graphs) when considering the results from \cite{Haeupler12} and \Cref{thm:ContributionSSSPResult}.

\section{Contribution Beyond the Thesis} 

While the thesis focuses on the problems of partially dynamic SSR, SCC and SSSP in directed graphs, various results obtained during the Ph.D. project, have contributed to the current state of the art on dynamic graph algorithms. For completeness, we list here the contributions that are not covered by this thesis:
\begin{itemize}
    \item The article \cite{probstWulffNilsenwcAPSP} presents the first fully dynamic APSP algorithm with deterministic worst-case update time beyond the $\tilde{O}(n^{2+3/4})$ time bound given in 2015 by Thorup \cite{thorup2005worst}. The article further gives a randomized data structure that is more general than the current state-of-the-art data structure with best randomized worst-case update time $\tilde{O}(n^{2+2/3})$. Finally, it answers the question for a non-trivial dynamic APSP data structure with subcubic space as the latter algorithm only requires $\tilde{O}(n^2)$ space.
    \item The article \cite{DetDecrementalSSSP} presents an improvement for the decremental SSSP problem on undirected graphs. The achieved total update time of $mn^{0.5+o(1)}$ should be compared to the results by Chechik and Bernstein \cite{bernstein2016deterministic, bernstein2017deterministic} that achieve running time $\tilde{O}(\min\{mn^{3/4}, n^2\})$. Thus, the new data structure improves over the state-of-the-art whenever the graph $G$ is sufficiently sparse.
    \item The article \cite{bernstein2020fully} presents the first non-trivial algorithm to maintain graph sparsifiers against an adaptive adversary. In particular, on a fully dynamic graph, the data structure can maintain a $O(\polylog n)$-approximate spanner, cut-sparsifier or spectral-sparsifier with polylogarithmic update time.
    \item The article \cite{probst2018complexity} investigates a generalization of the classic Thorup-Zwick Distance Oracles \cite{thorup2005approximate, chechik2015approximate} where a static data structure is computed on a graph where each vertex is given a color (possibly many vertices receive the same color) and then has to answer queries given a vertex $v$ and a color $c$ for the (approximately) nearest $c$-colored vertex from $v$. The article demonstrates that the latter problem is strictly harder by establishing hardness in the cell-probe model, and gives an oracle with query times that are optimal up to a constant factor based on previous techniques from \cite{chechik2012improved, wulff2013approximate}.
    \item The article \cite{GutenbergWW20} of which excerpts are used in this thesis to derive \Cref{thm:ContributionSSSPResult}, also presents new conditional lower bounds for the exact weighted partially dynamic SSSP. The strongest among these lower bounds states that even in undirected graphs, for any sparsity $m$, there cannot be an exact algorithm, with preprocessing time $m^{2-\delta}$ and update and query time $m^{1-\delta}$ for any arbitrarily small constant $\delta > 0$, unless the $k$-Cycle Hypothesis is falsified. This lower bound also applies to algebraic algorithms which was previously not possible since lower bound reductions were based on the "combinatorial" static APSP conjecture.
    \item The article \cite{detDiSSSP} from which we present the proof of \Cref{thm:ContributionDetSCCResult} and \Cref{thm:ContributionDetSSSPResult}, presents a new framework for directed expanders and a new technique called congestion balancing. While we touch on these new developments partially in this thesis, we do not cover them in detail. We also do not cover a simple algorithm that is derived in \Cref{thm:ContributionDetSCCResult} by using the technique of congestion balancing in conjunction with a recent result by Wajc \cite{wajc2020rounding} and which gives the first $\tilde{O}(m)$ total update time algorithm for decremental $(1-\epsilon)$-approximate bipartite matchings.
\end{itemize}

\section{Organisation}

The rest of the thesis is concerned with presenting data structures as described in \Cref{sec:contribution}. We also give full proofs for all stated theorems, however, we will omit or only sketch several intermediate results. 

Let us now provide an overview of the thesis, chapter by chapter:
 
\paragraph{\Cref{chap:prelim}.} In this chapter, basic notation and preliminaries for the rest of the thesis are introduced where we try to adhere to conventions from the field of dynamic graph algorithms. Further, we give a brief introduction to the data structure by Even and Shiloach \cite{shiloach1981line} (the so called ES-tree) in \Cref{sec:esTree} which we recommend to readers not overly familiar with the data structure since many of the presented techniques rely on modifications and new insights to the classic ES-tree. 
    
We further give a short reduction in \Cref{sec:reductionWeights} which allows us to assume that the weight ratio of partially dynamic graphs $G$ is bound over all stages by a small polynomial in $n$, at the expense of introducing a $\log nW$ factor in the running time. While the proof is neither insightful nor novel, we state it here and use it implicitly in the rest of the thesis.

\paragraph{\Cref{chap:rand_scc}.} We then present a randomized data structure for the decremental SCC problem with expected near-linear update time $\tilde{O}(m)$. This data structure is straight-forwardly extended to also maintain SSR in decremental graphs. This gives the result stated in \Cref{thm:ContributionSCCmain}. The chapter is based on the STOC'2019 publication \cite{bernstein2019decremental}.

\paragraph{\Cref{chap_rand_decr_sssp}.} Building upon some of the techniques introduced in \Cref{chap:rand_scc}, we then present a randomized data structure for the decremental SSSP problems with expected update time $\tilde{O}(n^2)$. We point out however that the chapter is entirely self-contained and does not require the reader to have read \Cref{chap:rand_scc}, however, it does omit some proofs which are repetitive. 
This chapter is based on the SODA'2020 publication \cite{GutenbergW20a} and the FOCS'2020 publication \cite{nearOptDenseSSSP} which supersedes the results obtained in the former publication (the chapter introduces both approaches and highlights the improvements in the framework obtained in \cite{nearOptDenseSSSP}).

\paragraph{\Cref{chap_deterministic_scc}.} We present the first deterministic data structures that supersede the result by Even and Shiloach for decremental SSR, SCC and SSSP. We point out that these results require a new range of techniques for directed expanders. Since this thesis is focused on partially dynamic SSR, SCC and SSSP data structures, we do not cover proofs for these techniques, but rather give an intuitive approach to expanders and illustrate how to obtain the data structures described in \Cref{thm:ContributionDetSCCResult} and \Cref{thm:ContributionDetSSSPResult} from directed expander techniques. We then give a thorough proof of \Cref{thm:ContributionDetSSSPResult} which is based on the techniques from \Cref{chap_rand_decr_sssp} and \cite{GutenbergW20a}. The chapter is however self-contained as we give a brief (re)-introduction to the framework from \Cref{chap_rand_decr_sssp}.

\paragraph{\Cref{chap:incr_sssp}.} In this chapter, we present a deterministic incremental SSSP data structure with total update time $\tilde{O}(n^2)$. The data structure is inspired by a data structure for undirected graphs \cite{bernstein2017deterministic} however it requires many additional insights to make the same approach work in directed graphs. The chapter is based on the STOC'2020 publication \cite{GutenbergWW20}.

\paragraph{\Cref{chap:conclusion}.} Finally, we give a conclusion where we reflect on the contributions of this thesis and list a range of related open problems motivated by this work.

\chapter{Preliminaries}

\label{chap:prelim}
\section{Notation and Basic Definitions}

We let a graph $H$ refer to a weighted, directed graph with vertex set denoted by $V(H)$ of size $n_H$, edge set $E(H)$ of size $m_H$ and weight function $w_H : E(H) \rightarrow [1,W] \cup \{\infty\}$. We say that $H$ is a \emph{decremental} graph if it is undergoing a sequence of edge deletions and edge weight increases (also referred to as updates), and refer to \emph{version} $t$ of $H$, or $H$ at \emph{stage} $t$ as the graph $H$ obtained after the first $t$ updates have been applied. In this article, we denote the (decremental) input graph by $G=(V,E,w)$ with $n = |V|$ and $m = |E|$ (where $m$ refers to the number of edges of $G$ at stage $0$). In all subsequent definitions, we often use a subscript to indicate which graph we refer to, however, when we refer to $G$, we often omit the subscript.

\paragraph{Neighborhoods and Degree.}  We let $\mathcal{N}^{in}_H(v) = \{ u \;|\; (u,v) \in E(H) \}$ and $\mathcal{N}^{out}_H(v)  = \{ u \;|\; v \in \mathcal{N}^{in}_H(u)\}$ denote the in-neighborhood and out-neighborhoods of $v \in V$. We let the neighborhood $\mathcal{N}_H(v)$ of $v$ be defined by $\mathcal{N}_H(v) = \mathcal{N}^{in}_H(v) \cup \mathcal{N}^{out}_H(v)$. The weighted in-degree and out-degree
of a vertex $u$ are $\deg^{in}(u)=w(E(V,u))$ and $\deg^{out}(u)=w(E(u,V))$,
respectively. The weighted degree of $u$ is $\deg(u)=\deg^{in}(u)+\deg^{out}(u)$.
The volume of a set $S$ is $\vol(S)=\sum_{u\in S}\deg(u)$. 

\paragraph{Subgraphs, Cuts and Vertex Subsets.} We let $H[X]$ refer to the subgraph of $H$ induced by $X$, i.e. $H[X] = (X, E_H(X,X), w_H)$. We use $H \subseteq G$ to denote that $V(H) = V(G)$ and $E(H) \subseteq E(G)$. For graph $H$, and any two disjoint subsets $X,Y \subseteq V(H)$, we let $E_H(X)$ be the set of edges in $E(H)$ with an endpoint in $X$, and $E_H(X,Y)$ denote the set of edges in $E(H)$ with tail in $X$ and head in $Y$. We say that for $S \subseteq V$, $(S, V \setminus S)$ is a \emph{vertex cut} (sometimes simply \emph{cut}) and sometimes denote $V \setminus S$ by $\overline{S}$. For any $S$, let $\delta^{out}(S)=w(E(S,V\setminus S))$ and $\delta^{in}(S)=w(E(V\setminus S,S))$ denote the total weight of edges going out and coming in to $S$, respectively. 

\paragraph{Balanced Cuts.} We say that cut $(S,V\setminus S)$ is \emph{$\epsilon$-balanced} if $\vol(S)\ge\epsilon\vol(V)$, and it
is $\phi$-sparse if $\min\{\delta^{in}(S),\delta^{out}(S)\}<\phi\vol(S)$.
We say that $(L,S,R)$ is a \emph{vertex-cut} of $G$ if $L$,$S$, and $R$ partition the vertex set $V$,
and either $E(L,R)=\emptyset$ or $E(R,L)=\emptyset$. 
Assuming that $|L|\le|R|$,
$(L,S,R)$ is \emph{$\epsilon$-vertex-balanced} if $|L|\ge\epsilon|V|$,
and it is \emph{$\phi$-vertex-sparse} if $|S|<\phi|L|$. 

\paragraph{Contractions.} We define the graph $H / X$ to be the graph obtained from $H$ by contracting all vertices in $X$ into a single node (we use the word \emph{node} instead of vertex if it was obtained by contractions). Similarly, for a set of pairwise disjoint vertex sets $X_1, X_2, \dots, X_k$, we let $H / \{X_1, X_2, \dots, X_k\}$ denote the graph $((((H / X_1) / X_2) \dots ) / X_k)$. If $\mathcal{V}$ forms a partition of $V$, we use the convention to denote by $X^v$ the node in $G/ \mathcal{V}$ that contains $v \in V$, i.e. $v \in X^v$. To undo contractions, we define the function $\textsc{Flatten}(X)$ for a family of sets $X$ by $\textsc{Flatten}(X) = \bigcup_{x \in X} x$.

\paragraph{Reachability and Strong-Connectivity.} For graph $H$ and any two vertices $u,v \in V(H)$, we let $u \leadsto_H v$ denote that $u$ can reach $v$ in $H$, and $u \rightleftarrows_H v$ that $u$ can reach $v$ and vice versa (in the latter case, we also say $u$ and $v$ are strongly-connected). For any sets $X,Y \subseteq V(H)$, we say that $X \leadsto_H Y$ if there exists some $x \in X$, $y \in Y$ such that $x \leadsto_H y$; we define $X \rightleftarrows_H Y$ analogously. We say that the partition of $V(H)$ induced by the equivalence relation $\rightleftarrows_H$ is the set of \emph{strongly-connected components} (SCCs). We denote by $\textsc{Condensation}(H)$ the \textit{condensation} of $H$, that is the graph obtained by contracting each SCC in $H$ into a node.

\paragraph{Distances, Diameter and Balls.} We let $\mathbf{dist}_H(u,v)$ denote the distance from vertex $u$ to vertex $v$ in graph $H$ and denote by $\pi_{u,v, H}$ the corresponding shortest path (we assume uniqueness by implicitly referring to the lexicographically shortest path). We define the weak diameter of $X \subseteq V(H)$ in $H$ by $\mathbf{diam}(X, H) = \max_{x,y \in X} \mathbf{dist}_H(x,y)$. We define a ball $B_{out}(r, \delta)$ for some source $r \in V$ and positive real $\delta \in \mathbb{R}_{>0}$ to be the set of vertices at distance at most $\delta$ from $r$, i.e. $B_{out}(r, \delta) = \{ v \in V \; | \; \mathbf{dist}_G(r, v) \leq \delta \}$. We similarly define $B_{in}(r, \delta)$ to be $B_{out}(r, \delta)$ in the graph $G$ where edge directions are reversed.

\paragraph{$S$-Distances.} We define the notion of $S$-distances for any $S \subseteq V(H)$ where for any pair of vertices $u,v \in V(H)$, the $S$-distance $\mathbf{dist}_H(u,v, S)$ denotes the minimum number of vertices in $S \setminus \{v\}$ encountered on any path from $u$ to $v$. Alternatively, the $S$-distance corresponds to $\mathbf{dist}_{H'}(u,v)$ where $H'$ is a graph with edges $E_{out}(S)$ of weight $1$ and edges $E \setminus E_{out}(S)$ of weight $0$. It therefore follows that for any $u,v \in V(H)$, $\mathbf{dist}_H(u,v) = \mathbf{dist}_H(u,v,V)$. 

\paragraph{Partitions.} For two partitions $P$ and $P'$ of a set $U$, we say that partition $P$ is a \textit{melding} for a partition $P'$ if for every set $X \in P'$, there exists a set $Y \in P$ with $X \subseteq Y$. We also observe that \textit{melding} is transitive, thus if $P$ is a melding for $P'$ and $P'$ a melding for $P''$ then $P$ is a melding for $P''$.

\section{The ES-tree}
\label{sec:esTree}

The starting point for all data structures derived in the rest of this thesis is the ES-tree. We first state the formal result obtained by Even and Shiloach \cite{shiloach1981line} (in a slightly extended version)

\begin{theorem}[ES-tree, Extended Version of \cite{shiloach1981line}.]\label{thm:EStree}
Given a directed, unweighted, partially dynamic graph $G=(V,E)$, a fixed source vertex $r \in V$ and a depth $\delta > 0$, there exists a data structure $\mathcal{E}_r$ called the ES-tree that explicitly maintains for every $v \in B_{out}(r, \delta)$, the distance from $r$ to $v$ in any version of $G$. The algorithm runs in total update time $O(m\delta)$ and has query time $O(1)$. A corresponding shortest $r$-to-$v$ path $P$ can be returned in time $O(|P|)$.
\end{theorem}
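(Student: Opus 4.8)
The plan is to present and analyze the classical level-based maintenance of a BFS out-tree rooted at $r$, carrying out the decremental case (edge deletions only) in full since this is the real content of the ES-tree, and then noting that the incremental case is the symmetric, easier mirror image. The data structure stores for each vertex $v$ a \emph{level} $\ell(v)\in\{0,1,\dots,\delta,\delta+1\}$, where the sentinel value $\delta+1$ encodes ``distance exceeds $\delta$''; a \emph{parent pointer} $p(v)$ into $\mathcal N^{in}(v)$; and, for each $v$, a partition of the in-neighbor list of $v$ into buckets $B_j(v)=\{u\in\mathcal N^{in}(v):\ell(u)=j\}$ realized as doubly linked lists with back-pointers, so that all membership changes are $O(1)$. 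Preprocessing runs a BFS from $r$ truncated at depth $\delta$, sets $\ell(v)=\min\{\dist_G(r,v),\delta+1\}$, and initializes the parent pointers and buckets, in total time $O(m)$.

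The correctness invariants, maintained between updates, are: (I1) $\ell(r)=0$ and, for every $v\neq r$ with $\ell(v)\le\delta$, the parent satisfies $\ell(p(v))=\ell(v)-1$ (so $B_{\ell(v)-1}(v)\neq\emptyset$); and (I2) every $u\in\mathcal N^{in}(v)$ has $\ell(u)\ge\ell(v)-1$. From (I1), walking parent pointers from $v$ produces an actual $r$-to-$v$ path of length $\ell(v)$, hence $\dist_G(r,v)\le\ell(v)$. From (I2) together with $\ell(r)=0$, induction along a shortest $r$-to-$v$ path gives $\ell(v)\le\min\{\dist_G(r,v),\delta+1\}$. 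Combining, $\ell(v)=\dist_G(r,v)$ whenever $\dist_G(r,v)\le\delta$, which is exactly what the data structure must report; queries return $\ell(v)$ (or ``$>\delta$'') in $O(1)$ time, and a shortest path to $v$ is recovered by following parent pointers up to $r$, touching $\ell(v)+1=O(|P|)$ vertices.

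To process a deletion of $(u,v)$: remove $u$ from the bucket $B_{\ell(u)}(v)$ it occupies; if $u=p(v)$, push $v$ into a queue $Q$ of ``dirty'' vertices. The queue is drained in nondecreasing order of level using a simple bucket queue, so each enqueue/extract is $O(1)$ amortized. To handle a dirty vertex $v$ with current level $k$: if $B_{k-1}(v)\neq\emptyset$, pick any element as the new $p(v)$ and stop; otherwise raise $\ell(v)$ to $k+1$ (capping at $\delta+1$), for every out-neighbor $w$ move $v$ from $B_k(w)$ to $B_{k+1}(w)$, mark every out-neighbor $w$ with $\ell(w)=k+1$ as dirty, and re-enqueue $v$. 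A short case analysis — using that levels are integers, so ``$\ge k-1$ and $\neq k-1$'' forces ``$\ge k$'' — shows (I1)--(I2) are restored once $Q$ empties, and since levels only grow the process terminates.

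The main obstacle is the running-time analysis. The key observations are that, along the entire deletion sequence, each $\ell(v)$ is monotonically nondecreasing and confined to $\{0,\dots,\delta+1\}$, so each vertex undergoes at most $\delta+1$ level increases; each such increase costs $O(\deg(v))$ (moving $v$ within the buckets of its out-neighbors and scanning those out-neighbors to re-dirty), and summing over all vertices yields $O(m\delta)$ for all level-increase work. Every remaining step is $O(1)$: the bucket surgery incurred per edge deletion contributes $O(m)$ in total, and each processing of a dirty vertex that does \emph{not} trigger a level increase is charged to either the deletion of the current parent edge of $v$ or to a level increase of $v$ or of one of its in-neighbors — all of which are already accounted for — so this part is again $O(m\delta)$. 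Note that it is precisely the bucketing that makes finding a replacement parent $O(1)$ rather than a linear rescan; with a level-keyed heap one would instead get $O(m\delta\log n)$, and to avoid the $\log$ without buckets one needs a more delicate cursor argument. For the incremental variant, distances only decrease, hence levels only decrease; relaxation becomes a BFS-style propagation along out-edges, each vertex's level drops at most $\delta+1$ times at cost $O(\deg(v))$ per drop, and the analysis again gives $O(m\delta)$ total. In all cases the algorithmic idea is standard and the difficulty is confined to setting up the amortization cleanly.
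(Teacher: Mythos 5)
Your proposal is correct and follows the same essential approach as the paper's proof: maintain monotonically nondecreasing levels bounded by $\delta+1$, maintain a tree certificate via parent pointers, settle dirty vertices in nondecreasing order of level, and charge $O(\deg(v))$ per level increase of $v$ to get $O(m\delta)$ total. The one genuine difference is implementation-level: the paper finds a replacement parent by rescanning $\mathcal{N}^{in}(v)$ once per distance-estimate value (a cursor argument over the adjacency list), whereas you maintain explicit per-level buckets $B_j(v)$ of in-neighbors with back-pointers so that a replacement parent is found in $O(1)$ and the amortized cost is instead paid when a vertex moves between its out-neighbors' buckets. Both are standard realizations of the Even--Shiloach data structure with identical bounds and nearly identical amortization; yours makes the bookkeeping for the $O(1)$ parent lookup explicit at the cost of a somewhat heavier invariant set ((I1)/(I2) plus bucket consistency), while the paper's cursor phrasing is terser but leaves the ``scanned only once per level'' claim to be unpacked by the reader.
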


We point out that $B_{out}(r, \delta)$ in the above theorem refers to the ball at $r$ in the current version of the graph $G$, i.e. the set $B_{out}(r, \delta)$ is decremental itself and for any vertex $v$ at distance at most $\delta$ at some stage $i$ of $G$, we can query at stage $i$, the exact distance from $r$ to $v$. We also point out that for vertices not in $B_{out}(r, \delta)$, the data structure can be implemented to return $\infty$ (or any other sentinel). We present the data structure for a decremental graph, however, it is straight-forward to obtain the same result in incremental graphs.

\paragraph{The Algorithm.} We maintain distances from a fixed source $r \in V$ to each vertex $v$ in $V$ up to distance $\delta > 0$ by storing a distance estimate $\widetilde{\mathbf{dist}}(r,v)$ that is initialized to the distance between $r$ and $v$ in $G$ along with a shortest-path tree $T$ rooted at $r$. On update $(u,v)$, we delete the edge from $G$ and possibly from $T$. Then, if possible, we extract $w_{min} \in V \setminus \{r\}$, the vertex with smallest distance estimate among vertices without incoming edge in $T$. (For the first extraction, we always have $w_{min} = v$.) For this vertex $w_{min}$, we then try to find a vertex $x \in \mathcal{N}^{in}(w_{min})$ such that adding $(x, w_{min})$ to $T$ implies $\mathbf{dist}_T(r,w_{min}) \leq \widetilde{\mathbf{dist}}(r,w_{min})$. We therefore search in $\mathcal{N}^{in}(w_{min})$ for an $x$ that satisfies \begin{equation}
\label{eq:enforce}
    \widetilde{\mathbf{dist}}(r,x)+ w(x,w_{min}) \leq \widetilde{\mathbf{dist}}(r,w_{min}).
\end{equation}
If no such $x$ exists, then $\widetilde{\mathbf{dist}}(r,w_{min})$ has to be incremented, and we set $T$ to $T \setminus N^{out}(w_{min})$. If $\widetilde{\mathbf{dist}}(r,w_{min}) > \delta$, we set it to $\infty$ and remove $w_{min}$ from the tree. We iterate the process until $T$ is spanning for vertices with distance estimate $< \infty$. 

\paragraph{Total Update Time.} We claim that for each distance estimate value $\widetilde{\mathbf{dist}}(r,v)$, $\mathcal{N}^{in}(v)$ has to be scanned only once. This follows since if Equation \ref{eq:enforce} is satisfied for a neighbor then the edge is taken into the tree $T$ until deleted, but if it does not satisfy the equation, it never has to be considered at the current distance estimate value of $v$ since the left-hand-side of the equation is monotonically increasing over time and thus the equation cannot be satisfied before the right-hand-side is increased. Since estimates increase monotonically, the total update time can be bound by $O(\sum_{v \in V} |\mathcal{N}^{in}(v)| \delta) = O(m \delta)$. 

\paragraph{Correctness of the Algorithm.} It remains to show that after the algorithm to reconstruct the tree $T$ terminates, we have for all $x \in B_{out}(r, \delta)$, that $\widetilde{\mathbf{dist}}(r, x) = {\mathbf{dist}}(r,x)$.  

While a formal argument would require nested induction, we simply observe that a vertex $x$ at distance $d \leq \delta$, always has the vertex $y$ that precedes it on the shortest path from $r$-to-$x$ in its in-neighborhood. An inductive argument over the distances from $r$ thus shows that we can always add an in-edge for $x$ to the tree $T$ that ensures that the distance estimate is at most the real distance. On the other hand, since $T$ is a subset of $G$ after each stage, each path in the tree is at most of weight equal to the shortest path distances.

\section{Reducing the Weight Dependency in Partially Dynamic SSSP}
\label{sec:reductionWeights}

Here, we also give a small reduction that extends $(1+\epsilon)$-approximate partially dynamic SSSP for weight ratio $n^4$, to any weight ratio $W$ at the expense of an additional $\log W$ factor. Throughout this thesis, we are therefore only concerned with obtaining a polylogarithmic dependency on $W$ since by the theorem below, we can then reduce the dependency on $W$ to $\log nW$. We will not state explicitly in the sections that we use the below theorem, however, claim the update times in 
\Cref{sec:contribution} by making use of it.

\begin{theorem}
For any $1/2 > \epsilon > 1/n$, given a data structure $\mathcal{E}_r$ that maintains $(1 + \epsilon)$-approximate partially dynamic SSSP on a graph $H$ in time $\mathcal{T}_{SSSP}(m_H, n_H, \epsilon, W_H)$ (where we assume that distance estimates are maintained explicitly). Then, there exists a data structure, that maintains  $(1 + 6\epsilon)$-approximate partially dynamic SSSP on a graph $G$ in time 
\[
\left( (m+n) \log n / \epsilon + \mathcal{T}_{SSSP}(m, n, \epsilon, n^4)\right) \cdot O(\log(nW) ).
\]
\end{theorem}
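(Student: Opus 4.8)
The plan is to reduce the general weight ratio $W$ to the bounded weight ratio $n^4$ by splitting the distance range into $O(\log(nW))$ geometrically increasing scales, running the given black-box data structure $\mathcal{E}_r$ on a rescaled and truncated copy of $G$ at each scale, and then stitching the distance estimates together. Concretely, for each $i = 0, 1, \dots, \lceil \log_2(nW) \rceil$, I would define a graph $G_i$ on the same vertex set as $G$ as follows: take all edges of weight at most $2^i$, round each such weight \emph{up} to the nearest multiple of $\epsilon 2^i / n$, and then divide all weights by $\epsilon 2^i / n$ so they become integers (or at least have a bounded weight ratio). Edges of weight exceeding $2^i$ are simply discarded in $G_i$ (they cannot appear on any shortest path of total length at most, say, $2^{i+1}$). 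The key point is that in $G_i$ the relevant distances lie in a polynomially bounded range: distances we care about are those roughly in $[2^{i-1}, 2^{i+1}]$ in $G$, which after rescaling become roughly in $[n/2\epsilon, 4n/\epsilon]$, and since $\epsilon > 1/n$ the weight ratio in $G_i$ is at most $O(n^3/\epsilon) \le n^4$ (adjusting the rounding granularity slightly if needed, e.g. rounding to multiples of $\epsilon 2^i/n^2$ to buy extra slack). Since $G$ is partially dynamic (say decremental — the incremental case is symmetric), each $G_i$ is also partially dynamic: a deletion or weight increase in $G$ is passed through to those $G_i$ in which the edge is present, possibly triggering an additional deletion from $G_i$ once its weight crosses the $2^i$ threshold. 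This costs only $O(\log(nW))$ updates to the black-box structures per update to $G$.

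The second step is the rounding-error analysis. Rounding each weight up by at most $\epsilon 2^i / n$ on a path using at most $n$ edges inflates the path length by at most $\epsilon 2^i$; for the scale $i$ that "captures" the true distance $\mathbf{dist}_G(r,x)$ — i.e., the smallest $i$ with $\mathbf{dist}_G(r,x) \le 2^i$, so that $\mathbf{dist}_G(r,x) > 2^{i-1}$ (or $i = 0$) — this additive error is at most $\epsilon 2^i \le 2\epsilon \,\mathbf{dist}_G(r,x)$, a $(1+2\epsilon)$ multiplicative distortion. Combined with the $(1+\epsilon)$ guarantee of $\mathcal{E}_r$ on $G_i$, and accounting for the fact that edges of weight $> 2^i$ were dropped (harmless, since on a shortest path of length $\le 2^i$ every edge has weight $\le 2^i$), we get that the estimate read off from the structure on $G_i$, rescaled back up by $\epsilon 2^i / n$, is a $(1+\epsilon)(1+2\epsilon) \le (1+6\epsilon)$-approximation of $\mathbf{dist}_G(r,x)$ — provided we query the \emph{right} scale. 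To avoid having to know the right scale in advance, on a query for $x$ I would take the minimum over all $i$ of the (rescaled) estimate from the $i$-th structure, ignoring any estimate that exceeds $2^{i+1}$ after rescaling (which flags that scale $i$ did not capture the distance). Since each individual estimate is an overestimate of the true distance in $G$ (up to the $(1+6\epsilon)$ factor, and because $G_i$'s edge set maps to a subset of scaled edges of $G$), taking the minimum is safe, and the capturing scale certifies the minimum is not too large.

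The running time is immediate from the construction: we run $O(\log(nW))$ independent copies of $\mathcal{E}_r$, the $i$-th on a graph with at most $m$ edges, $n$ vertices, parameter $\epsilon$, and weight ratio $n^4$, contributing $\mathcal{T}_{SSSP}(m,n,\epsilon,n^4) \cdot O(\log(nW))$; maintaining the rescaled graphs and recomputing scaled weights on each update costs an additional $O((m+n)\log n / \epsilon)$ per scale in the worst case (to set up the $m$ edge weights to granularity $\epsilon/\poly(n)$ and handle threshold crossings), again multiplied by $O(\log(nW))$, matching the claimed bound. The main obstacle — the only place needing care — is making the error budget and the weight-ratio bound fit together \emph{simultaneously}: the rounding granularity must be fine enough that $n$ roundings cost only an $\epsilon$ fraction of the captured distance, yet coarse enough that the ratio of the largest to smallest nonzero rounded weight at scale $i$ stays within $n^4$; since $\mathbf{dist}_G(r,x) > 2^{i-1}$ at the captured scale, there is a factor-$\Theta(n/\epsilon)$ of slack to play with, which suffices given $\epsilon > 1/n$, but the constants have to be chosen so that the final bound is exactly $(1+6\epsilon)$ rather than something slightly larger. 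This is routine bookkeeping once the scales are set up.
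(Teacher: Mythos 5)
Your proposal is essentially the same as the paper's: both decompose into $O(\log(nW))$ geometric scales, remove edges of weight above the threshold $2^i$ at scale $i$, round the remaining weights up to a fine grid to bound the weight ratio polynomially, run a black-box copy of $\mathcal{E}_r$ at each scale, and argue that the scale that captures $\mathbf{dist}_G(r,x) \in (2^{i-1}, 2^i]$ has bounded rounding error because a simple path has at most $n$ edges. The paper rounds to multiples of $2^i/n^2$ (yielding weight ratio $n^2$ and additive error $\le (2/n)\,\mathbf{dist}$, then uses $2/n \le 2\epsilon$), whereas you round to multiples of $\epsilon 2^i/n$ (yielding ratio $n/\epsilon \le n^2$ and additive error $\le 2\epsilon\,\mathbf{dist}$ directly); both land at $(1+\epsilon)(1+2\epsilon)\le(1+6\epsilon)$. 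One point worth tightening: your justification of the $(m+n)\log n/\epsilon$ term as ``setting up edge weights and handling threshold crossings'' does not actually produce a $\log n/\epsilon$ factor — those costs are $O(m)$ per scale. The paper obtains this term by maintaining, for each vertex, a min-heap over the $O(\log(nW))$ per-scale estimates (needed to keep the final estimate \emph{explicit} at all times), and only updating the heap entry for scale $i$ when $\widetilde{\mathbf{dist}}^i(r,v)$ has changed by a $(1+\epsilon)$ factor, which happens $O(\log(nW)/\epsilon)$ times per vertex at cost $O(\log n)$ each; your ``take the minimum over $i$ on query'' sidesteps this but does not maintain the estimate explicitly, which the theorem implicitly requires.
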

\begin{proof}
Given the graph $G$ with weight ratio $W$. Assume that the smallest weight is $1$ and the largest weight $W$ (this is without loss of generality). Throughout the update sequence, for $i = 0, 1, \dots, \lceil \lg Wn \rceil + 1$, maintain a graph $G_i$ derived from $G$ by removing all edges of weight larger $2^i$ and by rounding all remaining edges up to the nearest multiple of $2^{i}/ n^2$. Finally, run a data structure $\mathcal{E}^i_r$ on each such graph $G_i$ and maintain for each for each $v \in V$, a min-priority heap over the distance estimates $\widetilde{\mathbf{dist}}^i(r,v)$ in all data structures $\mathcal{E}^i_r$.

We first observe that the resulting graphs $G_i$ clearly have weight ratio $n^2$ by definition. For the running time, we first observe that we have $O(\log nW)$ data structures $\mathcal{E}^i_r$. Further, each of the $m$ updates to $G$ results in at most $O(\log nW)$ updates to graphs $G_i$. The data structures $\mathcal{E}^i_r$ update the distance estimates at most $O(\mathcal{T}_{SSSP}(m, n, \epsilon, n^2))$ times. Updating the heap data structure only when distance estimates change by $(1+\epsilon)$ which occurs at most $O(\log(nW)/\epsilon)$ times for each vertex, and using that each update to the heap is at cost $O(\log \log Wn) = O(\log n)$, we derive our final bound on the running time.

For correctness, observe that during stage $t$, for a vertex $v$ at distance $2^{i-1} < \mathbf{dist}(r,v) \leq 2^i$ for some $i$, in the graph $G_i$, the shortest path from $r$ to $v$ is at most $(1+2/n) \mathbf{dist}(r,v)$ since each edge on the shortest $r$-to-$v$ path in $G$ has weight at most $2^i$ and is thus in $G_i$ and further each edge incurs additive error at most $2^i/n^2$ by the rounding of edges combined with the fact that the distance is at least $2^{i-1}$ and that there are at most $n$ edges on any simple path. Further, the distance in each $G_i$ between any pair of vertices is at least as large as in $G$. The final approximation guarantee stems from the fact that $(1+\epsilon) \cdot (1+2/n) \leq (1+\epsilon) \cdot (1+2\epsilon) \leq (1+6\epsilon)$ since for $0 \leq x \leq 1$, $1+x \leq e^x \leq 1+2x$.
\end{proof}

\chapter{A Randomized Algorithm for Decremental SSR and SCCs}

\label{chap:rand_scc}

In this section, we prove the following result.

\sccMain*

We point out that it suffices to maintain the strongly-connected components of $G$ since we can then run the decremental SCC data structure on the graph $G \cup (V \times \{r\}$, that is the graph $G$ with an additional edge from every vertex $v \in V$ to $r$. Observe that a vertex $v$ is reachable from $r$ in $G$ if and only if $r$ and $v$ are strongly-connected in $G \cup (V \times \{r\})$.

We next give an overview on how to obtain the Decremental SCC data structure presented in \Cref{thm:ContributionSCCmain}. We then present our one of our key insights, a generalization of the ES-tree, and in the final sections show how to use this insight to obtain a fast data structure. 

\section{Overview}
\label{subsec:overview}

We start by introducing the graph hierarchy maintained by our algorithm, followed by a high-level overview of our algorithm.

\paragraph{High-level overview of the Hierarchy.}  Our hierarchy has levels $0$ to $\lfloor \lg n \rfloor + 1$ and we associate with each level $i$ a subset $E_i$ of the edges $E$. The sets $E_i$ form a partition of $E$; we define the edges that go into each $E_i$ later in the overview but point out that we maintain $E_{\lfloor \lg n \rfloor + 1} = \emptyset$. We define a graph hierarchy $\hat{G} = \{\hat{G}_0, \hat{G}_1, .. ,\hat{G}_{\lfloor \lg n \rfloor + 1}\}$ such that each graph $\hat{G}_i$ is defined as 
\[
\hat{G}_i = \textsc{Condensation}((V, \bigcup_{j < i} E_j)) \;\cup\; E_{i}
\]
That is, each $\hat{G}_i$ is the condensation of a subgraph of $G$ with some additional edges. As mentioned in the preliminary section, we refer to the elements of the set $\hat{V}_i = V(\hat{G}_i)$ as \textit{nodes} to distinguish them from \textit{vertices} in $V$. We use capital letters to denote nodes and small letters to denote vertices. We let $X_i^v$ denote the node in $\hat{V}_i$ with $v \in X_i^v$. Observe that each node $X$ corresponds to a subset of vertices in $V$ and that for any $i$, $\hat{V}_i$ can in fact be seen as a partition of $V$. For $\hat{G}_0 = \textsc{Condensation}((V, \emptyset)) \;\cup\; E_{0}$, the set $\hat{V}_0$ is a partition of singletons, i.e. $\hat{V}_0 = \{ \{v\} | v \in V\}$, and $X_0^v = \{ v\}$ for each $v \in V$. 

Observe that because the sets $E_i$ form a partition of $E$ and $E_{\lfloor \lg n \rfloor + 1} = \emptyset$, the top graph $\hat{G}_{\lfloor \lg n \rfloor + 1}$ is simply defined as
\begin{align*}
    \hat{G}_{\lfloor \lg n \rfloor + 1} &= \textsc{Condensation}((V, \bigcup_{j < \lfloor \lg n \rfloor + 1} E_j)) \;\cup\; E_{\lfloor \lg n \rfloor + 1} \\ &= \textsc{Condensation}((V,E)).
\end{align*}
Therefore, if we can maintain $\hat{G}_{\lfloor \lg n \rfloor + 1} $ efficiently, we can answer queries on whether two vertices $u,v \in V$ are in the same SCC in $G$ by checking if $X_{\lfloor \lg n \rfloor + 1}^u$ is equal to $X_{\lfloor \lg n \rfloor + 1}^v$.

Let us offer some intuition for the hierarchy. The graph $\hat{G}_0$ contains all the vertices of $G$, and all the edges of $E_0 \subseteq E$. By definition of $\textsc{Condensation}(\cdot)$, the nodes of $\hat{G}_1$ precisely correspond to the SCCs of $\hat{G}_0$. $\hat{G}_1$ also includes the edges $E_0$ (though some of them are contracted into self-loops in $\textsc{Condensation}((V, E_0))$), as well as 
the additional edges in $E_1$. These additional edges might lead to $\hat{G}_1$ having larger SCCs than those of $\hat{G}_0$; each SCC in $\hat{G}_1$ then corresponds to a node in $\hat{G_2}$. More generally, the nodes of $\hat{G}_{i+1}$ are the SCCs of $\hat{G}_{i}$. 

As we move up the hierarchy, we add more and more edges to the graph, so the SCCs get larger and larger. Thus, each set $\hat{V}_i$ is a \textit{melding} for any $\hat{V}_j$ for $j \leq i$; that is for each node $Y \in \hat{V}_j$ there exists a set $X \in \hat{V}_i$ such that $Y \subseteq X$. We sometimes say we \textit{meld} nodes $Y, Y' \in \hat{V}_j$ to $X \in \hat{V}_{i}$ if $Y, Y' \subseteq X$ and $j < i$. Additionally, we observe that for any SCC $Y \subseteq \hat{V}_i$ in $\hat{G}_i$, we meld the nodes in SCC $Y$ to a node in $X \in \hat{V}_{i+1}$, and $X$ consists exactly of the vertices contained in the nodes of $Y$. More formally, $X = \textsc{Flatten}(Y)$.


To maintain the SCCs in each graph $\hat{G}_i$, our algorithm employs a bottom-up approach. At level $i+1$ we want to maintain SCCs in the graph with all the edges in $\bigcup_{j \leq i+1} E_j$, but instead of doing so from scratch, we use the SCCs maintained at level $\hat{G}_i$ as a starting point. The SCCs in $\hat{G}_i$ are precisely the SCCs in the graph with edge set $\bigcup_{j \leq i} E_j$; so to maintain the SCCs at level $i+1$, we only need to consider how the sliver of edges in $E_{i+1}$ cause the SCCs in $\hat{G}_{i}$ to be melded into larger SCCs (which then become the nodes of $\hat{G}_{i+2}$).

If the adversary deletes an edge in $E_i$, all the graphs $\hat{G}_{i-1}$ and below remain unchanged, as do the nodes of $\hat{G}_{i}$. But the deletion might split apart an SCC in $G_i$, which will in turn cause a node of $\hat{G}_{i+1}$ to split into multiple nodes. This split might then cause an SCC of $\hat{G}_{i+1}$ to split, which will further propagate up the hierarchy.

In addition to edge deletions caused by the adversary, our algorithm will sometimes move edges from $E_i$ to $E_{i+1}$. Because the algorithm only moves edges \emph{up} the hierarchy, each graph $\hat{G}_i$ is only losing edges, so the update sequence remains decremental from the perspective of each $\hat{G}_i$. We now give an overview of how our algorithm maintains the hierarchy efficiently.

\paragraph{Maintaining SCCs with ES-trees.} Consider again graph $\hat{G}_0$ and let $X \subseteq \hat{V}_0$ be some SCC in $\hat{G}_0$ that we want to maintain. Let some node $X'$ in $X$ be chosen to be the \textit{center} node of the SCC (In the case of $\hat{G}_0$, the node $X'$ is just a single-vertex set $\{ v \}$). We then maintain an ES in-tree and an ES out-tree from $X'$ that spans the nodes in $X$ in the induced graph $\hat{G}_0[X]$. We must maintain the trees up to distance $\mathbf{diam}(\hat{G}_0[X])$, so the total update time is $O(|E(\hat{G}_0[X])|* \mathbf{diam}(\hat{G}_0[X]))$ (recall \Cref{thm:EStree}). 

Now, consider an edge deletion to $\hat{G}_0$ such that the ES in-tree or ES out-tree at $X'$ is no longer a spanning tree. Then, we detected that the SCC $X$ has to be split into at least two SCCs $X_1, X_2, .., X_k$ that are node-disjoint with $X = \bigcup_i X_i$. Then in each new SCC $X_i$ we choose a new center and initialize a new ES in-tree and ES out-tree.

\paragraph{Exploiting small diameter.} The above scheme clearly is quite efficient if $\mathbf{diam}(\hat{G}_0)$ is very small. Our goal is therefore to choose the edge set $E_0$ in such a way that $\hat{G}_0$ contains only SCCs of small diameter. We therefore turn to some insights from \cite{chechik2016decremental} and extract information from the ES in-tree and out-tree to maintain small diameter. Their scheme fixes some $\delta > 0$ and if a set of nodes $Y \subseteq X$ for some SCC $X$ is at distance $\Omega(\delta)$ from/to $\textsc{Center}(X)$ due to an edge deletion in $\hat{G}_0$, they find a node separator $S$ of size $O(\min\{|Y|, |X \setminus Y|\}\log n / \delta)$; removing $S$ from  $\hat{G}_0$ causes $Y$ and $X \setminus Y$ to no longer be in the same SCC. We use this technique and remove edges incident to the node separator $S$ from $E_0$ and therefore from $\hat{G}_0$. One subtle observation we want to stress at this point is that each node in the separator set appears also as a single-vertex node in the graph $\hat{G}_1$; this is because each separator node $\{ s\}$ for some $s \in V$ is not \textit{melded} with any other node in $\hat{V}_0$, as it has no edges in $\hat{G}_0$ to or from any other node.

For some carefully chosen $\delta = \Theta(\log^2 n)$, we can maintain $\hat{G}_0$ such that at most half the nodes in $\hat{V}_0$ become separator nodes at any point of the algorithm.  This follows since each separator set is small in comparison to the smaller side of the cut and since each node in $\hat{V}_0$ can only be $O(\log n)$ times on the smaller side of a cut. 

\paragraph{Reusing ES-trees.} Let us now refine our approach to maintain the ES in-trees and ES out-trees and introduce a crucial ingredient devised by Roditty and Zwick \cite{roditty2008improved}. Instead of picking an arbitrary center node $X'$ from an SCC $X$ with $X' \in X$, we are going to pick a vertex $r \in \textsc{Flatten}(X) \subseteq V$ uniformly at random and run our ES in-tree and out-tree $\mathcal{E}_{r}$ from the node $X_0^r$ on the graph $\hat{G}_0$. For each SCC $X$ we denote the randomly chosen root $r$ by $\textsc{Center}(X)$. In order to improve the running time, we \textit{reuse} ES-trees when the SCC $X$ is split into SCCs $X_1, X_2, .. , X_k$, where we assume wlog that $r \in \textsc{Flatten}(X_1)$, by removing the nodes in $X_2, .., X_k$ from $\mathcal{E}_{r}$ and setting $\textsc{Center}(X_1) = r$. Thus, we only need to initialize a new ES-tree for the SCC $X_2, .. , X_k$. Using this technique, we can show that each node is expected to participate in $O(\log n)$ ES-trees over the entire course of the algorithm, since we expect that if a SCC $X$ breaks into SCCs $X_1, X_2, .. , X_k$ then we either have that every SCC $X_i$ is of at most half the size of $X$, or with probability at least $1/2$ that $X_1$ is the new SCC that contains at least half the vertices, i.e. that the random root is contained in the largest part of the graph. Since the ES-trees work on induced graphs with disjoint node sets, we can therefore conclude that the total update time for all ES-trees is $O(m \log n* \mathbf{diam}(\hat{G}_0))$. 

We point out that using the ES in-trees and out-trees to detect node separators as described above complicates the analysis of the technique by Roditty and Zwick \cite{roditty2008improved} but a clever proof presented in \cite{chechik2016decremental} shows that the technique can still be applied. In our paper, we present a proof that can even deal with some additional complications and that is slightly simpler.

\paragraph{A contrast to the algorithm of Chechik et al \cite{chechik2016decremental}.}

Other than our hierarchy, the overview we have given so far largely comes from the algorithm of Chechik et al \cite{chechik2016decremental}. However, their algorithm does not use a hierarchy of graphs. Instead, they show that for any graph $G$, one can find (and maintain) a node separator $S$ of size $\tilde{O}(n/\delta)$ such that all SCCs in $G$ have diameter at most $\delta$. They can then use ES-trees with random sources to maintain the SCCs in $G \setminus S$ in total update time $\tilde{O}(m\delta)$. This leaves them with the task of computing how the vertices in $S$ might meld some of the SCCs in $G \setminus S$. They are able to do this in total update time $\tilde{O}(m|S|) = \tilde{O}(mn/\delta)$ by using an entirely different technique of \cite{lkacki2013improved}. Setting $\delta = \tilde{O}(\sqrt{n})$, they achieve the optimal trade-off between the two techniques: total update time $\tilde{O}(m\sqrt{n})$ in expectation.

We achieve our $\tilde{O}(m)$ total update time by entirely avoiding the technique of \cite{lkacki2013improved} for separately handling a small set of separator nodes, and instead using the graph hierarchy described above, where at each level we set $\delta$ to be polylog rather than $\tilde{O}(\sqrt{n})$. 

We note that while our starting point is the same as \cite{chechik2016decremental}, using a hierarchy of separators forces us to take a different perspective on the function of a separator set. The reason is that it is simply not possible to ensure that at each level of the hierarchy, all SCCs have small diameter. To overcome this, we instead aim for separator sets that decompose the graph into SCCs that are small with respect to a different notion of distance. The rest of the overview briefly sketches this new perspective, while sweeping many additional technical challenges under the rug. 

\paragraph{Refining the Hierarchy.} So far, we only discussed how to maintain $\hat{G}_0$ efficiently by deleting many edges from $E_0$ and hence ensuring that SCCs in $\hat{G}_0$ have small diameter. To discuss our bottom-up approach, let us define our graphs $\hat{G}_i$ more precisely.

We maintain a separator hierarchy $\mathcal{S} = \{S_0, S_1, .. , S_{\lfloor \lg{n} \rfloor + 2}\}$ where $\hat{V}_0 = S_0 \supseteq S_1 \supseteq .. \supseteq S_{\lfloor \lg{n} \rfloor + 1} = S_{\lfloor \lg{n} \rfloor + 2} = \emptyset$, with $|S_i| \leq n/2^i$, for all $i \in [0, \lfloor \lg{n} \rfloor + 2]$ (see below that for technical reasons we need to define $S_{\lfloor \lg{n} \rfloor + 2}$ to define $\hat{G}_{\lfloor \lg{n} \rfloor + 1}$). Each set $S_i$ is a set of single-vertex nodes -- i.e. nodes of the form $\{ v \}$ -- that is monotonically increasing over time. 

We can now more precisely define each edge set 
\[
E_i = E(\textsc{Flatten}(S_{i})) \setminus E(\textsc{Flatten}(S_{i+1})).
\] 
To avoid clutter, we abuse notation slightly referring henceforth to $\textsc{Flatten}(X)$ simply as $X$ if $X$ is a set of singleton sets and the context is clear. We therefore obtain
\begin{align*}
    \hat{G}_i &=  \textsc{Condensation}((V, \bigcup_{j < i} E_j)) \;\cup\; E_{i} \\ &= \textsc{Condensation}(G \setminus E(S_{i})) \;\cup\; \left(E(S_i) \setminus E(S_{i+1})\right).
\end{align*}
In particular, note that $\hat{G}_i$ contains all the edges of $G$ except those in $E(S_{i+1})$; as we move up to level $\hat{G}_{i+1}$, we add the edges $E(S_{i+1}) \setminus E(S_{i+2})$. Note that if $s \in S_{i} \setminus S_{i+1}$, and our algorithm then adds $s$ to $S_{i+1}$, this will remove all edges incident to $s$ from $E_{i}$ and add them to $E_{i+1}$. Thus the fact that the sets $S_i$ used by the algorithm are monotonically increasing implies the desired property that edges only move up the hierarchy (remember that we add more vertices to $S_i$ due to new separators found on level $i-1$).

At a high-level, the idea of the hierarchy is as follows. Focusing on a level $i$, when the ``distances'' in some SCC of $\hat{G}_i$ get too large (for a notion of distance defined below), the algorithm will add a carefully chosen set of separator nodes $s_1, s_2, ..$ in $S_i$ to $S_{i+1}$. By definition of our hierarchy, this will remove the edges incident to the $s_i$ from $\hat{G}_i$, thus causing the SCCs of $\hat{G}_i$ to decompose into smaller SCCs with more manageable ``distances''. We note that our algorithm always maintains the invariant that nodes added to $S_{i+1}$ were previously in $S_i$, which from the definition of our hierarchy, ensures that at all times the separator nodes in $S_{i+1}$ are single-vertex nodes in $\hat{V}_{i+1}$; this is because the nodes of $\hat{V}_{i+1}$ are the SCCs of $\hat{G}_i$, and $\hat{G}_i$ contains no edges incident to $S_{i+1}$.


\paragraph{Exploiting $S$-distances.} For our algorithm, classic ES-trees are only useful to maintain SCCs in $\hat{G}_0$; in order to handle levels $i > 0$ we develop a new generalization of ES-trees that use a different notion of distance. This enables us to detect when SCCs are split in graphs $\hat{G}_i$ and to find separator nodes in $\hat{G}_i$ as discussed above more efficiently. 

Our generalized ES-tree (GES-tree) can be seen as a combination of the classic ES-trees\cite{shiloach1981line} and a data structure by Italiano\cite{italiano1988finding} that maintains reachability from a distinguished source in a directed acyclic graph (DAG), and which can be implemented in total update time $O(m)$.

Let $S$ be some feedback vertex set in a graph $G = (V,E)$; that is, every cycle in $G$ contains a vertex in $S$. Then our GES-tree can maintain
$S$-distances and a corresponding shortest-path tree up to $S$-distance $\delta > 0$ from a distinguished source $X_i^r$ for some $r \in V$ in the graph $G$. (See \Cref{chap:prelim} for the definition of $S$-distances.) This data structure can be implemented to take $O(m\delta)$ total update time.


\paragraph{Maintaining the SCCs in $\hat{G}_i$.} Let us focus on maintaining SCCs in 
\[
\hat{G}_i = \textsc{Condensation}(G \setminus E(S_{i})) \;\cup\; (E(S_{i}) \setminus E(S_{i+1})).\]
Since the condensation of any graph forms a DAG, every cycle in $\hat{G}_i$ contains at least one edge from the set $E(S_{i}) \setminus E(S_{i+1})$. Since $E(S_{i}) \setminus E(S_{i+1})$ is a set of edges that is incident to $S_i$, we have that $S_i$ forms a feedback node set of $\hat{G}_i$. Now consider the scheme described in the paragraphs above, but instead of running an ES in-tree and out-tree from each center $\textsc{Center}(X)$ for some SCC $X$, we run a GES in-tree and out-tree on $\hat{G}_i[X]$ that maintains the $S_i$-distances to depth $\delta$.  Using this GES, whenever a set $Y \subseteq X$ of nodes has $S_i$-distance $\Omega(\delta)$, we show that we can find a separator $S$ of size $O(\min\{|S_i \cap Y|, |S_i \cap (X \setminus Y)|\}\log n / \delta)$ that only consists of nodes that are in $\{\{s\} | s \in S_i\}$; we then add the elements of set $S$ to the set $S_{i+1}$, and we also remove the nodes $Y$ from the GES-tree, analogously to our discussion of regular ES-trees above. Note that adding $S$ to $S_{i+1}$ removes the edges $E(S)$ from $\hat{G}_i$; since we chose $S$ to be a separator, this causes $Y$ and $X \setminus Y$ to no longer be part of the same SCC in $\hat{G}_i$. Thus, to maintain the hierarchy, we must then split nodes in $\hat{G}_{i+1}$ into multiple nodes corresponding to the new SCCs in $\hat{G}_i$: $X \setminus (Y \;\cup\; S)$, $Y \setminus S$ and every single-vertex set in $S$ ($Y$ might not form a SCC but we then further decompose it after we handled the node split). This might cause some self-loops in $\hat{G}_{i+1}$ to become edges between the newly inserted nodes (resulting from the split) and needs to be handled carefully to embed the new nodes in the GES-trees maintained upon the SCC in $\hat{G}_{i+1}$ that $X$ is part of. Observe that this does not result in edge insertions but only remaps the endpoints of edges. Further observe that splitting nodes can only increase $S_{i+1}$-distance since when they were still contracted their distance from the center was equivalent. Since $S_{i+1}$-distance still might increase, the update to might trigger further changes in the graph $\hat{G}_{i+1}$. 

Thus, overall, we ensure that all SCCs in $\hat{G}_i$ have $S_i$-diameter at most $O(\delta)$, and can hence be efficiently maintained by GES-trees. In particular, we show that whenever an SCC exceeds diameter $\delta$, we can, by moving a carefully chosen set of nodes in $S_i$ to $S_{i+1}$, remove a corresponding set of edges in $\hat{G}_i$, which breaks the large-$S_i$-diameter SCC into SCCs of smaller $S_i$-diameter.

\paragraph{Bounding the total update time.} Finally, let us sketch how to obtain the total expected running time $O(m \log^4 n)$. We already discussed how by using random sources in GES-trees (analogously to the same strategy for ES-trees), we ensure that each node is expected to be in $O(\log n)$ GES-trees maintained to depth $\delta = O(\log^2 n)$. 
Each such GES-tree is maintained in total update time $O(m\delta) = O(m\log^2n)$, so we have $O(m \log^3 n)$ total expected update time for each level, and since we have $O(\log n)$ levels, we obtain total expected update time $O(m\log^4 n)$. We point out that we have not included the time to compute the separators in our running time analysis; indeed, computing separators efficiently is one of the major challenges to building our hierarchy. Since implementing these subprocedures efficiently is rather technical and cumbersome, we omit their description from the overview but refer to section \ref{subsec:separators} for a detailed discussion.

\section{Generalized ES-trees}
\label{subsec:EStree}

For our algorithm, we devise a new version of the ES-trees that maintains the shortest-path tree with regard to $S$-distances. We show that if $S$ is a \textit{feedback vertex set} for $G$, that is a set such that every cycle in $G$ contains at least one vertex in $S$, then the data structure requires only $O(m\delta)$ total update time. Our fundamental idea is to combine classic ES-trees with techniques to maintain Single-Source Reachability in DAGs which can be implemented in linear time in the number of edges \cite{italiano1988finding}. Since $\mathbf{dist}_G(r, v) = \mathbf{dist}_G(r, v, V)$  and $V$ is a trivial feedback vertex set, we have that our data structure generalizes the classic ES-tree. Since the empty set is a feedback vertex set for DAGs, our data structure also matches the time complexity of Italiano's data structure. We define the interface formally below.

\begin{definition}
\label{def:GES}
Let $G=(V,E)$ be a graph and $S$ a feedback vertex set for $G$, $r \in V$ and $\delta > 0$. We define a generalized ES-tree $\mathcal{E}_r$ (GES) to be a data structure that supports the following operations:
\begin{itemize}
    \item $\textsc{InitGES}(r, G, S, \delta)$: Sets the parameters for our data structure. We initialize the data structure and return the GES.
    \item $\textsc{Distance}(r ,v )$: $\forall v \in V$, if $\mathbf{dist}_G(r ,v, S) \leq \delta$, $\mathcal{E}_r$ reports $\mathbf{dist}_G(r ,v, S)$, otherwise $\infty$.
    \item $\textsc{Distance}(v ,r )$: $\forall v \in V$, if $\mathbf{dist}_G(v ,r, S) \leq \delta$, $\mathcal{E}_r$ reports $\mathbf{dist}_G(v ,r, S)$, otherwise $\infty$.
    \item $\textsc{Delete}(u,v)$: Sets $E \gets E \setminus \{(u,v)\}$.
    \item $\textsc{Delete}(V')$: For $V' \subseteq V$, sets $V \gets V \setminus V'$, i.e. removes the vertices in $V'$ and all incident edges from the graph $G$.
    \item $\textsc{GetUnreachableVertex}()$: Returns a vertex $v \in V$ with 
    \[
    \max\{ \mathbf{dist}_G(r ,v, S), \mathbf{dist}_G(v, r, S)\} > \delta
    \]
    or $\bot$ if no such vertex exists.
\end{itemize}
\end{definition}

\begin{lemma}
\label{lma:SimpleGES}
The GES $\mathcal{E}_r$ as described in definition \ref{def:GES} can be implemented with total initialization and update time $O(m \delta)$ and requires worst-case time $O(1)$ for each operation $\textsc{Distance}(\cdot)$ and $\textsc{GetUnreachableVertex}()$.
\end{lemma}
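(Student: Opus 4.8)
The plan is to extend the classic ES-tree implementation from \Cref{sec:esTree} by replacing ordinary BFS layers with a layering by $S$-distance, and handling the zero-weight edges (those not in $E_{out}(S)$) with Italiano's DAG-reachability trick inside each layer. Concretely, I would first recall the equivalent formulation of $S$-distance from \Cref{chap:prelim}: build the auxiliary graph $G'$ on the same vertex set where every edge in $E_{out}(S)$ has weight $1$ and every other edge has weight $0$; then $\mathbf{dist}_G(r,v,S) = \mathbf{dist}_{G'}(r,v)$. The key structural observation I would isolate is that, because $S$ is a feedback vertex set of $G$, every cycle of $G$ contains a vertex of $S$, hence every cycle contains an edge of $E_{out}(S)$ of weight $1$; therefore the weight-$0$ edges form a DAG. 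This is exactly the hypothesis needed so that, restricted to the vertices at any fixed $S$-distance level $d\le\delta$, the weight-$0$ edges can be processed by Italiano's incremental(-reverse, i.e. decremental) single-source reachability structure on a DAG, which runs in $O(m)$ total time over all deletions.

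Next I would describe the data structure itself. For the out-direction: maintain for each vertex $v$ a value $\widetilde{\mathbf{dist}}(r,v)\in\{0,1,\dots,\delta,\infty\}$, initialized to $\mathbf{dist}_G(r,v,S)$ by a single static $0/1$-BFS on $G'$ in $O(m)$ time. Maintain, for each level $d$, the DAG $D_d$ induced by the weight-$0$ edges on vertices currently at level $d$, together with an Italiano DAG-reachability structure rooted at an artificial super-source that has a weight-$0$ edge to every vertex $u$ at level $d$ that has an incoming $E_{out}(S)$-edge $(x,u)$ with $\widetilde{\mathbf{dist}}(r,x)=d-1$. A vertex $v$ is "valid" at level $d$ iff it is reachable in this structure. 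On $\textsc{Delete}(u,v)$ (or $\textsc{Delete}(V')$, handled edge-by-edge) I remove the edge; this may cause some vertices to lose validity at their current level, exactly as in the ES-tree the deletion of a tree edge triggers a re-examination. I then process vertices in increasing order of current level using a priority queue keyed by level: when $v$ loses validity at level $d$, first try to restore it at level $d$ by scanning $\mathcal{N}^{in}(v)$ for a weight-$1$ edge from a level-$(d-1)$ vertex or (via the DAG structure) a weight-$0$ predecessor still valid at level $d$; if none exists, increment $\widetilde{\mathbf{dist}}(r,v)$, move $v$ to the next level's DAG, delete it from $D_d$, and enqueue its out-neighbors for re-examination. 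When a value would exceed $\delta$, set it to $\infty$ and remove $v$ from all level structures. The in-direction is symmetric, run on the reverse graph (note $S$ is also a feedback vertex set of $G^{(\mathrm{rev})}$ and $E_{out}(S)$ in $G$ corresponds to $E_{in}(S)$ in the reverse graph, so the same argument applies). $\textsc{Distance}(\cdot)$ returns the stored value in $O(1)$; $\textsc{GetUnreachableVertex}()$ is supported by keeping a bucket/list of vertices whose stored out- or in-value is $\infty$ and popping from it in $O(1)$.

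For the running time I would argue as in \Cref{sec:esTree}: each $\widetilde{\mathbf{dist}}(r,v)$ takes at most $\delta+1$ distinct values over the whole sequence, and is monotonically non-decreasing because the left-hand side of the relaxation inequality only increases over time (edges are only deleted, and predecessor labels only grow). For a fixed value of $\widetilde{\mathbf{dist}}(r,v)$, the weight-$1$ in-edges of $v$ are scanned only once in total (the same amortization argument as for the ES-tree). The weight-$0$ in-edges of $v$ are handled by the Italiano structure on the DAG $D_{\widetilde{\mathbf{dist}}(r,v)}$, which charges $O(1)$ amortized per weight-$0$ edge per level it ever lives in; since $v$ moves through at most $\delta+1$ levels and an edge $(u,v)$ lives at the level of $v$, each weight-$0$ edge is charged $O(\delta)$ total. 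Summing over all edges gives $O(m\delta)$ for initialization plus updates, and the query operations are $O(1)$ by inspection.

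The main obstacle I expect is the bookkeeping when a vertex changes levels: a vertex $v$ leaving level $d$ must be cleanly deleted from $D_d$ and inserted into $D_{d+1}$, and this insertion of a vertex (with its already-present weight-$0$ edges to/from other level-$(d+1)$ vertices) is a \emph{vertex insertion} into a DAG, which is not directly an operation supported by Italiano's decremental structure. The fix is to set up, once at initialization, one Italiano DAG structure per level on the \emph{full} weight-$0$ edge set, where a vertex is "activated" only once it reaches that level; activation corresponds to adding a weight-$0$ edge from the super-source, which is an operation the structure does support, and one must check that activations happen monotonically (each vertex is activated at each level at most once, since levels only increase) so the amortized $O(m)$-per-level bound is preserved. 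Making this interaction precise — and verifying that the DAG property genuinely holds at every level throughout, using that $S$ is a feedback vertex set — is the crux of the proof; the rest is a routine adaptation of the ES-tree analysis.
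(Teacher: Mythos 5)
Your key structural observation — that $S$ being a feedback vertex set forces every cycle to contain a weight-$1$ edge, so the weight-$0$ edges form a DAG — is exactly the insight the paper uses. But you build a much heavier machine on top of it than the paper does, and that machine has a gap you flag but do not close.

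The paper does \emph{not} run a separate Italiano structure per level. It simply runs the classic ES-tree on the $0/1$-weighted auxiliary graph $G'$ unchanged, using the usual reconnection rule (scan $\mathcal{N}^{in}(v)$ for $u'$ with $l(u')+w(u',v)\le l(v)$), and then observes that the only way the classic argument can fail when weight-$0$ edges are present is if the chosen $u'$ happens to be a descendant of $v$ in the current tree $T$, which would silently disconnect $v$ from $r$. That bad case would force a $0$-weight cycle (the $T$-path from $v$ to $u'$ stays at level $l(v)$ and so is all weight-$0$, and $(u',v)$ is weight-$0$), contradicting that $S$ is a feedback vertex set. With that one observation the standard ES-tree amortization — each in-edge of $v$ examined $O(1)$ times per value of $l(v)$, and $l(v)$ takes at most $\delta+1$ values — goes through verbatim and gives $O(m\delta)$. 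No per-level DAG-reachability subroutine is needed.

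The concrete gap in your version is the claim that "activation corresponds to adding a weight-$0$ edge from the super-source, which is an operation the structure does support." Italiano's DAG-reachability structure is decremental-only: its $O(m)$ total bound rests on the monotonicity that a vertex which becomes unreachable from the source stays unreachable, so each edge is examined $O(1)$ times. Adding super-source edges over time breaks that monotonicity at a fixed level — a vertex can be activated, lose reachability via a deletion, and later have other activations re-touch its in-edges — so the black-box charge no longer applies. You could probably repair this with a per-level argument (each vertex arrives at and departs each level at most once, so a one-shot BFS per activation plus a decremental charge is bounded), but once you write that repair out carefully it essentially collapses into the paper's direct $0/1$-ES-tree argument and the Italiano subroutine buys you nothing. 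The simpler route is to drop the per-level structures entirely and argue correctness of the plain weighted ES-tree via the no-$0$-weight-cycle observation.
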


Here, we only sketch the proof idea and refer the reader to \cite{bernstein2019decremental} for a full proof.

\begin{proof} (sketch)
Consider a classic ES-tree with each edge weight $w(u,v)$ of an edge $(u,v)$ in $E_{out}(S)$ set to $1$ and all other edges of weight $0$. Then, the classic ES-tree analysis maintains with each vertex $v \in V$ the distance level $l(v)$ that expresses the current distance from $s$ to $v$. We also have a shortest-path tree $T$, where the path in the tree from $s$ to $v$ is of weight $l(v)$. Since $T$ is a shortest-path tree, we also have that for every edge $(u,v) \in E$, $l(v) \leq l(u) + w(u,v)$. Now, consider the deletion of an edge $(u,v)$ from $G$ that removes an edge that was in $T$. To certify that the level $l(v)$ does not have to be increased, we scan the in-going edges at $v$ and try to find an edge $(u', v) \in E$ such that $l(v) = l(u') + w(u', v)$. On finding this edge, $(u',v)$ is added to $T$. The problem is that if we allow $0$-weight cycles, the edge $(u',v)$ that we use to reconnect $v$ might come from a $u'$ that was a descendant of $v$ in $T$. This will break the algorithm, as it disconnects $v$ from $s$ in $T$. But we show that this bad case cannot occur because $S$ is assumed to be a feedback vertex set, so at least one of the vertices on the cycle must be in $S$ and therefore the out-going edge of this vertex must have weight $1$ contradicting that there exists any $0$-weight cycle. The rest of the analysis follows closely the classic ES-tree analysis.
\end{proof}

To ease the description of our SCC algorithm, we tweak our GES implementation to work on the multi-graphs $\hat{G}_i$. We still root the GES at a vertex $r \in V$, but maintain the tree in $\hat{G}_i$ at $X_i^r$. The additional operations and their running time are described in the following Lemma whose proof is straight-forward and therefore omitted here (but can be found in \cite{bernstein2019decremental}). Note that we now deal with nodes rather than vertices which makes the definition of $S$-distances ambiguous (consider for example a node containing two vertices in $S$). For this reason, we require $S \subseteq \{\{v\} | v \in V\} \cap \hat{V}$ in the Lemma below, i.e. that the nodes containing vertices in $S$ are single-vertex nodes. But as discussed in the paragraph ``Refining the hierarchy" in the overview, our hierarchy ensures that every separator node $X \in S_i$ is always just a single-vertex node in $\hat{G}_i$. Thus this constraint can be satisfied by our hierarchy.

\begin{lemma}
\label{lma:AugmentedGES}
Say we are given a partition $\hat{V}$ of a universe $V$ and the graph $\hat{G}=(\hat{V}, E)$, a feedback node set $S \subseteq \{\{v\} | v \in V\} \cap \hat{V}$, a distinguished vertex $r \in V$, and a positive integer $\delta$. Then, we can run a GES $\mathcal{E}_{r}$ as in definition \ref{def:GES} on $\hat{G}$ in time $O(m\delta + \sum_{X \in \hat{V}} E(X) \log X)$ supporting the additional operations:
\begin{itemize}
    \item $\textsc{SplitNode}(X)$: the input is a set of vertices $X$ contained in node $Y \in \hat{V}$, such that either $E \cap (X \times Y \setminus X)$ or $E \cap (Y \setminus X \times X)$ is an empty set, which implies $X \not\rightleftarrows_{\hat{G} \setminus S} Y \setminus X$. We remove the node $Y$ in $\hat{V}$ and add node $X$ and $Y \setminus X$ to $\hat{V}$.
    \item $\textsc{Augment}(S')$: This procedure adds the nodes in $S'$ to the feedback vertex set $S$. Formally,
    the input is a set of single-vertex sets $S' \subseteq \{\{v\} | v \in V\} \cap \hat{V}$. $\textsc{Augment}(S')$ then adds every $s \in S'$ to $S$.
\end{itemize}
\end{lemma}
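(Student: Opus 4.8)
The plan is to implement the multigraph-GES as a layer on top of the single-graph GES of \Cref{lma:SimpleGES}, together with the standard union-find-based node-contraction bookkeeping used for Italiano-style reachability structures. First I would fix, once and for all, the convention that the GES $\mathcal{E}_r$ stores its BFS-layer function $l(\cdot)$ and shortest-path tree $T$ on the \emph{node} set $\hat V$ of $\hat G$ (rooted at $X_i^r$), exactly as in the proof of \Cref{lma:SimpleGES} but with nodes in place of vertices; since $S \subseteq \{\{v\}\mid v\in V\}\cap\hat V$, every feedback node is a singleton, so the $S$-distance (number of edges of $E_{out}(S)$, i.e. of edges leaving a feedback node) is unambiguous and the weighting ``$1$ on $E_{out}(S)$, $0$ elsewhere'' is well defined on $\hat G$. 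The correctness and the $O(m\delta)$ running-time argument of \Cref{lma:SimpleGES} then go through verbatim: the feedback-node-set property of $S$ still forbids $0$-weight cycles, so the reconnection step never picks a tree-descendant, and each directed edge is scanned $O(\delta)$ times over its lifetime. The extra $\sum_{X\in\hat V}|E(X)|\log|X|$ term will come purely from the node-merging maintenance described next.

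Next I would describe $\textsc{SplitNode}(X)$. By hypothesis $X$ sits inside a current node $Y$, and either $E\cap(X\times(Y\setminus X))=\emptyset$ or $E\cap((Y\setminus X)\times X)=\emptyset$, so there is no cycle through both parts in $\hat G\setminus S$ (any such cycle, not using a feedback node, would lie in the DAG $\textsc{Condensation}(G\setminus E(S_i))$, impossible); hence splitting $Y$ into $X$ and $Y\setminus X$ is consistent with the SCC structure. Operationally: remove $Y$ from $\hat V$, insert the two new nodes, and re-dispatch every edge that had $Y$ as an endpoint to whichever of $X$, $Y\setminus X$ its underlying vertex now lies in (this is the ``remap endpoints, no insertions'' point stressed in the overview). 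To keep the total cost bounded I would maintain each node as a balanced structure over the vertices it contains and, whenever a node is split, iterate over the \emph{smaller} of the two resulting parts (and its incident edges); a vertex can be on the smaller side only $O(\log n)$ times, which yields the claimed $\sum_{X}|E(X)|\log|X|$ overhead. After re-dispatching, the two new nodes inherit the layer values of $Y$ as an initial (valid, since splitting only increases $S$-distance) estimate, and one invocation of the ordinary GES repair loop restores a correct shortest-path tree and flags, via $\textsc{GetUnreachableVertex}$, any node now at $S$-distance $>\delta$.

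Then $\textsc{Augment}(S')$: for each singleton $\{s\}\in S'$ we simply reclassify $\{s\}$ as a feedback node, i.e. we reweight all of its outgoing edges from $0$ to $1$. This can only \emph{increase} edge weights, hence only increase the layer values $l(\cdot)$, so it is handled by the same monotone repair loop as an edge deletion --- we push each endpoint of a reweighted edge into the work queue and run the standard relaxation; amortized cost is absorbed into the $O(m\delta)$ bound because each edge's weight changes at most once. I would also note that adding $\{s\}$ to $S$ strengthens, not weakens, the feedback-node-set invariant, so the no-$0$-cycle guarantee underpinning \Cref{lma:SimpleGES} is preserved; and the deletion of the $E(S)$ edges from $\hat G_i$ that this models at the hierarchy level (causing $Y$ and $X\setminus Y$ to fall into distinct SCCs) is exactly what later triggers the $\textsc{SplitNode}$ calls, so the two operations compose cleanly. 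Summing: the GES core is $O(m\delta)$, the split bookkeeping adds $O(\sum_{X}|E(X)|\log|X|)$, and $\textsc{Augment}$ is free up to the $O(m\delta)$ already paid. The main obstacle is the split accounting: making precise that ``charge to the smaller side'' argument --- in particular that across the whole run the parts handled this way have total size $O(\sum_X |E(X)|\log|X|)$ and that the repair loop triggered after a split does not cost more than what the monotone-layers potential already pays for --- is the only step that needs real care; everything else is a faithful re-run of the proofs of \Cref{lma:SimpleGES} and of Italiano's DAG-reachability structure.
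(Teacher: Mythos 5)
The paper omits a proof of this lemma (deferring to the STOC'19 paper \cite{bernstein2019decremental}), but your proposal matches the intended argument and the pieces fit together: running the GES of \Cref{lma:SimpleGES} at the node level is sound because the hypothesis forces all feedback nodes to be singletons, so the $0/1$ weighting is unambiguous; $\textsc{Augment}$ is a monotone $0\to 1$ edge-reweighting absorbed by the usual $O(m\delta)$ potential since each edge's weight changes at most once; the $\textsc{SplitNode}$ precondition guarantees $S$ stays a feedback node set (merging the two new nodes back into $Y$ would turn any $0$-weight cycle through both sides into a $0$-weight cycle already present in the old $\hat G\setminus S$, contradiction); and the smaller-side iteration gives the $\sum_X |E(X)|\log|X|$ bookkeeping cost, with the post-split repair free because the two new nodes inherit $l(Y)$ as a valid lower bound, layer values remain monotone, and each edge is still charged to layer increments of its head's current node for a total of $O(m\delta)$. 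The one slip is cosmetic: the opening reference to ``union-find-based node-contraction'' is misplaced --- $\textsc{SplitNode}$ splits rather than contracts, and union-find does not support efficient splitting --- but the per-node balanced-structure-plus-smaller-side scheme you actually go on to describe is the correct mechanism.
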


We point out that we enforce the properties on the set $X$ in $\textsc{SplitNode}(X)$ in order to ensure that the set $S$ remains a feedback node set at all times. 

\section{Initializing the \texorpdfstring{the graph hierarchy $\hat{\mathcal{G}}$}{the graph hierarchy}} 
\label{subsec:Preprocessing}

We assume henceforth that the graph $G$ initially is strongly-connected. If the graph is not strongly-connected, we can run Tarjan's algorithm \cite{tarjan1972depth} in $O(m+n)$ time to find the SCCs of $G$ and run our algorithm on each SCC separately. 

\begin{algorithm}
\caption{$\textsc{Preprocessing}(G, \delta)$}
\label{alg:preprocessing}
\KwIn{A strongly-connected graph $G=(V,E)$ and a parameter $\delta > 0$.}
\KwOut{A hierarchy of sets $\mathcal{S} = \{ S_0, S_1, .., S_{\lfloor \lg{n} \rfloor + 2}\}$ and graphs $\hat{\mathcal{G}} = \{\hat{G}_0, \hat{G}_1, .. , \hat{G}_{\lfloor \lg{n} \rfloor + 1}\}$ as described in section \ref{subsec:overview}. Further, each SCC $X$ in $\hat{G}_i$ for $i \leq \lfloor \lg n \rfloor + 2$, has a center $\textsc{Center}(X)$ such that for any $y \in X$, $\mathbf{dist}_{\hat{G}_{i}}(\textsc{Center}(X),y, S_{i}) \leq \delta/2$ and $\mathbf{dist}_{\hat{G}_{i}}(y,\textsc{Center}(X), S_{i}) \leq \delta/2$. }
\BlankLine

$ S_0 \gets V$\;
$ \hat{V}_0 \gets \{\{v\} | v \in V\}$\;
$ \hat{G}_0 \gets (\hat{V}_0, E)$\label{lne:G0Init}\;
\For{ $i = 0 $ \KwTo $ \lfloor \lg{n} \rfloor$}{
    \tcc{Find separator $S_{Sep}$ such that no two vertices in the same SCC in $\hat{G}_{i}$ have $S_i$-distance $\geq \delta/2$. $P$ is the collection of these SCCs.}
	$ (S_{Sep}, P) \gets \textsc{Split}(\hat{G}_i, S_{i}, \delta/2)$ \;
	$ S_{i+1} \gets S_{Sep} $\;
	$\textsc{InitNewPartition}(P, i, \delta)$\;
    
    \tcc{Initialize the graph $\hat{G}_{i+1}$}
	$ \hat{V}_{i+1} \gets P$\;
    $ \hat{G}_{i+1} \gets (\hat{V}_{i+1}, E)$\label{lne:GIInit}\;
}        
$S_{\lfloor \lg{n} \rfloor + 2} \gets \emptyset$\;
\end{algorithm}

Our procedure to initialize our data structure is presented in pseudo-code in  \Cref{alg:preprocessing}. We first initialize the level $0$ where $\hat{G}_0$ is simply $G$ with the vertex set $V$ mapped to the set of singletons of elements in $V$. 

Let us now focus on an iteration $i$. Observe that the graph $\hat{G}_i$ initially has all edges in $E$ (by initializing each $\hat{G}_i$ in  \Cref{lne:G0Init} or \Cref{lne:GIInit}). Our goal is then to ensure that all SCCs in $\hat{G}_i$ are of small $S_i$-diameter at the cost of removing some of the edges from $\hat{G}_i$. Invoking the procedure $\textsc{Split}(\hat{G}_i, S_i, \delta/2)$ provides us with a set of separator nodes $S_{Sep}$ whose removal from $\hat{G}_i$ ensure that the $S_i$ diameter of all remaining SCCs is at most $\delta$. The set $P$ is the collection of all these SCCs, i.e. the collection of the SCCs in $\hat{G}_i \setminus E(S_{Sep})$. 

\Cref{lma:split} below describes in detail the properties satisfied by $\textsc{Split}(\hat{G}_i, S_i, \delta/2)$. In particular, besides the properties ensuring small $S_i$-diameter in the graph $\hat{G}_i \setminus E(S_{Sep})$ (properties \ref{prop:1} and \ref{prop:2}), the procedure also gives an upper bound on the number of separator vertices (property \ref{prop:3}). Setting $\delta = 64 \lg^2 n$, clearly implies that $|S_{Sep}| \leq |S_i|/2$ and ensures running time $O(m \log^3 n)$.

\begin{lemma}
\label{lma:split}
$\textsc{Split}(\hat{G}_i, S_i, \delta/2)$ returns a tuple $(S_{Sep}, P)$ where $P$ is a partition of the node set $\hat{V}_i$ such that 
\begin{enumerate}
    \item for $X \in P$, and nodes $u,v \in X$ we have  $\mathbf{dist}_{G \setminus E(S_{Sep})}(u,v,S) \leq \delta/2$, and \label{prop:1}
    \item for distinct $X, Y \in P$, with nodes $u \in X$ and $v \in Y$,   $u \not\rightleftarrows_{G \setminus E(S_{Sep})} v$, and \label{prop:2}
    \item 
    $|S_{Sep}| \leq  \frac{32 \lg^2 n}{\delta} |S_i|$. \label{prop:3}
\end{enumerate}
The algorithm runs in time $O\left(\delta m \lg n\right)$.
\end{lemma}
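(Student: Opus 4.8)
The plan is to build the separator set $S_{Sep}$ iteratively by a divide-and-conquer process that repeatedly peels off small balls around badly-behaved centers. First I would set up the core subroutine: given the current graph $\hat G_i$ and the feedback node set $S_i$, run a BFS-like exploration with respect to $S_i$-distances (i.e.\ treating edges leaving $S_i$ as weight $1$ and all other edges as weight $0$) simultaneously from a node $v$ in the ``out'' direction and the ``in'' direction, growing balls $B_{out}(v,j)$ and $B_{in}(v,j)$ layer by layer. The key observation (a standard ball-growing / Menger-type argument) is that since $|S_i \cap B_{out}(v, \delta/2)| \le |S_i|$, there must be some radius $j < \delta/2$ at which the number of $S_i$-nodes in layer $j+1$ is at most a $\tfrac{2\lg n}{\delta}$-fraction of $|S_i \cap B_{out}(v,j)|$ (otherwise the count of $S_i$-nodes would grow by a factor $(1+\tfrac{2\lg n}{\delta})$ for $\delta/2$ steps, exceeding $|S_i|$ once $\delta = \Theta(\lg^2 n)$). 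The nodes of $S_i$ in that thin layer form a node separator cutting $B_{out}(v,j)$ off from the rest of the SCC in the ``out'' direction (and symmetrically for the ``in'' direction), and because $S_i$ is a feedback node set, removing the edges incident to these nodes indeed disconnects the two sides in $\hat G_i / $ — no cycle survives that avoids $S_i$.

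Next I would organize the recursion so that the separator charges are small. Maintain a working collection of ``pieces'' (vertex subsets of $\hat V_i$), starting with the single piece $\hat V_i$ (or, by the preprocessing assumption, the whole strongly-connected $\hat G_i$). Repeatedly pick any piece $X$ that still contains two nodes at $S_i$-distance $\ge \delta/2$; pick as exploration center a node realizing a large such distance (or just any node and check both directions), find the thin layer as above on the \emph{smaller} of the two sides, add its $S_i$-nodes to $S_{Sep}$, and split $X$ into $B$ (the small side, of size at most half in terms of $S_i$-nodes, after also cutting out the separator itself) and $X \setminus B$. The crucial accounting is the classic ``smaller-half'' argument: each node of $S_i$ can lie on the smaller side of a cut at most $O(\lg n)$ times before it is exhausted, and each time the number of separator nodes charged is at most a $\tfrac{2\lg n}{\delta}$-fraction of the $S_i$-nodes on that smaller side. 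Summing the geometric-type charges over all split events gives $|S_{Sep}| \le \tfrac{32\lg^2 n}{\delta}|S_i|$, which is exactly property \ref{prop:3}; properties \ref{prop:1} and \ref{prop:2} follow because the loop terminates precisely when every piece has $S_i$-diameter $< \delta/2$ and the separator cuts guarantee the pieces are pairwise non-strongly-connected in $G \setminus E(S_{Sep})$.

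For the running time, I would observe that a single ball-growing exploration to depth $\delta/2$ from one center costs $O(\delta \cdot |E(X)|)$ where $X$ is the current piece, since each edge within the piece is scanned once per ``layer'' it might straddle — but more carefully, each exploration only needs to visit layers up to the one where the thin cut is found, and crucially we always explore from the \emph{smaller} side so that the work is charged to nodes that are about to leave via the separator or land in a piece at most half the size. Using the same $O(\lg n)$-times-on-the-smaller-side bound, the total exploration work telescopes to $O(\delta \cdot m \cdot \lg n)$, matching the claimed bound. There is some care needed to do the ``explore from the smaller side'' trick: one runs the out- and in-explorations in lockstep and stops as soon as \emph{either} side produces a valid thin cut, so the cost is proportional to the size of the smaller discovered ball — this is the standard two-sided BFS trick from \cite{chechik2016decremental} and I would cite it rather than re-derive it.

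The main obstacle I anticipate is the interplay between the two \emph{directions} (in vs.\ out) together with the $S_i$-distance (weighted $0/1$) metric, and making the ball-growing argument produce a genuine \emph{node} separator rather than merely an edge cut: one has to verify that the thin layer of $S_i$-nodes, once its incident edges are deleted, truly separates the two sides in \emph{both} reachability directions inside the SCC, and that this holds in the multigraph / condensation setting where a ``node'' may be a contracted SCC from a lower level. The feedback-node-set property of $S_i$ is what rescues this — any path that would re-cross the cut while avoiding $S_i$ would have to be part of a cycle avoiding $S_i$, contradiction — but spelling this out carefully, and simultaneously ensuring the separator nodes are themselves single-vertex nodes (so that \textsc{Augment} from \Cref{lma:AugmentedGES} is applicable), is the technically delicate part. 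I would isolate that as a standalone claim before assembling the recursion.
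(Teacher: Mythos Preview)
Your correctness argument for properties \ref{prop:1}--\ref{prop:3} via thin-layer ball-growing and smaller-half charging is sound and matches the paper's reasoning (carried out in Claim \ref{clm:SplitCorrectness} and Lemma \ref{lma:splitFull}). The gap is in the running-time argument, precisely at the place you flag as ``some care needed''. Running \textsc{OutSep} and \textsc{InSep} in lockstep from a \emph{single} center $v$ does \emph{not} guarantee that the first procedure to finish has explored only a constant fraction of the piece: in a directed graph it is entirely possible that both the out-ball and the in-ball of $v$ at $S$-depth $d/16$ contain more than $\tfrac{2}{3}|X|$ nodes, so whichever finishes first has done $\Theta(|E(X)|)$ work while cutting off only a tiny complement. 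You then cannot charge the exploration to vertices that ``land in a piece at most half the size'', and your telescoping fails; iterating this way can degrade to $\Theta(mn)$ work. (Separately, a static $S$-distance ball-growing is a $0/1$-BFS costing $O(|E(X)|)$, not $O(\delta\cdot|E(X)|)$ --- the $\delta$ factor in the target bound does \emph{not} come from the separator search.)

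The paper's fix (Algorithm \ref{alg:split}, else-branch at line \ref{line:split-else-case}) is exactly to detect this ``both-balls-large'' situation: when it happens, the intersection of the two large balls has size at least $|V|/3$ and lies within $S$-distance $d/8$ of $r$, so $r$ is a \emph{good center}. One now builds a GES-tree from $r$ to depth $d/2$ --- this is where the factor $\delta$ in the running time actually enters --- and repeatedly peels off any vertex $v$ at $S$-distance more than $d/2$ to or from $r$. Such a $v$ is far from the large intersection, so \textsc{InSep}/\textsc{OutSep} run from $v$ now provably returns a small side (Claim \ref{clm:largeSCCifEStree}). What remains in the GES-tree after all peels is a single piece of $S$-diameter $\le d$ containing at least a third of the nodes, so every recursive call is on at most $\tfrac{2}{3}|V|$ vertices and the depth is $O(\log n)$, giving $O(\delta m \log n)$ overall. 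Your anticipated obstacle --- that the thin layer is a genuine node separator in the multigraph/condensation setting --- is comparatively routine and is packaged into Lemma \ref{lma:sep}; the real missing ingredient is the good-center/GES-tree branch.
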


We then set $S_{i+1} = S_{Sep}$ which implicitly removes the edges $E(S_{i+1})$ from the graph $\hat{G}_i$. We then invoke the procedure $\textsc{InitNewPartition}(P, i, \delta)$, that is presented in algorithm \ref{alg:newPart}. The procedure initializes for each $X \in P$ that corresponds to an SCC in $\hat{G}_i$ the GES-tree from a vertex $r \in \textsc{Flatten}(X)$ chosen uniformly at random on the induced graph $\hat{G}_i[X]$. Observe that we are not explicitly keeping track of the edge set $E_i$ but further remove edges implicitly by only maintaining the induced subgraphs of $\hat{G}_i$ that form SCCs. A small detail we want to point out is that each separator node $X \in S_{Sep}$ also forms its own single-node set in the partition $P$.

\begin{algorithm}
\caption{$\textsc{InitNewPartition}(P, i, \delta)$\;}
\label{alg:newPart}
\KwIn{A partition of a subset of the nodes $V$, and the level $i$ in the hierarchy.}
\KwResult{Initializes a new ES-tree for each set in the partition on the induced subgraph $\hat{G}_i$.}
\BlankLine
\ForEach{$ X \in P $}{
    Let $r$ be a vertex picked from $\textsc{Flatten}(X)$ uniformly at random. \;
    $\textsc{Center}(X) \gets r$\; 
    \tcc{Init a generalized ES-tree from $\textsc{Center}(X)$ to depth $\delta$.}
    $\mathcal{E}_r^i \gets \textsc{InitGES}(\textsc{Center}(X), \hat{G}_i[X], S_i, \delta)$\;
}
\end{algorithm}

On returning to algorithm \ref{alg:preprocessing}, we are left with initializing the graph $\hat{G}_{i+1}$. Therefore, we simply set $\hat{V}_{i+1}$ to $P$ and use again all edges $E$. Finally, we initialize $S_{\lfloor \lg n \rfloor + 2}$ to the empty set which remains unchanged throughout the entire course of the algorithm.

Let us briefly sketch the analysis of the algorithm which is more carefully analyzed in subsequent sections. Using again $\delta = 64 \lg^2 n$ and Lemma \ref{lma:split}, we ensure that $|S_{i+1}| \leq |S_i|/2$, thus $|S_i| \leq n/2^i$ for all levels $i$. The running time of executing the $\textsc{Split}(\cdot)$ procedure $\lfloor \lg n \rfloor + 1$ times incurs running time $O(m \log^4 n)$ and initializing the GES-trees takes at most $O(m \delta)$ time on each level therefore incurring running time $O(m \log^3 n)$. 

\section{Finding Separators} 
\label{subsec:separators}

Before we describe how to update the data structure after an edge deletion, we want to explain how to find good separators since it is crucial for our update procedure. We then show how to obtain an efficient implementation of the procedure $\textsc{Split}(\cdot)$ that is the core procedure in the initialization.

Indeed, the separator properties that we want to show are essentially reflected in the properties of Lemma \ref{lma:split}. For simplicity, we describe the separator procedures on simple graphs instead of our graphs $\hat{G}_i$; it is easy to translate these procedures to our multi-graphs
$\hat{G}_i$ because the separator procedures are not dynamic; they are only ever invoked on a fixed graph, and so we do not have to worry about node splitting and the like.

To gain some intuition for the technical statement of our separator properties stated in Lemma \ref{lma:sep}, consider that we are given a graph $G=(V,E)$, a subset $S$ of the vertices $V$, a vertex $r \in V$ and a depth $d$. Our goal is to find a separator $S_{Sep} \subseteq S$, such that every vertex in the graph $G \setminus S_{Sep}$ is either at $S$-distance at most $d$ from $r$ \emph{or} cannot be reached from $r$, i.e. is separated from $r$. 

We let henceforth $V_{Sep} \subseteq V$ denote the set of vertices that are still reachable from $r$ in $G \setminus S_{Sep}$ (in particular there is no vertex $S_{Sep}$ contained in $V_{Sep}$ and $r \in V_{Sep}$). Then, a natural side condition for separators is to require the set $S_{Sep}$ to be small in comparison to the smaller side of the cut, i.e. small in comparison to $\min\{|V_{Sep}|, |V \setminus (V_{Sep} \;\cup\; S_{Sep})|\}$.

Since we are concerned with $S$-distances, we aim for a more general guarantee: we want the set $S_{Sep}$ to be small in comparison to the number of $S$ vertices on any side of the cut, i.e. small in comparison to $\min\{|V_{Sep} \cap S|, |(V \setminus (V_{Sep} \;\cup\; S_{Sep})) \cap S|\}$. This is expressed in property \ref{prop:balanceS} of the Lemma.

\begin{lemma}[Balanced Separator]
\label{lma:sep}
There exists a procedure $\textsc{OutSep}(r, G, S, d)$ (analogously $\textsc{InSep}(r, G, S, d)$) where $G=(V,E)$ is a graph, $r \in V$ a root vertex, $S \subseteq V$ and $d$ a positive integer. The procedure computes a tuple $(S_{Sep}, V_{Sep})$ such that
\begin{enumerate}
    \item $S_{Sep} \subseteq S$, $V_{Sep} \subseteq V$, $S_{Sep} \cap V_{Sep} = \emptyset$, $r \in V_{Sep}$,
    \item \label{step:separator-distance} $\forall v \in V_{Sep} \;\cup\; S_{sep}$, we have $\mathbf{dist}_G(r,v,S) \leq d$ (analogously $\mathbf{dist}_G(v,r,S) \leq d$ for $\textsc{InSep}(r, G, S, d)$), 
    \item \[
    |S_{Sep}| \leq \frac{\min\{|V_{Sep}\cap S|, | (V \setminus (S_{Sep} \;\cup\; V_{Sep})) \cap S|\} 2\log{n}}{d},\] \label{prop:balanceS}
    and
    \item for any $x \in V_{Sep}$ and $y \in V \setminus (S_{Sep} \;\cup\; V_{Sep})$, we have $u \not\leadsto_{G \setminus E(S_{Sep})} v$ (analogously $v \not\leadsto_{G \setminus E(S_{Sep})} u$ for $\textsc{InSep}(r, G, S, d)$).
\end{enumerate}
The running time of both $\textsc{OutSep}(\cdot)$ and $\textsc{InSep}(\cdot)$ can be bounded by $O(E(V_{Sep}))$.
\end{lemma}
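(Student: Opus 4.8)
The plan is to run a layered "ball-growing" BFS in $S$-distance from $r$ and pick the sparsest layer of $S$-vertices as the separator, just as in the classical Even–Shiloach/Karger ball-growing argument, but measuring everything with respect to the number of $S$-vertices rather than all vertices. First I would compute, by a single BFS-like scan of $G$ (treating out-edges of $S$ as weight $1$ and all other edges as weight $0$), for each $v$ the value $\mathbf{dist}_G(r,v,S)$; only the part reachable within $S$-distance $d$ from $r$ is relevant, so this costs $O(E(V_{Sep}))$ where $V_{Sep}$ will be the final reachable-and-close set. For $0 \le j \le d$ let $L_j = \{ v : \mathbf{dist}_G(r,v,S) \le j\}$ and let $T_j = L_j \cap S$ be the $S$-vertices within $S$-distance $j$. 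The $S$-vertices sitting exactly at "layer boundary $j$" — concretely the set $A_j$ of $v \in S$ with $\mathbf{dist}_G(r,v,S) = j$ that have an out-edge leaving $L_{j-1}$ — form a valid vertex cut: deleting $E(A_j)$ disconnects $L_{j-1}$ from everything at $S$-distance $> j$, because any path from $r$ leaving $L_{j-1}$ must traverse an out-edge of some $S$-vertex, and the first such edge after the last layer-$(j-1)$ vertex is incident to a vertex of $A_j$. (This is exactly where the feedback-vertex-set / $S$-distance structure is used: increasing $S$-distance along a path can only happen at out-edges of $S$, so layer boundaries are entirely inside $S$.)

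Next I would choose the layer index. Let $k = |T_d|$ be the number of $S$-vertices within $S$-distance $d$. If $k$ is small (say all of $S$ reachable from $r$ lies within $S$-distance $d$ anyway, or $k$ is at most the target bound), we can even take $S_{Sep} = \emptyset$ and be done with property \ref{prop:balanceS} trivially. Otherwise, I split the range $[1,d]$ into $d$ consecutive single layers and use a telescoping/pigeonhole argument: since the sizes $|T_0| \le |T_1| \le \dots \le |T_d|$ are monotone and bounded, by averaging there is an index $j^\star \in \{1,\dots,d\}$ with $|A_{j^\star}| \le |T_{j^\star-1}| \cdot (2\ln n)/d$ — this is the standard "one of $d$ multiplicative steps must be slow" estimate, using $\prod (1 + |A_j|/|T_{j-1}|) \le |T_d|/|T_0| \le n$ so that $\sum \ln(1+|A_j|/|T_{j-1}|) \le \ln n$, hence some term is at most $(\ln n)/d$, and $|A_j| \le |T_{j-1}|\cdot\frac{|A_j|}{|T_{j-1}|} \le |T_{j-1}| e^{(\ln n)/d} \cdot$(constant); being a little careful with the $\ln$ versus the factor $2$ gives the clean bound with $2\log n$. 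Set $S_{Sep} = A_{j^\star}$ and $V_{Sep} = L_{j^\star - 1} \setminus S_{Sep}$ — or rather the set of vertices still reachable from $r$ in $G \setminus E(S_{Sep})$, which since the cut is at layer $j^\star$ is contained in $L_{j^\star-1}$ plus possibly some of $A_{j^\star}$'s own vertices; one sets things up so $r\in V_{Sep}$, $V_{Sep}\cap S_{Sep}=\emptyset$, and property \ref{step:separator-distance} holds because every vertex counted has $S$-distance $\le j^\star \le d$. Property 4 (the separation) is immediate from the cut property above. The $\textsc{InSep}$ variant is identical on the reverse graph $\rev{G}$. The running time is $O(E(V_{Sep}))$ because we only ever scan edges incident to vertices within $S$-distance $d$ from $r$, and those are (by property 4 and the choice of $j^\star$) essentially the edges inside $V_{Sep}$.

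The main obstacle I anticipate is making the "small compared to \emph{both} sides of the cut" bound in property \ref{prop:balanceS} come out cleanly, rather than just "small compared to $|V_{Sep}\cap S|$". The layered argument naturally bounds $|S_{Sep}|$ against $|T_{j^\star-1}| \le |V_{Sep}\cap S|$, the near side. To also bound it against the far side $|(V\setminus(S_{Sep}\cup V_{Sep}))\cap S|$, the trick is to grow the BFS only while the near side stays no larger than, say, half of the total $S$-mass $k$: if the ball $T_j$ ever exceeds $k/2$, then the far side has $S$-mass at most $k/2 \le |T_{j^\star-1}|$ and we could equally well have run the symmetric BFS from the far side. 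So one runs the ball-growing up to the first layer where $|T_j| > k/2$, guaranteeing $|T_{j^\star-1}| \le \min\{|V_{Sep}\cap S|,\ |(V\setminus(V_{Sep}\cup S_{Sep}))\cap S|\}$; if no slow layer is found within $d/2$ steps before the ball reaches mass $k/2$ the product bound would be violated, forcing the slow layer to exist within the allotted depth, which is where the exact constant $64\lg^2 n$ chosen for $\delta$ (so that $d$ is a large enough multiple of $\log n$) finally gets used. Verifying that the depth budget $d$ suffices for this two-sided version is the one place the constants genuinely have to be checked.
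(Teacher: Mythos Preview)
Your approach matches the paper's: a layered BFS in $S$-distance from $r$, taking the $S$-vertices on a slow-growth layer as the separator, with the multiplicative growth argument producing the $\frac{2\log n}{d}$ factor. The correctness outline is right modulo an indexing slip: the separator between $L_j$ and its complement consists of the $S$-vertices at layer exactly $j$ (these are the tails of the weight-$1$ edges leaving $L_j$), not the set you describe as $A_j$.

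The genuine gap is in the running time. You propose first computing the BFS out to $S$-distance $d$ and then choosing $j^\star$, asserting this ``costs $O(E(V_{Sep}))$'' because the scanned edges are ``essentially the edges inside $V_{Sep}$''. That is false: the BFS to depth $d$ touches $E(L_d)$, while $V_{Sep}$ is only $L_{j^\star}$ for some $j^\star \le d$, which can be arbitrarily smaller. The same problem bites your two-sided argument, which relies on knowing $k = |T_d|$ upfront. The paper's fix is to \emph{never compute the full BFS}: it grows the layers one at a time and halts at the first index $i \le d/2$ where the $S$-mass grows by a factor below $1+\frac{2\log n}{d}$ (such an $i$ exists, else the $S$-mass after $d/2$ layers would exceed $n$). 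This already gives the near-side bound with work $O(E(L_i)) = O(E(V_{Sep}))$. If at that $i$ the far-side bound fails, the paper simply keeps growing layer by layer until the first index $i'$ where $|A_{i'}| \le \frac{2\log n}{d}\,|S \setminus T_{i'}|$; a symmetric telescoping on the shrinking far side forces such an $i'$ within depth $d$, the near-side bound persists since the near side only grew, and the total work is $O(E(V_{Sep}))$ for the \emph{final} $V_{Sep}$ because one never grows past the layer one outputs.
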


Again, we only sketch the proof idea and refer the reader to \cite{bernstein2019decremental} for a full proof.

\begin{proof}
To implement procedure $\textsc{OutSep}(r, G, S, d)$, we start by computing a BFS at $r$. Here, we assign edges in $E_{out}(S)$ again weight $1$ and all other edges weight $0$ and say a layer consists of all vertices that are at same distance from $r$. To find the first layer, we can use the graph $G \setminus E_{out}(S)$ and run a normal BFS from $r$ and all vertices reached form the first layer $L_0$. We can then add for each edge $(u,v) \in E_{out}(S)$ with $u \in L_0$ the vertex $v$ to $L_1$ if it is not already in $L_0$. We can then contract all vertices visited so far into a single vertex $r'$ and repeat the procedure described for the initial root $r$. It is straight-forward to see that the vertices of a layer that are also in $S$ form a separator of the graph. To obtain a separator that is small in comparison to $|V_{Sep} \cap S|$, we add each of the layers $0$ to $d/2$ one after another to our set $V_{Sep}$, and output the index $i$ of the first layer that grows the set of $S$-vertices in $V_{Sep}$ by factor less than $(1+\frac{2 \log n}{d})$. We then set $S_{Sep}$ to be the vertices in $S$ that are in layer $i$. If the separator is not small in comparison to $| (V \setminus (S_{Sep} \;\cup\; V_{Sep})) \cap S|$, we grow more layers and output the first index of a layer such that the separator is small in comparison to $| (V \setminus (S_{Sep} \;\cup\; V_{Sep})) \cap S|$. This layer must exist and is also small in comparison to $|V_{Sep} \cap S|$. Because we find our separator vertices $S_{Sep}$ using a BFS from $r$, a useful property of our separator is that all the vertices in $S_{Sep}$ and $V_{Sep}$ are within bounded distance from $r$.

Finally, we can ensure that the running time of the procedure is linear in the size of the set $E(V_{Sep})$, since these are the edges that were explored by the BFS from root $r$.
\end{proof}

Let us now discuss the procedure $\textsc{Split}(G,S,d)$ that we already encountered in section \ref{subsec:Preprocessing} and whose pseudo-code is given in algorithm \ref{alg:split}. Recall that the procedure computes a tuple $(S_{Split}, P)$ such that the graph $G \setminus E(S_{Split})$ contains no SCC with $S$-diameter larger $d$ and where $P$ is the collection of all SCCs in the graph $G \setminus E(S_{Split})$.

\begin{algorithm}
\caption{$\textsc{Split}(G, S, d)$}
\label{alg:split}
\KwIn{A graph $G=(V,E)$, a set $S \subseteq V$ and a positive integer $d$.}
\KwOut{Returns a tuple $(S_{Split}, P)$, where $S_{Split} \subseteq S$ is a separator such that no two vertices in the same SCC in $G \setminus E(S)$ have $S$-distance greater than $d$. $P$ is the collection of these SCCs.}
\BlankLine

$S_{Split} \gets \emptyset; P \gets \emptyset; G' \gets G;$\;

\While{$G' \neq \emptyset$}{
    Pick an arbitrary vertex $r$ in $V$.\;
    Run in parallel $\textsc{OutSep}(r, G', S, d/16)$ and $\textsc{InSep}(r, G', S, d/16)$ and let $(S_{Sep}, V_{Sep})$ be the tuple returned by the first subprocedure that finishes.\label{lne:sepTwoWay}\;

    \lIf(\label{lne:sepTwoWayIf}){$|V_{Sep}| \leq \frac{2}{3}|V|$}{
        $(S'_{Sep}, V'_{Sep}) \gets (S_{Sep}, V_{Sep})$
    }\Else(\label{lne:sepTwoWayElse}){
        Run the separator procedure that was aborted in line \ref{lne:sepTwoWay} until it finishes and let the tuple returned by this procedure be $(S'_{Sep}, V'_{Sep})$.
    }
    
    \If(\label{line:split-if-case}){$|V'_{Sep}| \leq \frac{2}{3}|V|$} {
        $(S_{Small}, P_{Small}) \gets \textsc{Split}(G'[V'_{Sep}], V'_{Sep} \cap S, d)$\label{lne:splitRecurseIf} \;
        $S_{Split} \gets S_{Split} \;\cup\; S_{Small} \;\cup\; S'_{Sep}$\label{lne:addtoS1}\;
        $P \gets P \;\cup\; P_{Small} \;\cup\; \{ \{s\} | s \in S'_{Sep}\})$\label{lne:addToPSep}\;
        $G' \gets G'[V \setminus (V'_{Sep} \;\cup\; S'_{Sep})]$
    }\Else(\label{line:split-else-case}){ 
        \tcc{Init a generalized ES-tree from $r$ to depth $d/2$.}
        $\mathcal{E}_r^i \gets \textsc{InitGES}(r, G', S, d/2)$\;
        \tcc{Find a good separator for every vertex that is far from $r$.}
         \While(\label{line:split-else-while}) {  $(v \gets \mathcal{E}_r^i.\textsc{GetUnreachableVertex}()) \neq \bot$}  {
            \If{$\mathcal{E}_r^i.\textsc{Distance}(r,v) > d/2$}{
                $(S''_{Sep}, V''_{Sep}) \gets \textsc{InSep}(v,G', S, d/4)$\;
            }\Else(\tcp*[h]{If $\mathcal{E}_r^i.\textsc{Distance}(v,r) > d/2$}){
                $(S''_{Sep}, V''_{Sep}) \gets \textsc{OutSep}(v,G', S, d/4)$\;
            }
            $\mathcal{E}_r.\textsc{Delete}(S''_{Sep} \;\cup\; V''_{Sep})$\;
        
            $(S'''_{Sep}, P''') \gets \textsc{Split}( G[V''_{Sep}] , V''_{Sep} \cap  S , d)$\label{lne:splitRecurse}\;
            $S_{Split} \gets S_{Split} \;\cup\; S''_{Sep} \;\cup\; S'''_{Sep}$\label{lne:addtoS2}\;
            $P \gets P \;\cup\; P''' \;\cup\; \{ \{s\} | s \in S''_{Sep}\})$\label{lne:addToP1}\;
        }
        $P \gets P \;\cup\; \{\mathcal{E}_r.\textsc{GetAllVertices}()\}$\label{lne:addToP2}\;
        $G' \gets \emptyset$\;
    }
}
\Return $(S_{Split}, P)$\;
\end{algorithm}

Let us sketch the implementation of the procedure $\textsc{Split}(G,S,d)$. We first pick a vertex and invoke the procedures $\textsc{OutSep}(r, G', S, d/4)$ and $\textsc{InSep}(r, G', S, d/4)$ to run in parallel, that is the operations of the two procedures are interleaved during the execution. If one of these subprocedures returns and presents a separator tuple $(S_{Sep}, V_{Sep})$, the other procedure is aborted and the tuple $(S_{Sep}, V_{Sep})$ is returned. If $|V_{Sep}| \leq \frac{2}{3}|V|$, then we conclude that the separator function only visited a small part of the graph. Therefore, we use the separator subsequently, but denote the tuple henceforth as $(S'_{Sep},V'_{Sep})$. Otherwise, we decide the separator is not useful for our purposes. We therefore return to the subprocedure we previously aborted and continue its execution. We then continue with the returned tuple $(S'_{Sep}, V'_{Sep})$.

From there on, there are two possible scenarios. The first scenario is that the subprocedure producing $(S'_{Sep}, V'_{Sep})$ has visited a rather small fraction of the vertices in $V$ (line \ref{line:split-if-case}); in this case, we have pruned away a small number of vertices $V'_{Sep}$ while only spending time proportional to the smaller side of the cut, so we can simply recurse on $V'_{Sep}$. We also have to continue pruning away vertices from the original set $V$, until we have either removed all vertices from $G$ by finding these separators and recursing, or until we enter the else-case (line \ref{line:split-else-case}). 

The else-case in line \ref{line:split-else-case} is the second possible scenario:  note that in this case we must have entered the else-case in line \ref{lne:sepTwoWayElse} and had both the $\textsc{InSep}(\cdot)$ \emph{and} $\textsc{OutSep}(\cdot)$ explore the large side of the cut. Thus we cannot afford to simply recurse on the smaller side of the cut $V \setminus V'_{sep}$, as we have already spent time $|V'_{sep}| > |V \setminus V'_{sep}|$. Thus, for this case we use a different approach. We observe that because we entered the else-case in line \ref{lne:sepTwoWayElse} and since we entered the else-case \ref{line:split-else-case}, we must have had that $|V_{Sep}| \geq \frac{2}{3}|V|$ \textit{and} that $|V'_{Sep}| \geq \frac{2}{3}|V|$. We will show that in this case, the root $r$ must have small $S$-distance to and at least $\frac{1}{3}|V|$ vertices. We then show that this allows us to efficiently prune away at most $\frac{2}{3}|V|$ vertices from $V$ at large $S$-distance to or from $r$. We recursively invoke $\textsc{Split}(\cdot)$ on the induced subgraphs of vertex sets that we pruned away.

We analyze the procedure in detail in multiple steps, and summarize the result in Lemma \ref{lma:splitFull} that is the main result of this section. Let us first prove that if the algorithm enters the else-case in line \ref{line:split-else-case} then we add an SCC of size at least $\frac{1}{3}|V|$ to $P$. 

\begin{claim}
\label{clm:largeSCCifEStree}
If the algorithm enters line \ref{line:split-else-case} then the vertex set returned by the procedure $\mathcal{E}_r.\textsc{GetAllVertices}()$ in line \ref{lne:addToP2} is of size at least $\frac{1}{3}|V|$.
\end{claim}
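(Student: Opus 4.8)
The plan is as follows. First, unwind what it means to reach line~\ref{line:split-else-case}. The tuple $(S_{Sep}, V_{Sep})$ returned first in line~\ref{lne:sepTwoWay} must have had $|V_{Sep}| > \frac{2}{3}|V|$ --- otherwise line~\ref{lne:sepTwoWayIf} sets $(S'_{Sep}, V'_{Sep}) = (S_{Sep}, V_{Sep})$ and the algorithm takes the branch of line~\ref{line:split-if-case}, not line~\ref{line:split-else-case} --- and the tuple $(S'_{Sep}, V'_{Sep})$ obtained by running the aborted subprocedure to completion in line~\ref{lne:sepTwoWayElse} likewise has $|V'_{Sep}| > \frac{2}{3}|V|$. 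Hence \emph{both} $\textsc{OutSep}(r, G', S, d/16)$ and $\textsc{InSep}(r, G', S, d/16)$ returned vertex sets of size more than $\frac{2}{3}|V|$; call them $V^{out}$ and $V^{in}$, and set $W := V^{out} \cap V^{in}$. Since $V^{out}, V^{in} \subseteq V(G') \subseteq V$ we get $|W| \ge |V^{out}| + |V^{in}| - |V| > \frac{1}{3}|V|$, and by the second property of \Cref{lma:sep} every $w \in W$ satisfies $\mathbf{dist}_{G'}(r, w, S) \le d/16$ and $\mathbf{dist}_{G'}(w, r, S) \le d/16$. Since $\mathcal{E}_r.\textsc{GetAllVertices}()$ in line~\ref{lne:addToP2} returns the vertices still present in $\mathcal{E}_r$ when the loop of line~\ref{line:split-else-while} terminates, it suffices to prove that no vertex of $W$ is ever removed from $\mathcal{E}_r$.

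I would establish this by induction on the iterations of the loop in line~\ref{line:split-else-while}, maintaining the invariant: no vertex of $W$ has been deleted, and the current graph $H$ of $\mathcal{E}_r$ satisfies $\mathbf{dist}_H(r, w, S) \le d/16$ and $\mathbf{dist}_H(w, r, S) \le d/16$ for every $w \in W$. The base case $H = G'$ is the observation above. For the inductive step, suppose the loop processes a far vertex $v$; by symmetry (swap $\textsc{InSep} \leftrightarrow \textsc{OutSep}$ and reverse every path) assume $\mathbf{dist}_H(r, v, S) > d/2$, so that we compute $(S''_{Sep}, V''_{Sep}) = \textsc{InSep}(v, H, S, d/4)$ and delete $T := S''_{Sep} \cup V''_{Sep}$, obtaining $H' := H \setminus T$; note $\mathbf{dist}_H(r,v,S) > d/2 \ge d/4$ forces $r \notin T$. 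If some $w \in W$ lay in $T$, the second property of \Cref{lma:sep} gives $\mathbf{dist}_H(w, v, S) \le d/4$ and the invariant gives $\mathbf{dist}_H(r, w, S) \le d/16$, whence $\mathbf{dist}_H(r, v, S) \le d/16 + d/4 < d/2$, contradicting that $v$ is far; so $T \cap W = \emptyset$. Next, fix $w \in W$ and a shortest $r$-to-$w$ path $P$ in $H$ (through at most $d/16$ vertices of $S$); I claim $P$ avoids $T$. If $P$ met some $z \in S''_{Sep}$, then $\mathbf{dist}_H(r, z, S) \le d/16$ (prefix of $P$) and $\mathbf{dist}_H(z, v, S) \le d/4$, again forcing $\mathbf{dist}_H(r, v, S) < d/2$; and if $P$ met $V''_{Sep}$ but no vertex of $S''_{Sep}$, then since $r, w \notin T$ the path would have to enter $V''_{Sep}$ from outside $T$, which the fourth (directed separation) property of \Cref{lma:sep} forbids once the edges incident to $S''_{Sep}$ are removed. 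Hence $P$ survives in $H'$, giving $\mathbf{dist}_{H'}(r, w, S) \le d/16$; the bound $\mathbf{dist}_{H'}(w, r, S) \le d/16$ is proved the same way, now additionally using the invariant bound on $\mathbf{dist}_H(r, w, S)$ to contract a chain $r \leadsto w \leadsto z \leadsto v$ of total $S$-length at most $d/16 + d/16 + d/4 < d/2$. This closes the induction, and at termination $W \subseteq \mathcal{E}_r.\textsc{GetAllVertices}()$, so the returned set has size at least $|W| > \frac{1}{3}|V|$.

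The heart of the argument --- and the only step needing care --- is the middle of the inductive step: showing that deleting $T$ does not lengthen the $O(d/16)$-length $S$-distance certificates between $r$ and the vertices of $W$. The naive claim ``$w$ is $S$-close to $r$ in $G'$, hence stays $S$-close'' is false, since later deletions can destroy a witnessing path; the remedy is the strengthened invariant together with the directed separation property of \Cref{lma:sep}, which forces any reconnecting detour through $T$ either to pass through a vertex of $S''_{Sep}$ (and hence, via the $d/4$ distance bound, to witness $\mathbf{dist}_H(r, v, S) < d/2$, a contradiction) or to be outright impossible. A final bookkeeping point to pin down is that the calls $\textsc{InSep}/\textsc{OutSep}$ and the query $\textsc{GetUnreachableVertex}$ inside the loop all refer to the same evolving graph $H$ --- namely $G'$ with the deletions made so far in this loop applied --- so that the properties of \Cref{lma:sep} are invoked in $H$, not in the untouched $G'$.
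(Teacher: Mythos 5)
Your proof is correct and follows the same overall blueprint as the paper's: exhibit a set of more than $|V|/3$ vertices that are $S$-close to $r$ in both directions, then argue that none of them is ever pruned from $\mathcal{E}_r$ inside the else-branch loop. The realizations differ, however. The paper protects $SP$, the set of all vertices lying on $S$-shortest paths between pairs in $B_{out}(r) \cap B_{in}(r)$; because $SP$ is path-closed, a witness path of $S$-length at most $3d/16$ from $r$ to any $u \in SP$ lives entirely inside $SP$, so the paper can run a ``consider the first time a vertex of $SP$ is about to be removed'' minimality argument that uses only property~\ref{step:separator-distance} of Lemma~\ref{lma:sep}: as long as $SP$ is intact, the witness survives in the pruned graph automatically and gives $\mathbf{dist}_H(r,u,S) \le 3d/16 < d/4$. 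You instead protect the smaller set $W = V^{out} \cap V^{in}$ directly, which buys you the tighter bound $d/16$, but since the witness paths from $r$ to $w \in W$ may leave $W$, you need the explicit inductive invariant together with the directed-separation property (property~4 of Lemma~\ref{lma:sep}) to argue that the witnesses survive each pruning step --- an ingredient the paper's choice of set renders unnecessary. Both routes are sound; the paper trades a larger protected set for a trivial preservation step, while yours spends the separation lemma to compensate for a set that is not path-closed. You are also right to flag that the $\textsc{InSep}/\textsc{OutSep}$ calls inside the else-loop must be read as operating on the current pruned graph $H$ rather than the frozen $G'$ of the pseudocode; the distance test is against $\mathcal{E}_r$, which lives in $H$, and neither proof closes without this interpretation, even though the paper leaves it implicit.
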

\begin{proof}
Observe first that since we did no enter the if-case in line \ref{lne:splitRecurseIf}, that $|V_{Sep}| > \frac{2}{3}|V|$ and $|V'_{Sep}| > \frac{2}{3}|V|$ (since we also cannot have entered the if case in line \ref{lne:sepTwoWayIf}). 

Since we could not find a sufficiently good separator in either direction, we certified that the $S$-out-ball from $r$ defined 
\[
B_{out}(r) = \{ v \in V | \mathbf{dist}_G(r , v, S) \leq d/16\}
\]
has size greater than $\frac{2}{3}|V|$, and that similarly, the $S$-in-ball $B_{in}(r)$ of $r$ has size greater than $\frac{2}{3}|V|$. This implies that 
\[
|B_{out}(r) \cap B_{in}(r)| > \frac{1}{3}|V|.
\]
Further, we have that every vertex on a shortest-path between $r$ and a vertex $v \in B_{out}(r) \cap B_{in}(r)$ has a shortest-path from and to $r$ of length at most $d/16$. Thus the $S$-distance between any pair of vertices in $B_{out}(r) \cap B_{in}(r)$ is at most $d/8$. Now, let $SP$ be the set of all vertices that are on a shortest-path w.r.t. $S$-distance between two vertices in $B_{out}(r) \cap B_{in}(r)$. Clearly, $B_{out}(r) \cap B_{in}(r) \subseteq SP$, so $|SP| \geq |V|/3$. It is also easy to see that $G[SP]$ has $S$-diameter at most $d/4$.

At this point, the algorithm repeatedly finds a vertex $v$ that is far from $r$ and finds a separator from $v$. We will now show that the part of the cut containing $v$ is always disjoint from $SP$; since $|SP| > |V|/3$, this implies that at least $|V|/3$ vertices remain in $\mathcal{E}_r$.

Finally, consider some vertex $v$ chosen in line \ref{line:split-else-while}. Let us say that we now run $\textsc{InSep}(v,G',S,d/4)$; the case where we run $\textsc{OutSep}(v,G',S,d/4)$ is analogous. Now, by property \ref{step:separator-distance} in Lemma \ref{lma:sep}, every $s \in S_{Sep}$ has $\mathbf{dist}(s,v,S) \leq d/4$. Thus, since we only run the \textsc{InSep} if we have $\mathbf{dist}(r,v,S) > d/2$, we must have $\mathbf{dist}(r,s,S) > d/4$. 
\end{proof}

We point out that claim \ref{clm:largeSCCifEStree} implies that $\textsc{Split}(\cdot)$ only recurses on disjoint subgraphs containing at most a $2/3$ fraction of the vertices of the given graph. To see this, observe that we either recurse in line \ref{lne:splitRecurseIf} on $G'[V'_{Sep}]$ after we explicitly checked whether $|V'_{Sep}| \leq \frac{2}{3}|V|$ in the if-condition, or we recurse in line \ref{lne:splitRecurse} on the subgraph pruned from the set of vertices that $\mathcal{E}_r$ was initialized on. But since by claim \ref{clm:largeSCCifEStree} the remaining vertex set in $\mathcal{E}_r$ is of size at least $|V|/3$, the subgraphs pruned away can contain at most $\frac{2}{3}|V|$ vertices.

We can use this observation to establish correctness of the $\textsc{Split}(\cdot)$ procedure.

\begin{claim}
\label{clm:SplitCorrectness}
$\textsc{Split}(G, S, d)$ returns a tuple $(S_{Sep}, P)$ where $P$ is a partition of the vertex set $V$ such that 
\begin{enumerate}
    \item for $X \in P$, and vertices $u,v \in X$ we have  $\mathbf{dist}_{G \setminus E(S_{Sep})}(u,v,S) \leq d$, and
    \item for distinct $X, Y \in P$, with vertices $u \in X$ and $v \in Y$,   $u \not\rightleftarrows_{G \setminus E(S_{Sep})} v$, and
    \item 
    $|S_{Split}| \leq  \frac{32 \log n}{d} \sum_{X \in P}  \lg (n  / |X \cap S|) |X \cap S|$. \label{prop:splitcorrect3}
\end{enumerate}
\end{claim}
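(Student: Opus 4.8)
The plan is to prove all three properties simultaneously by induction on $|V|$, driven by the recursive structure of \Cref{alg:split}, \Cref{lma:sep}, and \Cref{clm:largeSCCifEStree}. The base case ($G$ a single vertex, or the while-loop entered once and exited immediately) is trivial. First one checks that $P$ is genuinely a partition of $V$: it is assembled from pairwise-disjoint pieces that together cover $V$ — a singleton $\{s\}$ for each $s$ placed into $S_{Split}$, the parts returned by the recursive calls on $G'[V'_{Sep}]$ (line~\ref{lne:splitRecurseIf}) and on $G[V''_{Sep}]$ (line~\ref{lne:splitRecurse}), and the single set $\mathcal{E}_r.\textsc{GetAllVertices}()$ added in the else-case (line~\ref{lne:addToP2}).

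The structural engine for Properties~1 and~2 is that every cut $(S_{Sep},V_{Sep})$ returned by $\textsc{OutSep}$ or $\textsc{InSep}$ is \emph{one-way} once its separator edges are removed: by the last property of \Cref{lma:sep}, in the ambient graph with $E(S_{Sep})$ deleted no path leaves $V_{Sep}$ (for $\textsc{OutSep}$) or enters $V_{Sep}$ (for $\textsc{InSep}$), and all edges at $S_{Sep}$ are gone. Consequently no strongly connected component of $G\setminus E(S_{Split})$ straddles any cut the algorithm makes, and a separator vertex is isolated, hence its own SCC. Moreover reachability, and so strong connectivity, between two vertices of a retained piece $V_{Sep}$ inside $G\setminus E(S_{Split})$ coincides with that inside the induced subgraph actually passed to the corresponding recursive call; this is what lets the inductive guarantee transfer verbatim, giving Property~2 for recursively produced parts, while distinctness of a recursively produced part from a singleton or a sibling piece is exactly the no-straddling statement. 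For Property~1, a singleton part is trivial; for a part $X$ coming from a recursive call the induction gives $\mathbf{dist}(u,v,S)\le d$ for $u,v\in X$ inside a subgraph of $G\setminus E(S_{Split})$, and distances only drop when edges are added back, so the bound survives; for the final set $\mathcal{E}_r.\textsc{GetAllVertices}()$ the GES-tree at $r$ is maintained to depth $d/2$ on a subgraph of $G\setminus E(S_{Split})$, so every surviving $v$ has $S$-distance at most $d/2$ both from and to $r$ there, and concatenating a $u$-to-$r$ walk with an $r$-to-$v$ walk bounds $\mathbf{dist}_{G\setminus E(S_{Split})}(u,v,S)$ by $d$; \Cref{clm:largeSCCifEStree} ensures this set is one genuine SCC rather than a coarsening.

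Property~3 is the part I expect to be the main obstacle, and the plan is a charging argument. Every time the algorithm adds a separator $S_{Sep}$ to $S_{Split}$, it came from $\textsc{OutSep}$/$\textsc{InSep}$ run at depth at least $d/16$, so by property~\ref{prop:balanceS} of \Cref{lma:sep} we have $|S_{Sep}|\le\frac{32\log n}{d}\min\{|V_{Sep}\cap S|,\,|(V\setminus(V_{Sep}\cup S_{Sep}))\cap S|\}$; I charge $\frac{32\log n}{d}$ units to each vertex of $S$ that lies on whichever side realises this minimum. The key sublemma to establish is that whenever an $S$-vertex $v$ is charged, the number of $S$-vertices in the component currently containing $v$ at least halves — precisely because $v$ then lies on a side holding at most half of the $S$-vertices present. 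Since that component always contains $v$'s eventual part $X_v$, its $S$-count never drops below $|X_v\cap S|$, so $v$ is charged at most $\lg(n/|X_v\cap S|)$ times; summing $\frac{32\log n}{d}$ over all charges and regrouping by part yields exactly $|S_{Split}|\le\frac{32\log n}{d}\sum_{X\in P}\lg(n/|X\cap S|)\,|X\cap S|$.

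The technical care will be concentrated in two places. First, in Property~3, one must faithfully track the three distinct separators $S'_{Sep},S''_{Sep},S'''_{Sep}$ and the ``run $\textsc{OutSep}$ and $\textsc{InSep}$ in parallel and keep the smaller side'' logic of lines~\ref{lne:sepTwoWay}--\ref{lne:sepTwoWayElse}, and verify that the halving sublemma is not spoiled when the charged side is one the algorithm \emph{retains} (e.g.\ the shrinking GES-tree in the else-case) rather than one it recurses on — this holds because a side is only ever charged when it carries no more than half the current $S$-vertices, so an individual $S$-vertex on a retained side still sees a geometric decrease across its charges. Second, in Properties~1 and~2, one must check that the ``one-way'' property of a cut, established with respect to the graph $G'$ current at the moment the cut is found, still rules out cycles in the final graph $G\setminus E(S_{Split})$ that would route through vertices pruned at an earlier iteration; this follows since such earlier-pruned vertices either lie in $S_{Split}$ (hence are isolated) or themselves lie behind an earlier one-way cut that a returning path cannot cross, but writing this out cleanly is the most delicate bookkeeping in the proof.
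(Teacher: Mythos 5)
Your proposal is correct and follows essentially the same route as the paper: a joint induction on $|V|$ for Properties~1 and~2 via a case split on how each part of $P$ was created (singleton separator vertex, recursively-produced SCC, or the surviving GES-tree set, with the $d/2 + d/2$ concatenation), combined with the standard charging argument for Property~3 where a vertex of $S$ is only ever charged $\frac{32\log n}{d}$ when it lands on the side of a cut holding at most half the current $S$-vertices. Your halving-to-$\lg(n/|X\cap S|)$ bookkeeping is in fact stated more carefully than the paper's (which writes $\lg(n-|X\cap S|)$ in the proof body, apparently a typo inconsistent with the claim statement), and your closing remark about verifying that cuts established one-way in the current $G'$ remain uncrossable by cycles in the final $G\setminus E(S_{Split})$ correctly identifies a piece of bookkeeping that the paper glosses over.
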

\begin{proof}
Let us start with the first two properties which we prove by induction on the size of $|V|$ where the base case $|V|=1$ is easily checked. For the inductive step, observe that each SCC $X$ in the final collection $P$ was added to $P$ in line \ref{lne:addToPSep}, \ref{lne:addToP1} or \ref{lne:addToP2}. We distinguish by 3 cases: 
\begin{enumerate}
    \item a vertex $s$ was added as singleton set after appearing in a separator $S_{Sep}$ but then $\{s\}$ is strongly-connected and $s$ cannot reach any other vertex in $G \setminus E(S_{Sep})$ since it has no out-going edges, or
    \item an SCC $X$ was added as part of a collection $P'''$ in line  \ref{lne:addToP1}. But then we have that the collection $P'''$ satisfies the properties in $G[V''_{Sep}]$ by the induction hypothesis and since $V''_{Sep}$ was a cut side and $S''_{Sep}$ added to $S_{out}$, we have that there cannot be a path to \emph{and} from any vertex in $G \setminus E(S_{out})$, or
    \item we added the non-trivial SCC $X$ to $P$ after constructing an GES-tree from some vertex $r \in X$ and after pruning each vertex at $S$-distance to/from $r$ larger than $d/2$ (see the while loop on line \ref{line:split-else-while}). But then each vertex that remains in $X$ can reach $r$ within $S$-distance $d/2$ and is reached from $r$ within distance $d/2$ implying that any two vertices $u,v \in X$ have a path from $u$ to $v$ of $S$-distance at most $d$.
\end{enumerate}

Finally, let us upper bound the number of vertices in $S_{Split}$. We use a classic charging argument and argue that each time we add a separator $S_{Sep}$ to $S_{Split}$ with sides $V_{Sep}$ and $V \setminus (V_{Sep} \;\cup\; S_{Sep})$ at least one of these sides contains at most half the $S$-vertices in $V \cap S$. Let $X$ be the smaller side of the cut (in term of $S$-vertices) then by property \ref{prop:balanceS} from Lemma \ref{lma:sep}, we can charge each $S$ vertex in $X$ for $\frac{32 \log{n}}{d}$ separator vertices (since we invoke $\textsc{OutSep}(\cdot)$ and $\textsc{InSep}(\cdot)$ with parameter at least $d/16$). 

Observe that once we determined that a separator $S_{Sep}$ that is about to be added to $S_{Split}$ in line \ref{lne:addtoS1} or \ref{lne:addtoS2}, we only recurse on the induced subgraph $G'[V_{Sep}]$ and let the graph in the next iteration be $G'[V \setminus (V_{Sep} \;\cup\; S_{Sep})$. 

Let $X$ be an SCC in the final collection $P$. Then each vertex $v \in X$ can only have been charged at most $\lg (n  - |X \cap S|)$ times. The Lemma follows.
\end{proof}

It remains to bound the running time. Before we bound the overall running time, let us prove the following claim on the running time of invoking the separator procedures in parallel.

\begin{claim}
\label{clm:twowaysep}
We spend $O(E(V'_{Sep} \;\cup\; S'_{Sep}))$ time to find a separator in line \ref{lne:sepTwoWayIf} or \ref{lne:sepTwoWayElse}.
\end{claim}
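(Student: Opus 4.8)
The plan is to follow the interleaved execution of $\textsc{OutSep}(r,G',S,d/16)$ and $\textsc{InSep}(r,G',S,d/16)$ from line~\ref{lne:sepTwoWay} step by step, and then charge all of the work to the edges incident to the vertex set $V'_{Sep}$ that is ultimately returned. The one external ingredient is the running-time guarantee of \Cref{lma:sep}: when $\textsc{OutSep}$ (resp.\ $\textsc{InSep}$) terminates returning the tuple $(S_{Sep},V_{Sep})$, it has run in time $O(E(V_{Sep}))$, i.e.\ proportional only to the edges explored on the \emph{reached} side of the cut. Since $S'_{Sep}$ and $V'_{Sep}$ are disjoint by \Cref{lma:sep}, it suffices to prove the stronger bound $O(E(V'_{Sep}))$.

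First I would fix notation: let $A$ be whichever of the two procedures finishes first and $B$ the other, and let $t_A,t_B$ be the numbers of their own steps to termination, so $t_A=O(E(V^A))$ and $t_B=O(E(V^B))$ by \Cref{lma:sep}, where $V^A,V^B$ are the reached sets they return. Because the procedures are executed by alternating single steps, at the moment $A$ terminates (after $t_A$ of its own steps) procedure $B$ has executed at most $t_A$ steps, so the total work up to that point is $O(t_A)$; moreover $B$ is not yet done, which forces $t_B\ge t_A$.

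Next I would split on the two branches of the algorithm. In the if-branch (line~\ref{lne:sepTwoWayIf}) we have $|V_{Sep}|\le\tfrac23|V|$, the algorithm keeps $(S'_{Sep},V'_{Sep})=(S^A,V^A)$ and does no further work, so the time is $O(t_A)=O(E(V^A))=O(E(V'_{Sep}))$. In the else-branch (line~\ref{lne:sepTwoWayElse}) procedure $B$ is resumed to completion; the total time is (alternating phase) $+$ (resume phase) $\le O(t_A)+t_B\le O(t_B)=O(E(V^B))$, using $t_A\le t_B$, and now $(S'_{Sep},V'_{Sep})=(S^B,V^B)$, so again the time is $O(E(V'_{Sep}))$. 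In both cases $O(E(V'_{Sep}))\subseteq O(E(V'_{Sep}\cup S'_{Sep}))$, which is the claim.

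The only mildly delicate point, and the one I would be most careful with, is the bookkeeping of the interleaved execution: one must argue that aborting the slower procedure in the if-branch costs nothing beyond the steps already charged, and — in the else-branch — that the ``wasted'' prefix of work spent on $A$ is dominated by the cost $t_B$ of the procedure we actually keep, which holds precisely because $A$ finished first, giving $t_A\le t_B$. Everything else is a direct invocation of the running-time bound in \Cref{lma:sep}.
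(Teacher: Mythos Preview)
Your proof is correct and follows essentially the same approach as the paper: both argue that while the two procedures are interleaved, the total work up to the first termination is bounded by the cost of the first-finishing procedure (via \Cref{lma:sep}), and in the else-branch the resumed procedure's cost dominates because it did not finish first. The paper's version is slightly less formal about the bookkeeping (it just says the slower procedure ``ran at most one more operation'' and later ``clearly dominates''), while you make the inequality $t_A\le t_B$ explicit, but the underlying argument is identical.
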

\begin{proof}
Observe that we run $\textsc{OutSep}(r, G', S, d/16)$ and $\textsc{InSep}(r, G', S, d/16)$ in line \ref{lne:sepTwoWay} in parallel. Therefore, when we run them, we interleave their machine operations, computing one operation from $\textsc{OutSep}(r, G', S, d/16)$ and then one operation from $\textsc{InSep}(r, G', S, d/16)$ in turns. Let us assume that $\textsc{OutSep}(r, G', S, d/16)$ is the first subprocedure that terminates and returns tuple $(S_{Sep}, V_{Sep})$. Then, by Lemma \ref{lma:sep}, the subprocedure used $O(E(V_{Sep} \;\cup\; S_{Sep}))$ time. Since the subprocedure $\textsc{InSep}(r, G', S, d/16)$ ran at most one more operation than $\textsc{OutSep}(r, G', S, d/16)$, it also used $O(E(V_{Sep} \;\cup\; S_{Sep}))$ operations. If $\textsc{InSep}(r, G', S, d/16)$ finishes first, a symmetric argument establishes the same bounds. The overhead induced by running two procedures in parallel can be made constant. 

Since assignments take constant time, the claim is vacuously true by our discussion if the if-case in line \ref{lne:splitRecurseIf} is true. Otherwise, we compute a new separator tuple by continuing the execution of the formerly aborted separator subprocedure. But by the same argument as above, this subprocedure's running time now clearly dominates the running time of the subprocedure that finished first in line \ref{lne:sepTwoWay}. The time to compute $(S'_{Sep}, V'_{Sep})$ is thus again upper bounded by $O(E(V'_{Sep}))$ by Lemma \ref{lma:sep}, as required.
\end{proof}

Finally, we have established enough claims to prove Lemma \ref{lma:splitFull}. 

\begin{lemma}[Strengthening of Lemma \ref{lma:split}]
\label{lma:splitFull}
The procedure $\textsc{Split}(G, S, d)$ returns a tuple $(S_{Split}, P)$ where $P$ is a partition of the vertex set $V$ such that 
\begin{enumerate}
    \item for $X \in P$, and vertices $u,v \in X$ we have  $\mathbf{dist}_{G \setminus E(S_{Split})}(u,v,S) \leq d$, and
    \item for distinct $X, Y \in P$, with vertices $u \in X$ and $v \in Y$,   $u \not\rightleftarrows_{G \setminus E(S_{Split})} v$, and
    \item 
    $|S_{Split}| \leq  \frac{32 \log n}{d} \sum_{X \in P}  \lg (n  - |X \cap S|) |X \cap S|$
\end{enumerate}
The algorithm runs in time $O\left(d \sum_{X \in P} (1 + \lg( n  - |X|)) E(X) \right)$.
\end{lemma}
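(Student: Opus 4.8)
The plan is to obtain the first three assertions directly from Claim~\ref{clm:SplitCorrectness} and to prove the running-time bound by a charging argument over the recursion tree of $\textsc{Split}$. Indeed, assertions~1 and~2 of Lemma~\ref{lma:splitFull} are literally assertions~1 and~2 of Claim~\ref{clm:SplitCorrectness}, whose inductive proof (over the three ways a set enters $P$: as a frozen separator singleton, as part of a recursively produced collection, or as the large residual SCC of a GES-tree) carries over verbatim; and assertion~3 is exactly the bound on $|S_{Split}|$ produced by the charging argument inside the proof of Claim~\ref{clm:SplitCorrectness} (each separator added to $S_{Split}$ has size at most $\frac{32\log n}{d}$ times the number of $S$-vertices on the smaller side of its cut, by property~\ref{prop:balanceS} of Lemma~\ref{lma:sep}, and an $S$-vertex lies on the smaller side of such a cut at most $O(\lg(n-|X\cap S|))$ times before it settles into its final SCC $X$). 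So the substance is the running-time bound.

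The first step is to bound the work of one call $\textsc{Split}(G'=(V',E'),S,d)$, not counting recursive calls, by $O(E'\cdot d)$. The outer \textbf{while} loop runs some ``if''-iterations — each pruning a chunk $V'_{Sep}\cup S'_{Sep}$ from $G'$ and recursing on $V'_{Sep}$ — followed by at most one ``else''-iteration that terminates the loop, and all these chunks are pairwise disjoint subsets of $V'$. By Claim~\ref{clm:twowaysep}, the parallel $\textsc{OutSep}/\textsc{InSep}$ invocation producing the chunk used in an iteration costs $O(E(V'_{Sep}\cup S'_{Sep}))$, so in total these invocations cost $O(E')$. In the ``else''-iteration, $\textsc{InitGES}$ together with all the $\textsc{Delete}$ operations of the inner loop costs $O(E'\cdot d)$ by Lemma~\ref{lma:SimpleGES}, the $O(1)$-time $\textsc{Distance}$ and $\textsc{GetUnreachableVertex}$ queries cost $O(|V'|)$ overall, and the $\textsc{InSep}/\textsc{OutSep}$ calls inside the inner loop cost $O(E')$ in total by Lemma~\ref{lma:sep} because the extracted sets $V''_{Sep}$ are pairwise disjoint (each is removed from $\mathcal{E}_r$ when it is found). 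Hence the non-recursive cost of the invocation is $O(E'\cdot d)$, which I distribute as $O(\deg_{G'}(v)\cdot d)$ to each $v\in V'$.

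The second step is the charging over the recursion. Every recursive call is on a vertex set of at most $\frac{2}{3}$ the size of its parent's: in the ``if''-branch by the explicit test on line~\ref{line:split-if-case} together with the fact that we recurse only on $V'_{Sep}$, and in the inner loop of the ``else''-branch by Claim~\ref{clm:largeSCCifEStree}, which keeps at least $\frac{1}{3}|V'|$ vertices inside $\mathcal{E}_r$ and hence bounds each extracted $V''_{Sep}$ by $\frac{2}{3}|V'|$; and the recursive calls issued by a single invocation are vertex-disjoint. Fix $v\in V$ and let $X=X^v\in P$ be its final SCC. Since the separator of every cut is contained in $S_{Split}$, property~4 of Lemma~\ref{lma:sep} ensures no edge of $G\setminus E(S_{Split})$ crosses any such cut in the relevant direction, so the SCC $X$ is never split across a recursive call; consequently along the chain $V=V^{(0)}\supsetneq V^{(1)}\supsetneq\dots\supsetneq V^{(k)}$ of nested subproblems whose vertex sets contain $v$ we have $X\subseteq V^{(j)}$ and $|V^{(j+1)}|\le\frac{2}{3}|V^{(j)}|$ for every $j$. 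Because $|V^{(j)}|\ge|X|$, this forces $|V^{(j+1)}\setminus X|\le\frac{2}{3}|V^{(j)}\setminus X|$, and since $|V^{(k-1)}\setminus X|\ge1$ we obtain $k=O(1+\lg(n-|X|))$ (read as $O(1)$ when $|X|\ge n-1$). Summing the $O(\deg_G(v)\cdot d)$ charge along $v$'s chain and then over all $v$, and using $\sum_{v\in X}\deg_G(v)=\Theta(E(X))$, gives the claimed total $O(d\sum_{X\in P}(1+\lg(n-|X|))\,E(X))$; the multigraph version used on the graphs $\hat G_i$ follows by the routine translation described before Lemma~\ref{lma:sep}.

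I expect the main obstacle to be the per-invocation bound: one must make sure that the linear-time separator subroutines and the GES updates are each billed against \emph{disjoint} vertex chunks within a single invocation, so that they telescope to $O(E')$ (respectively $O(E'd)$) rather than being paid once per iteration of the outer or inner \textbf{while} loop — this is precisely what Claim~\ref{clm:twowaysep}, Lemma~\ref{lma:SimpleGES}, and the pruning structure of the algorithm are there to provide. The secondary subtlety is the recursion-depth refinement: obtaining the factor $\lg(n-|X|)$ rather than a crude $\lg n$ requires the non-crossing property (property~4 of Lemma~\ref{lma:sep}) to keep each final SCC inside one branch of the recursion, followed by the analysis of $|V^{(j)}\setminus X|$ in place of $|V^{(j)}|$.
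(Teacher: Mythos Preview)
Your proposal is correct and follows essentially the same route as the paper: correctness is delegated to Claim~\ref{clm:SplitCorrectness}; the non-recursive cost of one invocation is bounded by $O(E'\cdot d)$ via Claim~\ref{clm:twowaysep}, Lemma~\ref{lma:SimpleGES}, and Lemma~\ref{lma:sep} on disjoint chunks; and the recursion depth for a vertex in a final SCC $X$ is $O(1+\lg(n-|X|))$ using Claim~\ref{clm:largeSCCifEStree}. Your write-up is in fact more careful than the paper's on two points the paper leaves implicit: (i) why $X$ is never split across a recursive branch (you invoke property~4 of Lemma~\ref{lma:sep}), and (ii) the refinement from $\lg n$ to $\lg(n-|X|)$ via the inequality $|V^{(j+1)}\setminus X|\le\frac{2}{3}|V^{(j)}\setminus X|$, which indeed follows from $|V^{(j+1)}|\le\frac{2}{3}|V^{(j)}|$ and $X\subseteq V^{(j+1)}$.
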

\begin{proof}
Since correctness was established in Lemma \ref{clm:SplitCorrectness}, it only remains to bound the running time of the procedure. Let us first bound the running time without recursive calls to procedure $\textsc{Split}(G, S,d)$. To see that we only spend $O(|E(G)| d)$ time in $\textsc{Split}(G, S,d)$ excluding recursive calls, observe first that we can find each separator tuple $(S'_{Sep}, V'_{Sep})$ in time $O(E(V'_{Sep}))$ by claim \ref{clm:twowaysep}. We then, either recurse on $G'[V'_{Sep}])$ and remove the vertices $V'_{Sep} \;\cup\; S'_{Sep}$ with their incident edges from $G'$ or we enter the else-case (line \ref{line:split-else-case}). Clearly, if our algorithm never visits the else-case, we only spend time $O(|E(G)|)$ excluding the recursive calls since we immediately remove the edge set that we found in the separator from the graph. 

We further observe that the running time for the GES-tree can be bounded by $O(|E(G)| d)$. The time to compute the separators to prune vertices away from the GES-tree is again combined at most $O(|E(G)|)$ by Lemma \ref{lma:sep} and the observation that we remove edges from the graph $G$ after they were scanned by one such separator procedure.

We already discussed that claim \ref{clm:largeSCCifEStree} implies that we only recurse on disjoint subgraphs with at most $\frac{2}{3}|V|$ vertices. We obtain that each vertex in a final SCC $X$ in $P$ participated in at most $O(\log( n - |X|))$ levels of recursion and so did its incident edges hence we can bound the total running time by $O\left(d \sum_{X \in P} (1 + \log( n  - |X|)) E(X) \right)$.
\end{proof}

\section{Handling deletions}
\label{subsec:delete}

Let us now consider how to process the deletion of an edge $(u,v)$ which we describe in pseudo code in algorithm \ref{alg:delete}. We fix our data structure in a bottom-up procedure where we first remove the edge $(u,v)$ if it is contained in any induced subgraph $\hat{G}_i[X]$ from the GES $\mathcal{E}_{\textsc{Center}(X)}$. 

\begin{algorithm}
\caption{$\textsc{Delete}(u,v)$}
\label{alg:delete}
\KwIn{An edge $(u,v) \in E$.}
\KwResult{Updates the data structure such that queries for the graph $G \setminus \{ (u,v)\}$ can be answered in constant time.}
\BlankLine

\For{ $i = 0 $ \KwTo $ \lfloor \log{n} \rfloor$}{
    \If{If there exists an $X \in \hat{V}_{i+1}$ with $u,v \in X$}{
        $\mathcal{E}_{\textsc{Center}(X)}.\textsc{Delete}(u,v)$\;
    } 
    \While{there exists an $X \in \hat{V}_{i+1}$ with $\mathcal{E}_{\textsc{Center}(X)}.\textsc{GetUnreachable}() \neq \bot$}{
        $X' \gets \mathcal{E}_{\textsc{Center}(X)}.\textsc{GetUnreachable}()$\;
        
        \tcc{Find a separator from $X'$ depending on whether $X'$ is far to reach from $r$ or the other way around.}
        \If{$\mathcal{E}_{\textsc{Center}(X)}.\textsc{Distance}(\textsc{Center}(X),X') > \delta$}{
            $(S_{Sep}, V_{Sep}) \gets \textsc{InSep}(X', \hat{G}_i[X], X \cap S_i, \delta/2)$ \label{lne:DelSepIn}
        }
        \Else(\tcp*[h]{$\mathcal{E}_{\textsc{Center}(X)}.\textsc{Distance}(X', \textsc{Center}(X)) > \delta$}){
            $(S_{Sep} , V_{Sep}) \gets \textsc{OutSep}(X' , \hat{G}_i[X] , X \cap S_i , \delta/2)$\label{lne:DelSepOut}
        }

        \tcc{If the separator is chosen such that $V_{Sep}$ is small, we have a good separator, therefore we remove $V_{Sep}$ from $\mathcal{E}_r$ and maintain the SCCs in $\hat{G}_{i}[V_{Sep}]$ separately. Otherwise, we delete the entire GES $\mathcal{E}_{\textsc{Center}(X)}$ and partition the graph with a good separator.}
        \If{$|\textsc{Flatten}(V_{Sep})| \leq \frac{2}{3}|\textsc{Flatten}(X)|$}{
            $\mathcal{E}_{\textsc{Center}(X)}.\textsc{Delete}(V_{Sep}\;\cup\; S_{Sep})$\;
            $(S'_{Sep}, P') \gets \textsc{Split}(\hat{G}_i[V_{Sep}], V_{Sep} \cap S_i, \delta/2)$\label{lne:DelSplit1}\;
            $S''_{Sep} \gets S_{Sep} \;\cup\; S'_{Sep}$\;
            $P'' \gets P' \;\cup\;  S_{Sep}$\;
        }
        \Else{
            $\mathcal{E}_{\textsc{Center}(X)}.\textsc{Delete}()$\label{lne:ESdelete}\;
            $(S''_{Sep}, P'') \gets \textsc{Split}(\hat{G}_i[X], X \cap S_i, \delta/2)$\label{lne:DelSplit2}\;
        }

        \tcc{After finding the new partitions, we init them, execute the vertex splits on the next level and add the separator vertices.}
        $\textsc{InitNewPartition}(P'', i, \delta)$\;
        
        \ForEach{$Y \in P''$}{
            $\mathcal{E}_{\textsc{Center}(X)}.\textsc{SplitNode}(Y)$\;
        }
        $\mathcal{E}_{\textsc{Center}(X)}.\textsc{Augment}(S''_{Sep})$\label{lne:augmentInDelete}\;
        $S_{i+1} \gets S_{i+1} \;\cup\; S''_{Sep}$\;
    }
}
\end{algorithm}

Then, we check if any GES $\mathcal{E}_{\textsc{Center}(X)}$ on a subgraph $\hat{G}_i[X]$ contains a node that became unreachable due to the edge deletion or the fixing procedure on a level below. Whilst there is such a GES $\mathcal{E}_{\textsc{Center}(X)}$, we first find a separator $S_{Sep}$ from $X'$ in lines \ref{lne:DelSepIn} or \ref{lne:DelSepOut}. We now consider two cases based on the size of the set $\textsc{Flatten}(V_{Sep})$. Whilst focusing on the size of $\textsc{Flatten}(V_{Sep})$ instead of the size of $V_{Sep}$ seems like a minor detail, it is essential to consider the underlying vertex set instead of the node set, since the node set can be further split by node split updates from lower levels.

Now, let us consider the first case, when the set $V_{Sep}$ separated by $S_{Sep}$ is small (with regard to $\textsc{Flatten}(V_{Sep})$); in this case, we simply prune $V_{Sep}$ from our tree by adding $S_{Sep}$ to $S_{i+1}$, and then invoke $\textsc{Split}(\hat{G}_i[V_{Sep}], V_{Sep} \cap S_i, \delta/2)$ to get a collection of subgraphs $P'$ where each subgraph $Y \in P'$ has every pair of nodes $A, B \in Y$ at $S_{i}$-distance $\delta/2$. (We can afford to invoke $\textsc{Split}$ on the vertex set $V_{Sep}$ because we can afford to recurse to on the smaller side of a cut.)

The second case is when $V_{Sep}$ is large compared to the number of vertices in node set of the GES-tree. In this case we do not add $S_{Sep}$ to $S_{i+1}$. Instead we we declare the GES-tree $\mathcal{E}_{\textsc{Center}(X)}$ invalid, and delete the entire tree. We then partition the set $X$ that we are working with by invoking the $\textsc{Split}$ procedure on all of $X$. (Intuitively, this step is expensive, but we will show that whenever it occurs, there is a constant probability that the graph has decomposed into smaller SCCs, and we have thus made progress.)


Finally, we use the new partition and construct on each induced subgraph a new GES-tree at a randomly chosen center. This is done by invoking $\textsc{InitNewPartition}(P', i, \delta)$ that was presented in subsection \ref{subsec:Preprocessing}. We then apply the updates to the graph $\hat{G}_{i+1}$ using the GES-tree operations defined in Lemma \ref{lma:AugmentedGES}. Note, that we include the separator vertices as singleton sets in the partition and therefore invoke $\mathcal{E}_X.\textsc{SplitNode}(\cdot)$ on each singleton before invoking $\mathcal{E}_X.\textsc{Augment}(S''_{Sep})$ which ensures that the assumption from Lemma \ref{lma:AugmentedGES} is satisfied. As in the last section, let us prove the following two Lemmas whose proofs will further justify some of the details of the algorithm. 

We start by showing that because we root the GES-tree for SCC $X$ at a \emph{random} root $r$, if the GES-tree ends up being deleted in \ref{lne:ESdelete} in algorithm \ref{alg:delete}, this means that with constant probability $X$ has decomposed into smaller SCCs, and so progress has been made.

\begin{restatable}{lemma}{participation}[c.f. also \cite{chechik2016decremental}, Lemma 13]
\label{lma:EStreeprob}
Consider an GES $\mathcal{E}_r = \mathcal{E}_{\textsc{Center}(X)}$ that was initialized on the induced graph of some node set $X_{Init}$, with $X \subseteq X_{Init}$, and that is deleted in line \ref{lne:ESdelete} in algorithm \ref{alg:delete}. Then with probability at least $\frac{2}{3}$, the partition $P''$ computed in line \ref{lne:DelSplit2} satisfies that each $X' \in P''$ has $|\textsc{Flatten}(X')| \leq \frac{2}{3}|\textsc{Flatten}(X_{Init})|$.
\end{restatable}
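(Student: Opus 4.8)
The plan is to argue that the GES-tree $\mathcal{E}_r$ is deleted in line \ref{lne:ESdelete} only when \emph{both} the most recent separator tuple had $|\textsc{Flatten}(V_{Sep})| > \tfrac{2}{3}|\textsc{Flatten}(X)|$, and — since we only ever reach that branch after entering the else-case in the parallel-separator step — both the $\textsc{InSep}$ and $\textsc{OutSep}$ calls from $X'$ explored a large fraction of $X$. Concretely, by \Cref{lma:sep} property \ref{step:separator-distance}, the vertex $X'$ that triggered the deletion has large $S_i$-distance to or from $\textsc{Center}(X)=r$, yet both one-directional searches from $X'$ failed to find a small side; this means the $S_i$-in-ball and $S_i$-out-ball of $X'$ (to depth $\delta/2$, say) each cover more than $\tfrac{2}{3}|\textsc{Flatten}(X)|$ of the underlying vertices, so their intersection covers more than $\tfrac{1}{3}|\textsc{Flatten}(X)|$. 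Call this common core $C \subseteq X$; every vertex of $C$ lies within $S_i$-distance $O(\delta)$ of $X'$ in both directions, hence all of $C$ is mutually $S_i$-close, and in particular $C$ is contained in a single SCC of $\hat{G}_i[X]$ that has $S_i$-diameter $O(\delta)$. This is exactly the structure already extracted in \Cref{clm:largeSCCifEStree}, and I would invoke that reasoning essentially verbatim.

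Next I would tie this to the random root. Since $\mathcal{E}_r$ was initialized by $\textsc{InitNewPartition}$ on the induced graph of $X_{Init}$, the root $r$ was drawn uniformly at random from $\textsc{Flatten}(X_{Init})$, and — crucially — the adversary is non-adaptive (or the argument is run before any randomness is revealed, as in the standard Roditty–Zwick analysis), so $r$ is uniform on $\textsc{Flatten}(X_{Init})$ conditioned on the event that we are in this situation being independent of where $r$ fell. Now observe: if $r$ happens to lie in $\textsc{Flatten}(C)$, then because $C$ has small $S_i$-diameter, the GES-tree rooted at $r$ would have every node of $C$ within $S_i$-distance $\delta$ of $r$, so $\textsc{GetUnreachableVertex}$ would never have returned a vertex of $C$, and in fact the whole branch forcing the deletion would witness that the surviving part $\mathcal{E}_r.\textsc{GetAllVertices}()$ contains all of $C$, i.e. at least $\tfrac{1}{3}|\textsc{Flatten}(X)| \ge \tfrac{1}{3}|\textsc{Flatten}(X_{Init})|$... but that contradicts the deletion happening. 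So instead I would argue the contrapositive at the level of the whole recursion: the pieces $P''$ returned by $\textsc{Split}(\hat{G}_i[X],\dots)$ in line \ref{lne:DelSplit2} are the SCCs of $\hat{G}_i[X]$ after removing the separator found by $\textsc{Split}$, and by \Cref{clm:largeSCCifEStree}/\Cref{lma:splitFull} applied inside that call, one of these SCCs — the one containing the core $C$ — has at least $\tfrac{1}{3}|\textsc{Flatten}(X)|$ underlying vertices. Hence \emph{some} piece $X^\star \in P''$ is large. The deletion of $\mathcal{E}_r$ therefore means $r \notin \textsc{Flatten}(X^\star)$: had $r$ landed in the big surviving SCC, the GES-tree would have been \emph{reused} (pruning the small pieces) rather than deleted, exactly as in the ``Reusing ES-trees'' paragraph. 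Since $|\textsc{Flatten}(X^\star)| \ge \tfrac{1}{3}|\textsc{Flatten}(X)| \ge \tfrac{1}{3}|\textsc{Flatten}(X_{Init})|$ and $r$ is uniform on $\textsc{Flatten}(X_{Init})$, the probability that $r \notin \textsc{Flatten}(X^\star)$ — equivalently, that the deletion occurs at all given this decomposition — is at most $1 - \tfrac{1}{3} = \tfrac{2}{3}$. Conditioned on the deletion occurring, the complementary event (that $r$ would have been in the big piece) is ruled out, so the big piece is \emph{not} among the reused configuration; but what we actually need is the statement about $P''$, namely that conditioned on reaching line \ref{lne:ESdelete}, with probability $\ge \tfrac{2}{3}$ every $X'\in P''$ has $|\textsc{Flatten}(X')| \le \tfrac{2}{3}|\textsc{Flatten}(X_{Init})|$. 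I would get this by noting that the only piece of $P''$ that could be as large as $\tfrac{2}{3}|\textsc{Flatten}(X_{Init})|$ is one containing most of $\textsc{Flatten}(X_{Init})$; if such a giant piece $X^\star$ exists then $|\textsc{Flatten}(X^\star)| \ge \tfrac{2}{3}|\textsc{Flatten}(X_{Init})|$, and conditioned on $r$ uniform, $\Pr[r \in \textsc{Flatten}(X^\star)] \ge \tfrac{2}{3}$; but $r \in \textsc{Flatten}(X^\star)$ forces reuse, not deletion — contradiction with being in line \ref{lne:ESdelete} — unless we account for conditioning, which only \emph{increases} the weight on the bad event... so I would instead phrase it cleanly as: unconditionally, $\Pr[\text{reach line \ref{lne:ESdelete} AND some }X'\in P''\text{ has }|\textsc{Flatten}(X')|>\tfrac{2}{3}|\textsc{Flatten}(X_{Init})|]=0$, because a piece that large must contain $r$ with probability $1$ whenever it exists, and containing $r$ triggers reuse. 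Hence conditioned on reaching line \ref{lne:ESdelete}, the bad event has probability $0 \le \tfrac13$, a fortiori $\le \tfrac13$, giving the lemma with room to spare; the ``$\tfrac{2}{3}$'' slack in the statement absorbs the looser constants coming from the $\tfrac{2}{3}$-thresholds and the ball-intersection bound.

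The main obstacle, I expect, is the conditioning/independence subtlety: making rigorous that the event ``the recursion reaches line \ref{lne:ESdelete} with a particular decomposition $P''$'' is essentially determined by the \emph{adversary's deletion sequence plus the shape of $\hat{G}_i[X]$}, and only the \emph{identity of the reused root} depends on the coin $r$ — so that we may treat $r$ as uniform on $\textsc{Flatten}(X_{Init})$ when estimating whether it landed in the large surviving SCC. This is the same delicacy handled in \cite{chechik2016decremental} (their Lemma 13, which the statement cites), complicated here by the fact that separator-finding on level $i$ feeds new separator nodes into $S_{i+1}$ and can retroactively split nodes; I would isolate the claim that these node-splits only ever \emph{refine} the underlying vertex partition and never move $r$ out of whichever SCC it sat in, so the coupling between ``$r$'s location'' and ``was the tree reused'' stays clean. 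Once that bookkeeping is in place, the counting argument above — big surviving piece has $\ge \tfrac13$ of the vertices, uniform root misses it with probability $\le \tfrac23$ — closes the proof.
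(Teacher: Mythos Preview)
Your proposal has a genuine gap at precisely the point you yourself flag as ``the main obstacle,'' and the attempted resolution does not work.

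First, a structural confusion: in \textsc{Delete} (Algorithm~\ref{alg:delete}) there is \emph{no} parallel-separator step. Only one of $\textsc{InSep}$ or $\textsc{OutSep}$ is called from the far vertex $X'$ in lines~\ref{lne:DelSepIn}/\ref{lne:DelSepOut}, and the deletion in line~\ref{lne:ESdelete} is triggered simply by $|\textsc{Flatten}(V_{Sep})| > \tfrac{2}{3}|\textsc{Flatten}(X)|$. The ``both balls are large, so their intersection is large'' argument you invoke belongs to \textsc{Split} (Claim~\ref{clm:largeSCCifEStree}), not to \textsc{Delete}; you cannot import it here to produce a core $C$ around $X'$.

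Second, and more seriously, your handling of the conditioning is circular. You want to condition on the partition $P''$ and then treat $r$ as uniform on $\textsc{Flatten}(X_{Init})$. But $P''$ is computed on $\hat{G}_i[X]$, and both $X$ and the sequence of separator choices that led to it depend on $r$: every previous iteration of the while-loop pruned $V_{Sep}\cup S_{Sep}$ based on which vertex the GES-tree rooted at $r$ declared unreachable. So $P''$ is not independent of $r$, and the line ``a piece that large must contain $r$ with probability~1'' is simply false; a piece of size $>\tfrac{2}{3}|\textsc{Flatten}(X_{Init})|$ contains a \emph{uniform} $r$ with probability $>\tfrac{2}{3}$, not~1, and even that bound is unavailable once you condition on reaching line~\ref{lne:ESdelete} with that particular $P''$.

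The paper's proof resolves this by separating updates into two streams: the sequence $u_1,u_2,\ldots$ of adversarial edge deletions, additions to $S_i$, and node splits coming up from level $i-1$ (all of which are \emph{independent} of $r$), versus the \emph{separator updates} at level $i$ (additions to $S_{i+1}$ and resulting edge removals), which \emph{do} depend on $r$. One then defines, purely in the $r$-independent graph $G^j$, the largest component $X_{max}^j$ of $S_i$-diameter at most $\delta/2$, and chooses the index $k$ at which $|\textsc{Flatten}(X_{max}^k)|$ first drops to $\le\tfrac{2}{3}|\textsc{Flatten}(X_{Init})|$. Because $X_{max}^{k-1}$ is defined without reference to $r$, uniformity gives $\Pr[r\in \textsc{Flatten}(X_{max}^{k-1})]>\tfrac{2}{3}$ cleanly. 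The remaining work is to show that when $r\in X_{max}^{k-1}$, no separator update before $u_k$ touches $X_{max}^{k-1}$: each $\textsc{InSep}/\textsc{OutSep}$ is run from a vertex at $S_i$-distance $>\delta$ from $r$ to depth $\delta/2$, so by property~\ref{step:separator-distance} of Lemma~\ref{lma:sep} every pruned vertex is at $S_i$-distance $>\delta/2$ from $r$ and hence outside the diameter-$\le\delta/2$ set $X_{max}^{k-1}$. This forces $|\textsc{Flatten}(V_{Sep})|\le\tfrac{2}{3}|\textsc{Flatten}(X)|$ at every step before $u_k$, so the tree is never deleted before $u_k$; and once $u_k$ has occurred, every low-diameter component (hence every set in $P''$) is small by definition of $k$. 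This decoupling of the $r$-independent stopping time $k$ from the $r$-dependent separator updates is the missing idea in your argument.
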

\begin{proof}
Let $i$ be the level of our hierarchy on which $\mathcal{E}_{r}$ was initialized, i.e. $\mathcal{E}_{r}$ was initialized on graph $\hat{G}_i[X_{Init}]$, and went up to depth $\delta$ with respect to $S_i$-distances (see Algorithm \ref{alg:newPart}). 

Let $u_1, u_2, ..$ be the sequence of updates since the GES-tree $\mathcal{E}_r$ was initialized that were either adversarial edge deletions, nodes added to $S_i$ or node splits in the graph $\hat{G}_i[X_{Init}]$. Observe that this sequence is independent of how we choose our random root $r$, since they occur at a lower level, and so do not take any GES-trees at level $i$ into account. Recall, also, that the adversary cannot learn anything about $r$ from our answers to queries because the SCCs of the graph are objective, and so do not reveal any information about our algorithm. We refer to the remaining updates on $\hat{G}_i[X_{Init}]$ as \textit{separator} updates, which are the updates adding nodes to $S_{i+1}$ and removing edges incident to $S_{i+1}$ or between nodes that due to such edge deletions are no longer strongly-connected. We point out that the separator updates are heavily dependent on how we chose our random source. The update sequence that the GES-tree undergoes up to its deletion in line \ref{lne:ESdelete} is a mixture of the former updates that are independent of our chosen root $r$ and the separator updates.

Let $G^j$ be the graph $\hat{G}_i$ after the update sequence $u_1, u_2, ..., u_j$ is applied. Let $X_{max}^j$ be the component of $S_i$-diameter at most $\delta/2$ that maximizes the cardinality of $\textsc{Flatten}(X_{max}^j)$ in $G^j$. We choose $X_{max}^j$ in this way because we want to establish an upper bound on the largest SCC of $S_i$-diameter at most $\delta/2$ in $G^j$. We then show that that if a randomly chosen source deletes a GES-tree (see line \ref{lne:ESdelete}) after $j$ updates, then there is a good probability that $X_{max}^j$ is small. Then by the guarantees of Lemma \ref{lma:split}, the $\textsc{Split}(\cdot)$ procedure in line \ref{lne:DelSplit2} partitions the vertices into SCCs $X'$ of $S_i$-diameter at most $\delta/2$, which all have small $|\textsc{Flatten}(X')|$ because $X_{max}^j$ is small. 

More precisely, let $G^j_r$, be the graph is obtained by applying \emph{all} updates up to update $u_j$ to $\hat{G}_i[X_{Init}]$; here we include the updates $u_1, ..., u_j$, as well as all separator updates up to the time when $u_j$ takes place. (Observe that $G^j$ is independent from the choice of $r$, but $G^j_r$ is not.) Let $X_{max, r}^j$ be the component of $S_i$-diameter at most $\delta/2$ that maximizes the cardinality of $\textsc{Flatten}(X^j_{max, r})$ in this graph $G^j_r$. It is straight-forward to see that since $S_i$-distances can only increase due to separator updates, we have $|\textsc{Flatten}(X_{max, r}^j)| \leq |\textsc{Flatten}(X_{max}^j)|$ for any $r$. Further $|\textsc{Flatten}(X_{max, r}^j)|$ upper bounds the size of any component $X' \in P''$, i.e. $|\textsc{Flatten}(X')| \leq |\textsc{Flatten}(X_{max, r}^j)|$ if the tree $\mathcal{E}_r$ is deleted in line \ref{lne:ESdelete} while  handling update $u_j$; the same bound holds if $\mathcal{E}_r$ is deleted after update $u_j$, because the cardinality of $\textsc{Flatten}(X_{max, r}^j)$ monotonically decreases in $j$, i.e. $|\textsc{Flatten}(X_{max, r}^{j})| \leq |\textsc{Flatten}(X_{max, r}^{j-1})|$ since updates can only increase $S_i$-distances. 

Now, let $k$ be the index, such that 
\[
|\textsc{Flatten}(X_{max}^k)| \leq \frac{2}{3}|\textsc{Flatten}(X_{Init})| < |\textsc{Flatten}(X_{max}^{k-1})|.
\]
i.e. $k$ is chosen such that after the update sequence $u_1, u_2,..., u_{k}$ were applied to $\hat{G}_i[X_{Init}]$, there exists no SCC $X$ in $G^k$ of diameter at most $\delta/2$ with $|\textsc{Flatten}(X)| > \frac{2}{3}|\textsc{Flatten}(X_{Init})|$. 

In the remainder of the proof, we establish the following claim: if we chose some vertex $r \in \textsc{Flatten}(X^{k-1}_{max})$, then the GES-tree would not be been deleted before update $u_k$ took place. Before we prove this claim, let us point out that this implies the Lemma: observe that by the independence of how we choose $r$ and the update sequence $u_1, u_2, ..$, we have that 
\[
Pr[r \in X_{max}^{k-1} | u_1, u_2, ..] = Pr[r \in X_{max}^{k-1}] = \frac{|\textsc{Flatten}(X_{max}^{k-1})|}{|\textsc{Flatten}(X_{Init})} > \frac{2}{3}
\]
where the before-last equality follows from the fact that we choose the root uniformly at random among the vertices in $\textsc{Flatten}(X_{Init})$. Thus, with probability at least $\frac{2}{3}$, we chose a root whose GES-tree is deleted during or after the update $u_k$ and therefore the invoked procedure $\textsc{Split}(\cdot)$ ensures that every SCC $X' \in P''$ satisfies $|\textsc{Flatten}(X')| \leq  |\textsc{Flatten}(X_{max}^k)| \leq \frac{2}{3}|\textsc{Flatten}(X_{Init})|$, as required. 

Now, let us prove the final claim. We want to show that if $r \in X^{k-1}_{max}$, then the GES-tree would not have been deleted before update $u_k$. To do so, we need to show that even if we include the separator updates, the SCC containing $r$ continues to have size at least $\frac{2}{3}|\textsc{Flatten}(X_{Init})|$ before update $u_k$. In particular, we argue that before update $u_k$, none of the separator updates decrease the size of $X^{k-1}_{max}$. The reason is that the InSep computed in Line
\ref{lne:DelSepIn} of Algorithm \ref{alg:delete} is always run from a node $X$ whose $S_i$-distance from $r$ is at least $\delta$. (The argument for an OutSep in Line \ref{lne:DelSepOut} is analogous.)
Now, the InSep from $X$ is computed up to $S_i$-distance $\delta/2$, so by Property \ref{step:separator-distance} of Lemma \ref{lma:sep}, we have that all nodes pruned away from the component have $S_i$-distance at most $\delta/2$ to $X$; this implies that these nodes have $S_i$-distance more than $\delta/2$ from $r$, and so cannot be in $X^{k-1}_{max}$, because $X^{k-1}_{max}$ was defined to have $S_i$-diameter at most $\delta/2$. Thus none of the separator updates affect $X^{k-1}_{max}$ before update $u_k$, which concludes the proof of the Lemma.


\end{proof}

Next, let us analyze the size of the sets $S_i$. We analyze $S_i$ using the inequality below in order to ease the proof of the Lemma. We point out that the term $\lg(n -|X \;\cup\; S_i|)$ approaches $\lg n$ as the SCC $X$ splits further into smaller pieces. Our Lemma can therefore be stated more easily, see therefore Corollary \ref{cor:SisSmall}. 

\begin{lemma}
\label{lma:setS}
During the entire course of deletions our algorithm maintains 
\begin{align}
|S_0| &= n                    & \\
|S_{i+1}| &\leq  \frac{32 \log n}{\delta} \sum_{X \in \hat{V}_{i}}  \lg (n  - |X \cap S_{i}|) |X \cap S_{i}| &  \text{for }i \geq 0
\end{align}
\end{lemma}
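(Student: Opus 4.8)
The first line $|S_0| = n$ is immediate from Line 1 of Algorithm \ref{alg:preprocessing}, where $S_0 \gets V$. For the inductive/invariant part, the plan is to track every moment at which a vertex is added to $S_{i+1}$ and charge it against the quantity on the right-hand side. There are exactly two such moments: during preprocessing, where $S_{i+1} \gets S_{Sep}$ is produced by a single call $\textsc{Split}(\hat{G}_i, S_i, \delta/2)$ (inside \textsc{InitNewPartition}'s sibling loop in Algorithm \ref{alg:preprocessing}); and during deletions, where inside Algorithm \ref{alg:delete} we repeatedly set $S_{i+1} \gets S_{i+1} \cup S''_{Sep}$, with $S''_{Sep}$ itself produced either as $S_{Sep} \cup S'_{Sep}$ (small-side case, $S'_{Sep}$ from $\textsc{Split}(\hat{G}_i[V_{Sep}], \ldots)$) or as $S''_{Sep}$ from $\textsc{Split}(\hat{G}_i[X], \ldots)$ (large-side case). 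In every case the newly added separator vertices are produced by (possibly nested) invocations of $\textsc{Split}$, and Lemma \ref{lma:splitFull} (the strengthening of Lemma \ref{lma:split}) tells us precisely that a call $\textsc{Split}(G',S',d)$ returns $S_{Split}$ with
\[
|S_{Split}| \leq \frac{32\log n}{d} \sum_{X \in P} \lg(n - |X \cap S'|)\,|X \cap S'|,
\]
where $P$ is the partition of $V(G')$ it returns and all returned parts are pairwise non-strongly-connected in $G' \setminus E(S_{Split})$.

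The key structural point I would establish is that all the \textsc{Split} calls contributing to $S_{i+1}$ over the whole deletion sequence operate on induced subgraphs of $\hat{G}_i$ whose node sets are \emph{pairwise disjoint} and are exactly the SCCs that the hierarchy carves $\hat{V}_i$ into. This is because: (i) $\textsc{Split}$ in the small-side case is called on $\hat{G}_i[V_{Sep}]$ and then the vertices $V_{Sep} \cup S_{Sep}$ are removed from the active GES-tree, so future work touches a disjoint node set; (ii) in the large-side case the whole GES-tree on $X$ is deleted and $\textsc{Split}$ is called once on all of $X$; (iii) each part returned by a \textsc{Split} call either becomes a singleton separator node or a new SCC-node of $\hat{G}_{i+1}$, and subsequent deletions recurse only within those. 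Since the $S_i$-vertices lying in disjoint node sets are themselves disjoint, summing the Lemma \ref{lma:splitFull} bound over all these calls telescopes into a single sum $\sum_{X \in \hat{V}_i} \lg(n - |X \cap S_i|)\,|X \cap S_i|$ — here $\hat{V}_i$ should be read as the \emph{final} partition of $V$ into nodes at level $i$ (or, more carefully, we take the supremum over time, using that $\lg(n - |X\cap S_i|)$ only grows as $X$ splits and that $\sum |X \cap S_i|$ over a partition is bounded by $|S_i| \le |S_0| = n$, so the bound is monotone in the refinement). All \textsc{Split} calls use depth parameter $d = \delta/2$, which accounts for the factor $32\log n/(\delta/2) = 64\log n/\delta$; one then checks the constant: the claimed bound has $32\log n/\delta$, so I would double-check whether the intended reading of Lemma \ref{lma:splitFull} already absorbs the factor of two (e.g. because $\textsc{OutSep}/\textsc{InSep}$ inside \textsc{Split} are invoked with parameter $\geq d/16 \geq \delta/32$, matching the bookkeeping in Claim \ref{clm:SplitCorrectness}), or whether the statement should read $64\log n/\delta$; in the write-up I would align the constants with whichever convention Lemma \ref{lma:split}/\ref{lma:splitFull} fixes, since Claim \ref{clm:SplitCorrectness} already states the bound with $32\log n / d$.

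The main obstacle I anticipate is the bookkeeping that the \textsc{Split} contributions really do partition $\hat{V}_i$ without double-counting: a vertex in $S_i$ can sit inside a node $X$ that is repeatedly split over time, and different \textsc{Split} calls at different times see that vertex inside different (nested) node sets. I would handle this by arguing in terms of the \emph{final} node partition: fix the partition $\hat V_i$ of $V$ at the end of the deletion sequence; each separator vertex ever added to $S_{i+1}$ was charged, by Lemma \ref{lma:splitFull}, to some $S_i$-vertex $s$ lying in the cut side it helped separate, and that charge is at most $\tfrac{32\log n}{\delta}\lg(n - |X \cap S_i|)$ where $X$ is the final node containing $s$ — the point being that once $s$ is charged it lies on the smaller side of a cut, so it can be charged at most $\lg(n - |X \cap S_i|)$ times total as in the proof of Claim \ref{clm:SplitCorrectness} (the sizes of the $S_i$-portions halve each time). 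Summing $\tfrac{32\log n}{\delta}\lg(n-|X\cap S_i|)$ over $s \in X \cap S_i$ and then over $X \in \hat V_i$ gives exactly the claimed right-hand side. I expect Lemma \ref{lma:EStreeprob} is not needed here (it controls \emph{expected running time}, not the deterministic size bound on $S_i$), so this Lemma's proof is purely a charging argument built on Lemma \ref{lma:splitFull} and Claim \ref{clm:SplitCorrectness}.
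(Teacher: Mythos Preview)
Your approach is essentially correct and rests on the same underlying charging idea as the paper, but is organized differently. The paper maintains the inequality as an \emph{invariant} over time rather than proving a final bound: at each event that adds to $S_{i+1}$ it shows that the induced refinement of $\hat V_i$ increases the RHS by at least as much as the LHS grows. Concretely, for a direct \textsc{OutSep}/\textsc{InSep} call in Algorithm~\ref{alg:delete} (which is \emph{not} nested inside any \textsc{Split}, contrary to your ``in every case \dots\ \textsc{Split}'' sentence), the paper uses Lemma~\ref{lma:sep} with $d=\delta/2$ to bound the LHS increase by $\tfrac{4\log n}{\delta}|Y\cap S_i|$, where $Y$ is the side with at most half the $S_i$-nodes; simultaneously the RHS increases by at least $\tfrac{32\log n}{\delta}|Y\cap S_i|$ because $\lg(n-|Y\cap S_i|)-\lg(n-|X\cap S_i|)\ge 1$. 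For \textsc{Split} calls it simply cites Lemma~\ref{lma:splitFull}. Your global charging argument reaches the same conclusion by rerunning the halving bookkeeping of Claim~\ref{clm:SplitCorrectness} across all separator events; this is a legitimate alternative and avoids tracking the invariant step by step.

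One case your proposal does not address: the set $S_i$ appearing on the right-hand side is itself dynamic, since \textsc{Augment}$(S''_{Sep})$ on line~\ref{lne:augmentInDelete} enlarges $S_i$ over time. The paper handles this separately by observing that $f(s)=\lg(n-s)\cdot s$ is increasing for $s\le n/2$, so augmenting $S_i$ can only raise the RHS while leaving $|S_{i+1}|$ unchanged. Your final-charging framework implicitly absorbs this (you charge to the terminal $S_i$, which contains all earlier versions), but you should say so explicitly. Finally, your initial ``pairwise disjoint'' claim about the \textsc{Split} domains is not literally true over time---they nest, as you yourself notice in the next paragraph; the halving argument you pivot to is the correct one and is exactly what drives both your bound and the paper's.
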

\begin{proof}
We prove by induction on $i$. It is easy to see that $S_0$ has cardinality $n$ since we initialize it to the node set in procedure \ref{alg:preprocessing}, and since each set $S_i$ is an increasing set over time.

Let us therefore focus on $i > 0$. Let us first assume that the separator nodes were added by the procedure $\textsc{OutSep}(\cdot)$ (analogously $\textsc{InSep}(\cdot)$). Since the procedure is invoked on an induced subgraph $\hat{G}_i[X]$ that was formerly strongly-connected, we have that either $V_{Sep}$ or $X \setminus (V_{Sep} \;\cup\; S_{Sep})$ (or both) contain at most half the $S_i$-nodes originally in $X$. Let $Y$ be such a side. Since adding $S_{Sep}$ to $S_i$ separates the two sides, we have that RHS of the equation is increased by at least $\frac{32 \log n}{\delta} |Y \cap S_i|$ since $\lg( n - |Y \cap S_i|) |Y \cap S_i| - \lg( n - |X \cap S_i|) |Y \cap S_i| \geq |Y \cap S_i|$. Since we increase the LHS by at most $\frac{4 \log n}{\delta} |Y \cap S_i|$ by the guarantees in Lemma \ref{lma:sep}, the inequality is still holds.

Otherwise, separator nodes were added due to procedure $\textsc{Split}(\cdot)$. But then we can straight-forwardly apply Lemma \ref{lma:splitFull} which immediately implies that the inequality still holds.

Finally, the hierarchy might augment the set $S_i$ in line \ref{lne:augmentInDelete}, but we observe that $f(s) = \lg(n - s) * s$ is a function increasing in $s$ for $s \leq \frac{1}{2} n$ which can be proven by finding the derivative. Thus adding nodes to the set $S_i$ can only increase the RHS whilst the LHS remains unchanged.
\end{proof}

\begin{corollary}
\label{cor:SisSmall}
During the entire course of deletions, we have, for any $i \geq 0$, \[
|S_{i+1}| \leq \frac{32 \lg^2 n}{\delta} |S_{i}|.
\]
 
\end{corollary}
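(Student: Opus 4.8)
The plan is to obtain the corollary as an essentially immediate consequence of \Cref{lma:setS}, which already bounds $|S_{i+1}|$ by the weighted sum $\frac{32\log n}{\delta}\sum_{X \in \hat{V}_i} \lg(n - |X \cap S_i|)\,|X \cap S_i|$; the only work is to collapse this sum into a clean multiple of $|S_i|$.

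First I would bound each summand. For every node $X \in \hat{V}_i$ we have $0 \le |X \cap S_i| \le n$, hence $n - |X \cap S_i| \le n$ and so $\lg(n - |X \cap S_i|) \le \lg n$ whenever the left-hand side is defined. In the degenerate case $|X \cap S_i| = n$ the summand simply vanishes (or: that case never occurs, since for $i \ge 1$ we have $|S_i| \le n/2^i < n$, and for $i = 0$ every node of $\hat{V}_0$ is a singleton so $|X \cap S_0| = 1$). Thus $\lg(n - |X \cap S_i|)\,|X \cap S_i| \le (\lg n)\,|X \cap S_i|$ for every $X$.

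Second, I would use that the nodes of $\hat{V}_i$ partition the vertex set $V$ (as recorded in the overview of the hierarchy), so the sets $X \cap S_i$, ranging over $X \in \hat{V}_i$, partition $S_i$; therefore $\sum_{X \in \hat{V}_i} |X \cap S_i| = |S_i|$. Plugging both observations into \Cref{lma:setS} gives
\[
|S_{i+1}| \;\le\; \frac{32 \log n}{\delta} \sum_{X \in \hat{V}_i} \lg(n - |X \cap S_i|)\,|X \cap S_i| \;\le\; \frac{32 \lg^2 n}{\delta} \sum_{X \in \hat{V}_i} |X \cap S_i| \;=\; \frac{32 \lg^2 n}{\delta}\,|S_i|,
\]
which is exactly the claimed bound; the argument applies uniformly for all $i \ge 0$, since \Cref{lma:setS} already quantifies over $i \ge 0$ with $|S_0| = n$ as its base case (for $i=0$ one has $\sum_{X \in \hat{V}_0}\lg(n-1) = n\lg(n-1)\le n\lg n$).

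There is no real obstacle here: the corollary is a bookkeeping reformulation of \Cref{lma:setS} tailored to the running-time analysis — it is the clean form under which the choice $\delta = 64\lg^2 n$ immediately yields $|S_{i+1}| \le |S_i|/2$, and hence $|S_i| \le n/2^i$ by induction over $i$. The only point requiring a moment's care is the harmless degenerate term discussed above; everything else is a direct substitution.
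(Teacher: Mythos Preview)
Your proposal is correct and matches the paper's intent: the paper states \Cref{cor:SisSmall} as an immediate corollary of \Cref{lma:setS} without giving a separate proof, and your derivation---bounding each $\lg(n-|X\cap S_i|)$ by $\lg n$ and using that the nodes of $\hat{V}_i$ partition $V$ so that $\sum_{X}|X\cap S_i|=|S_i|$---is exactly the intended one-line computation.
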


\section{Putting it all together}
\label{sec:alltogether}

By Corollary \ref{cor:SisSmall}, using $\delta = 64 \lg^2 n$, we enforce that each $|S_i| \leq n/2^i$, so $\hat{G}_{\lfloor \lg{n} \rfloor + 1}$ is indeed the condensation of $G$. Thus, we can return on queries asking whether $u$ and $v$ are in the same SCC of $G$, simply by checking whether they are represented by the same node in $\hat{G}_{\lfloor \lg{n} \rfloor + 1}$ which can be done in constant time. 

By Lemma \ref{lma:EStreeprob}, we have that with probability $\frac{2}{3}$, that every time a node leaves a GES, its induced subgraph contains at most a fraction of $\frac{2}{3}$ of the underlying vertices of the initial graph. Thus, in expectation each vertex in $V$ participates on each level in $O(\log n)$ GES-trees. Each time it contributes to the GES-trees running time by its degree times the depth of the GES-tree which we fixed to be $\delta$. Thus we have expected time $O(\sum_{v \in V} \mathbf{deg}(v) \delta \log n) = O(m \log^3 n)$ to maintain all the GES-trees on a single level by Lemma \ref{lma:AugmentedGES}. There are $O(\log n)$ levels in the hierarchy, so the total expected running time is bounded by $O(m \log^4 n)$.

By Lemmas \ref{lma:splitFull}, the running time for invoking $\textsc{Split}(G[X], S, \delta/2)$ can be bounded by $O(E(X) \delta \log n) = O(E(X) \log^3 n)$. After we invoke  $\textsc{Split}(G[X], S, \delta/2)$ in algorithm \ref{alg:delete}, we expect with constant probability again by Lemma \ref{lma:EStreeprob}, that each vertex is at most $O(\log n)$ times in an SCC on which the $\textsc{Split}(\cdot)$ procedure is invoked upon. We therefore conclude that total expected running time per level is $O(m \log^4 n)$, and the overall total is $O(m \log^5 n)$. 

Finally, we can bound the total running time incurred by all invocations of $\textsc{InSep}(\cdot)$ and $\textsc{OutSep}(\cdot)$ outside of $\textsc{Split}(\cdot)$ by the same argument and obtain total running time $O(m \log^2 n)$ since each invocation takes time $O(E(G))$ on a graph $G$.

This completes the running time analysis, establishing the total expected running time $O(m \log^5 n)$ which concludes our proof of Theorem \ref{thm:ContributionSCCmain}.

\chapter{A Randomized Algorithm for Decremental SSSP in Dense Digraphs}

\label{chap_rand_decr_sssp}

In this chapter, we prove the first part of \Cref{thm:ContributionSSSPResult}. We state the precise Theorem that we prove below. 

\begin{theorem}[Decremental Part of \Cref{thm:ContributionSSSPResult}]
\label{thm:ContributionDecrSSSPResult}
Given a decremental input graph $G=(V,E,w)$ with $n = |V|, m=|E|$ and aspect ratio $W$, a dedicated source $r \in V$ and $\epsilon > 0$, there is a randomized algorithm that maintains a distance estimate $\widetilde{\mathbf{dist}}(r,x)$, for every $x \in V$, such that
\[
    {\mathbf{dist}}_G(r,x) \leq \widetilde{\mathbf{dist}}(r,x) \leq (1+\epsilon) {\mathbf{dist}}_G(r,x)
\]
at any stage w.h.p. The algorithm has total expected update time $\tilde{O}(n^2 \log^4 W/\poly(\epsilon))$. Distance queries are answered in $O(1)$ time, and a corresponding path $P$ can be returned in $O(|P|)$ time. The algorithm works against a non-adaptive adversary.
\end{theorem}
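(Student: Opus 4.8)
The plan is to reduce the weighted, long-range problem to a sequence of \emph{unweighted, hop-bounded} approximate-SSSP problems, and then to run a lazy (geometrically increasing) variant of the Even--Shiloach tree of \Cref{thm:EStree} on $G$ augmented with a decrementally-maintained \emph{hopset} of shortcut edges that makes all relevant shortest paths polylogarithmic in the number of hops. First I would apply the reduction of \Cref{sec:reductionWeights} to assume $W=\poly(n)$, and then split the task into $O(\log nW)$ independent \emph{distance scales}: for each $D=2^i$ I maintain a structure that only has to be correct for vertices $x$ with $\mathbf{dist}_G(r,x)\in(D/2,D]$. Within a scale I delete all edges of weight $>D$ and round every remaining weight up to the nearest multiple of $\epsilon D/n$; this distorts the relevant distances by at most a $(1+\epsilon)$ factor and lets me treat the instance as (essentially) unweighted with integer weights in $[1,O(n/\epsilon)]$. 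Crucially I also bucket the maintained estimates $\widetilde{\mathbf{dist}}(r,x)$ to the next power of $(1+\epsilon)$, so that within one scale each vertex's estimate increases only $O(\log_{1+\epsilon}(nW))=\tilde O(1/\epsilon)$ times; this is what will eventually make the Even--Shiloach-style bookkeeping cost $\tilde O(n^2/\epsilon)$ rather than $\tilde O(mn)$.

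Next I would build the hop-reduction hierarchy. Fix $q=\Theta(\log n)$ and $L=O(\log n)$, and take a nested family $V=S_0\supseteq S_1\supseteq\cdots\supseteq S_L$ where each vertex of $S_{\ell-1}$ is kept in $S_\ell$ independently with probability $1/2$, so $|S_\ell|=\tilde O(n/2^\ell)$ w.h.p. Since the adversary is \emph{non-adaptive}, the entire deletion sequence, hence the collection of all relevant (approximate) shortest paths, is fixed in advance, so a union bound gives the hitting-set property: w.h.p.\ every such path with at least $2^\ell q$ hops contains a vertex of $S_\ell$ within every window of $2^\ell q$ consecutive vertices. For each level I maintain a skeleton graph $H_\ell$ on $S_\ell\cup\{r\}$ whose edge $(u,v)$ carries a $(1+\epsilon)$-approximation of the shortest $u$-$v$ distance \emph{using at most $O(2^\ell q)$ hops}; by the hitting-set property an $O(2^{\ell+1}q)$-hop distance in $G$ is $(1+\epsilon)$-approximated by an $O(q)$-hop path in $H_\ell$, so the levels can be bootstrapped bottom-up (level $\ell+1$ is read off from short-hop searches in $H_\ell$), and at the top $H_L$ has $\tilde O(1)$ vertices and polylogarithmic hop-diameter. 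Rescaling $\epsilon\leftarrow\epsilon/L$ absorbs the $(1+\epsilon)^{O(L)}$ blow-up. The union $G\cup H_1$ (together with short ES-trees from the $S_\ell$'s to reach vertices outside $S_L$) then serves as a hopset in which \emph{every} relevant $r$-to-$x$ distance is $(1+\epsilon)$-approximated by a path with $\tilde O(1)$ hops, and a lazy approximate ES-tree on this graph maintains the final estimates.

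For the running time I would argue as follows. The cost of the lazy approximate ES-tree on a graph with $m'$ edges is $\sum_v \deg^{in}(v)\cdot(\#\text{level increases of }v)=\tilde O(m'/\epsilon)$, using the $\tilde O(1/\epsilon)$ bound on level increases; since $m'\le n^2+|H_1|$ this is $\tilde O(n^2/\epsilon)$ per scale provided the hopset stays of size $\tilde O(n^2)$. Maintaining skeleton $H_\ell$ reduces to short-hop ($O(q)=\polylog$) searches on $H_{\ell-1}$, and $\sum_\ell |E(H_{\ell-1})|\cdot\polylog=\sum_\ell \tilde O((n/2^\ell)^2)=\tilde O(n^2)$, dominated by the bottom level where one works on (a sparsified copy of) $G$ itself. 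Multiplying by the $O(\log nW)$ scales and the $\poly(1/\epsilon)$ loss from the weight rounding and from $\epsilon\leftarrow\epsilon/L$ yields the claimed $\tilde O(n^2\log^4 W/\poly(\epsilon))$ total update time; $O(1)$ query time is immediate, and an approximate path is recovered in $O(|P|)$ time by recursively unfolding, level by level, the witness paths stored with the shortcut edges on the tracked $\tilde O(1)$-hop path.

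The hard part — and where essentially all of the technical work of \cite{GutenbergW20a, nearOptDenseSSSP} lies — is maintaining the hopset decrementally \emph{within} the $\tilde O(n^2)$ budget. The naive implementation, one ES-tree of depth $2^\ell q$ per center in $S_\ell$, costs $\tilde O(|S_\ell|\cdot m\cdot 2^\ell q)=\tilde O(nm)$ and is too slow; avoiding this requires replacing per-center ES-trees by a single global structure at each level whose work is charged to the bounded number of distance-level increases rather than to individual deletions, and handling that a single edge deletion in $G$ can raise many shortcut weights at once. A second obstacle is correctness: running an ES-tree on a graph $H_{\ell-1}$ whose \emph{own} edge weights are themselves increasing over time breaks the usual ES-tree invariants, so one must use the monotone-ES-tree machinery and carefully account for the composed $(1+\epsilon)$ errors so that they telescope to only $(1+\epsilon)^{O(L)}$; and the whole hitting-set argument must be shown robust to the fact that the sets $S_\ell$ are sampled \emph{before} the deletions, which is exactly why the data structure is only guaranteed against a non-adaptive adversary.
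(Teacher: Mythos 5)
Your proposal takes a genuinely different route from the paper. You build hopsets on random hitting sets $S_0\supseteq\cdots\supseteq S_L$ and run lazy monotone ES-trees, in the style of Henzinger, Krinninger, and Nanongkai. The paper instead develops the machinery of \emph{approximate topological orders}: it maintains a generalized topological order $(\mathcal{V},\tau)$ of a subgraph $G'=G\setminus F$ obtained by repeatedly removing \emph{exponentially random} ball-boundary edge separators, proves that any fixed shortest path $\pi_{s,t}$ satisfies $\mathbb{E}[\mathcal{T}(\pi_{s,t},\tau)]\le \tilde{O}(n/2^i)\cdot w(\pi_{s,t})+n$, and then runs a \emph{single} bucketed ES-tree from $r$ on $G/\mathcal{V}$ in which in-neighbors of $v$ are examined at a rate inversely proportional to their $\tau$-difference from $v$. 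The quality bound is boosted to hold w.h.p.\ by maintaining $\Theta(\log n)$ independent ATOs (a ``bundle''), and ATOs of worse quality are used recursively to build those of better quality.

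Unfortunately, there is a gap in your running time analysis that is not, to my knowledge, fixable within the hopset framework. To maintain the skeleton $H_\ell$ you must maintain $(1+\epsilon)$-approximate bounded-hop distances from \emph{each} of the $|S_\ell|=\tilde{O}(n/2^\ell)$ centers in $H_{\ell-1}$, which has $|E(H_{\ell-1})|=\tilde{O}\big((n/2^{\ell-1})^2\big)$ edges. Even with geometric bucketing so that one monotone ES-tree costs $\tilde{O}(|E(H_{\ell-1})|/\epsilon)$, the total over all centers is $\tilde{O}\big(|S_\ell|\cdot|E(H_{\ell-1})|/\epsilon\big)=\tilde{O}(n^3/8^\ell/\epsilon)$, i.e.\ $\tilde{O}(n^3/\epsilon)$ at the bottom level, not the $\tilde{O}\big((n/2^\ell)^2\big)$ your sum claims. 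You do flag the naive per-center cost as ``too slow'' and appeal to ``a single global structure at each level'' whose work is charged to distance-level increases, but you never say what this structure is or why it removes the multiplicative $|S_\ell|$ factor; no such directed structure is known, and this is exactly why the hitting-set approach was stuck at $mn^{0.9+o(1)}$ for directed decremental SSSP before this work. It is also precisely the obstruction the ATO framework is designed to sidestep: rather than maintaining all-pairs bounded-hop distances among a hitting set, the paper maintains a single topological-order-like function on $G'$, whose \emph{total} backward deviation along any fixed path is charged (in expectation, via the exponential-depth ball separator) to that path's weight, so the entire $\tilde{O}(n^2)$ budget is paid once by one bucketed ES-tree from $r$ rather than once per center.
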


We start this chapter by introducing some additional preliminaries that are specific to this chapter. We then give an extended overview of our data structure, and finally provide formal proofs (however, we only sketch some of the more technical proofs).  

\section{Additional Preliminaries}
\label{sec:prelim}

\paragraph{Exponential Distribution.} Finally, we make use of the exponential distribution, that is we use random variables $X$ with cumulative distribution function $F_X(x,\lambda) = 1 - e^{-\lambda x}$ for all $x \geq 0, \lambda > 0$, which we denote by the shorthand $X \sim \textsc{Exp}(\lambda)$. If $X \sim \textsc{Exp}(\lambda)$ is clear, we also use $F_X(x)$ in place of $F_X(x,\lambda)$. The exponential distribution has the special property of being \emph{memoryless}, that is if $X \sim \textsc{Exp}(\lambda)$, then
\[
    \mathbb{P}[X > s + t \;|\; X > t] = \mathbb{P}[X > s].
\]

\paragraph{Generalized Topological Order.} We define a \emph{generalized topological order} \\ $\textsc{GeneralizedTopOrder}(H)$ to be a tuple $(\mathcal{V}, \tau)$ where $\mathcal{V}$ is the set of SCCs of $H$ and $\tau : \mathcal{V} \rightarrow [0,n)$ is a function that maps any sets $X, Y \in \mathcal{V}$ such that $\tau(X) < \tau(Y)$, if $X \leadsto_H Y$ and such that $[\tau(X), \tau(X) + |X|) \cap [\tau(Y), \tau(Y) + |Y|) = \emptyset$. Thus, $\tau$ effectively establishes a one-to-one correspondence between $|X|$-sized intervals and SCCs $X$ in $H$. We point out that a $\textsc{GeneralizedTopOrder}(H)$ can always be computed in $O(|E(H)|)$ time \cite{tarjan1972depth}. In fact, a generalized topological order can also be maintained efficiently in a decremental graph $H$. Here, we say that $(\mathcal{V},\tau)$ is a dynamic tuple that forms a generalized topological order of $H$ if it is a topological order for all versions of $H$. Further, we say that $(\mathcal{V}, \tau)$ has the \emph{nesting} property, if for any set $X \in \mathcal{V}$ and a set $Y \supseteq X$ that was in $\mathcal{V}$ at an earlier stage, we have $\tau(X) \in [\tau(Y), \tau(Y) + |Y| - |X|]$; in other words, the interval $[\tau(X), \tau(X)+|X|)$ is entirely contained in the interval $[\tau(Y), \tau(Y) + |Y|)$. Thus, the associated interval with $X$ is contained in the interval associated with $Y$. We refer to the following result that can be obtained straight-forwardly by combining the data structure given in \Cref{chap:rand_scc} and the static procedure by Tarjan \cite{tarjan1972depth} as described in \cite{GutenbergW20a}.

\begin{theorem}[see \Cref{chap:rand_scc} and \cite{tarjan1972depth, GutenbergW20a}]
\label{thm:SCCinDecrGraph}
Given a decremental digraph $H$, there exists an algorithm that can maintain the generalized topological order $(\mathcal{V}, \tau)$ of $H$ where $\tau$ has the nesting property. The algorithm runs in expected total update time $O(m \log^4 n)$, is randomized and works against an adaptive adversary. 
\end{theorem}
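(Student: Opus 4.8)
The plan is to run the decremental SCC data structure of \Cref{chap:rand_scc} (\Cref{thm:ContributionSCCmain}) on $H$ to maintain the set $\mathcal{V}$ of strongly-connected components explicitly, and to augment it with the map $\tau$, recomputed only locally whenever $\mathcal{V}$ changes. Recall that in a decremental graph $\mathcal{V}$ only ever refines: a set of $\mathcal{V}$ may split into several sets, but two sets never merge. We maintain the invariant $(\star)$: at every stage the half-open intervals $\{\,[\tau(X),\tau(X)+|X|) : X\in\mathcal{V}\,\}$ partition $[0,n)$; for distinct $X,Y\in\mathcal{V}$ with $X\leadsto_H Y$ we have $\tau(X)<\tau(Y)$; and for every $X\in\mathcal{V}$ and every set $Y\supseteq X$ that belonged to $\mathcal{V}$ at an earlier stage, $[\tau(X),\tau(X)+|X|)\subseteq[\tau_Y,\tau_Y+|Y|)$, where $\tau_Y$ denotes the value $\tau$ had while $Y$ was present. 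It is immediate that $(\star)$ makes $(\mathcal{V},\tau)$ a generalized topological order with the nesting property, so it suffices to maintain $(\star)$.

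For the preprocessing, run Tarjan's algorithm \cite{tarjan1972depth} on the input graph to obtain the SCCs $Z_1,\dots,Z_p$ in a topological order of $\textsc{Condensation}(H)$ and set $\tau(Z_j)=\sum_{l<j}|Z_l|$; this costs $O(m+n)$ and establishes $(\star)$. For an update, whenever the SCC data structure splits a set $X\in\mathcal{V}$ (with current value $\tau(X)$) into new sets $X_1,\dots,X_k$, run Tarjan on the current induced subgraph $H[X]$ to obtain a topological order $X_{\pi(1)},\dots,X_{\pi(k)}$ of these sets in $\textsc{Condensation}(H[X])$, and set $\tau(X_{\pi(j)})=\tau(X)+\sum_{l<j}|X_{\pi(l)}|$. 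Since $X$ was an SCC before the deletion and $H$ is decremental, the SCCs of the current $H[X]$ are exactly $X_1,\dots,X_k$, so $\sum_j|X_j|=|X|$ and the new intervals are placed consecutively into the interval vacated by $X$; no other $\tau$-value changes.

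The partition and nesting parts of $(\star)$ are immediate from the update rule. For the topological-order part, let $X',Y'$ be distinct current SCCs with $X'\leadsto_H Y'$, and let $Z$ be the last set that contained both $X'$ and $Y'$ --- taking $Z=V$ and the preprocessing step as a ``split'' of $V$ into the initial SCCs if $X'$ and $Y'$ lie in different initial SCCs. When $Z$ split, at some stage $t_0$, $X'$ and $Y'$ fell into distinct child sets $Z_a\neq Z_b$, and we then ordered $Z$'s children by a topological order of $\textsc{Condensation}(H^{(t_0)}[Z])$, where $H^{(t_0)}$ is the version of $H$ at stage $t_0$. Because $H$ is decremental, $X'\leadsto_H Y'$ now implies $X'\leadsto_{H^{(t_0)}}Y'$, and since $X'\subseteq Z_a$ and $Y'\subseteq Z_b$ this forces $Z_a\leadsto_{H^{(t_0)}}Z_b$; hence $Z_a$ preceded $Z_b$ in the order chosen at stage $t_0$, so $\tau_{Z_a}<\tau_{Z_b}$, with the intervals of $Z_a$ and $Z_b$ disjoint and that of $Z_a$ entirely to the left. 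By the nesting part of $(\star)$, $[\tau(X'),\tau(X')+|X'|)\subseteq[\tau_{Z_a},\tau_{Z_a}+|Z_a|)$ and $[\tau(Y'),\tau(Y')+|Y'|)\subseteq[\tau_{Z_b},\tau_{Z_b}+|Z_b|)$, whence $\tau(X')<\tau(Y')$. The same argument without the reachability hypothesis shows that the intervals of any two distinct current SCCs are disjoint.

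It remains to bound the running time and argue adaptivity. The SCC data structure of \Cref{chap:rand_scc} runs within its stated total update time and works against an adaptive adversary; since $\tau$ is recomputed from the current graph together with the (objective) history of splits, never looking ahead, the augmented structure remains adaptive. The only extra cost is $O(|E(H[X])|)$ per split event for the Tarjan re-ordering. Summing this naively over all split events is too expensive --- a single vertex may lie in $\Omega(n)$ successively shrinking SCCs --- so the key point, and the step I expect to be the main obstacle, is to charge the re-ordering cost against work the SCC data structure already performs. Every split of a top-level node is produced by a call to $\textsc{Split}(\cdot)$ in \Cref{alg:delete}, and $\textsc{Split}$ already pays $\Omega(|E|)$ to decompose a graph of comparable size into SCCs in the style of Tarjan (\Cref{lma:splitFull}); moreover, by the random-root reuse analyzed in \Cref{lma:EStreeprob}, each vertex is expected to lie, per level, in only $O(\log n)$ SCCs on which $\textsc{Split}$ is invoked, over $O(\log n)$ levels. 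Concretely, the cleanest implementation folds the ordering of the new pieces directly into $\textsc{Split}$ and $\textsc{InitNewPartition}$, incurring no asymptotic overhead, so that the total update time matches that of \Cref{chap:rand_scc}; a careful accounting along the lines of \Cref{sec:alltogether} then yields the claimed bound $O(m\log^4 n)$.
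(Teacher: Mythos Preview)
Your proposal is correct and matches the paper's approach. The paper does not spell out a proof here; it simply states that the result is obtained ``straight-forwardly by combining the data structure given in \Cref{chap:rand_scc} and the static procedure by Tarjan \cite{tarjan1972depth} as described in \cite{GutenbergW20a}.'' Your reconstruction---run the decremental SCC algorithm to maintain $\mathcal{V}$, and use Tarjan locally to order the children of a split set inside the vacated interval---is exactly this combination, and your correctness argument for the nesting and topological-order invariants is clean and complete.

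On the running time, you correctly identify the one real subtlety: re-running Tarjan on all of $H[X]$ at every top-level split would cost $\Omega(mn)$ in the worst case, so the ordering must be piggybacked on work the SCC algorithm already performs. Your proposed fix---folding the ordering into \textsc{Split} (which already does a Tarjan-style exploration on the small side) together with the one-way direction of the separator cut to place the large remaining piece---is the right idea and is what the paper's reference to \cite{GutenbergW20a} is pointing at. One concrete way to make this precise, which you gesture at but do not state: every refinement of a top-level node arrives as a sequence of \textsc{SplitNode} calls, each satisfying the one-sided emptiness precondition; that precondition directly tells you which of the two pieces precedes the other, so $\tau$ can be updated per binary split in $O(1)$ plus the cost of recording sizes, with no separate Tarjan pass at all.
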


\section{Overview}
\label{sec:overview}

We now give an overview of our algorithm. In order to illustrate the main concepts, we start by giving a simple algorithm to obtain total update time $O(n^2 \log n/\epsilon)$ in directed acyclic graphs (DAGs). While this algorithm was previously not explicitly mentioned, it follows rather directly from the techniques developed in \cite{bernstein2017deterministicweighted, GutenbergW20a}. We present this algorithm to provide intuition for our approach and motivates our novel notion of approximate topological orders. In the light of approximate topological orders, we then shed light on limitations of the previous approach in \cite{GutenbergW20a} and present techniques to surpass these limitations to obtain \Cref{thm:ContributionSSSPResult}. In this overview, we focus on obtaining an SSSP algorithm that runs in $\tilde{O}(n^2 \log^4 W / \poly( \epsilon))$ expected update time.

\subsection{A Fast Algorithm for DAGs} 
\label{subsec:DAGalgo}

\paragraph{The Topological Order Difference.} Let $G = (V,E,w)$ be a DAG and let $\tau$ be the function returned by  $\textsc{GeneralizedTopOrder}(G)$ computed on the initial version of $G$ (since $G$ is a DAG, this is just a standard topological order). Let us now make an almost trivial observation: for any shortest $s$-to-$t$ path $\pi_{s,t}$, in any version of $G$, the sum of topological order differences is bounded by $n$. More formally:
\begin{equation}
\label{eq:defT}
        \mathcal{T}(\pi_{s,t}, \tau) \stackrel{\text{def}}{=} \sum_{(u,v) \in \pi_{s,t}} \tau(v) - \tau(u) = \tau(t) - \tau(s) \leq n.
\end{equation}
Observe that every path in $G$ can only contain few edges $(u,v)$ with large topological order difference, i.e. with $\tau(v) - \tau(u)$ large, by the pigeonhole principle.

\paragraph{Reviewing the ES-tree.} To understand how this fact can be exploited, recall that in the classic ES-tree algorithm (see \Cref{sec:esTree}), a vertex $v$ that currently does not have an in-edge in the tree $T$, searches its in-neighborhood $\mathcal{N}_{in}(v$ for a vertex $x$ such that 
\begin{equation}
\label{eq:enforce2}
    \widetilde{\mathbf{dist}}(r,x)+ w(x,v) \leq \widetilde{\mathbf{dist}}(r,v).
\end{equation}
Only if no such $x$ exists, then $\widetilde{\mathbf{dist}}(r,v)$ has to be incremented. However, observe that if a vertex $v$ was only allowed to scan a certain vertex $x \in \mathcal{N}^{in}(v)$ after every $i$ distance estimate increments (for example whenever $\widetilde{\mathbf{dist}}(r,v) $ is divisible by $i$) then this corresponds to enforcing that at all times $\widetilde{\mathbf{dist}}(r,x)+ w(x,v) + i - 1 \leq \widetilde{\mathbf{dist}}(r,v)$ since we check equation \ref{eq:enforce2} every $i$ steps (in particular, for $i=1$, we get an exact algorithm). Consequently, we get at most $i-1$ additive error in the distance estimate for any $t$ whose shortest $r$-to-$t$ path $\pi_{r,t}$ contains $(x,v)$. On the other hand, we only need to scan and check the edge $(x,v)$, by the classic runtime analysis argument, $\delta / i$ times instead of $\delta$ times\footnote{This trade-off was first observed in \cite{bernstein2017deterministicweighted}.}.

\paragraph{Improving the Running Time.} Let us now exploit Inequality \ref{eq:defT}. We therefore define 
\[
B_j(v) = \{ u \in \mathcal{N}^{in}(v) \textit{ with }  2^{j} \leq \tau(v) - \tau(u) < 2^{j+1}\}
\]
for every $v \in V$ and $0 \leq j \leq \lg n$; the $B_j(v)$ partition the in-neighborhood of $v$ according to topological order difference to $v$. Observe that $|B_j(v)| \leq 2^j$. Now, consider the algorithm as above where for every vertex $v$, instead of checking all $\mathcal{N}^{in}(v)$, we only check edge $(x,v)$ for $x \in B_j(v)$ if 
$\widetilde{\mathbf{dist}}(r,v)$ is divisible by $\lceil2^j\frac{\epsilon \delta}{n}\rceil$.
By the arguments above the total running time now sums to 
\[O\left(\sum_{v \in V} \sum_{0 \leq j \leq \lg n} |B_j(v)| \frac{\delta}{2^j\frac{\epsilon \delta}{n}}\right) = O\left(\sum_{v \in V} \sum_{0 \leq j \leq \lg n} 2^{j+2} \frac{\delta}{2^j\frac{\epsilon \delta}{n}}\right) = \tilde{O}(n^2/\epsilon).
\]


\paragraph{Bounding the Error.} Fix a shortest $r$-to-$t$ path $\pi_{r,t}$, and consider any edge $(u,v) \in \pi_{r,t}$ with $u \in B_{j+1}(v)$. We observe that the edge $(u,v)$ contributes at most an additive error of $2^{j+1}\frac{\epsilon \delta}{n}$ to $\widetilde{\mathbf{dist}}(r,t)$ since it is scanned every $\lceil 2^{j}\frac{\epsilon \delta}{n} \rceil$ distance values and if $\lceil 2^{j}\frac{\epsilon \delta}{n} \rceil$ is equal to $1$ it does not induce any error.

On the other hand, since $u \in B_{j+1}(v)$ we also have $\tau(v) - \tau(u) \geq 2^j$. We can thus charge $n/(2 \epsilon \delta)$ units from $\mathcal{T}(\pi_{r,t}, \tau)$ for each additive error unit; we know from Equation \ref{eq:defT} that $\mathcal{T}(\pi_{r,t}, \tau) \leq n$, so the total additive error is at most 
 $\frac{n}{n/(2\epsilon \delta)} = 2\epsilon\delta$. Thus, for all distances $\approx \delta$ (say in $[\delta/2, \delta)$), we obtain a $(1+4\epsilon)$-multiplicative distance estimate\footnote{Technically, we run to depth $(1+4\epsilon)\delta$ to ensure that vertices' distance estimates are not set to $\infty$ too early.}.

\paragraph{Working with Multiple Distance Scales.} Observe that the data structure above has no running time dependency on $\delta$. Thus, to obtain a data structure that maintains a $(1+2\epsilon)$-approximate distance estimate from $r$ to any vertex $x$, we can simply use $\lg(nW)$ data structures in parallel where the $i^{th}$ data structure has $\delta = 2^i$. A query can then be answered by returning the smallest distance estimate from any data structure, using a min-heap data structure to obtain this smallest estimate in constant time\footnote{Here, we exploit that all distance estimates are overestimates, and at least one of them is $(1+2\epsilon)$-approximate.}. The running time for all data structures is then bounded by $\tilde{O}(n^2 \log W /\epsilon)$.

\subsection{Extending the Result to General Graphs}
\label{subsec:ExtendSSSPToGenGraphs}

We now encourage the reader to verify that in the data structure for DAGs, we used at no point that the graph was acyclic, but rather only used that $\mathcal{T}(\pi_{s,t}, \tau)$ is bounded by $n$ for any path $\pi_{s,t}$. In this light, it might be quite natural to ask whether such a function $\tau$ might exist for general decremental graphs. Surprisingly, it turns out that after carrying out some contractions in $G$ that only distort distances slightly, we can find such a function $\tau$ that comes close in terms of guarantees. We call such a function $\tau$ an \emph{approximate topological order} (this function will no longer encode guarantees about reachability, but it helps for intuition to think of $\tau$ as being similar to a topological order). 

\paragraph{The Approximate Topological Order} We start with the formal definition:

\begin{definition}
\label{def:ATO}
Given a decremental weighted digraph $G = (V,E,w)$ and parameter $\eta_{diam} \geq 0$. We call a dynamic tuple $(\mathcal{V}, \tau)$ an \emph{approximate topological order of $G$ of quality $q > 1$} (abbreviated $\mathcal{ATO}(G, \eta_{diam})$ of quality $q$), if at any stage
\begin{enumerate}
    \item $\mathcal{V} = \{ X_1, X_2, .., X_k\}$ forms a partition of $V$ and a refinement of all earlier versions of $\mathcal{V}$, \label{prop:VUpdate} and

    \item $\tau : \mathcal{V} \rightarrow [0, n)$ is a function that maps each $X \in \mathcal{V}$ to a value $\tau(X)$. If some set $X \in \mathcal{V}$ is split at some stage into disjoint subsets $X_1, X_2, .., X_k$, then we let $\tau(X_{\pi(1)}) = \tau(X)$ and $\tau(X_{\pi(j+1)}) = \tau(X_{\pi(j)}) + |X_{\pi(j)}|$ for each $j < k$ and some permutation $\pi$ of $[1, k]$, and \label{prop:tauUpdate}
    
    \item each $X \in \mathcal{V}$ has weak diameter $\mathbf{diam}(X, G) \leq \frac{|X| \eta_{diam}}{n}$, and \label{prop:ContractLittle}
    
    \item \label{prop:TauTotal} for any two vertices $s, t \in V$, the shortest path $\pi_{s,t}$ in $G$ satisfies $\mathcal{T}(\pi_{s,t}, \tau) \leq q \cdot w(\pi_{s,t}) + n$ where we define $\mathcal{T}(\pi_{s,t}, \tau) \stackrel{\text{def}}{=} \sum_{(u,v) \in \pi_{s,t}} |\tau(X^u) - \tau(X^v)|$.
\end{enumerate}
We say that $(\mathcal{V}, \tau)$ is an $\mathcal{ATO}(G, \eta_{diam})$ of \emph{expected} quality $q$, if $(\mathcal{V}, \tau)$ satisfies properties \ref{prop:VUpdate}-\ref{prop:ContractLittle}, and at any stage, for every $s,t \in V$, $\mathbb{E}[\mathcal{T}(\pi_{s,t}, \tau)] \leq q \cdot w(\pi_{s,t}) + n$.
\end{definition}

Let us expound the ideas captured by this definition. We remind the reader that such a function $\tau$ is required by a data structure that only considers distances in $[\delta/2, \delta)$. Let us consider a tuple $(\mathcal{V}, \tau)$ that forms an $\mathcal{ATO}(G, \epsilon \delta)$ of quality $q$. Then, for any $s$-to-$t$ shortest path $\pi_{s,t} = \langle s = v_1, v_2, \dots v_{\ell} = t\rangle$ in $G$, let $s_i$ and $t_i$ be the first and last vertex on the path in $X_i \in \mathcal{V}$ (see property \ref{prop:VUpdate}) if there are any. Observe that by property \ref{prop:ContractLittle}, the vertices $s_i$ and $t_i$ are at distance at most $\frac{|X_i|\epsilon \delta}{n}$ in $G$. It follows that if we contract the SCC $X_i$, the distance $\mathbf{dist}_{G / X_i}(s,t)$ is at least the distance from $s$ to $t$ in $G$ minus an additive error of at most $\frac{|X_i|\epsilon \delta}{n}$. It follows straight-forwardly, that after contracting all sets in $\mathcal{V}$, we have that distances in $G / \mathcal{V}$ correspond to distances in $G$ up to a negative additive error of at most $\sum_{X_i \in \mathcal{V}} \frac{|X_i| \epsilon\delta}{n} = \frac{n \cdot \delta\epsilon}{n} = \epsilon \delta$. Thus, maintaining the distances in $G / \mathcal{V}$ $(1+2\epsilon)$-approximately is still sufficient for getting a $(1 \pm 2\epsilon)$-approximate distance estimate.

Property \ref{prop:VUpdate} simply ensures that the vertex sets forming the elements of $\mathcal{V}$ decompose over time. Property \ref{prop:tauUpdate} states that $\tau$ assigns each node in $G / \mathcal{V}$ a distinct number in $[0,n)$. It also ensures that if a set $X \in \mathcal{V}$ receives $\tau(X)$ that every later subset of $X$ will obtain a number in the interval $[\tau(X), \tau(X) + |X|)$. Moreover, $\tau$ effectively establishes a one-to-one correspondence between nodes in a version of $G / \mathcal{V}$ and intervals in $[0, n)$ of size equal to their underlying vertex set. Once a set $X\in\mathcal{V}$ decomposes into sets $X_1, X_2, \dots$, property \ref{prop:tauUpdate} stipulates that the intervals that $\tau$ maps $X_1, X_2, \dots$ to are disjoint subintervals of $[\tau(X), \tau(X) + |X|)$. We point out that once $\mathcal{V}$ consists of singletons, each vertex is essentially assigned a single number.

Finally, property \ref{prop:TauTotal} gives an upper bound on the topological order difference. Observe that we redefine $\mathcal{T}(\pi_{s,t}, \tau)$ in a way that is consistent with Definition \ref{eq:defT}. 
In our algorithm, for $\eta_{diam} \approx \epsilon\delta$, we obtain a quality of $\tilde{O}(n/\epsilon\delta)$. Thus, any path $\pi$ of weight $\approx \delta$ has $\mathcal{T}(\pi / \mathcal{V}, \tau) \leq \tilde{O}(n / \epsilon)$ which is very close to the upper bound obtained by the topological order function in DAGs. We summarize this result in the theorem below which is one of our main technical contributions.

\begin{restatable}{theorem}{ATOResultIntro}[see \Cref{sec:ATORealImpl} and \Cref{subsec:bootstrapDense}.]
\label{thm:TopologicalOrderMaintenanceOverview}
For any $0 \leq i \leq \lg(Wn)$, given a decremental digraph $G=(V,E,w)$, we can maintain an $\mathcal{ATO}(G, 2^i)$ of expected quality $\tilde{O}(n/2^i)$. The algorithm runs in total expected update time $\tilde{O}(n^2)$ against a non-adaptive adversary with high probability. 
\end{restatable}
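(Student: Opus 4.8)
The plan is to build the $\mathcal{ATO}(G, 2^i)$ in two stages, mirroring the decomposition announced by the two forward references \Cref{sec:ATORealImpl} and \Cref{subsec:bootstrapDense}. First I would handle the \emph{diameter-reduction / contraction} part: given the decremental digraph $G$, I want to produce the partition $\mathcal{V}$ that satisfies Properties~\ref{prop:VUpdate}--\ref{prop:ContractLittle}, i.e. each part $X\in\mathcal{V}$ has weak diameter at most $|X|\eta_{diam}/n$ with $\eta_{diam}=2^i$, and $\mathcal{V}$ only refines over time. The natural tool here is a directed ball-growing / low-diameter decomposition run inside each strongly connected component, using the SCC maintenance machinery of \Cref{thm:SCCinDecrGraph} to keep a generalized topological order of the condensation with the nesting property. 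Concretely: maintain $\textsc{Condensation}(G)$ decrementally; whenever an SCC $X$ is found (via a GES/ES-tree rooted at a random vertex, as in \Cref{chap:rand_scc}) to have two vertices at distance exceeding $|X|2^i/n$, cut $X$ along a sparse ball boundary so that both sides are strictly smaller, charge the cut to the smaller side, and recurse. Because each vertex can be on the smaller side only $O(\log n)$ times and each ES-tree is run only to the relevant depth, the randomized analysis of \Cref{chap:rand_scc} gives expected total update time $\tilde O(n^2)$ for this layer (the $n^2$ rather than $mn$ coming from first contracting so that distances, not edges, drive the cost — this is exactly the place where the $\tilde O(n^2)$ target, rather than $\tilde O(mn)$, is forced, and it is why the theorem is only near-optimal in dense graphs).

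Second I would assign the coordinates $\tau$ so that Property~\ref{prop:tauUpdate} holds automatically and Property~\ref{prop:TauTotal} — the bound $\mathcal{T}(\pi_{s,t},\tau)\le q\cdot w(\pi_{s,t})+n$ with $q=\tilde O(n/2^i)$ — holds in expectation. The idea is \emph{random recursive bisection of the interval $[0,n)$}: take the (dynamically maintained) condensation of $G/\mathcal{V}$, and at every level recursively split the current node set into two halves $L,R$ by a balanced directed separator, assigning $L$ the left subinterval and $R$ the right subinterval, but choosing \emph{which} half is ``left'' by an independent fair coin. A fixed shortest path $\pi_{s,t}$ crosses each separator at most once in the forward direction and at most some bounded number of times overall; when it crosses, the $\tau$-jump it incurs at recursion depth $\ell$ is at most $n/2^{\ell}$, and by the coin flip the \emph{signed} contribution has zero mean, so $\mathbb{E}[|\tau(X^u)-\tau(X^v)|]$ telescopes geometrically. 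Summing over the $O(\log n)$ depths and over the edges of $\pi_{s,t}$ gives $\mathbb{E}[\mathcal{T}(\pi_{s,t},\tau)] = \tilde O(1)\cdot(\text{number of separator crossings})\cdot n$, and since each crossing can be amortized against $2^i$ units of path weight (a separator is only inserted when an SCC in $G/\mathcal{V}$ has grown to diameter $\approx 2^i$, so consecutive crossings are $\approx 2^i$ apart along $\pi$), this is $\tilde O(n/2^i)\cdot w(\pi_{s,t}) + n$, as required. The ``$+n$'' slack absorbs the first crossing. Property~\ref{prop:tauUpdate} is then immediate: when $X$ splits, its children inherit sub-intervals of $[\tau(X),\tau(X)+|X|)$, and we re-run the random bisection \emph{inside} that interval, so nesting is preserved.

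The main obstacle I expect is controlling the \emph{interaction} between the two layers over the decremental sequence: the partition $\mathcal{V}$ keeps refining (Property~\ref{prop:VUpdate}), which repeatedly changes the vertex set of $G/\mathcal{V}$, and every such refinement forces $\tau$ to be recomputed on a sub-interval, while simultaneously the adversary is deleting edges, which changes both the SCC structure and which separators are ``forced''. One has to argue that the expected-quality bound survives all of this \emph{at every stage}, not just at stage $0$ — i.e. that the coin flips used at earlier stages remain valid randomness for a path $\pi_{s,t}$ that only exists at a later stage, which is exactly where non-adaptivity of the adversary is used (the adversary's deletions, being fixed in advance, are independent of our coins). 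The bookkeeping to make ``each crossing is $\approx 2^i$ apart along $\pi$'' rigorous — namely that between two separator hyperplanes that $\pi$ crosses, $\pi$ must leave an SCC of $G/\mathcal{V}$ whose diameter we have certified to be $\ge \Omega(2^i/\mathrm{polylog})$ — is the technical heart, and it is precisely what \Cref{sec:ATORealImpl} and \Cref{subsec:bootstrapDense} are set up to deliver; the ``bootstrap'' in the latter reference presumably refers to using a coarser $\mathcal{ATO}$ (or a coarser SSSP estimate) to drive the separator-finding inside the finer one, which is what keeps the total cost at $\tilde O(n^2)$ rather than blowing up by another factor of $n$.
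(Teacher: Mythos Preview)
Your proposal has a genuine gap in the second stage, and it misidentifies where the randomness enters.

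\textbf{The coin-flip bisection for $\tau$ does not work.} The function $\tau$ is not an arbitrary labeling; it is a \emph{generalized topological order} of a subgraph $G'\subseteq G$ (namely $G$ with certain separator edges removed). That means for any edge $(u,v)$ of $G'$ you are forced to have $\tau(X^u)\le\tau(X^v)$ --- you do not get to flip a coin to decide which side of a split goes ``left''. The only edges that can go backward in $\tau$ are the removed separator edges, so the entire quality bound comes down to controlling how many separator edges a shortest path hits and how far back each one jumps. Your telescoping-by-symmetry argument (``signed contribution has zero mean'') is therefore not available, and even if it were, $\mathcal{T}$ sums absolute values, so a zero-mean signed sum would not help.

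\textbf{What the paper actually randomizes is the separator, not $\tau$.} The key lemma (\Cref{lma:sepIntro}) grows a ball from a root $r$ to a radius drawn from an \emph{exponential distribution} and cuts all edges leaving that ball. Memorylessness gives that each edge $e$ is cut with probability at most $\tfrac{\zeta}{d}\,w(e)$, conditioned on its tail being inside. This is exactly what makes the quality bound go through: an edge $e$ on $\pi_{s,t}$ lands in the level-$j$ separator set $F_j$ with probability $\tilde O(2^j w(e)/2^i)$, and when it does it contributes at most $n/2^j$ to $\mathcal{T}'$; summing over $j$ and over edges of $\pi_{s,t}$ gives $\mathbb E[\mathcal{T}'(\pi_{s,t},\tau)]=\tilde O(n/2^i)\cdot w(\pi_{s,t})$. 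The previous framework of \cite{GutenbergW20a} used deterministic vertex separators and could only bound $|\pi\cap S^*|\le |S^*|$, which is why it stalled at $\tilde O(n^{2.5})$; the exponential separator is precisely the new idea that breaks that barrier. Your first stage mentions ``sparse ball boundary'' cuts but treats them as deterministic, which would reproduce the old $\tilde O(n^{2.5})$ bound.

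\textbf{The bootstrapping is also different from what you describe.} It is not a coarser ATO driving separator-finding; it is that \emph{detecting} when an SCC's diameter exceeds $|X|2^i/n$ requires maintaining distances from a random center, i.e.\ an SSSP data structure, and that SSSP data structure is in turn implemented via \Cref{thm:SSSPEfficient} using an $\mathcal{ATO}(G'[X],2^{i-1})$ on the (smaller) component. The recursion is on the depth scale $2^i$, and a careful argument in \Cref{subsec:bootstrapDense} keeps the total cost at $\tilde O(n^2)$ rather than picking up an extra $\log$ per level.
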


Combining the theorem above and the theorem below which is obtained by generalizing the above decremental SSSP algorithm for DAGs, we obtain our main result \Cref{thm:ContributionSSSPResult}.

\begin{restatable}{theorem}{ssspSuperSimple}[see \Cref{subsec:alphaDeltaSSSPReduction}.]
\label{thm:SSSPEfficientIntro}
Given $G=(V,E,w)$ and $(\mathcal{V}, \tau)$ an  $\mathcal{ATO}(G, \eta_{diam} \approx \epsilon\delta)$, for some depth parameter $\delta > 0$, of quality $q$, a dedicated source $r$ in $V$, and an approximation parameter $\epsilon > 0$. Then, there exists a deterministic data structure $\mathcal{E}_r$ that maintains a distance estimate $\widetilde{\mathbf{dist}}(r,v)$ for each $v \in V$, which is guaranteed to be $(1+\epsilon)$-approximate if ${\mathbf{dist}}(r,v) \in [\delta, 2\delta)$. Distance queries are answered in $O(1)$ time and a corresponding path $P$ can be returned in $O(|P|)$ time. The total update time is $\tilde{O}(n \delta q/\epsilon + n^2)$.
\end{restatable}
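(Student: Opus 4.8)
The plan is to generalize the DAG algorithm sketched in \Cref{subsec:DAGalgo} by running it on the contracted graph $G/\mathcal{V}$ and using $\tau$ in place of a genuine topological order. First I would set up the working graph: let $H = G/\mathcal{V}$, and for each node $X$ use $\tau(X)$ as its "position". Since $(\mathcal{V},\tau)$ is an $\mathcal{ATO}(G,\eta_{diam})$ with $\eta_{diam}\approx\epsilon\delta$, property \ref{prop:ContractLittle} guarantees that contracting each $X_i$ loses at most $\sum_i |X_i|\epsilon\delta/n = \epsilon\delta$ additively on any path of weight $\Theta(\delta)$; so if I maintain $(1+O(\epsilon))$-approximate distances from $X^r$ in $H$ and add back a $(1+O(\epsilon))$ correction, this certifies $(1+\epsilon)$-approximate distances in $G$ for vertices at true distance in $[\delta,2\delta)$. (To recover a path $P$ in $G$ one expands each contracted node using a BFS/SSSP tree of radius $|X|\eta_{diam}/n$ maintained inside $G[X]$, which costs $O(|P|)$ after a one-time setup — this is where property \ref{prop:ContractLittle} is used constructively.)

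Next I would implement the lazy ES-tree on $H$. For each node $X$ and each scale $j$ with $0 \le j \le \lg n$, define the in-bucket $B_j(X)$ to be the in-neighbours $Y$ of $X$ in $H$ with $2^j \le |\tau(X) - \tau(Y)| < 2^{j+1}$, plus a bucket $B_{-1}(X)$ for in-edges with $|\tau(X)-\tau(Y)| = 0$ (these must be scanned every step to stay exact on "horizontal" edges — note that unlike the DAG case $\tau$ is not a genuine topological order, so such edges can occur and carry no charge, hence we do not introduce error for them). A node $X$ currently lacking a tree in-edge re-scans bucket $B_j(X)$ only when its distance estimate is divisible by $\lceil 2^j \epsilon\delta/n\rceil$, exactly as in the overview. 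Each bucket $B_j(X)$ has at most $O(2^j)$ in-edges — but here I must be careful: $|B_j(X)|$ is bounded by $2^{j+2}$ only because the relevant endpoints have distinct $\tau$-values in a window of width $2^{j+1}$ around $\tau(X)$, and by property \ref{prop:tauUpdate} of the $\mathcal{ATO}$ these $\tau$-values are distinct integers (or, after splits, distinct points of disjoint intervals). Summing the ES-tree work gives $O\bigl(\sum_X \sum_j |B_j(X)| \cdot \delta / (2^j \epsilon\delta/n)\bigr) = \tilde O(n^2/\epsilon)$; the horizontal bucket $B_{-1}$ is scanned $O(\delta)$ times per node for a total of $\tilde O(n\delta)$, which is within budget. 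Handling node splits (property \ref{prop:VUpdate}): when $X$ splits into $X_1,\dots,X_k$, the decremental framework of \Cref{chap:rand_scc} / \Cref{thm:SCCinDecrGraph} supplies the split events; each new node inherits its parent's distance estimate as a lower bound (distances only grow), re-buckets its in-edges according to the new $\tau$-values, and is re-inserted into the tree — amortizing as in the classic ES-tree, this adds at most a $\log$ factor.

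The error analysis is the crux and I would argue it exactly as in the overview but through the $\mathcal{ATO}$ bound. Fix a true shortest $r$-to-$t$ path $\pi_{r,t}$ in $G$ with $w(\pi_{r,t}) \in [\delta,2\delta)$; its image in $H$ is a path whose edges are the inter-node edges. An edge $(Y,X)$ on this image with $Y \in B_{j+1}(X)$ contributes additive error at most $2^{j+1}\epsilon\delta/n$ (zero if $\lceil 2^j\epsilon\delta/n\rceil = 1$), and it satisfies $|\tau(X)-\tau(Y)| \ge 2^j$, so I can charge each unit of additive error to $\Theta(n/(\epsilon\delta))$ units of $\mathcal{T}(\pi_{r,t},\tau)$. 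By property \ref{prop:TauTotal}, $\mathcal{T}(\pi_{r,t},\tau) \le q\cdot w(\pi_{r,t}) + n = O(q\delta + n)$, so the total additive error is $O((q\delta + n)\epsilon\delta/n) = O(\epsilon q \delta^2/n + \epsilon\delta)$. For this to be $O(\epsilon\delta)$ we need $q = O(n/\delta)$, which is consistent with the quality $\tilde O(n/\eta_{diam}) = \tilde O(n/(\epsilon\delta))$ supplied by \Cref{thm:TopologicalOrderMaintenanceOverview}; plugging that in, the stated total update time becomes $\tilde O(n\delta q/\epsilon + n^2)$, matching the claim. I would run the tree to depth $(1+O(\epsilon))\cdot 2\delta$ so estimates are not truncated to $\infty$ prematurely, and finally rescale $\epsilon$ by a constant.

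The main obstacle I anticipate is not the error bookkeeping (routine once the charging scheme above is in place) but the interaction between node splits from the $\mathcal{ATO}$ and the lazy scanning schedule: when $X$ splits, a formerly "horizontal" in-edge (bucket $B_{-1}$) can become a large-$\tau$-difference edge or vice versa, and the bucket-membership of many edges changes at once. I would handle this by charging each re-bucketing of an edge to the split event that caused it, using the nesting property (property \ref{prop:tauUpdate}, together with the nesting in \Cref{thm:SCCinDecrGraph}) to argue each edge is re-bucketed only $O(\log n)$ times overall, and by making the lazy-scan invariant robust to an estimate's lower bound jumping upward on a split. A secondary subtlety is that after contraction the approximation guarantee for $t$ requires $t$'s enclosing node to have small weak diameter throughout, which is exactly property \ref{prop:ContractLittle} — but one must check it continues to hold as the node splits, which it does since splitting only shrinks the node and weak diameter is monotone under taking subsets in the same host graph.
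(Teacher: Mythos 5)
Your proposal is essentially the paper's own proof: a lazy ES-tree run on the contracted graph, with each node's in-neighbourhood bucketed by position difference, additive error charged against $\mathcal{T}(\pi,\tau)$ via the quality bound, and an extended truncation depth to absorb the slack. The one substantive thing to fix is the scan threshold. You freeze it at $\lceil 2^j\epsilon\delta/n\rceil$, as in the DAG warm-up, whereas the paper's implementation uses $\lceil 2^j\epsilon/q\rceil$; these coincide only when $q=\Theta(n/\delta)$. You do observe that your charging argument forces $q=O(n/\delta)$ and appeal to that being the regime supplied by \Cref{thm:TopologicalOrderMaintenanceOverview}, but the theorem is stated for arbitrary $q$, and with your fixed threshold the error $\Theta(\epsilon q\delta^2/n)$ overshoots $\epsilon\delta$ when $q\gg n/\delta$, while for small $q$ the running time $\tilde O(n^2/\epsilon)$ you derive exceeds the claimed $\tilde O(n\delta q/\epsilon + n^2)$. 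Substituting $\epsilon/q$ for $\epsilon\delta/n$ throughout fixes both directions, and then your charging argument gives error at most $(\epsilon/q)\cdot(q\delta+n)=O(\epsilon\delta+\epsilon n/q)$, which the paper absorbs via the extended depth $\delta_{\max}=\lceil(1+\epsilon)\delta+\epsilon n/q\rceil$. Two smaller technicalities: the paper buckets by $\chi(X,Y,\tau)$, the gap between the disjoint intervals assigned to $X$ and $Y$, rather than by $|\tau(X)-\tau(Y)|$ (either gives $|B_j(X)|=O(2^j)$, but note that $B_{-1}$ then contains only $X$ itself, since distinct nodes get disjoint intervals of positive length, so the ``horizontal edges'' you worry about cannot arise); and since $G/\mathcal{V}$ is a multigraph, the paper actually runs on a simple surrogate $\mathcal{H}$ whose edge weight between $X,Y$ is $\min$ over $E(X,Y)$, maintained with heaps $Q_{X,Y}$ --- you would need an equivalent device so that re-bucketing under node splits amortizes to the stated bound.
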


\subsection{The Framework by \texorpdfstring{ \cite{GutenbergW20a}}{Gutenberg and Wulff-Nilsen}}

Before we describe our new result, we review the framework in \cite{GutenbergW20a} to construct and maintain an $\mathcal{ATO}(G, \eta_{diam})$. We point out that while the abstraction of an approximate topological order is new to our paper, analyzing the technique in \cite{GutenbergW20a} through the $\mathcal{ATO}$-lens is straight-forward and gives a first non-trivial result. Throughout this review section, we assume that the graph $G$ is unweighted to simplify presentation. This allows us to make use of the following result which states that for vertex sets that are far apart, one can find deterministically a vertex separator that is small compared to the smaller side of the induced partition (to obtain an algorithm for weighted graphs a simple edge rounding trick is sufficient to generalize the ideas presented below).

\begin{definition}
\label{def:balancedVertexSep}
Given graph $G = (V,E,w)$, then we say a partition of $V$ into sets $A, S_{Sep}, B$ is a one-way vertex separator if $A \not\leadsto_{G \setminus S_{Sep}} B$ and $A$ and $B$ are non-empty. 
\end{definition}

\begin{lemma}[see Definition 5 and Lemma 6 in \cite{chechik2016decremental}]
\label{lma:balancedVertexSep}
Given an unweighted graph $G$ of diameter $\mathbf{diam}(G)$. Then we can find sets $A, S_{Sep}, B$ that form a one-way vertex separator such that $|S_{Sep}| \leq \tilde{O}(\frac{\min\{|A|, |B|\}}{\mathbf{diam}(G)})$ in time $O(m)$. 
\end{lemma}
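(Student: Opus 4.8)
The plan is to reduce the whole statement to a single layered breadth‑first search together with a ``ball‑growing'' argument. First I would fix an arbitrary vertex $r$ and observe that $\mathbf{diam}(G) \le \mathrm{ecc}_{\mathrm{in}}(r) + \mathrm{ecc}_{\mathrm{out}}(r)$, since any pair $u,v$ realizing the diameter satisfies $\mathbf{dist}_G(u,v) \le \mathbf{dist}_G(u,r) + \mathbf{dist}_G(r,v)$; hence one of the two eccentricities is at least $D/2$, where $D := \mathbf{diam}(G)$. By reversing all edges and swapping the names of $A$ and $B$ at the end if necessary, I may assume it is $\mathrm{ecc}_{\mathrm{out}}(r)$. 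I set $t := \mathrm{ecc}_{\mathrm{out}}(r) - 1 \ge D/2 - 1$ and run a BFS out of $r$, obtaining layers $L_0 = \{r\}, L_1, L_2, \dots$ and balls $B_i = L_0 \cup \dots \cup L_i$; the layers $L_1, \dots, L_t$ are all nonempty and $B_t \subsetneq V$. (If $D = O(\log n)$ the claimed bound $\tilde{O}(\min\{|A|,|B|\}/D)$ is $\Omega(\min\{|A|,|B|\})$, so any nontrivial BFS layer works; and if some vertex is unreachable from $r$, the reachable set and its complement already form the desired separator with $S_{\mathrm{Sep}} = \emptyset$.)

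Next I would locate a ``thin'' layer. Since $i \mapsto |B_{i-1}| + |B_i|$ is nondecreasing, there is a threshold $h \in \{0,\dots,t\}$ such that $\min\{|B_{i-1}|,\, |V \setminus B_i|\}$ equals $|B_{i-1}|$ for $i \le h$ and equals $|V \setminus B_i|$ for $i > h$. Whichever of the index ranges $\{1,\dots,h\}$ and $\{h+1,\dots,t\}$ contains at least $t/2$ elements, I apply a telescoping‑product estimate there. On the first range, $\prod_{i=1}^{h} |B_i|/|B_{i-1}| = |B_h| \le n$, so some $i \le h$ has $|B_i|/|B_{i-1}| \le n^{2/t} = 1 + O((\log n)/t)$, i.e. $|L_i| = O((\log n)/t)\,|B_{i-1}|$. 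On the second range, $\prod_{i=h+1}^{t} |V\setminus B_{i-1}|/|V\setminus B_i| = |V\setminus B_h|/|V\setminus B_t| \le n$ (using $|V\setminus B_t| \ge 1$), so some $i$ has $|L_i| = |V\setminus B_{i-1}| - |V\setminus B_i| = O((\log n)/t)\,|V\setminus B_i|$. Either way, a single linear scan of the layer sizes finds an index $i$ with $|L_i| \le O((\log n)/D)\,\min\{|B_{i-1}|,\, |V\setminus B_i|\}$.

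Finally I would output $A := B_{i-1}$, $S_{\mathrm{Sep}} := L_i$, $B := V \setminus B_i$. Both $A \ni r$ and $B \supseteq L_t$ are nonempty (as $i \le t$); $|S_{\mathrm{Sep}}| \le \tilde{O}(\min\{|A|,|B|\}/D)$ by the previous paragraph; and the defining BFS property — every edge leaving $B_{i-1}$ enters $B_i$ — shows that any path starting in $A$ and avoiding $S_{\mathrm{Sep}}$ stays in $A$, so $A \not\leadsto_{G \setminus S_{\mathrm{Sep}}} B$ and $(A, S_{\mathrm{Sep}}, B)$ is a one‑way vertex separator in the sense of \Cref{def:balancedVertexSep}. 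The total cost is two BFS computations (one to find $\mathrm{ecc}_{\mathrm{out}}(r)$ together with the layers, plus possibly one on the reversed graph) and a linear scan, i.e. $O(m)$. The one place that needs care is getting the separator small relative to the \emph{smaller} side rather than merely relative to the side containing $r$: this is exactly what the split at the crossover index $h$ is for, and arranging it this way sidesteps the more delicate two‑phase ball‑growing that a naive attempt runs into.
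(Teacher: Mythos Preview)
Your proof is correct in the regime where the lemma is actually applied (diameter $\gg \log n$) and follows the same core idea---BFS layering plus a telescoping/ball-growing argument---that underlies the result. Note that the paper does not give its own proof of this lemma; it is cited from \cite{chechik2016decremental}. The closest in-paper analogue is the sketch for \Cref{lma:sep}, which uses precisely the two-phase scheme you allude to at the end: grow layers until the first one whose growth factor is below $1+\tfrac{2\log n}{d}$; if that layer is not also small relative to the far side, keep growing until one is, and then argue that layer is automatically small relative to the near side as well. Your crossover-index organization is a clean alternative that treats the two sides symmetrically and avoids the second phase; both yield the same $O((\log n)/D)$ ratio and the same $O(m)$ cost.

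One soft spot: your parenthetical handling of the case $D = O(\log n)$ is not right. Saying the target bound becomes $\Omega(\min\{|A|,|B|\})$ does not make ``any nontrivial BFS layer'' work---you still need $|L_i| \le \tilde{O}(\min\{|B_{i-1}|,|V\setminus B_i|\})$, and a graph whose BFS layers from $r$ have sizes $1,\, n-2,\, 1$ shows no layer need satisfy this. In fact the lemma as literally stated is vacuous or false for very small $D$ (a complete digraph with $D=1$ admits no one-way vertex separator with both sides nonempty), so there is an implicit large-diameter assumption; in the paper the lemma is only ever invoked once the diameter exceeds a polylogarithmic threshold. For your write-up it suffices to assume $D \ge c\log n$ for a suitable constant $c$ and drop the parenthetical.
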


\paragraph{High-level Framework.} The main idea of \cite{GutenbergW20a} is to maintain a tuple $(\mathcal{V}, \tau)$ which is an $\mathcal{ATO}(G, \eta_{diam})$ by setting $(\mathcal{V}, \tau)$ to be $\textsc{GeneralizedTopOrder}(G')$ of some decremental graph $G' \subseteq G$ (over the same vertex set, i.e. $V(G') = V(G)$). It is straight-forward to see that $(\mathcal{V}, \tau)$ satisfies property \ref{prop:VUpdate} in \Cref{def:ATO}, since SCCs in the decremental graph $G'$ decompose. Further, it is not hard to extend the existing algorithm for maintaining SCCs in a decremental graph $G'$ given in \cite{bernstein2019decremental} to also maintain function $\tau$ that obeys property \ref{prop:tauUpdate} in \Cref{def:ATO}. The algorithm to maintain $(\mathcal{V}, \tau)$ given $G'$ runs in total update time $\tilde{O}(m)$ (same as in \cite{bernstein2019decremental}).

So far, we have not given any reason why $G'$ needs to be a subgraph of $G$. To see why we cannot use the above strategy on $G$ directly, recall property \ref{prop:ContractLittle} in the $\mathcal{ATO}$-definition \ref{def:ATO}, which demands that each SCC $X$ has weak diameter at most $\frac{|X|\eta_{diam}}{n}$. This property might not hold in the main graph $G$. In order to resolve this issue, $G'$ is initialized to $G$ and then the diameter of SCCs in $G'$ is monitored. Whenever an SCC $X$ violates property \ref{prop:ContractLittle}, a vertex separator $S_{Sep}$ is found in the graph $G'[X]$ as described in \Cref{lma:balancedVertexSep} and all edges incident to $S_{Sep}$ are removed from $G'$. Letting $S$ denote the union of all such separators $S_{Sep}$, we can now write $G' = G \setminus E(S)$. 

\paragraph{Establishing the Quality Guarantee.} To establish a quality $q$ of the $\mathcal{ATO}$ as described in property \ref{prop:TauTotal} in \Cref{def:ATO}, let us first partition the set $S$ into sets $S_0, S_1, \dots, S_{\lg n}$ where each $S_i$ contains all separator vertices found on a graph of size $[n/2^i, n/2^{i+1})$, thus it was found when the procedure from \Cref{lma:balancedVertexSep} was invoked on a graph with diameter at least $\frac{(n/2
^{i+1})\eta_{diam}}{n} =  \frac{\eta_{diam}}{2
^{i+1}}$. Since separators are further balanced, i.e. there size is controlled by the smaller side of the induced cut, we can further use induction and \Cref{lma:balancedVertexSep} to establish that there are at most $\tilde{O}(\frac{2^i n}{\eta_{diam}})$ vertices in $S_i$. Next, observe that since any separator that was added to $S_i$ was found in a graph $G'[X]$ with $|X| \in [n/2^i, n/2^{i+1})$, we have by property \ref{prop:tauUpdate} that nodes $X' \subseteq X$ that are in the current version of $\mathcal{V}$ are assigned a $\tau(X')$ from the interval $[\tau(X), \tau(X) + |X|)$. Thus, any edge $(x,s)$ or $(s,x)$ with $x \in X, s \in S_i \cap X$ has topological order difference $|\tau(X^x) - \tau(X^s)| \leq |X| \leq n/2^i$.

Finally, let us define
\begin{equation}
\label{eq:redefT}
        \mathcal{T}'(\pi_{s,t}, \tau) \stackrel{\text{def}}{=} \sum_{(u,v) \in \pi_{s,t}} \min\{0, \tau(X^v) - \tau(X^u)\}
\end{equation}
the function similar to $\mathcal{T}(\pi_{s,t}, \tau)$ that only captures negative terms, i.e. sums only over edges that go "backwards" in $\tau$. Observe that $\mathcal{T}(\pi_{s,t}, \tau) \leq 2 \mathcal{T}'(\pi_{s,t}, \tau) + n$. Now, since $(\mathcal{V}, \tau)$ is a $\textsc{GeneralizedTopOrder}(G')$, we have that $(u,v)$ occurs in the sum of $\mathcal{T}'(\pi_{s,t}, \tau)$ if and only if $(u,v) \in G \setminus G'$, so one endpoint is in a set $S_i$ and therefore $\tau(X^v) -\tau(X^u) \leq n/2^i$. Since we only have two edges on any shortest path incident to the same vertex, we can establish that
\[
    \mathcal{T}'(\pi_{s,t}, \tau) \leq \sum_{i} 2|S_i| n/2^i = \tilde{O}(n^2/\eta_{diam}).
\]
We obtain that $(\mathcal{V}, \tau)$ is a $\mathcal{ATO}(G, \eta_{diam})$ of quality $\tilde{O}(\frac{n^2}{2^i\eta_{diam}})$ for all paths of weight at least $2^i$. Thus, when the distance scale $\delta \geq \sqrt{n}/\epsilon$, \Cref{thm:SSSPEfficientIntro} requires total update time $\approx n^{2.5}$ to maintain $(1+\epsilon)$-approximate SSSP. For distance scales where $\delta < \sqrt{n}/\epsilon$, a classic ES-tree has total update time $\approx m\sqrt{n} \leq n^{2.5}$.

\paragraph{Limitations of the Framework.} 
 Say that the goal is to  maintain shortest paths of length around $\sqrt{n}$.
The first step in the framework of \cite{GutenbergW20a} is to find separator $S$ such that all SCCs of $G' = G \setminus E(S)$ have diameter at most $\epsilon \sqrt{n}$ and then maintain $(\mathcal{V},\tau) =$ \textsc{GeneralizedTopOrder}($G'$).  Every edge $(u,v) \notin E(S)$ will only go forward in $\tau$, but each edge $(u,s)$, for $s \in S$, can go ``backwards" in $\tau$. By the nesting property of generalized topological orders, the amount that $(u,s)$ goes backwards -- i.e. the quantity $|\tau(X^u) - \tau(X^s)|$ -- is upper bounded by the size of the SCC in $G'$ from which $s$ was chosen: the original SCC has size $n$, but as we add vertices to $S$, the SCCs of $G' = G \setminus E(S)$ decompose and new vertices added to $S$ may belong to smaller SCCs. Define $S^* \subseteq S$ to contain all vertices $s \in S$ that were chosen in an SCC of size $\Omega(n)$. Intuitively, $S^*$ is the top-level separator chosen in $G$, before SCCs decompose into significantly smaller pieces. Every edge in $E(S^*)$ may go backwards by as much as $n$ in $\tau$, so for any path $\pi_{x,y}$ in $G$, the best we can guarantee is that $\mathcal{T}(\pi_{x,y}) \sim n \cdot |\pi_{x,y} \cap S^*|$. 

The framework of \cite{GutenbergW20a} tries to find a \emph{small} separator $S^*$ and then uses the trivial upper bound $|\pi_{x,y} \cap S^*| \leq |S^*|$. In fact, one can show that given \emph{any} deterministic separator procedure, the adversary can pick a sequence of updates where $|\pi_{x,y} \cap S^*| \sim |S^*|$. But now, say that $G$ is a $\sqrt{n} \times \sqrt{n}$-grid graph with bidirectional edges. It is not hard to check that  $|S^*| = \Omega(\sqrt{n})$, because every balanced separator of a grid has $\Omega(\sqrt{n})$ vertices. The framework of \cite{GutenbergW20a} can thus at best guarantee  $\mathcal{T}(\pi_{x,y}) \sim n \cdot |\pi_{x,y} \cap S^*| \sim n|S^*| = \Omega(n^{1.5})$, which is a $\sqrt{n}$ factor higher than it would be in a DAG, and thus leads to running time $\tilde{O}(n^{2.5})$ instead of $\tilde{O}(n^2)$.

Our algorithm uses an entirely different random separator procedure. We allow $S^*$ to be arbitrarily large, but use randomness to ensure that $|\pi_{x,y} \cap S^*|$ is nonetheless small. 

\subsection{Our Improved Framework}

We now introduce our new separator procedure and then show how it can be used in a recursive algorithm that uses ATOs of worse quality (large $q$) to compute ATOs of better quality (small $q$). (By contrast, the framework of \cite{GutenbergW20a} could not benefit from a multi-layered algorithm because it would still hit upon the fundamental limitation outlined above.)

\paragraph{A New Separator Procedure.} Before we describe the separator procedure, let us formally define the guarantees that we obtain. In the lemma below, think of $\zeta = \Theta(\log(n))$.

\begin{restatable}{lemma}{randomLayerSep}
\label{lma:sepIntro}
There exists a procedure $\textsc{OutSeparator}(r, G, d, \zeta)$ where $G$ is a weighted graph, $r \in V$ a root vertex, and integers $d, \zeta > 0$. Then, with probability at least $1- e^{-\zeta}$, the procedure computes a tuple $(E_{Sep}, V_{Sep})$ where edges $E_{Sep} \subseteq E$, and vertices $V_{Sep} = \{ v \in V | r \leadsto_{G \setminus E_{Sep}} v\}$ such that
\begin{enumerate}
    \item for every vertex $v \in V_{Sep}$, $\mathbf{dist}_{G \setminus E_{Sep}}(r,v) \leq d$, and
    \item for every $e \in E$, we have $\mathbb{P}[e \in E_{Sep} | r \leadsto_{G \setminus E_{Sep}} \mathbf{tail}(e)] \leq \frac{\zeta}{d}w(e)$\label{prop:lowProb}.
\end{enumerate}
Otherwise, it reports $\mathbf{Fail}$. The running time of $\textsc{OutSeparator}(\cdot)$ can be bounded by $O(|E(V_{Sep})|\log n)$.
\end{restatable}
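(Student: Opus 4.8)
The plan is to build the separator by a randomized ``layered BFS'' where the number of layers we keep is itself a random variable drawn from (the integer part of) an exponential distribution. Concretely, I would first draw a threshold $D \sim \textsc{Exp}(\zeta/d)$, and then run the $\textsc{OutSeparator}$ procedure as a Dijkstra/BFS exploration from $r$ in $G$, stopping once we have explored all vertices at distance at most $D$. We set $V_{Sep}$ to be exactly the set of vertices reached, i.e.\ $V_{Sep} = \{v : \mathbf{dist}_G(r,v) \le D\}$, and $E_{Sep}$ to be the set of edges on the ``frontier'': every edge $e=(x,y)$ with $x \in V_{Sep}$ but $\mathbf{dist}_G(r,x) + w(e) > D$ (equivalently, $y$ is either not in $V_{Sep}$ or is reached only via a path leaving $V_{Sep}$). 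Removing these frontier edges certifies that in $G \setminus E_{Sep}$, the set of vertices reachable from $r$ is precisely $V_{Sep}$, and every such vertex is still at distance $\le D \le d$ from $r$ provided the event $D \le d$ holds. If $D > d$ we report $\mathbf{Fail}$; since $D \sim \textsc{Exp}(\zeta/d)$ we have $\mathbb{P}[D > d] = e^{-\zeta}$, which gives the claimed success probability $1 - e^{-\zeta}$. (One must be slightly careful with the ``integer $d$'' statement — I would either round $D$ down to an integer, absorbing the negligible effect into constants, or observe that for the application all edge weights and $d$ can be taken to be positive integers so the rounding is harmless.)

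The heart of the argument is property~\ref{prop:lowProb}: for a fixed edge $e = (x,y)$ we must bound $\mathbb{P}[e \in E_{Sep} \mid r \leadsto_{G \setminus E_{Sep}} x]$. Condition on $x \in V_{Sep}$, which is the event $D \ge \mathbf{dist}_G(r,x) =: \ell_x$. Given this, $e$ is a frontier edge precisely when $\ell_x \le D < \ell_x + w(e)$ — that is, $D$ falls in a window of width $w(e)$ just above $\ell_x$. The key point is the memorylessness of the exponential distribution: conditioned on $D \ge \ell_x$, the overshoot $D - \ell_x$ is again $\textsc{Exp}(\zeta/d)$, so
\[
\mathbb{P}[e \in E_{Sep} \mid D \ge \ell_x] = \mathbb{P}[D - \ell_x < w(e) \mid D \ge \ell_x] = 1 - e^{-(\zeta/d) w(e)} \le \frac{\zeta}{d} w(e),
\]
using $1 - e^{-z} \le z$. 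This is exactly the stated bound. A subtlety I would flag here: the conditioning event in the lemma is ``$r \leadsto_{G\setminus E_{Sep}} \mathbf{tail}(e)$'', which is the event $x \in V_{Sep}$, i.e.\ $\{D \ge \ell_x\}$ — so the conditioning matches the computation above exactly, and no union bound or independence across edges is needed.

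For the running time, the BFS/Dijkstra from $r$ only ever touches vertices in $V_{Sep}$ and their incident edges, so the exploration costs $O(|E(V_{Sep})|)$; the extra $\log n$ factor comes from the priority queue in the weighted setting (or can be dropped entirely in the unweighted/layered version, matching \Cref{lma:sep}). The analogous $\textsc{InSeparator}$ is obtained by running the same procedure on the reverse graph. I expect the \textbf{main obstacle} to be purely bookkeeping rather than conceptual: making the rounding of the random threshold $D$ to an integer consistent with the integrality of $d$ and the edge weights while preserving both the exact failure probability $e^{-\zeta}$ and the clean inequality in property~\ref{prop:lowProb}; and being careful that the frontier edge set $E_{Sep}$ is defined so that $V_{Sep}$ really equals the reachable set in $G \setminus E_{Sep}$ (in particular handling edges between two vertices of $V_{Sep}$ that nonetheless ``cross'' the distance threshold correctly, and self-consistency when several shortest paths to $y$ exist). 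None of this is deep, but it is where the proof needs care.
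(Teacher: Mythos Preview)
Your proposal is correct and essentially identical to the paper's proof: both draw the radius from $\textsc{Exp}(\zeta/d)$, fail if it exceeds $d$, take $V_{Sep}$ to be the ball of that radius, and use memorylessness plus $1-e^{-z}\le z$ for property~\ref{prop:lowProb}. The one place the paper is slightly cleaner is the bookkeeping you flag at the end: rather than defining $E_{Sep}$ as all edges $(x,y)$ with $\mathbf{dist}(r,x)+w(e)>D$, the paper simply sets $E_{Sep}=E(B,\overline{B})$, i.e.\ only edges whose head lies outside the ball, which makes ``$V_{Sep}$ equals the reachable set in $G\setminus E_{Sep}$'' immediate and sidesteps the issue of edges with both endpoints in $V_{Sep}$; the probability bound still goes through since $\{e\in E(B,\overline{B})\}\subseteq\{\ell_x\le D<\ell_x+w(e)\}$.
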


In fact, \Cref{alg:OutseparatorIntro} gives a simple implementation of procedure $\textsc{OutSeparator}(\cdot)$. Here, we pick a ball $B = B^{out}(r, X)$ in the graph $G$ from $r$ to random depth $X$, and then simply return the tuple $(E_{Sep}, V_{Sep}) = (E(B, \overline{B}), B)$ where $E(B, \overline{B})$ are the edges $(u,v)$ with $u \in B$ but $v \not\in B$. The procedure thus only differs from a standard edge separator procedure in that we choose $X$ according to the exponential distribution $\textsc{Exp}(\frac{\zeta}{d})$.

\begin{algorithm}
\caption{$\textsc{OutSeparator}(r, G, d, \zeta)$}
\label{alg:OutseparatorIntro}
Choose $X \sim \textsc{Exp}(\frac{\zeta}{d})$.\;
\lIf{$X \geq d$}{\Return \textbf{Fail}\label{lne:failSeparatorIntro}}
Compute the Ball $B = B^{out}(r, X) = \{ v \in V | \mathbf{dist}(r,v) \leq X\}$\;
\Return $(E(B, \overline{B}), B)$
\end{algorithm}

A proof of Lemma \ref{lma:sepIntro} is now straightforward. We return \textbf{Fail} in \Cref{lne:failSeparatorIntro} with probability $\mathbb{P}[X \geq d] = 1 - F_X(d) = 1 - (1 - e^{- \frac{\zeta}{d} \cdot d}) = e^{-\zeta}$ (recall from \cref{sec:prelim} that $F_X(d)$ is shorthand for $F(x, \frac{\zeta}{d})$, the cumulative distribution function of an exponential distribution with parameter $ \frac{\zeta}{d}$). Assuming no failure, we have $E_{Sep} = E(B, \overline{B})$, and it is easy to see that $V_{Sep} = \{ v \in V | r \leadsto_{G \setminus E_{Sep}} v \} = B$, so Property 1 of Lemma \ref{lma:sepIntro} holds by definition of $B$. Moreover, we can compute $B$ in the desired $O(|E(B)|\log n)$ time by using Dijkstra's algorithm by only extracting a vertex from the heap if it is at distance at most $X$. Finally, for property \ref{prop:lowProb}, note that $e \in E_{sep}$ iff $\dist(r, \mathbf{tail}(e)) \leq X < \dist(r, \mathbf{tail}(e)) + w(e)$. Thus,
\begin{align*}
    \mathbb{P}[e \in E_{Sep} | r \leadsto_{G \setminus E_{Sep}} \mathbf{tail}(e)]
    &= \mathbb{P}[X < \mathbf{dist}(r,\mathbf{tail}(e)) + w(e) | X \geq \mathbf{dist}(r,\mathbf{tail}(e))]\\
    &= \mathbb{P}[X < w(e)] = F_X(w(e)) = 1 - e^{- \frac{\zeta}{d} w(e)} \\
    &\leq 1 - \left(1 - \frac{\zeta}{d} w(e)\right) = \frac{\zeta}{d} w(e)
\end{align*}
where the second equality follows from the memory-less property of the exponential distribution, and the inequality holds because $1 + x \leq e^x$ for all $x \in \mathbb{R}$. We point out that the technique of random ball growing using the exponential distribution is not a novel contribution in itself and has be previously used in the context of low-diameter decompositions \cite{linial1993low, bartal1996probabilistic, miller2013parallel, pachocki2018approximating} which have recently also been adapted to dynamic algorithms \cite{forster2019dynamic, chechik2020dynamic}.  

\paragraph{A New Framework.} Let us now outline how to use \Cref{lma:sepIntro} to derive 
\Cref{thm:TopologicalOrderMaintenanceOverview} which is stated below again. The full details and a rigorous proof are provided in \Cref{sec:ATORealImpl}.

\ATOResultIntro*

As in \cite{GutenbergW20a}, we maintain a graph $G' \subseteq G$ and its generalized topological order $(\mathcal{V}, \tau)$. Whenever the diameter of an SCC $X$ in $G'$ is larger than $\frac{|X|\eta_{diam}}{n}$, we now use the separator procedure described in \Cref{lma:sepIntro} with $d = \frac{|X|\eta_{diam}}{2n}$ from some vertex $r$ in $X$ with $|B^{out}(r, d = \frac{|X|\eta_{diam}}{2n})| \leq |X|/2$. Such a vertex exists since by definition of diameter, as we can find two vertices with disjoint balls. We obtain an edge separator $E_{Sep}$ and update $G'$ by removing the edges in $E_{Sep}$. Let the union of all edge separators be denoted by $F$ and again observe that $G' = G \setminus F$. It is not hard to see that our scheme still ensures properties \ref{prop:VUpdate}-\ref{prop:ContractLittle} in \Cref{def:ATO}. We now argue that the quality improved to $\tilde{O}(n/2^i)$.

Partition $F$ into sets $F_0, F_1, \dots, F_{\lg n}$ where each $F_j$ contains all separator edges found on a graph of size $[n/2^{j+1}, n/2^j)$. We again have that every edge $(u,v) \in F_j$ has $|\tau(X^u) - \tau(X^v)| \leq n/2^{j}$. Let us establish an upper bound on the number of edges in $F_j$ on any shortest path $\pi_{s,t}$. Consider therefore any edge $e \in \pi_{s,t}$, at a stage where both endpoints of $e$ are in a SCC $X$ of size $[n/2^{j+1}, n/2^j)$ and where we compute a tuple $(V_{Sep}, E_{Sep})$ from some $r \in X$. Now, observe that if $\mathbf{tail}(e) \not\in V_{Sep}$, then $e$ can not be in $E_{Sep}$. So assume that $\mathbf{tail}(e) \in V_{Sep}$. Then, we have $e$ joining $F_j$ with probability 
\[
\mathbb{P}[e \in E_{Sep} | r \leadsto_{H \setminus E_{Sep}} \mathbf{tail}(e)] = \frac{\zeta}{2^i|X|/n} w(e) = \tilde{O}\left(\frac{2^j w(e)}{2^i}\right)
\]
according to \Cref{lma:sepIntro}, where we set $\zeta = \tilde{O}(1)$ to obtain high success probability. But if $e$ did not join $F_j$ at that stage, then it is now in a SCC of size at most $n/2^{j+1}$ (recall we chose $|B^{out}(r, d)| \leq |X|/2$, and we have $V_{Sep} \subseteq B^{out}(r, d)$). Thus, $e$ cannot join $F_j$ at any later stage. 

Now, it suffices to sum over edges on the path $\pi_{s,t}$ and indices $j$ to obtain that
\[
    \mathcal{T}'(\pi_{s,t}, \tau) \leq \sum_{e \in \pi_{s,t}} \sum_{j} \mathbb{P}[e \in F_j] \cdot n/2^{j} = \sum_{e \in \pi_{s,t}} \sum_{j} \tilde{O}\left(\frac{2^j w(e)}{2^i}\right) \cdot n/2^{j} = \tilde{O}\left(\frac{ w(\pi_{s,t}) n}{2^i}\right)
\]
giving quality $\tilde{O}\left(\frac{n}{2^i}\right)$. 

\paragraph{Efficiently Maintaining $G'$.} 
\newcommand{\asssp}{\mathcal{A}_{SSSP}}
As shown above, maintaining an $\mathcal{ATO}(G,2^i)$ requires detecting when any SCC in $G' = G \setminus E(S)$ has diameter above $2^i$. We start by showing how to do this efficiently if we are given a black-box algorithm $\asssp$ that maintains distance estimates up to depth threshold $2^i$  (i.e. if a vertex is at distance less than $2^i$ from the source vertex, there is a distance estimate with good approximation ratio). 

We use the random source scheme introduced in \cite{roditty2008improved} along with some techniques developed in \cite{chechik2016decremental, bernstein2019decremental, GutenbergW20a}: we choose for each SCC $X$ in $G'$ a center vertex $\textsc{Center}(X) \in X$ uniformly at random, and use $\asssp$ to maintain distances from $\textsc{Center}(X)$ to depth $2^i|X|/n \leq 2^i$. Since the largest distances between the vertex $\textsc{Center}(X)$ and any other vertex in $X$ is a $2$-approximation on the diameter of $G[X]$, this is sufficient to monitor the diameter and to trigger the separator procedure in good time.


Cast in terms of our new ATO-framework, the previous algorithm of \cite{GutenbergW20a} used a regular Even and Shiloach tree for the algorithm $\asssp$. We instead use a recursive structure, where $\mathcal{ATO}s$ of bad quality (large $q$) are used to build $\mathcal{ATO}s$ of better quality (small $q$). Recall that our goal is to build an $\mathcal{ATO}(G, 2^{i})$ of quality $\tilde{O}(n/2^i)$ and say that $X$ is some SCC of $G' = G \setminus E(S)$ whose diameter we are monitoring. Now, using the lower level of the recursion, we inductively assume that we can maintain an $\mathcal{ATO}(G'[X], 2^{i-1})$ of quality $\tilde{O}(n/2^{i-1})$ in time $\tilde{O}(|X|^2)$. Plugging this $\mathcal{ATO}$ into \Cref{thm:SSSPEfficientIntro} gives us an algorithm for maintaining distances up to depth $2^i$ in $X$ with total update time $\tilde{O}(|X|^2)$, Summing over all components $X$ in $G'$, we get an $\tilde{O}(n^2)$ total update time to maintain $\mathcal{ATO}(G, 2^{i})$, as desired.


We actually cheated a bit in the last calculation, because the scheme above could incur an additional logarithmic factor for computing $\mathcal{ATO}(G, 2^i)$ from all the $\mathcal{ATO}(G'[X], 2^{i-1})$, so we can only afford a sublogarithmic number of levels, which leads to an extra $n^{o(1)}$ factor in the running time. However, a careful bootstrapping argument allows us to avoid this extra term.


\subsection{Organization}

We recommend the reader to carefully study \Cref{sec:overview} to gain necessary intuition for our approach. In \Cref{sec:ATORealImpl}, we give an efficient reduction from maintaining an approximate topological order to depth-restricted SSSP. This section is the centerpiece of the article and its main result, \Cref{lma:reductionATOtoSSSP}, is one of our main technical contributions. We then show how to use \Cref{lma:reductionATOtoSSSP} to obtain a SSSP data structure for dense graphs in \Cref{sec:SSSPDense}.

\section{Reducing Maintenance of an ATO to \texorpdfstring{ $\alpha$-approximate $\delta$-restricted $\mathcal{SSSP}$}{SSSP}}
\label{sec:ATORealImpl}

In this section, we show how to obtain an $\mathcal{ATO}$ given an $\alpha$-approximate $\delta$-restricted $\mathcal{SSSP}$ data structure. We start by defining such a data structure and then give a reduction.

\begin{definition}
\label{def:SSSPGeneric}
Let $\mathcal{A}$ be a data structure that given any decremental directed weighted graph $G$, a dedicated source $r \in V$, an approximation parameter $\alpha > 1$, maintains for each vertex $v \in V$, distance estimates $\widetilde{\mathbf{dist}}(r,v)$ and $\widetilde{\mathbf{dist}}(v,r)$ such that at any stage of $G$, for every pair $(s,t) \in (\{r\} \times V) \cup (V \times \{r\})$
\begin{itemize}
    \item we have $\mathbf{dist}(s,t) \leq \widetilde{\mathbf{dist}}(s,t)$, and
    \item if $\mathbf{dist}(s,t) \leq \delta$, then $\widetilde{\mathbf{dist}}(s,t) \leq \alpha \mathbf{dist}(s,t)$.
\end{itemize}
Then, we say $\mathcal{A}$ is an $\alpha$-approximate $\delta$-restricted $\mathcal{SSSP}$ data structure with running time $T_{SSSP}(m,n,\delta,\alpha)$. We require only that $\mathcal{A}$ runs against a non-adaptive adversary. 
\end{definition}
\begin{remark}
In the rest of the article, we implicitly assume that all $\mathcal{SSSP}$ data structures have $T_{SSSP}(m,n,\delta,\alpha)$ monotonically increasing in the first two parameters.
\end{remark}

We also need another definition that makes it more convenient to work with $\mathcal{ATO}$s that only have \emph{expected} quality (see \Cref{def:ATO}). However, we require high probability bounds in our constructions and it will further be easier to work with deterministic objects. This inspires the definition of an $\mathcal{ATO}$-bundle which is a collection of $\mathcal{ATO}$s such that for each path of interest, there is at least one $\mathcal{ATO}$ in the bundle that has good quality for the path at-hand.

\begin{definition}[$\mathcal{ATO}(G, \eta_{diam}, \ell)$-bundle]
\label{def:ATObundle}
Given a decremental weighted directed graph $G = (V,E,w)$ and parameter $\eta_{diam} \geq 0$. We call $\mathcal{S} = \{ (\mathcal{V}_i, \tau_i)\}_{i \in [1, \ell]}$ an $\mathcal{ATO}(G, \eta_{diam}, \ell)$-bundle of quality $q$ if every $(\mathcal{V}_i, \tau_i)$ is an $\mathcal{ATO}(G, \eta_{diam})$ and for any two vertices $s, t \in V$, there exists an $i \in [1, \ell]$, such that the shortest path $\pi_{s,t}$ in $G$ satisfies $\mathcal{T}(\pi_{s,t}, \tau_i) \leq q \cdot w(\pi_{s,t}) + n$.
\end{definition}

Without further due, let us state and prove the main result of this section.

\begin{restatable}[$\mathcal{ATO}$-bundle from $\mathcal{SSSP}$]{theorem}{atoBundle}
\label{lma:reductionATOtoSSSP}
Given an algorithm $\mathcal{A}$ to solve $2$-approximate $\delta$-restricted $\mathcal{SSSP}$ on any graph $H$ in time $T_{SSSP}(m,n,\delta)$, and for any $c > 0$, we can maintain an $\mathcal{ATO}(G, 2\alpha\delta, 40 c \log n)$-bundle of quality $\frac{(c+2)40000 n \log^5 n}{\delta}$ in total expected update time 
\begin{equation}\label{eq:totalRunningTime}
O\left(\sum_{j=0}^{\lceil\lg \delta \rceil} \; \sum_{k=0}^{2^{j+3} c \log^2 n} T_{SSSP}(m_{j,k}, n/2^{j}, \delta, 2) + m \log^3 n\right)
\end{equation}
where $\sum_j m_{j,k} \leq 16c \cdot m \log^2 n$ for all $k$. The algorithm runs correctly with probability $1 - n^{-c}$ for any $c > 0$. 
\end{restatable}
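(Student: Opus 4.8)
The plan is to run $\ell = 40c\log n$ independent copies of a single randomized $\mathcal{ATO}$-maintenance procedure, each copy using fresh randomness for the random-center choices and for the random-radius ball-growing from \Cref{lma:sepIntro}, and then argue that the collection is an $\mathcal{ATO}$-bundle. Each copy maintains a subgraph $G' = G \setminus F$ together with its generalized topological order $(\mathcal{V},\tau) = \textsc{GeneralizedTopOrder}(G')$ via \Cref{thm:SCCinDecrGraph}, where $F$ is the running union of separator edges. Whenever some SCC $X$ of $G'$ is detected to have (weak) diameter exceeding $\frac{|X|\cdot 2\alpha\delta}{n}$, we pick a center vertex $r\in X$ with $|B^{out}(r,d)|\le |X|/2$ for $d = \frac{|X|\,2\alpha\delta}{2n} = \frac{|X|\alpha\delta}{n}$ (such $r$ exists by the diameter lower bound, taking one endpoint of a far pair), invoke $\textsc{OutSeparator}(r,G'[X],d,\zeta)$ with $\zeta = \Theta(\log n)$ (and symmetrically $\textsc{InSeparator}$), remove the returned edge set $E_{Sep}$ from $G'$, and recurse on $X \setminus V_{Sep}$ and on $V_{Sep}$; on $\mathbf{Fail}$ we simply retry, which costs only a constant factor in expectation since failure probability is $e^{-\zeta}\le n^{-\Theta(1)}$. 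To \emph{detect} large-diameter SCCs efficiently, we run the supplied $\delta$-restricted $\mathcal{SSSP}$ algorithm $\mathcal{A}$ from each center $\textsc{Center}(X)$ (and on the reverse graph) on the induced graph $G'[X]$ up to depth $\frac{|X|\cdot 2\alpha\delta}{n}$; since the max center-to-vertex distance is a $2$-approximation of $\mathbf{diam}(G[X])$, this triggers the separator call at the right time. Using the random-center trick (as in \cite{roditty2008improved, chechik2016decremental, bernstein2019decremental, GutenbergW20a}), each vertex participates in expectation in $O(\log n)$ such $\mathcal{SSSP}$ instances per "size class," and summing $T_{SSSP}$ over the disjoint induced subgraphs of size roughly $n/2^j$ gives the claimed running-time expression \eqref{eq:totalRunningTime}, with the $m\log^3 n$ term covering the generalized-topological-order maintenance and the $\sum_j m_{j,k}\le 16cm\log^2 n$ bound coming from the $O(\log n)$-participation of edges across the size classes.

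Properties \ref{prop:VUpdate}--\ref{prop:ContractLittle} of \Cref{def:ATO} hold for each copy essentially by construction: $\mathcal{V}$ refines over time because $G'$ only loses edges and SCCs only split; $\tau$ inherits the nesting/update rule from the generalized topological order; and property \ref{prop:ContractLittle} holds with $\eta_{diam} = 2\alpha\delta$ because we trigger a separator as soon as an SCC's diameter would exceed $\frac{|X|\cdot 2\alpha\delta}{n}$, and the $2$-approximation from the center-distance only costs the factor $2$ already absorbed. The heart of the argument is the quality bound. Fix $s,t$ and the shortest path $\pi_{s,t}$ in $G$. Partition $F$ into $F_0,\dots,F_{\lceil\lg\delta\rceil}$ where $F_j$ collects separator edges cut while the ambient SCC had size in $[n/2^{j+1}, n/2^{j})$; by the nesting property every $e\in F_j$ satisfies $|\tau(X^{\mathbf{head}(e)}) - \tau(X^{\mathbf{tail}(e)})| \le n/2^j$. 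For a fixed edge $e\in\pi_{s,t}$ and index $j$, condition on $e$ being in such an SCC $X$ when $\textsc{OutSeparator}(r,G'[X],d,\zeta)$ is called; if $\mathbf{tail}(e)\notin V_{Sep}$ then $e\notin E_{Sep}$, and otherwise \Cref{lma:sepIntro}(\ref{prop:lowProb}) gives $\mathbb{P}[e\in E_{Sep}\mid r\leadsto_{G'[X]\setminus E_{Sep}}\mathbf{tail}(e)] \le \frac{\zeta}{d}w(e) = \frac{\zeta n}{|X|\alpha\delta}w(e) = \tilde O\!\left(\frac{2^j w(e)}{\alpha\delta}\right)$; and crucially if $e$ survives this call it lands in an SCC of size $\le |X|/2 < n/2^{j+1}$ (since $V_{Sep}\subseteq B^{out}(r,d)$ which we chose to be at most $|X|/2$), so $e$ can never again be cut at size-class $j$. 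Hence $\mathbb{P}[e\in F_j]\le \tilde O(2^j w(e)/(\alpha\delta))$, and summing,
\[
\mathbb{E}\big[\mathcal{T}'(\pi_{s,t},\tau)\big] \;\le\; \sum_{e\in\pi_{s,t}}\sum_j \mathbb{P}[e\in F_j]\cdot \frac{n}{2^j} \;=\; \sum_{e\in\pi_{s,t}}\sum_j \tilde O\!\left(\frac{2^j w(e)}{\alpha\delta}\right)\cdot\frac{n}{2^j} \;=\; \tilde O\!\left(\frac{w(\pi_{s,t})\,n}{\alpha\delta}\right),
\]
and then $\mathcal{T}(\pi_{s,t},\tau)\le 2\mathcal{T}'(\pi_{s,t},\tau)+n$ gives expected quality $\tilde O(n/(\alpha\delta))$, which after tracking the explicit $\log$-factors and the constant from $\zeta$ is at most $\frac{(c+2)\cdot 40000\, n\log^5 n}{\delta}$ (absorbing $\alpha$ into constants or carrying it through, consistent with the statement). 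To upgrade "expected quality" to a genuine \emph{bundle} guarantee that holds with probability $1-n^{-c}$, I would observe that the per-copy bound $\mathcal{T}(\pi_{s,t},\tau_i)\le q\cdot w(\pi_{s,t})+n$ holds with probability at least, say, $\tfrac{1}{2}$ by Markov on the excess over $n$ (choosing the constant in $q$ generously); with $\ell = 40c\log n$ independent copies the probability that \emph{none} of them is good for a fixed pair $(s,t)$ is at most $2^{-\ell} \le n^{-\Theta(c)}$, and a union bound over the $\le n^2$ pairs (and over the $\le m$ stages, or over distinct shortest-path topologies) yields the overall $1-n^{-c}$ success probability.

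The main obstacle I anticipate is making the running-time bookkeeping rigorous, specifically the bound $\sum_j m_{j,k}\le 16cm\log^2 n$ and the precise form of \eqref{eq:totalRunningTime}: one must argue that a fixed edge (resp. vertex) contributes to the $\mathcal{SSSP}$ work at each size class $j$ only $O(\log n)$ times in expectation — this is exactly the random-center argument of \Cref{lma:EStreeprob}-style, but now interleaved with the random-radius separator calls, so one has to check that the independence needed (the separator-triggering update sequence at a given level being independent of the center choice, because diameters are an objective function of $G'$ and the adversary is non-adaptive) genuinely holds, and that restarts on $\mathbf{Fail}$ do not break this. A secondary subtlety is handling the weighted case: \Cref{lma:sepIntro} already applies to weighted graphs so no rounding trick is needed here, but one must be careful that the diameter-detection via $\mathcal{A}$ respects the depth threshold $\frac{|X|\cdot 2\alpha\delta}{n}$ and that the approximation factor $2$ of $\mathcal{A}$ is what forces the $2\alpha\delta = 2\cdot 2\cdot\delta$ (i.e. $\alpha=2$) in the statement. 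Finally, one should double-check the edge case where an SCC is small enough that the depth threshold $\frac{|X|\cdot 2\alpha\delta}{n}$ rounds below $1$, in which case no separator is ever needed for that component and the induced graph is already a valid (singleton-heavy) piece of the $\mathcal{ATO}$.
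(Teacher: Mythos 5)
Your proposal captures the high-level plan from the overview of \Cref{sec:overview}, and the quality calculation in your final display (partitioning the separator edges $F$ into size classes $F_j$ and summing $\mathbb{P}[e\in F_j]\cdot n/2^j$) is the right shape. However, there is a genuine correctness gap and a couple of substantive omissions.

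The critical gap is that you run the $\delta$-restricted $\mathcal{SSSP}$ data structure on $G'[X]$. The input data structure $\mathcal{A}$ is only guaranteed to work against a \emph{non-adaptive} adversary (\Cref{def:SSSPGeneric}), but $G'$ is modified by the separator deletions, and the decision of when and where to cut a separator is made using the distance estimates reported by $\mathcal{A}$ itself. So running $\mathcal{A}$ on $G'[X]$ feeds $\mathcal{A}$'s own outputs back into its input stream, making the update sequence adaptive from $\mathcal{A}$'s perspective and voiding its guarantee. The paper resolves this by initializing $\mathcal{A}_s$ on $G[X]$ (the original graph induced on the vertex set $X$ fixed at creation time) and running it there until it is discarded, so that $\mathcal{A}_s$ sees only adversarial edge deletions. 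This is precisely why \Cref{rmk:LaminarFamilyOfGraphs} allows $\mathcal{A}$ to run on a supergraph and why the monitored distances are $G[X]$-distances rather than $G'[X]$-distances; note that an upper bound on $\mathbf{dist}_{G[X]}(\cdot)$ still bounds the weak diameter $\mathbf{diam}(X,G)$ required by property \ref{prop:ContractLittle}, and a lower bound on $\mathbf{dist}_{G[X]}(t,\textsc{Center}(X))$ still implies the same lower bound in $G'[X]\subseteq G[X]$, which is what the separator call needs. You flag an ``independence'' worry in your last paragraph, but you point it at the center choice rather than at the SSSP update stream, so the actual issue is not addressed.

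Two further differences from the paper's proof are worth noting. First, after $\textsc{OutSeparator}$ the paper invokes the recursive $\textsc{Partition}(G'[C],\cdot,\cdot)$ procedure (\Cref{lma:partitionFull}) on the pruned side $C=V_{Sep}$; the paper explicitly remarks that this step is needed ``for technical reasons only since we cannot ensure an efficient implementation without this step,'' and the quality analysis (\Cref{lma:ProbOfEi}) accounts for both $E_{Sep}$ and the extra edges $E'_{Sep}$ coming from $\textsc{Partition}$. Your ``recurse on $V_{Sep}$'' gestures at this but does not supply the one-shot decomposition needed to make the running-time bookkeeping go through. Second, your ``each edge is at risk at a given size class at most once'' argument hinges on choosing the separator source $r$ so that $|B^{out}(r,d)|\le|X|/2$, i.e.\ the overview's idealized variant. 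The detailed algorithm (\Cref{alg:diameterVio}) instead grows the ball from the violating vertex $t$, where $|V_{Sep}|$ need not be at most $|X|/2$, and the paper replaces your size-halving argument with the random-source participation bound (\Cref{lma:EStreeprob}) to control how often an edge sits in $C$. Both routes can be made to work, but if you keep the detailed algorithm's choice of source you must use the participation lemma, and if you insist on the small-ball choice you still need to say how to find such an $r$ and how the subsequent decomposition of $C$ is done efficiently.
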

\begin{remark}
\label{rmk:LaminarFamilyOfGraphs}
The graphs that the data structure $\mathcal{A}$ runs upon during the algorithm are vertex-induced subgraphs of $G$. The data structure $\mathcal{A}$ is further allowed to maintain distances on a larger subgraph of $G$, i.e. when $\mathcal{A}$ is applied to a graph $G[X]$, it can run instead on $G[Y]$ for any set $X \subseteq Y \subseteq V$.
\end{remark}

We point out that \Cref{rmk:LaminarFamilyOfGraphs} is only of importance at a later point at which the reader will be reminded and can safely be ignored for the rest of this sections (the reader is however invited to verify its correctness which is easy to establish).

We now describe how to obtain an efficient algorithm that obtains an $\mathcal{ATO}(G,2\alpha \delta)$ henceforth denoted by $(\mathcal{V}, \tau)$. The next sections describe how to initialize $(\mathcal{V}, \tau)$, how to maintain useful data structures to maintain the diameter, give the main algorithm and then a rigorous analysis. Finally, we obtain a $\mathcal{ATO}(G, 2\alpha\delta, 40 c \log n)$-bundle by running $40 c \log n$ independent copies of the algorithm below.

\subsection{Initializing the Algorithm} 
\label{subsec:InitATO}

As described in \Cref{sec:overview}, our goal is to maintain a graph $G'$ that is a subgraph of $G$ and satisfies that no SCC $X$ in $G'$ has weak diameter $\mathbf{diam}(X,G)$ larger than $\frac{ \delta |X|}{n}$. Throughout, we maintain the generalized topological order $(\mathcal{V}, \tau)$ on $G'$ where $\tau$ has the nesting property as described in \Cref{thm:SCCinDecrGraph}. 

To ensure the diameter constraint initially, we use the following partitioning procedure whose proof can be found in \cite{nearOptDenseSSSP}.

\begin{restatable}[Partitioning Procedure]{lemma}{partition}
\label{lma:partitionFull}
Given an algorithm $\mathcal{A}$ to solve $2$-approximate $\delta$-restricted $\mathcal{SSSP}$. There exists a procedure $\textsc{Partition}(G, d, \zeta)$ that takes weighted digraph $G$, a depth threshold $d \leq \delta$ and a success parameter $\zeta > 0$, and returns a set $E_{Sep} \subseteq E$ such that
\begin{enumerate}
    \item for each SCC $X$ in $G \setminus E_{Sep}$, we have for any vertices $u,v \in X$ that $\mathbf{dist}_{G \setminus E_{Sep}}(u,v) \leq d$, and \label{prop:partitionDiameter}
    \item \label{prop:partitionProb}
    for $e \in E$, we have
    $\mathbb{P}[e \in E_{Sep}] \leq \frac{240 \zeta \log^2 n}{d}w(e).$
\end{enumerate}
The algorithm runs in total expected time 
\[
O\left(\sum_{j=0}^{\lceil\lg \delta \rceil} \; \sum_{k=0}^{2^{j+1}} T_{SSSP}(m_{j,k}, n/2^{j}, \delta, 2) + m \log^2 n\right)
\]
where we have that $\sum_{k=0} m_{j,k} \leq 2m$ for every $i$. The algorithm terminates correctly with probability $1-e^{-\zeta}$ for any $c > 0$. 
\end{restatable}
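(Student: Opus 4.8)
The plan is to realise $\textsc{Partition}$ as a randomized, one-shot analogue of the recursive $\textsc{Split}$ procedure (\Cref{alg:split}, \Cref{lma:splitFull}) from \Cref{chap:rand_scc}, obtained by two substitutions. First, every use of the layered-BFS \emph{vertex} separator of \Cref{lma:sep} is replaced by an exponential-ball \emph{edge} separator as in \Cref{lma:sepIntro} (the routines $\textsc{OutSeparator}$ and $\textsc{InSeparator}$); this is precisely what turns the ``balanced separator'' bookkeeping of $\textsc{Split}$ into the per-edge cut probability asserted in property~\ref{prop:partitionProb}. Second, the GES-tree that $\textsc{Split}$ uses to prune far vertices from a large core is replaced by the black-box $2$-approximate $\delta$-restricted SSSP data structure $\mathcal{A}$, which we need because $\textsc{Partition}$ must handle weighted graphs and cannot afford an $O(m\delta)$ ES-tree. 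Throughout, set $d' = \Theta(d)$ (concretely $d' \le d/8$) and a boosted success parameter $\zeta' = \Theta(\zeta + \log n)$ to absorb a later union bound.

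Procedurally, $\textsc{Partition}(G,d,\zeta)$ decomposes $G$ into SCCs (Tarjan) and recurses on each. To process a strongly connected $X$: draw $r \in X$ uniformly at random and run $\textsc{OutSeparator}(r,G[X],d',\zeta')$ and $\textsc{InSeparator}(r,G[X],d',\zeta')$ in parallel; if either reports $\mathbf{Fail}$, abort the whole call. Let $(E_{Sep},V_{Sep})$ be the first tuple returned. If $|V_{Sep}| \le \tfrac23|X|$, output $E_{Sep}$, delete it, and recurse on the SCCs of $G[V_{Sep}]$ and of $G[X\setminus V_{Sep}]$ (as in $\textsc{Split}$, the cost is charged to the smaller side). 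Otherwise both returned sets are large, which as in \Cref{clm:largeSCCifEStree} means $r$ lies in a core of $\Omega(|X|)$ vertices of diameter $O(d')$; now run $\mathcal{A}$ from $r$ on $G[X]$ to depth $\delta$, and while (using a priority queue over $\mathcal{A}$'s estimates) there is a vertex $v$ with $\widetilde{\mathbf{dist}}(r,v) > d'$ or $\widetilde{\mathbf{dist}}(v,r) > d'$, carve off an exponential ball around $v$ with $\textsc{InSeparator}(v,\cdot,\Theta(d'),\zeta')$ resp.\ $\textsc{OutSeparator}(v,\cdot,\Theta(d'),\zeta')$, add those edges to the output, feed the carved vertex-deletions to $\mathcal{A}$, and recurse on the SCCs of the carved piece; when no far vertex remains, output the surviving vertex set of $\mathcal{A}$ as a final component. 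As in \Cref{clm:largeSCCifEStree}, $V_{Sep}$ and every carved piece have at most $\tfrac23|X|$ vertices, so the recursion has depth $O(\log n)$.

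For correctness, property~\ref{prop:partitionDiameter} follows the argument of \Cref{clm:SplitCorrectness}: a final component is a singleton or a core, in a core every vertex is within $\approx d'$ of $r$ both ways, and the $u \to r \to v$ witness path stays inside the component (the carved balls are ``closed'' in the residual graph, so induced distances there equal ambient distances), where $2d' \le d$ is chosen with enough slack to absorb $\mathcal{A}$'s factor-$2$ error. For property~\ref{prop:partitionProb}, fix an edge $e$; by \Cref{lma:sepIntro} a single separator call adds $e$ with probability at most $\tfrac{\zeta'}{\Theta(d')}w(e)$ \emph{conditioned on} $\mathbf{tail}(e)$ (for the out-direction; the head for the in-direction) being reachable from that separator's root, i.e.\ on it lying in $V_{Sep}$; and the first time $\mathbf{tail}(e)$ enters a separator ball, $e$ passes into a strictly smaller component, so at each of the $O(\log n)$ size-scales the \emph{unconditional} probability that $e$ is cut there telescopes to $O\!\left(\tfrac{\zeta'}{d'}w(e)\right)$. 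Summing over scales and plugging in $d'=\Theta(d)$, $\zeta'=\Theta(\zeta+\log n)$ (and $\zeta \ge 1$) yields the claimed $\tfrac{240\zeta\log^2 n}{d}w(e)$. The failure clause is a union bound over the $n^{O(1)}$ separator calls, each failing with probability $e^{-\zeta'} \le e^{-\zeta}\,n^{-\Omega(1)}$.

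For the running time, the exact-Dijkstra separator calls cost $O(|E(V_{Sep})|\log n)$ each and, charged to the smaller cut side exactly as in \Cref{lma:splitFull}, sum to $O(m\log^2 n)$; the far-vertex priority queue adds $O(n\log n)$; and each invocation of $\mathcal{A}$ contributes one $T_{SSSP}(\cdot,\cdot,\delta,2)$ term, where the components on which we invoke $\mathcal{A}$ at a common size-scale $n/2^{j}$ have sizes in $[n/2^{j+1},n/2^{j})$ and are vertex-disjoint up to a constant blow-up (each shrinks by a constant factor per step), so there are $O(2^{j})$ of them and their edge counts sum to at most $2m$, giving the displayed bound; since $\mathcal{A}$ is only ever run on induced subgraphs $G[X]$, \Cref{rmk:LaminarFamilyOfGraphs} holds. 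I expect the main obstacle to be neither accounting but the interaction with the adversary model of \Cref{def:SSSPGeneric}: in the core-pruning loop the vertex-deletions fed to $\mathcal{A}$ are chosen using $\mathcal{A}$'s own distance estimates, so one must argue this does not invalidate $\mathcal{A}$'s non-adaptive guarantee. This is handled by the random-source technique of \cite{roditty2008improved} (see also \Cref{chap:rand_scc}): drawing $r$ uniformly at random and observing that the decomposition order of the residual graph's SCCs — the only thing $\mathcal{A}$'s guarantee must be independent of — is determined by the graph alone and is independent of $r$ and of $\mathcal{A}$'s internal randomness, so $\mathcal{A}$'s estimates may be revealed and acted on without the update sequence becoming genuinely adaptive.
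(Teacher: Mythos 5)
Your blueprint---a randomized analogue of \textsc{Split} (\Cref{alg:split}) with the exponential-ball separator of \Cref{lma:sepIntro} in place of the BFS vertex separator and the black-box structure $\mathcal{A}$ in place of the GES-tree---is the right style of argument (note the thesis itself defers the proof of \Cref{lma:partitionFull} to \cite{nearOptDenseSSSP}), and your per-scale accounting of the cut probability and of the $\sum_j\sum_k T_{SSSP}$ term lands on the right shape. But one step fails as written: you enter the large-core branch after seeing only the \emph{first} of the two parallel separator calls return with $|V_{Sep}| > \tfrac23|X|$ and then assert that ``both returned sets are large.'' That inference is exactly what \textsc{Split} is careful about: it aborts the slower call, and if the first-finisher's side is large it \emph{resumes} the aborted call and only enters the core case when that side is large as well (lines \ref{lne:sepTwoWayIf}--\ref{lne:sepTwoWayElse}, \ref{line:split-if-case}); the $\ge |X|/3$ core in \Cref{clm:largeSCCifEStree} needs \emph{both} the out-ball and the in-ball from $r$ to exceed $\tfrac23|X|$. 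With only one direction known to be large, the in-ball from $r$ may be tiny, a far vertex's carved ball may then contain almost all of $X$, and the $\tfrac23|X|$ bound on carved pieces---on which your $O(\log n)$ recursion depth, the telescoping of the per-edge cut probability over scales, and the charging of separator time to the smaller side all rest---is lost. (Reading your step instead as ``run both calls to completion'' restores the inference but breaks the $O(m\log^2 n)$ term, since you then pay for exploring the large side each time you peel off a small piece.) The repair is precisely the resume step of \textsc{Split}, which your construction omits.

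Second, your handling of the adversary model does not go through. In the pruning loop you feed $\mathcal{A}$ deletions chosen from $\mathcal{A}$'s own distance estimates; when $\mathcal{A}$ is randomized (as it is in the bootstrapping of \Cref{thm:TopologicalOrderMaintenanceBootstrap}), this update sequence is correlated with $\mathcal{A}$'s internal randomness, which is exactly what the non-adaptive guarantee in \Cref{def:SSSPGeneric} forbids. The random-root argument of \cite{roditty2008improved} that you invoke protects a different piece of randomness: it shows the update sequence is independent of the choice of $r$; it says nothing about updates that depend on $\mathcal{A}$'s answers. The thesis's own mechanism is the opposite of yours: per the remark attached to \Cref{lma:partitionFull} and \Cref{rmk:LaminarFamilyOfGraphs}, $\mathcal{A}$ is only ever run on vertex-induced subgraphs of $G$, frozen at initialization and never fed algorithm-generated deletions (cf.\ \Cref{subsec:maintainSSSPs}). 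This interacts with your proof of property~\ref{prop:partitionDiameter}: you get $\mathbf{dist}_{G\setminus E_{Sep}}(u,v)\le d$ precisely because your estimates are maintained on the residual graph; if $\mathcal{A}$ instead runs on a frozen induced subgraph, the certified $u\to r\to v$ path may a priori use edges of $E_{Sep}$, so the claim about distances in $G\setminus E_{Sep}$ (as opposed to weak diameter in $G$) has to be argued differently. Either you need an adaptive $\mathcal{A}$ (not granted), or the pruning must be restructured so that $\mathcal{A}$ is used on frozen subgraphs while the output components are still certified short inside $G\setminus E_{Sep}$.
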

\begin{remark}
During the execution, the graphs on which we use the $\mathcal{SSSP}$ structure upon have the properties as described in \Cref{rmk:LaminarFamilyOfGraphs}.
\end{remark}

\begin{algorithm}
\caption{$\textsc{Init}()$}
\label{alg:init}
Let $G'$ be initialized to $G$.\;
\For{$i = 0$ to $\lceil\lg \delta \rceil$}{
    Compute the SCCs $\mathcal{V}$ of $G'$\;
    \ForEach{SCC $X$ in $\mathcal{V}$, $|X| \leq n/2^i$}{
        $E_{Sep} \gets \textsc{Partition}(G[X], \delta/ 2^i, (c+2)\log n)$\;
        $G' \gets G' \setminus E_{Sep}$\;
    }
}
\Return $G'$\;
\end{algorithm}

Using this procedure, it is straight-forward to initialize our algorithm. The pseudo-code of the initialization procedure is given in \Cref{alg:init}. Here, we iteratively apply the partitioning procedure to SCCs of small size to decompose them further if their diameter is too large. It is not hard to establish that the graph $G'$ returned by the procedure, satisfies that every SCC $X$ in $G'$ has $\mathbf{diam}(X,G)\leq \frac{\delta |X|}{n}$. 

\subsection{Maintaining Information about SCC Diameters}
\label{subsec:maintainSSSPs}

Before we describe how to maintain $G'$ to satisfy the guarantees given above, we address the issue of maintaining information about the diameter of the current SCCs in $G'$. 

Therefore, we maintain a set $S$ of random sources throughout the algorithm, and from each $s \in S$, we run an $\alpha$-approximate $\delta$-restricted $\mathcal{SSSP}$ data structure $\mathcal{A}_s$. Initially $S = \emptyset$, and whenever there is an SCC $X$ in $\mathcal{V}$ (which is maintained by the data structure on $G'$), and we find $S \cap X = \emptyset$, we pick a vertex $s$ uniformly at random from $X$ and add it to $S$. Once added, we initialize and maintain an $\alpha$-approximate $\delta$-restricted $\mathcal{SSSP}$ data structure $\mathcal{A}_s$ on the current version of $G[X]$. That is, even if $X$ does not form an SCC at later stages, the data structure is run until the rest of the algorithm on the graph $G[X]$. This ensures that once the algorithm is invoked, all edge updates are determined by the adversary formulating updates to $G$. Since we assume that the adversary is non-adaptive, we have that the $\mathcal{SSSP}$ data structure only has do deal with updates from a non-adaptive adversary\footnote{If we would instead remove vertices from the data structure, we would do so based on the information gathered from the data structure. Thus, the data structure would be required to work against an adaptive adversary. A similar problem arises when running on $G'$.}.

We point out that since we maintain $G'$ to be a decremental graph, we have that $\mathcal{V}$ forms a refinement of previous versions at any stage i.e. the SCC sets only decompose over time in $G'$. Therefore, we can never have multiple center vertices in the same SCC $X \in \mathcal{V}$. For convenience, we let for each $X \in \mathcal{V}$, the vertex $\{s\} = X \cap S$ be denoted by $\textsc{Center}(X)$. By the above argument, this function is well-defined.

\subsection{Maintaining \texorpdfstring{$G'$}{G'}}
\label{subsec:maintainGPrime}

Let us now describe the main procedure of our algorithm: the part that efficiently handles violations of the diameter constraint by finding new separators. The implementation of this procedure is given by \Cref{alg:diameterVio}. Let us now provide some intuition and detail as to how the algorithm works.

\begin{algorithm}
\caption{$\textsc{ResolveDiameterViolations}()$}
\label{alg:diameterVio}
\While(\label{lne:loop}){there exists an $X \in \mathcal{V}$, where $\mathcal{A}_{\textsc{Center}(X)}$ has a distance estimate $\widetilde{\mathbf{dist}}(\textsc{Center}(X), t)$ or $\widetilde{\mathbf{dist}}(t, \textsc{Center}(X))$ exceeding $\frac{\delta|X|}{n}$ for some vertex $t \in X$}{
    \tcc{Find separator sets that decompose $X$.}
    \If(\label{lne:ifFar}){$\widetilde{\mathbf{dist}}(t,\textsc{Center}(X)) > \frac{|X|\delta}{n}$}{
        $(E_{Sep}, C) \gets \textsc{OutSeparator}(t, G'[X], \frac{|X|\delta}{2n}, (c+2)\log n)$ \label{lne:DelSep}
    }
    \Else{
        $(E_{Sep}, C) \gets \textsc{OutSeparator}(t, \overleftarrow{G'[X]}, \frac{|X|\delta}{2n}, (c+2)\log n)$ \label{lne:DelSepRev}
    }
    $E'_{Sep} \gets \textsc{Partition}(G'[C], \frac{|X|\delta}{4n}, (c+2)\log n))$\label{lne:partitionXC}\;
    \tcc{Update $G'$, $\mathcal{V}$ and $\tau$ to reflect the changes.}
    $G' \gets G' \setminus (E_{Sep} \cup E'_{Sep})$\label{lne:updateGraph}\;
    \textbf{Wait Until} the generalized topological order $(\mathcal{V}, \tau)$ of $G'$ was updated, each SCC $Z$ in $G'$ has a center $\textsc{Center}(Z)$, and all data structures $\mathcal{A}_{s}$ are updated.\label{lne:WaitUntil}
}
\end{algorithm}

The algorithm runs a while-loop starting in \Cref{lne:loop} that checks whether there exists a SCC $X \in \mathcal{V}$, such that the $\alpha$-approximate $\delta$-restricted SSSP data structure $\mathcal{A}_{\textsc{Center}(X)}$ has one of its distance estimates $\widetilde{\mathbf{dist}}(\textsc{Center}(X), t)$ (or $\widetilde{\mathbf{dist}}(t, \textsc{Center}(X))$) exceeding $\frac{\delta|X|}{n}$ for some vertex $t$ in the same SCC $X$ in $G'$. The goal of the while-loop iteration, is then to find a separator $E_{Sep}$ between $\textsc{Center}(X)$ and $t$ and to delete the edges from $G'$.

Let us describe a loop-iteration where some distance estimate $\widetilde{\mathbf{dist}}(\textsc{Center}(X), t)$ was found that exceeded $\frac{\delta|X|}{n}$ and where $t \in X$ (the case where we have a distance estimate $\widetilde{\mathbf{dist}}(\textsc{Center}(X), t)$ exceed the threshold value is analogous and therefore omitted). In this case, we find a separator $E_{Sep}$ that separates vertices in $C$ (where $t \in C$) from vertices in $X \setminus C$ (where $\textsc{Center}(X) \in X \setminus C$) in $G'$. Further, we invoke the procedure $\textsc{Partition}(G'[C], \frac{\delta|X|}{4n}, \zeta)$ on $C$ and obtain a separator $E'_{Sep}$ in $G'$ such that each SCC in $G'[C] \setminus E'_{Sep}$ has small diameter. We point out that while the first separator procedure is necessary to separate the vertices $\textsc{Center}(X)$ and $t$ in $G'$, the partitioning procedure is run for technical reasons only since we cannot ensure an efficient implementation without this step.

Finally, we wait until the data structures that maintain the generalized topological order and the distance estimates from random sources are updated before we continue with the next iteration. On termination of the while-loop, we have that all distance estimates between centers and vertices in their SCC (with regard to $G'$) are small (with regard to $G$).

\subsection{Analysis}

We establish \Cref{lma:reductionATOtoSSSP} by establishing four lemmas establishing for $(\mathcal{V}, \tau)$ correctness (\Cref{lma:correctness}), running time (\Cref{lma:runningTime}) and success probability (\Cref{lma:successRate}) and finally establishing that $c \log n$ independent copies of $(\mathcal{V}, \tau)$ form an $\mathcal{ATO}(G,2\alpha\delta)$-bundle with the guarantees given in \Cref{lma:reductionATOtoSSSP}, as required.

\begin{lemma}[Correctness]
\label{lma:correctness}
Given that no procedure returns $\textbf{Fail}$, we have that the algorithm maintains $(\mathcal{V}, \tau)$ to be an $\mathcal{ATO}(G,2\alpha \delta)$ of expected quality $\frac{(c+2)20000 n \log^5 n}{\delta}$.
\end{lemma}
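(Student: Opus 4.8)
The plan is to verify the four defining properties of an $\mathcal{ATO}(G, 2\alpha\delta)$ from \Cref{def:ATO} for the tuple $(\mathcal{V}, \tau)$ maintained by \Cref{alg:init} and \Cref{alg:diameterVio}, where $\eta_{diam} = 2\alpha\delta$. Properties \ref{prop:VUpdate} and \ref{prop:tauUpdate} are essentially immediate: since $(\mathcal{V},\tau)$ is defined as the generalized topological order of the decremental graph $G'$ maintained by \Cref{thm:SCCinDecrGraph}, and since we only ever remove edges from $G'$ (in \Cref{lne:updateGraph} of \Cref{alg:diameterVio}, and in the initialization), the SCCs of $G'$ only refine over time, giving property \ref{prop:VUpdate}; and the nesting property guaranteed by \Cref{thm:SCCinDecrGraph} is exactly the requirement in property \ref{prop:tauUpdate}. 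So the content is in properties \ref{prop:ContractLittle} and \ref{prop:TauTotal}.

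For property \ref{prop:ContractLittle} (weak diameter $\mathbf{diam}(X, G) \leq \frac{|X| \cdot 2\alpha\delta}{n}$ for each $X \in \mathcal{V}$), the plan is: \Cref{alg:init} guarantees at initialization that each SCC $X$ of $G'$ has $\mathbf{diam}(X, G) \leq \frac{\delta|X|}{n}$ via the diameter guarantee (property \ref{prop:partitionDiameter}) of \Cref{lma:partitionFull} applied at each size scale. Then \Cref{alg:diameterVio} runs its while-loop until \emph{no} SCC $X$ has a distance estimate $\widetilde{\mathbf{dist}}(\textsc{Center}(X), t)$ or $\widetilde{\mathbf{dist}}(t, \textsc{Center}(X))$ exceeding $\frac{\delta|X|}{n}$. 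Since $\mathcal{A}_{\textsc{Center}(X)}$ is an $\alpha$-approximate $\delta$-restricted $\mathcal{SSSP}$ data structure, on termination every vertex $t \in X$ has $\mathbf{dist}_{G[X]}(\textsc{Center}(X), t) \leq \frac{\delta|X|}{n}$ and likewise in the reverse direction (we must check that $\frac{\delta|X|}{n} \leq \delta$ so the $\delta$-restriction does not bite, which holds since $|X| \leq n$). By the triangle inequality through the center, any $u, v \in X$ satisfy $\mathbf{dist}_{G}(u,v) \leq \mathbf{dist}_G(u, \textsc{Center}(X)) + \mathbf{dist}_G(\textsc{Center}(X), v) \leq \frac{2\delta|X|}{n} \leq \frac{2\alpha\delta|X|}{n}$, using $\alpha \geq 1$; this is the claimed weak diameter bound. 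One subtlety to address: since $G'[X] \subseteq G$ and $G[X] \subseteq G$, distances computed by $\mathcal{A}$ on $G[X]$ upper-bound the true weak diameter in $G$, which is the direction we need.

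For property \ref{prop:TauTotal} (the quality bound $\mathbb{E}[\mathcal{T}(\pi_{s,t}, \tau)] \leq q \cdot w(\pi_{s,t}) + n$ with $q = \frac{(c+2)20000\,n\log^5 n}{\delta}$), the plan follows the overview in \Cref{subsec:ExtendSSSPToGenGraphs}: write $F = E(G) \setminus E(G')$ for the total set of removed edges, and partition $F$ into $F_0, \dots, F_{\lceil \lg \delta\rceil}$ by the size of the SCC in which each edge was removed — $F_j$ collecting edges removed from SCCs of size roughly $[n/2^{j+1}, n/2^j)$. First, by the nesting property of $\tau$, any edge $(u,v) \in F_j$ has $|\tau(X^u) - \tau(X^v)| \leq n/2^j$ at all times (both endpoints stay inside the interval of length $|X| \leq n/2^j$ associated to the SCC $X$ it was removed from). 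Second, use the probabilistic guarantee: edges are removed either by $\textsc{OutSeparator}$ (property \ref{prop:lowProb} of \Cref{lma:sepIntro}) or by $\textsc{Partition}$ (property \ref{prop:partitionProb} of \Cref{lma:partitionFull}), both giving $\mathbb{P}[e \in E_{Sep} \mid \cdots] \leq O(\frac{\zeta \log^2 n}{d}) w(e)$ with $d \approx \frac{\delta|X|}{n} \approx \frac{\delta}{2^j}$ and $\zeta = (c+2)\log n$; crucially, because we always choose the separator source so that $|C| \leq |X|/2$ (the ball of half-size), an edge of $F_j$ that is \emph{not} put into $F_j$ at the relevant stage lands in an SCC at least a factor $2$ smaller, hence can never re-enter $F_j$. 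So for each edge $e$ on $\pi_{s,t}$ and each $j$, $\mathbb{P}[e \in F_j] = O(\frac{2^j (c+2)\log^3 n}{\delta}) w(e)$. Then bound $\mathcal{T}(\pi_{s,t}, \tau) \leq 2\mathcal{T}'(\pi_{s,t},\tau) + n$ (as in the overview, since an interval-order walk on $[0,n)$ goes "forward" a net of at most $n$), and estimate
\[
\mathbb{E}[\mathcal{T}'(\pi_{s,t}, \tau)] \leq \sum_{e \in \pi_{s,t}} \sum_{j=0}^{\lceil\lg\delta\rceil} \mathbb{P}[e \in F_j]\cdot \frac{n}{2^j} = \sum_{e \in \pi_{s,t}} \sum_j O\!\left(\frac{(c+2)\log^3 n}{\delta}\right) w(e)\, n = O\!\left(\frac{(c+2) n \log^4 n}{\delta}\right) w(\pi_{s,t}),
\]
and tracking the constants and the extra $\log n$ from the $O(\log n)$ many $j$'s gives the stated $q$. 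The main obstacle will be getting the constants and log-powers to line up exactly with $\frac{(c+2)20000\,n\log^5 n}{\delta}$ — in particular carefully combining the $240\zeta\log^2 n$ factor from \Cref{lma:partitionFull}, the extra $\zeta = (c+2)\log n$, the $\lg\delta = O(\log n)$ summation over scales, and the factor-$2$ loss from $\mathcal{T} \leq 2\mathcal{T}' + n$ — and in rigorously justifying the "cannot re-enter $F_j$" step, which relies on the precise half-size guarantee $|B^{out}(r,d)| \leq |X|/2$ used when selecting the separator source in \Cref{alg:diameterVio}. A secondary technical point to handle cleanly is that edges removed by the auxiliary $\textsc{Partition}$ call in \Cref{lne:partitionXC} operate on $G'[C]$ with $C$ of size $\leq |X|/2$, so they belong to $F_{j'}$ for some $j' > j$ and are accounted for under the right scale.
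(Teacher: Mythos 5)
Your verification plan for properties \ref{prop:VUpdate}--\ref{prop:ContractLittle} of \Cref{def:ATO} matches the paper's proof (packaged there as Claim \ref{lma:smallDiam}), including the subtlety that $\mathcal{A}_{\textsc{Center}(X)}$ is allowed to run on a larger vertex-induced subgraph so its estimates still upper-bound distances in $G$. Your approach to property \ref{prop:TauTotal} --- reduce to $\mathcal{T}'$, use the nesting property of $\tau$ to cap the per-edge displacement by the size $|X|$ of the SCC at removal time, and charge it against the per-call sampling probabilities from \Cref{lma:sepIntro} and \Cref{lma:partitionFull} --- is also the right skeleton, and is what the paper does in Claim \ref{lma:ProbOfEi}.

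There is however a genuine gap in the key step of your probability argument. You invoke a half-size guarantee $|C| \le |X|/2$ on the ball returned by $\textsc{OutSeparator}$ to argue that an edge can be ``at risk'' of joining $F_j$ in at most one while-loop iteration per scale $j$. But \Cref{alg:diameterVio} does not choose the separator source to have a small ball: it calls $\textsc{OutSeparator}$ from the \emph{detected far vertex} $t$ (or its reverse variant), and the only structural guarantee you get for free is that $\textsc{Center}(X) \notin C$ (because $\widetilde{\mathbf{dist}}$ is a $2$-approximation and $t$ is at estimated distance more than $\frac{\delta|X|}{n}$, so $\textsc{Center}(X)$ lies outside the radius-$\frac{\delta|X|}{2n}$ ball). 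Nothing prevents $|C|$ from being close to $|X|$. The half-size picture appears only in the simplified exposition of \Cref{subsec:ExtendSSSPToGenGraphs}, where one hypothetically picks a source whose ball is small; it is not what the implemented algorithm does. The paper's proof closes this gap differently: it uses the random-source participation lemma (\Cref{lma:EStreeprob}) to argue that, because $\textsc{Center}(X)$ is chosen uniformly at random, each vertex (hence each edge endpoint) appears in $C$ during at most $O(\log n)$ while-loop iterations in expectation. That is also why the resulting tuple is an $\mathcal{ATO}$ of \emph{expected} quality rather than worst-case quality, and it accounts for the extra $\log n$ factor: you land on $\tilde{O}(n \log^4 n / \delta)$ while the paper's stated bound is $\frac{(c+2)20000\, n \log^5 n}{\delta}$, the discrepancy being precisely the expected number of participation events that your accounting omits. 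To repair your plan, replace the ``at most once per scale'' claim with the participation bound from \Cref{lma:EStreeprob} and sum over the $O(\log n)$ expected iterations per edge endpoint.
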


Let us first prove that the diameter of SCCs in $G'$ remains small.

\begin{claim}
\label{lma:smallDiam}
After invoking \Cref{alg:diameterVio}, we have that each set $X \in \mathcal{V}$ satisfies
\[
\mathbf{diam}(X, G) \leq \frac{2\alpha\delta|X|}{n}
\] 
\end{claim}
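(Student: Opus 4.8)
The plan is to obtain the bound directly from the termination condition of the \texttt{while}-loop in \Cref{alg:diameterVio}, combined with the over-estimate property of an $\alpha$-approximate $\delta$-restricted $\mathcal{SSSP}$ data structure (\Cref{def:SSSPGeneric}). No property of \textsc{OutSeparator} or \textsc{Partition} should be needed for this claim; those enter only in the quality and running-time analyses. First I would record that, upon exit of the loop in \Cref{lne:loop}, its guard is false, and that — by the \textbf{Wait Until} instruction in \Cref{lne:WaitUntil} — at this point every SCC of $G'$ carries a center and every data structure $\mathcal{A}_s$ has processed all updates so far. Hence for the \emph{final} partition $\mathcal{V}$, for every $X \in \mathcal{V}$ and every $t \in X$ we have both $\widetilde{\mathbf{dist}}(\textsc{Center}(X), t) \le \tfrac{\delta|X|}{n}$ and $\widetilde{\mathbf{dist}}(t, \textsc{Center}(X)) \le \tfrac{\delta|X|}{n}$, where these estimates are the ones maintained by $\mathcal{A}_{\textsc{Center}(X)}$.

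Next I would translate these estimates into distances in $G$. By the construction in \Cref{subsec:maintainSSSPs} the center $\textsc{Center}(X) \in X$ is well defined, and $\mathcal{A}_{\textsc{Center}(X)}$ is maintained on $G[X_0]$, where $X_0 \supseteq X$ is the SCC that contained $\textsc{Center}(X)$ at the stage it was added to $S$ (or on a larger induced subgraph, as permitted by \Cref{rmk:LaminarFamilyOfGraphs}); in either case this is an induced subgraph of the current $G$ on a vertex set containing $X$. By the over-estimate guarantee in \Cref{def:SSSPGeneric}, the true distances in that subgraph are at most the estimates, and since any path in an induced subgraph of $G$ is a path in $G$, we conclude $\mathbf{dist}_G(\textsc{Center}(X), t) \le \tfrac{\delta|X|}{n}$ and $\mathbf{dist}_G(t, \textsc{Center}(X)) \le \tfrac{\delta|X|}{n}$ for all $t \in X$. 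A single triangle-inequality step then gives, for arbitrary $u, v \in X$,
\[
\mathbf{dist}_G(u,v) \le \mathbf{dist}_G(u, \textsc{Center}(X)) + \mathbf{dist}_G(\textsc{Center}(X), v) \le \frac{2\delta|X|}{n} \le \frac{2\alpha\delta|X|}{n},
\]
using $\alpha > 1$ in the last step; maximising over $u, v \in X$ yields $\mathbf{diam}(X,G) \le \tfrac{2\alpha\delta|X|}{n}$. (The factor $\alpha$ is in fact slack here; it is kept only so that the bound matches the target object $\mathcal{ATO}(G, 2\alpha\delta)$.)

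The one place I would argue with care is the interaction between the loop and the dynamic partition: the loop may repeatedly split an SCC into smaller pieces and re-center them (each such split, together with the corresponding updates to $(\mathcal{V},\tau)$ and to the $\mathcal{A}_s$, is executed through the \textbf{Wait Until} step), so I must make sure that ``the loop has terminated'' genuinely means the guard is false for the \emph{current} $\mathcal{V}$ with the current sizes $|X|$, rather than for some intermediate state — otherwise the per-$X$ estimate bounds would not all hold simultaneously with the correct $|X|$. Granting that, everything else is an immediate consequence of the definitions, so I expect this bookkeeping point to be the only subtlety.
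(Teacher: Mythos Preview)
Your proposal is correct and follows essentially the same route as the paper: negate the loop guard to get per-vertex estimate bounds $\widetilde{\mathbf{dist}}\le \tfrac{\delta|X|}{n}$, use the over-estimate property of \Cref{def:SSSPGeneric} together with the fact that $\mathcal{A}_{\textsc{Center}(X)}$ runs on an induced subgraph of $G$, and then apply the triangle inequality through $\textsc{Center}(X)$. Your observation that the factor $\alpha$ is slack is also accurate---the paper's displayed chain jumps to $\tfrac{2\alpha\delta|X|}{n}$ only to match the target $\mathcal{ATO}(G,2\alpha\delta)$, just as you note.
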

\begin{proof}
First, recall that when the while-loop in \Cref{lne:loop} terminates, we have that every $X \in \mathcal{V}$ has that no distance estimate $\widetilde{\mathbf{dist}}(\textsc{Center}(X), t)$ or $\widetilde{\mathbf{dist}}(t, \textsc{Center}(X))$ exceeds $\frac{\delta|X|}{n}$ for any $t \in X$. 

Next, observe that the algorithm maintains the following invariant on the while-loop in \Cref{lne:loop}: every $X \in \mathcal{V}$ contains exactly one center is only marked in the data structure $\mathcal{E}_{\textsc{Center}(X)}$. This follows by resampling centers in SCCs $X$ that do not have a center yet and the by \Cref{lne:WaitUntil} which ensures that at the end of each while-loop iteration, there is time to resample.

Combined, this implies that on termination of the while-loop, for every $x, y \in X$, in any $X \in \mathcal{V}$, we have
\begin{align}
\begin{split}
\label{eq:upperBound}
    \mathbf{dist}_G(x,y) &\leq \mathbf{dist}_G(x,\textsc{Center}(X)) + \mathbf{dist}_G(\textsc{Center}(X),y) \\
    &\leq \widetilde{\mathbf{dist}}(x,\textsc{Center}(X)) + \widetilde{\mathbf{dist}}(\textsc{Center}(X),y) \\
     & \leq \frac{2\alpha\delta|X|}{n}
\end{split}
\end{align}
where we used the triangle inequality, \Cref{def:SSSPGeneric} and the fact that $\mathcal{A}_{\textsc{Center}}$ maintains distances with regard to a vertex-induced subgraph of $G$ (adding edges can only decrease distances, thus distances in $G$ are smaller than in $G[Y] \subseteq G$ for any $Y$). 
\end{proof}

Let us now bound the quality of the approximate topological order $(\mathcal{V}, \tau)$, i.e. upper bound for any $s$-to-$t$ path $\pi_{s,t}$ the amount $\mathcal{T}(\pi_{s,t}, \tau)$. As in the overview section, we focus on the "negative" terms in $\mathcal{T}(\pi_{s,t}, \tau)$, which are captured by 
\begin{equation}
        \mathcal{T}'(\pi_{s,t}, \tau) \stackrel{\text{def}}{=} \sum_{(u,v) \in \pi_{s,t}} \min\{0, \tau(X^v) - \tau(X^u)\}
\end{equation}
which is the definition of $\mathcal{T}'$ already given in equation \ref{eq:redefT}. It is not hard to see that $\mathcal{T}(\pi_{s,t} , \tau) = 2\mathcal{T}'(\pi_{s,t} , \tau) + |\tau(X^s) - \tau(X^t)| \leq 2\mathcal{T}'(\pi_{s,t} , \tau) + n$. It, thus, only remains to establish the following lemma.

\begin{claim}
\label{lma:ProbOfEi}
At any stage of $G$, for any path $\pi_{s,t}$ in $G$, we have 
\[
    \mathbb{E}[\mathcal{T}'(\pi_{s,t}, \tau)] \leq \frac{(c+2)10000 n \log^5 n}{\delta} w_G(\pi) 
\]
throughout the course of the algorithm.
\end{claim}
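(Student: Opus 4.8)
The plan is to mimic the overview calculation in the section ``A New Framework'', but tracking everything carefully and in expectation. Fix a path $\pi_{s,t}$ in $G$ at some stage. Partition the set $F$ of all separator edges removed from $G$ to obtain $G'$ into classes $F_0, F_1, \dots, F_{\lceil \lg n \rceil}$, where $F_j$ collects all edges removed while processing an SCC $X \in \mathcal{V}$ of size $|X| \in (n/2^{j+1}, n/2^j]$ — this covers edges removed both by the $\textsc{OutSeparator}$ calls in \Cref{lne:DelSep,lne:DelSepRev} and by the $\textsc{Partition}$ calls in \Cref{lne:partitionXC} (and by $\textsc{Partition}$ during $\textsc{Init}$). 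The first observation, following directly from the nesting property of $(\mathcal{V},\tau)$ (\Cref{thm:SCCinDecrGraph}) together with property \ref{prop:tauUpdate} of \Cref{def:ATO}, is that if $e=(u,v)$ was removed while both endpoints lay in an SCC $X$ with $|X| \le n/2^j$, then $|\tau(X^u) - \tau(X^v)| \le |X| \le n/2^j$ at all later stages. Hence $\mathcal{T}'(\pi_{s,t},\tau) \le \sum_{e \in \pi_{s,t}} \sum_j \mathbf{1}[e \in F_j] \cdot n/2^j$, and by linearity it suffices to bound $\mathbb{P}[e \in F_j]$ for each edge $e$ on the path and each scale $j$.

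The heart of the argument is the bound $\mathbb{P}[e \in F_j] = \tilde{O}(2^j w_G(e)/\delta)$ for a fixed edge $e$ on $\pi_{s,t}$. An edge $e$ can join $F_j$ only at a moment when we process an SCC $X$ with $|X| \in (n/2^{j+1}, n/2^j]$ containing both endpoints of $e$: we then either run $\textsc{OutSeparator}(\cdot, G'[X] \text{ or } \overleftarrow{G'[X]}, \tfrac{|X|\delta}{2n}, (c+2)\log n)$ which cuts $E_{Sep}$, or run $\textsc{Partition}(G'[C], \tfrac{|X|\delta}{4n}, (c+2)\log n)$ which cuts $E'_{Sep}$. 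By \Cref{lma:sepIntro} property \ref{prop:lowProb}, conditioned on $e$'s tail being reachable from the root (if not, $e \notin E_{Sep}$), $e$ enters $E_{Sep}$ with probability at most $\tfrac{(c+2)\log n}{\,|X|\delta/(2n)\,} w_G(e) = \tilde{O}(2^{j+1} w_G(e)/\delta)$; by \Cref{lma:partitionFull} property \ref{prop:partitionProb}, $e$ enters $E'_{Sep}$ with probability at most $\tfrac{240 (c+2)\log^3 n}{\,|X|\delta/(4n)\,} w_G(e) = \tilde{O}(2^{j} w_G(e)/\delta)$ (using $|X| \in (n/2^{j+1}, n/2^j]$). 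The crucial structural point — exactly as in the overview — is that each edge $e$ gets at most \emph{one} such chance per scale $j$: whenever we process $X$ at scale $j$, by the choice of the root $r$ with $|B^{out}(r, \tfrac{|X|\delta}{2n})| \le |X|/2$ and by $C \subseteq V_{Sep} \subseteq B^{out}(r,\cdot)$, the surviving SCC $X \setminus C$ has size $> |X|/2 \ge n/2^{j+2}$ while $C$ has size $\le |X|/2 \le n/2^{j+1}$; after removing $E_{Sep} \cup E'_{Sep}$, the SCCs of $G'[C]$ all have size $\le n/2^{j+1}$, and the component $X \setminus C$ is untouched and will be processed only at a strictly coarser scale. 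Thus for a fixed $j$, the endpoints of $e$ lie together in an SCC of size in $(n/2^{j+1}, n/2^j]$ during at most one processing event (the $\textsc{Init}$ phase does at most one pass per scale as well, which we fold in with a constant factor), so summing the two probabilities above gives $\mathbb{P}[e \in F_j] \le \tilde{O}(2^j w_G(e)/\delta)$ with the explicit constant $(c+2)\cdot O(\log^3 n)$ needed to reach the claimed $\frac{(c+2)10000 n \log^5 n}{\delta}$.

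Plugging in: $\mathbb{E}[\mathcal{T}'(\pi_{s,t},\tau)] \le \sum_{e \in \pi_{s,t}} \sum_{j=0}^{\lceil \lg n\rceil} \tilde{O}\!\left(\tfrac{2^j w_G(e)}{\delta}\right) \cdot \tfrac{n}{2^j} = \sum_{e \in \pi_{s,t}} O\!\left(\tfrac{(c+2) n \log^4 n}{\delta}\right) w_G(e)$, and since $\sum_{e \in \pi_{s,t}} w_G(e) = w_G(\pi_{s,t})$ this yields the bound $\mathbb{E}[\mathcal{T}'(\pi_{s,t},\tau)] \le \frac{(c+2)10000 n \log^5 n}{\delta} w_G(\pi_{s,t})$ after absorbing the extra $\log n$ from the sum over $j$ into the constant. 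The main obstacle I anticipate is the ``at most one chance per scale'' bookkeeping: one must be careful that node splits coming from the decremental topological-order data structure (\Cref{thm:SCCinDecrGraph}) never re-merge endpoints of $e$ into a larger SCC — which holds because $G'$ is decremental and $\mathcal{V}$ only refines — and that the interleaving of $\textsc{Init}$ and the $\textsc{ResolveDiameterViolations}$ loop, together with the fact that $\textsc{OutSeparator}$ may itself return $\mathbf{Fail}$ (handled by the ``no procedure returns $\textbf{Fail}$'' hypothesis and the separate success-probability lemma), does not give $e$ extra exposure at a fixed scale. Also, a minor subtlety is that the $\textsc{OutSeparator}$ probability bound is conditioned on reachability of $\mathbf{tail}(e)$ from the root, so one should phrase the union bound over events ``$e \in E_{Sep}$'' by first conditioning on the reachability event and noting the bound is vacuous otherwise; this is exactly the conditioning already present in \Cref{lma:sepIntro}\ref{prop:lowProb} and in the overview computation, so it goes through verbatim.
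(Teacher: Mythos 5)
Your proof follows the informal argument from the overview section (``A New Framework''), and most of the setup — partitioning removed edges into scales $F_j$, linearity of expectation, the per-call probability bounds from \Cref{lma:sepIntro} and \Cref{lma:partitionFull}, and the nesting bound $|\tau(X^u)-\tau(X^v)| \le |X|$ — is sound. The gap is exactly the ``at most one chance per scale $j$'' bookkeeping that you flag as an anticipated obstacle. Your argument for it invokes ``by the choice of the root $r$ with $|B^{out}(r,|X|\delta/2n)| \le |X|/2$,'' but that property is assumed in the overview's informal presentation and is \emph{not} enforced by \Cref{alg:diameterVio}: in \Cref{lne:DelSep}/\Cref{lne:DelSepRev} the separator is grown from the vertex $t$ that the SSSP data structure reported as violating the diameter threshold, not from a vertex chosen to have a small ball. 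All the triggering condition (with the $2$-approximation of \Cref{def:SSSPGeneric}) gives you is $\textsc{Center}(X)\notin C$, hence $|C|\le|X|-1$; it may well be that $|C| > |X|/2$. If both endpoints of $e$ land in $C$ and $e$ is not cut, they can still end up together in an SCC of size $> n/2^{j+1}$, so $e$ can be exposed again at scale $j$, and your charging scheme undercounts the exposures.

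The paper's proof takes a different route that avoids per-scale charging altogether. It bounds, for each while-loop iteration in which an endpoint of $e$ lies in $C$, the expected contribution of $e$ to $\mathcal{T}'(\pi_{s,t},\tau)$ by the \emph{scale-free} quantity $\frac{(c+2)1000\,n\log^4 n}{\delta}\,w_G(e)$: the $|X|$ factors cancel, since the topological-order penalty is at most $|X|$ while the cut probability from \Cref{lma:sepIntro}/\Cref{lma:partitionFull} scales as $1/|X|$. It then invokes \Cref{lma:EStreeprob}, the random-source argument carried over from the SCC chapter, to conclude that each endpoint of $e$ participates in $C$ in at most $O(\log n)$ while-loop iterations in expectation. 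This is precisely where the uniformly random choice of $\textsc{Center}(X)$ in \Cref{subsec:maintainSSSPs} earns its keep, and it is the mechanism that the overview's ``choose $r$ with small ball'' line is quietly standing in for. Replacing your per-scale sum $\sum_j\mathbb{P}[e\in F_j]\,n/2^j$ with this per-iteration, random-source accounting is what is needed to close the gap.
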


Before, we provide a proof, let us state the following lemma which has been shown in the last chapter. 

\participation*

\begin{proof}[Proof of \Cref{lma:ProbOfEi}]
We proof this lemma for edges $(u,v) \in E$. Then, the result follows straight-forwardly by summing over the path edges. Let us start by observing that we have $\mathcal{T}'((u,v), \tau) \neq 0$ if and only if $X^v$ strictly precedes $X^u$ in $\tau$ (where $X^z$ denotes the set in $\mathcal{V}$ that contains vertex $z \in V$). But since $(\mathcal{V}, \tau)$ forms a generalized topological order of $G'$, we have that $(u,v)$ cannot be contained in $E(G')$. 

However, we only remove edges from $E(G')$ in  \Cref{lne:updateGraph} of our algorithm, after being added to $E_{Sep}$ in  \Cref{lne:DelSep} or \ref{lne:DelSepRev}, or to $E'_{Sep}$ in  \Cref{lne:partitionXC}. Having $(u,v) \in E_{Sep}$ occurs by \Cref{lma:sepIntro} only if $(u,v)$ is contained in $G'[X]$ and if at least one of the endpoints is in $C$ (depending on whether the separator is computed on $G'[X]$ or $\overleftarrow{G'[X]}$ it is $u$ or $v$). In this case, the probability that $(u,v)$ is added to $E_{Sep}$ is at most $\frac{(c+2)\log n 2n}{|X|\delta} w_G(u,v)$, again by \Cref{lma:sepIntro}. 

However, if $(u,v)$ is not added to $E_{Sep}$ (and not already removed from $G'$) then it is completely contained in $G'[C]$. Thus, by \Cref{lma:partitionFull} it is sampled into $E'_{Sep}$ with probability at most
$\frac{(c+2)240 \log^4(n) \cdot 4n}{|X|\delta}w_G(u,v) \leq \frac{(c+2)960 n \log^4 n}{|X|\delta}w_G(u,v)$.

Observe that if $(u,v)$ is sampled into either $E_{Sep}$ or $E'_{Sep}$, then since it was contained in $X$ and by the nesting property of $\tau$ which is guaranteed by \Cref{thm:SCCinDecrGraph}, we have that during the rest of the algorithm, we have $|\tau(X^u) - (X^v)| < |X|$ where $X^u$ (resp. $X^v$) denotes the set in $\mathcal{V}$ that contains $u$ (resp. $v$).

Thus, a while-loop iteration where $u$ or $v$ participate in $C$ adds to $\mathbb{E}[\mathcal{T}'(\pi_{s,t}, \tau)]$ at most
\[
|X| \cdot \frac{(c+2)1000 n \log^4 n}{|X|\delta}w_G(u,v) = \frac{(c+2)1000 n \log^4 n}{\delta}w_G(u,v).
\]
Since by \Cref{lma:EStreeprob} each vertex only occurs during $2 \lg n$ while-loop iterations in $C$, we can establish the final bound.
\end{proof}

Combining the fact that $(\mathcal{V}, \tau)$ is a $\textsc{GeneralizedTopologicalOrder}(G')$ at all stages and $G' \subseteq G$ where $\tau$ has the nesting property, combined with \Cref{lma:smallDiam} and \cref{lma:ProbOfEi}, we derive \Cref{lma:correctness}.

\begin{lemma}[Running Time]
\label{lma:runningTime}
The algorithm to maintain $(\mathcal{V}, \tau)$ requires at most expected time
\[
O\left(\sum_{j=0}^{\lceil\lg \delta \rceil} \; \sum_{k=0}^{2^{j+3} \lceil \lg \delta\rceil} T_{SSSP}(m_{j,k}, n/2^{j}, \delta, 2) + m \log^2 n\right)
\]
where $\sum_j m_{j,k} \leq 16m \lceil \lg \delta\rceil$.
\end{lemma}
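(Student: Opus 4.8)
The plan is to account for the total work of Algorithm~\ref{alg:diameterVio} (\textsc{ResolveDiameterViolations}) together with the initialization in Algorithm~\ref{alg:init} (\textsc{Init}), and to see that both are dominated by the cost of the $\mathcal{SSSP}$ data structures $\mathcal{A}_s$ maintained at the random centers plus the $O(m \log^2 n)$ bookkeeping for the generalized topological order (via \Cref{thm:SCCinDecrGraph}) and the separator/partitioning calls (via \Cref{lma:sepIntro} and \Cref{lma:partitionFull}). First I would dispose of the ``easy'' terms: the decremental generalized-topological-order structure on $G'$ costs $O(m \log^4 n)$ by \Cref{thm:SCCinDecrGraph}; every call to \textsc{OutSeparator} in \Cref{lne:DelSep}/\ref{lne:DelSepRev} runs in time $O(|E(C)| \log n)$ by \Cref{lma:sepIntro}; and since we always pick $t$ with $|B^{out}(t,d)| \le |X|/2$ (so $C \subseteq B^{out}(t,d)$ has $|C| \le |X|/2$), the cut side $C$ always contains at most half of $X$, and the edges incident to $E_{Sep} \cup E'_{Sep}$ are then removed from $G'$. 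Hence by the standard ``smaller-half'' charging argument — each vertex, and thus each incident edge, is counted $O(\log n)$ times across nested recursions because the component it lies in shrinks by a factor $2$ each time it is on the small side — the total cost of all \textsc{OutSeparator} and \textsc{Partition} calls outside the $\mathcal{SSSP}$ structures telescopes to $O(m \log^2 n)$ extra, with the $\mathcal{SSSP}$ subcalls inside \textsc{Partition} folded into the sum below.

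The heart of the argument is bounding the cost of the $\mathcal{SSSP}$ data structures $\mathcal{A}_s$, $s \in S$. Here I would invoke \Cref{lma:EStreeprob} (restated as \participation* above): because each center is chosen uniformly at random in its SCC and trees are ``reused'' in the decomposition, each vertex $v \in V$ participates, in expectation, in $O(\log n)$ data structures $\mathcal{A}_s$ at any fixed ``size scale.'' Concretely, group the data structures by the size of the SCC $X$ in which the center $s$ was sampled: say $\mathcal{A}_s$ is at \emph{level} $j$ if $|X| \in [n/2^{j+1}, n/2^j)$, for $j = 0, \dots, \lceil \lg \delta \rceil$ (the depth threshold $\delta |X|/n$ makes levels beyond $\lceil \lg \delta\rceil$ vacuous). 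Within level $j$, a vertex $v$ lies in at most $O(\log n) = O(\lceil \lg \delta\rceil)$ many of these structures in expectation, so the number of level-$j$ data structures whose vertex set contains $v$ is $O(\lceil \lg \delta \rceil)$; summing the edge-degree of $v$ over those structures and over all $v$ gives $\sum_j m_{j,k} = O(m \lceil \lg \delta\rceil)$ total ``edge-mass'' per index $k$, which matches the claimed bound $\sum_j m_{j,k} \le 16 m \lceil \lg \delta \rceil$. Each level-$j$ structure runs on a vertex-induced subgraph of $G$ of at most $n/2^j$ vertices and is $2$-approximate $\delta$-restricted, so it costs $T_{SSSP}(m_{j,k}, n/2^j, \delta, 2)$; the index $k$ ranges over the $O(2^j \lceil \lg \delta\rceil)$ distinct centers created at level $j$ (more precisely $k \le 2^{j+3}\lceil \lg \delta\rceil$, exactly as in the statement, obtained by combining the $|S_i| \le n/2^i$–style count of how many disjoint size-$[n/2^{j+1}, n/2^j)$ SCCs can coexist with the $O(\lceil \lg \delta \rceil)$ reuse factor from \Cref{lma:EStreeprob}). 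Summing over $j$ and $k$ yields exactly the claimed expression, and the $+\,m\log^2 n$ absorbs the topological-order maintenance and all separator overhead.

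The main obstacle, I expect, is making the charging in the last paragraph fully rigorous in the \emph{dynamic} setting where $G'$ itself is being decremented by our own separator deletions, not just by the adversary: the number of centers at each level, and the ``each vertex in $O(\log n)$ structures'' bound, both rely on \Cref{lma:EStreeprob}, which is an expectation statement about how the random root interacts with an update sequence that is partly adversarial (the adversary's deletions to $G$) and partly algorithm-dependent (our separator deletions). One must argue, as in the proof of \Cref{lma:EStreeprob}, that the separator updates triggered at level $j$ never shrink the ``large'' SCC below the halfway mark before the adversary forces it to, so that the geometric decay of component sizes — and hence the $O(\log n)$ participation bound — goes through; the subtlety is that here the trigger is a distance estimate from $\mathcal{A}_{\textsc{Center}(X)}$ exceeding $\delta|X|/n$ rather than a GES unreachability event, so I would need to check that the corresponding \textsc{OutSeparator} (run to depth $|X|\delta/(2n)$) only prunes vertices that are far — at distance $> |X|\delta/(2n)$ — from the center, which follows from Property~1 of \Cref{lma:sepIntro} together with the fact that $C$ is a metric ball around $t$ and $t$ is far from the center. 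Once that invariant is in place, the rest is the routine telescoping sum above.
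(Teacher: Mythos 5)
Your proposal matches the paper's proof: both decompose the running time into topological-order maintenance, separator/partitioning overhead outside the $\mathcal{SSSP}$ subcalls, and the $\mathcal{SSSP}$ data-structure calls themselves, and both rely on \Cref{lma:EStreeprob} (the random-root participation bound from \Cref{chap:rand_scc}) to control the multiplicity with which each vertex and edge appears across $\mathcal{SSSP}$ structures at each level $j$, yielding the $\sum_j m_{j,k} = O(m\lceil\lg\delta\rceil)$ edge-mass bound. One small inaccuracy to note: the algorithm does \emph{not} choose $t$ to guarantee $|B^{out}(t,d)| \le |X|/2$ — $t$ is simply a vertex whose distance estimate violates the threshold — so the halving of component sizes that drives the smaller-half charging is obtained probabilistically through the uniformly random choice of $\textsc{Center}(X)$ and \Cref{lma:EStreeprob}, which is in fact the route the remainder of your argument correctly takes.
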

\begin{proof}
Again, our proof crucially relies on the following lemma.

\participation*

We first observe that the initialization procedure described in \Cref{subsec:InitATO} initializes $G'$ in $O(m)$ time and then runs $O(\log n)$ iterations where in each iteration it invokes the procedure $\textsc{Partition}(\cdot)$ on a set of disjoint subgraphs of $G$ to update $G'$. By \Cref{lma:partitionFull}, we can implement all of these calls in time 
\[
O\left(\log n \left(\sum_{j=0}^{\lceil\lg \delta \rceil} \; \sum_{k=0}^{2^{j+1}} T_{SSSP}(m_{j,k}, n/2^{j}, \delta, 2) + m \log n\right)\right).
\]
The latter term in this expression subsumes the time spend on updating $G'$ once a separator is returned.

Next, let us bound the time spend on maintaining the $\mathcal{SSSP}$ data structures as described in \Cref{subsec:maintainSSSPs}. It is here that we use \Cref{lma:EStreeprob}: we have that initially each vertex (and edges) is in exactly one data structure. Further, every second time a vertex $v$ participates in $C$ as computed in \Cref{lne:DelSep} or \Cref{lne:DelSepRev}, the SCC it is contained in in $G'$ is halved in size (i.e. in the number of vertices). Since new $\mathcal{SSSP}$ data structures are initialized on the new SCCs that are contained in the $C$ set, we have that each vertex $v$, in expectation, only participates $2$ times in an $\mathcal{SSSP}$ structure with running time $T_{SSSP}(m_{j,k}, n/2^{j}, \delta, 2) + m \log n$ for any $j$. Since each edge is incident to only two vertices, we have a similar argument on edges and can therefore bound the total amount of time spend on $\mathcal{SSSP}$ data structures by
\[
O\left(\sum_{j=0}^{\lceil\lg \delta \rceil} \; \sum_{k=0}^{2^{j+1}} T_{SSSP}(m_{j,k}, n/2^{j}, \delta, 2)\right).
\]
where $\sum_j m_{j,k} \leq 4m$.

Finally, let us bound the time spend in calls to \Cref{alg:diameterVio}. We observe that each while-loop iteration takes time 
\[
O\left(\sum_{j=0}^{\lceil\lg \delta \rceil} \; \sum_{k=0}^{2^{j+1}} T_{SSSP}(m'_{j,k}, n'/2^{j}, \delta, 2) + m' \log n\right)
\]
where $\sum_j m'_{j,k} \leq 4m'$ for $m' = |E_G(C)|$ and $n' = |V_G(C)|$. This follows since the $\textsc{OutSeparator}(\cdot)$ procedure runs in time almost-linear in the number of edges incident to $C$ and afterwards the call of the procedure $\textsc{Partition}(\cdot)$ which dominates the costs of the procedure is only on the graph $G'$ induced by the vertices in $C$. Thus, this insight follows straight-forwardly from \Cref{lma:sepIntro} and \Cref{lma:partitionFull} and the insight that the cost of the remaining operations is subsumed in the bounds.

Finally, we again use \Cref{lma:EStreeprob} which gives that summing over all while-loop iterations is at cost at most 
\[
O\left(\sum_{j=0}^{\lceil\lg \delta \rceil} \; \sum_{k=0}^{2^{j+2} \lceil \lg \delta\rceil} T_{SSSP}(m_{j,k}, n/2^{j}, \delta, 2) + m \log^2 n\right)
\]
where $\sum_j m_{j,k} \leq 8m \lceil \lg \delta\rceil$. Combining the parts of the algorithm, we thus get the total bound.
\end{proof}

\begin{lemma}[Success Probability]
\label{lma:successRate}
The algorithm reports $\textbf{Fail}$ with probability at most $2n^{-c-1}$. 
\end{lemma}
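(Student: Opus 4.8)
The plan is to show that the only way the algorithm ever reports \textbf{Fail} is through one of its two randomized subprocedures, and then to bound their number and union-bound. Concretely, \textbf{Fail} is produced only by a call to $\textsc{OutSeparator}(\cdot)$ in \Cref{lne:DelSep} or \Cref{lne:DelSepRev} of \Cref{alg:diameterVio} (which returns \textbf{Fail} with probability $e^{-\zeta}$ by \Cref{lma:sepIntro}), or by a call to $\textsc{Partition}(\cdot)$ in \Cref{alg:init} or \Cref{alg:diameterVio} (which fails with probability at most $e^{-\zeta}$ by \Cref{lma:partitionFull}). Every other ingredient is unconditional and never reports \textbf{Fail}: maintaining $(\mathcal{V},\tau)$ via \Cref{thm:SCCinDecrGraph}, running the $\delta$-restricted $\mathcal{SSSP}$ structures $\mathcal{A}_s$ of \Cref{def:SSSPGeneric}, resampling centers, and the bookkeeping updates to $G'$. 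Each such subprocedure call is issued with success parameter $\zeta=(c+2)\log n$, hence fails with probability at most $e^{-\zeta}\le n^{-(c+2)}$ \emph{conditioned on everything that happened before it}, so it suffices to argue that the algorithm issues only polynomially many such calls.

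Next I would bound the number of calls. In \Cref{alg:init} there are $\lceil\lg\delta\rceil+1=O(\log n)$ outer iterations (recall $\delta\le Wn=\poly(n)$), and in each one $\textsc{Partition}(\cdot)$ is invoked at most once per set of the current partition $\mathcal{V}$, of which there are at most $n$; this gives $O(n\log n)$ calls. In \Cref{alg:diameterVio} each while-loop iteration issues exactly one $\textsc{OutSeparator}$ call and one $\textsc{Partition}$ call, and — this is the one point needing care — it strictly refines the SCC partition of the decremental graph $G'$. Indeed, in the branch of \Cref{lne:DelSep} the set $C$ returned is the out-ball of $t$ in $G'[X]$ of a random radius strictly below $\frac{|X|\delta}{2n}$; meanwhile the while-loop condition $\widetilde{\mathbf{dist}}(t,\textsc{Center}(X))>\frac{|X|\delta}{n}$, together with the $2$-approximation guarantee of \Cref{def:SSSPGeneric} and the fact that $\mathcal{A}_{\textsc{Center}(X)}$ is run on a supergraph of $G'[X]$ (see \Cref{rmk:LaminarFamilyOfGraphs}), forces $\mathbf{dist}_{G'[X]}(t,\textsc{Center}(X))>\frac{|X|\delta}{2n}$; hence $\textsc{Center}(X)\notin C$, so $\emptyset\neq C\subsetneq X$, and since $E_{Sep}=E(C,\overline C)$ removing $E_{Sep}$ leaves no edge from $C$ to $X\setminus C$ in $G'$, so the SCC $X$ breaks into at least two SCCs (the branch of \Cref{lne:DelSepRev} is symmetric, using the reverse graph). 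Since $G'$ only ever loses edges, its SCCs only refine, so the number of such refinements over the whole run — hence the number of while-loop iterations — is at most $n-1$. Altogether the algorithm issues $O(n\log n)$ randomized subprocedure calls.

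Finally I would assemble the union bound: the probability that some call reports \textbf{Fail} is at most $O(n\log n)\cdot e^{-\zeta}=O(n\log n)\cdot n^{-(c+2)}$. Splitting this budget between the $O(n\log n)$ initialization calls and the at most $2(n-1)$ maintenance calls, and using the spare units in the exponent $c+2$ (equivalently the absolute constant hidden in $\zeta=\Theta(c\log n)$) to absorb the polynomial factor counting the number of calls, each part is at most $n^{-c-1}$, giving the claimed bound $2n^{-c-1}$. The only step that is not completely routine is the combinatorial claim that every maintenance iteration genuinely splits an SCC of $G'$ (so that there are at most $n-1$ of them), and I expect this to be the main — though still mild — obstacle; no adaptivity subtlety arises because the event being bounded depends only on the internal coin flips of $\textsc{OutSeparator}$ and $\textsc{Partition}$, whose per-call failure bounds from \Cref{lma:sepIntro} and \Cref{lma:partitionFull} hold unconditionally.
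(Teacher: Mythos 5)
Your proof is correct and takes essentially the same approach as the paper's: observe that $\textbf{Fail}$ can only come from the randomized subprocedures $\textsc{OutSeparator}(\cdot)$ and $\textsc{Partition}(\cdot)$, bound the number of such calls by a polynomial in $n$ (using that each while-loop iteration refines $\mathcal{V}$, hence at most $n-1$ iterations), then union-bound with the per-call failure probability $n^{-(c+2)}$. In fact you are slightly more careful than the paper, which tacitly omits the $O(n\log n)$ calls to $\textsc{Partition}(\cdot)$ issued during \Cref{alg:init} and gives no justification that the while-loop iterations genuinely split SCCs; your treatment of both points is a welcome tightening, and the extra $\log n$ factor you incur is harmlessly absorbed by the slack in the exponent $c+2$.
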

\begin{proof}
We point out that we can only get a $\textbf{Fail}$ due to procedures $\textsc{OutSeparator}(\cdot)$ and $\textsc{Partition}(\cdot)$. 

Since each separator found in the while-loop in \Cref{lne:loop} refines $\mathcal{V}$, we can bound the number of while-loop iterations in the course of the algorithm by $n-1$. Thus, we make at most $n-1$ calls to procedures $\textsc{OutSeparator}(\cdot)$ and $\textsc{Partition}(\cdot)$. Each of the former calls returns $\textbf{Fail}$ with probability at most $n^{-(c+2)}$ and each of the latter with probability at most $n^{-(c+2)}$. 

Taking a union bound over all events, the lemma follows.
\end{proof}

Finally, let us put everything together and prove our main theorem.

\atoBundle*

\begin{proof}
We maintain a collection of $40 c \log n$ independent $\mathcal{ATO}(G, 2\alpha\delta)$ instances 
\[
(\mathcal{V}_1, \tau_1), (\mathcal{V}_2, \tau_2), \dots ,(\mathcal{V}_{40c \log n}, \tau_{40c \log n})
\]
as described earlier in this section and let $\mathcal{S}$ denote the collection of these instances.

The total running time to maintain these $\mathcal{ATO}(G, 2\alpha\delta)$'s is clearly bounded by the term given in equation \ref{eq:totalRunningTime} by \Cref{lma:runningTime}.

Now, since by \Cref{lma:correctness}, each $\mathcal{ATO}(G, 2\alpha\delta)$ has expected quality $q = \frac{(c+2)20000 n \log^5 n}{\delta}$, we have by Markov's inequality and a simple Chernoff bound, that for each shortest path $\pi_{s,t}$ in $G$ at some stage $t$, we have that there exists an $i$, such that $\mathcal{T}(\pi_{s,t}, \tau_i) \leq 2q$ with probability at least $1 - e^{- 40 c \log n/8} = 1 - n^{-5 c}$. Since $c > 1$, we have that the probability that any shortest-path at any stage fails, is at most $1-n^{-c}/2$ by union bounding over at most $n^2$ stages and at most $n^2$ shortest-paths, for $n$ large enough. Moreover, the total probability that any instance returns $\textbf{Fail}$ is at most $n^{-c}/2$ by \Cref{lma:successRate} and a union bound over the instances. Thus, we have established that with probability at least $1-n^{-c}$, $\mathcal{S}$ forms an $\mathcal{ATO}(G,2\alpha \delta)$-bundle of quality $2q$ as defined in \Cref{def:ATObundle}.
\end{proof}

\section{The SSSP Algorithm}
\label{sec:SSSPDense}

We now give a proof of  \Cref{thm:TopologicalOrderMaintenanceOverview} which implies our main result,  \Cref{thm:ContributionSSSPResult}, as a corollary. Our proof is in two steps: we first show how to implement an $\alpha$-approximate $\delta$-restricted $\mathcal{SSSP}$ as described in \Cref{def:SSSPGeneric} given access to approximate topological orders. We then show how to bootstrap the reductions to maintains different $\mathcal{SSSP}$ data structures to cover all depths.

\subsection{\texorpdfstring{$\alpha$-approximate $\delta$-restricted $\mathcal{SSSP}$}{SSSP} via an ATO}
\label{subsec:alphaDeltaSSSPReduction}

The main objective of this section is to prove the following theorem which gives a reduction from $(1+\epsilon)$-approximate $\delta$-restricted $\mathcal{SSSP}$ to approximate topological orders. In this theorem, we only assume access to an $\mathcal{ATO}(G, \eta_{diam})$ denoted by $(\mathcal{V}, \tau)$ where we assess the quality individually for each path. If the quality for a certain tuple is below a threshold $q$, we show how to exploit the approximate topological order to maintain the distance estimate for the tuple efficiently, otherwise we provide no guarantees.

\begin{restatable}{theorem}{ssspSimple}
\label{thm:SSSPEfficient}
Given $G=(V,E,w)$, a decremental weighted digraph, a source $r \in V$, a depth threshold $\delta > 0$, a quality parameter $q$, an approximation parameter $\epsilon > 0$, and access to $(\mathcal{V}, \tau)$ an $\mathcal{ATO}(G, \eta_{diam})$.

Then, there exists a deterministic data structure that maintains a distance estimate $\widetilde{\mathbf{dist}}(r,v)$ for every vertex $v \in V$ such that at each stage of $G$, $\mathbf{dist}_G(r,v) \leq \widetilde{\mathbf{dist}}(r,v)$ and if $\mathbf{dist}_G(r,v) \leq \delta$ and $\mathcal{T}(\pi_{r,v},\tau) \leq q \cdot \delta + n$, then 
\[
\widetilde{\mathbf{dist}}(r,v) \leq \mathbf{dist}_G(r,v) + \eta_{diam} + \epsilon\delta.
\]
The total time required by this structure is 
\[
O(n\delta q \log n/\epsilon + n^2 \log n)
\]
\end{restatable}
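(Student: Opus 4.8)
The plan is to generalize the DAG algorithm sketched in \Cref{subsec:DAGalgo} to the general-graph setting, replacing the exact topological order by the $\mathcal{ATO}$ $(\mathcal{V},\tau)$ and replacing the graph $G$ by its contraction $G/\mathcal{V}$. First I would contract every $X\in\mathcal{V}$ into a single node; since property \ref{prop:ContractLittle} of \Cref{def:ATO} guarantees $\mathbf{diam}(X,G)\le |X|\eta_{diam}/n$, summing over $\mathcal{V}$ shows that distances in $G/\mathcal{V}$ underestimate distances in $G$ by an additive error of at most $\sum_{X}|X|\eta_{diam}/n=\eta_{diam}$, exactly as argued after \Cref{def:ATO}. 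It therefore suffices to maintain, on the decremental graph $G/\mathcal{V}$, a distance estimate from (the node of) $r$ that is additively $\epsilon\delta$-accurate for every node $v$ with $\mathbf{dist}(r,v)\le\delta$ and $\mathcal{T}(\pi_{r,v},\tau)\le q\delta+n$; combining the two error terms gives the stated bound $\mathbf{dist}_G(r,v)+\eta_{diam}+\epsilon\delta$. One subtlety: $\mathcal{V}$ decomposes over time (property \ref{prop:VUpdate}), so $G/\mathcal{V}$ undergoes node splits in addition to edge deletions; by property \ref{prop:tauUpdate} a split only refines the $\tau$-intervals, so each node always occupies a contiguous interval of $[0,n)$ and the data structure below only ever sees distances (weakly) increase.

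Next I would run a modified ES-tree on $G/\mathcal{V}$ rooted at $r$, to depth $(1+O(\epsilon))\delta$. Following the DAG construction, for each node $v$ I partition its in-neighborhood into buckets $B_j(v)=\{u:2^j\le \tau(X^v)-\tau(X^u)<2^{j+1}\}$ for $0\le j\le \lg n$ (together with a bucket for non-positive $\tau$-difference, which is handled at full rate, i.e.\ scanned every distance-value as in the classic ES-tree — there are at most $\mathcal{T}'(\pi)\le \tfrac12(\mathcal{T}(\pi)-|\tau(X^r)-\tau(X^v)|)$ such backward edges on any relevant path, contributing bounded extra work). As in \Cref{subsec:DAGalgo}, an edge from $u\in B_j(v)$ is only re-scanned when $\widetilde{\mathbf{dist}}(r,v)$ is divisible by $\lceil 2^{j}\epsilon\delta q/n\rceil$; scanning it less often means that at the moment of a scan the tentative value may be stale by up to $2^{j}\epsilon\delta q/n$, so each such edge on a shortest path contributes at most that much additive error. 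Since $u\in B_j(v)$ forces $\tau(X^v)-\tau(X^u)\ge 2^j$, each unit of $\tau$-difference along the path pays for at most $n/(\epsilon\delta q)$ units of error, and property \ref{prop:TauTotal}/the hypothesis $\mathcal{T}(\pi_{r,v},\tau)\le q\delta+n$ caps the total $\tau$-difference — the positive part alone being $\le q\delta+n$ — so the accumulated additive error is $O(\epsilon\delta)$; rescaling $\epsilon$ by a constant gives exactly $\epsilon\delta$. The overestimate property $\mathbf{dist}_G(r,v)\le\widetilde{\mathbf{dist}}(r,v)$ is immediate because the tree is always a subgraph and we lifted back through the contraction.

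For the running time: $|B_j(v)|\le 2^{j+1}$ by definition, and each edge in $B_j(v)$ is scanned $O(\delta/\lceil 2^{j}\epsilon\delta q/n\rceil)=O(n/(2^{j}\epsilon q)+1)$ times over the whole update sequence, so the work for node $v$ is $\sum_{j}|B_j(v)|\cdot O(n/(2^j\epsilon q)+1)=O(n\lg n/(\epsilon q)+n)$; summing over the $\le n$ nodes, and adding the $O(n^2\log n)$ for the non-positive-difference buckets (each node scanned $O(\delta)$ times but only $O(n)$ edges, or more carefully $\sum_v \deg^{in}(v)$ bounded by the size of $G/\mathcal{V}$ which is $O(n^2)$ since it has $O(n)$ nodes) and $O(n^2)$ for bookkeeping of node splits, yields total time $O(n\delta q\log n/\epsilon+n^2\log n)$. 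Wait — I need to recheck the first term: the per-node work $O(n\log n/(\epsilon q)+n)$ summed over $n$ nodes is $O(n^2\log n/(\epsilon q)+n^2)$, which is \emph{not} $n\delta q/\epsilon$; the $n\delta q$ factor instead arises because the correct re-scan divisor is $\lceil 2^{j}\epsilon\delta/(nq^{-1}\cdot\text{something})\rceil$ — so the genuinely delicate point is calibrating the divisor so that (i) the per-edge error is $2^{j+1}\cdot\epsilon\delta\cdot(\text{charge rate})$ with total charge $\le \mathcal{T}(\pi)\le q\delta+n$ and simultaneously (ii) the scan count sums to $O(n\delta q/\epsilon)$ over all nodes. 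I expect the right divisor to be $\lceil 2^{j}\epsilon\delta/(nq)\rceil$ so that per-edge error is $2^{j}\epsilon\delta/(nq)$, a backward jump of $2^j$ buys $1/(\epsilon\delta/(q))$ error units hence total error $\le (q\delta+n)\cdot\epsilon\delta/(nq\cdot 2^j)\cdot 2^j$-style telescoping to $O(\epsilon\delta)$, while scan count per edge is $\delta\cdot nq/(2^j\epsilon\delta)=nq/(2^j\epsilon)$, times $|B_j(v)|\le 2^{j+1}$, times $O(\lg n)$ values of $j$, times $n$ nodes $=O(n^2 q\log n/\epsilon)$ — still off by $n/\delta$. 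The main obstacle, then, is precisely this bookkeeping: getting the bucketing parameters, the depth-restriction constant, and the interaction with node splits to line up so that the error telescopes against the $\mathcal{ATO}$ quality bound while the scan budget matches $O(n\delta q\log n/\epsilon + n^2\log n)$; the resolution is that each ``level'' $j$ edge, once its two endpoints lie in the same interval-block of length $2^{j+1}$, only ever sees distance estimates in a window of width $O(2^{j+1})$ (not $\delta$), which is where the missing $\delta$ gets replaced by $2^{j}$ and the sum collapses to $O(n\delta q\log n/\epsilon)$. I would carry that window argument out carefully; everything else is a routine adaptation of the ES-tree analysis from \Cref{sec:esTree} and the DAG calculation from \Cref{subsec:DAGalgo}.
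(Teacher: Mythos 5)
Your outline follows the paper's plan — contract the $\mathcal{ATO}$ classes, run a bucketed ES-tree on $G/\mathcal{V}$, re-scan bucket $j$ at a coarser rate so the per-edge error telescopes against $\mathcal{T}$, and account for the $\eta_{diam}$ contraction error separately. But the one genuinely delicate step, calibrating the re-scan schedule, is the one you never pin down, and both your candidate divisors plus the proposed ``window'' fix fail. The divisor the paper uses is $\lceil 2^j\epsilon/q\rceil$, with no $\delta$ or $n$ inside it at all; those enter only through the run depth $\delta_{max}=\lceil(1+\epsilon)\delta+\epsilon n/q\rceil$. A quick sanity check both your guesses fail: in the DAG setting of \Cref{subsec:DAGalgo}, where $\mathcal{T}(\pi_{s,t},\tau)\le n$ so the relevant quality is $q=n/\delta$, the correct divisor $\lceil 2^j\epsilon/q\rceil$ reduces to $\lceil 2^j\epsilon\delta/n\rceil$, which is exactly the DAG schedule; $\lceil 2^j\epsilon\delta q/n\rceil$ reduces instead to $\lceil 2^j\epsilon\rceil$ and $\lceil 2^j\epsilon\delta/(nq)\rceil$ to $\lceil 2^j\epsilon\delta^2/n^2\rceil$, neither of which matches. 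With the correct divisor the calculation closes immediately: bucket $j$ at each node is rescanned $O(\delta_{max}q/(2^j\epsilon))$ times, the bucket has size $O(2^j)$, the $2^j$'s cancel, and summing over $O(\log n)$ buckets and $n$ nodes gives $O(n\delta q\log n/\epsilon + n^2\log n)$; the per-edge error is $O(2^j\epsilon/q)$ against a $\tau$-gap of at least $2^j$, so the error per unit of $\tau$-gap is $O(\epsilon/q)$, and $\mathcal{T}(\pi_{r,v},\tau)\le q\delta+n$ caps the total at $O(\epsilon\delta+\epsilon n/q)$, with the second term absorbed by $\delta_{max}$.

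Your proposed rescue — that once the endpoints of a level-$j$ edge share an interval block of length $2^{j+1}$, the edge ``only ever sees distance estimates in a window of width $O(2^{j+1})$'' — conflates two unrelated orderings. Scan count is driven by how often $\widetilde{\mathbf{dist}}(r,v)$ crosses a multiple of the threshold, and $\widetilde{\mathbf{dist}}(r,v)$ sweeps the full range $[0,\delta_{max}]$ over the update sequence; being close in $\tau$ (approximate topological position) says nothing about being close in distance from $r$, so there is no such window. Separately, your plan to scan ``backward'' in-neighbors (those $u$ with $\tau(X^u)>\tau(X^v)$) at full rate also breaks the runtime: a node with $\Theta(n)$ backward in-neighbors would cost $\Theta(n\delta)$, so $\Theta(n^2\delta)$ overall, which exceeds $O(n\delta q\log n/\epsilon)$ whenever $q\log n/\epsilon=o(n)$. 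The paper avoids both issues at once by bucketing \emph{all} in-neighbors by the symmetric interval-gap $\chi(X,Y,\tau)$ of Equation~\eqref{eq:chi}, which is nonnegative in both directions and lower-bounds $|\tau(X^u)-\tau(X^v)|$, so a single schedule handles forward and backward edges alike.
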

\begin{restatable}{remark}{ssspRemark}
\label{rmk:ssspRemark}
Technically, we require the approximate topological order $(\mathcal{V}, \tau)$ to encode changes efficiently and pass them the SSSP data structure. Since the SSSP data structure is updated only through delete operations, we require, that with each edge update, the data structure receives changes to $(\mathcal{V}, \tau)$ since the last stage. More precisely, we require that the user passes a set of pointers to each set $Y$ that occurred in $\mathcal{V}$ at the previous stage (denoted $\mathcal{V}^{OLD}$), but did not occur in $\mathcal{V}$ at the current stage (denoted $\mathcal{V}^{NEW}$), i.e. each $Y \in \mathcal{V}^{OLD} \setminus \mathcal{V}^{NEW}$. Additionally, we require with each such $Y$ that was split into subsets $Y_1, Y_2, \dots, Y_k \in \mathcal{V}^{NEW}$ that form a partition of $Y$, pointers to each new element $Y_i$. We further require worst-case constant query time of $\tau$, and each element $Y \in \mathcal{V}$ (for any version) can be queried for its size in constant time and returns its vertex set in time $O(|Y|)$. For the rest of the paper, this detail will be concealed in order to improve readability.
\end{restatable}

Before we show how to implement such a data structure, let us emphasize that the above theorem directly implies \Cref{thm:SSSPEfficientIntro} that we introduced in the overview. It can further also be used to derive the following corollary which is at the heart of our proof in the next section. Its proof is rather straight-forward and we therefore refer the reader to \cite{nearOptDenseSSSP}.

\begin{restatable}{corollary}{corSSSPrestrictedFromATO}
\label{cor:SSSPtoATO}
Given $G=(V,E,w)$, a decremental weighted digraph, a source $r \in V$, a depth threshold $\delta > 0$, an approximation parameter $\epsilon > 0$, and access to a collection $\mathcal{S} = \{ S_i \}_{1 \leq i \leq \mu}$ for $\mu = \lfloor\lg \delta\rfloor-1$ where each $\mathcal{S}_i$ forms an $\mathcal{ATO}(G, 2^i, 40 c\log n)$-bundle of quality $q_i$. Then, there exists an implementation for $(1+\epsilon)$-approximate $\delta$-restricted $\mathcal{SSSP}$ where $T_{SSSP}(n,m,\delta, \epsilon) = O(n
(\max_{1 \leq i \leq \mu} \{\frac{\delta q_i}{2^i}\} + n) \log^3 n / \epsilon
^2)$.
\end{restatable}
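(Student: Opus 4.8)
The plan is to reduce the construction to \Cref{thm:SSSPEfficient} by a dyadic decomposition over distance scales, where the one point requiring care is that \emph{all} scales will be served by the single, finest bundle $\mathcal{S}_1$. We may assume $0<\epsilon\le 1$ (rescaling at the end) and $\delta>4/\epsilon$ (which in particular forces $\mu\ge 1$, so $\mathcal{S}_1$ is in the collection): if instead $\delta\le 4/\epsilon$, a plain exact ES-tree (\Cref{thm:EStree}) from $r$ to depth $\delta$ on $G$ and on the reverse graph $\overleftarrow G$ already answers all $\delta$-restricted queries, at cost $O(m\delta)=O(m/\epsilon)\le O(n^2/\epsilon)$. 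Assuming $\delta>4/\epsilon$, I would still maintain two exact ES-trees to depth $4/\epsilon$ (on $G$ and $\overleftarrow G$, cost $O(n^2/\epsilon)$) to handle all vertices within distance $4/\epsilon$ of $r$ exactly.

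For the remaining range, let $\mathcal{S}_1=\{(\mathcal{V}_a,\tau_a)\}_{a=1}^{40c\log n}$ be the given $\mathcal{ATO}(G,2,40c\log n)$-bundle of quality $q_1$; note it is also an $\mathcal{ATO}(\overleftarrow G,2,40c\log n)$-bundle, since weak diameters and the quantity $\mathcal{T}(\pi_{s,t},\cdot)$ are invariant under reversing all edges of $G$. For every dyadic $\delta_\ell:=2^\ell$ with $4/\epsilon\le 2^\ell\le 2\delta$ and every index $a$, I would instantiate the deterministic structure of \Cref{thm:SSSPEfficient} on $G$ (and, symmetrically, on $\overleftarrow G$) with source $r$, depth threshold $\delta_\ell$, quality parameter $q:=q_1$, approximation parameter $\epsilon$, and topological order $(\mathcal{V}_a,\tau_a)$, feeding it the incremental refinements of $(\mathcal{V}_a,\tau_a)$ in the format of \Cref{rmk:ssspRemark} (which the bundle exposes). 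The reported estimate $\widetilde{\mathbf{dist}}(r,v)$ is the minimum of the ES-tree estimate and of all $\widetilde{\mathbf{dist}}_{\ell,a}(r,v)$, maintained explicitly by a per-vertex min-heap over the $O(\log\delta\cdot\log n)=O(\log^2 n)$ candidates (using $\log\delta=O(\log n)$, cf.\ \Cref{sec:reductionWeights}), which gives $O(1)$ query time; the $O(\log\log n)$ overhead per candidate change is dominated by the running times of the underlying structures. The estimates $\widetilde{\mathbf{dist}}(v,r)$ required by \Cref{def:SSSPGeneric} come identically from the $\overleftarrow G$-copies.

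Correctness has two steps. Every candidate over-estimates $\mathbf{dist}_G(r,v)$ (exactly for the ES-tree; by the first guarantee of \Cref{thm:SSSPEfficient} otherwise), hence so does the minimum. For the approximation, fix $v$ with $\mathbf{dist}_G(r,v)\le\delta$: if $\mathbf{dist}_G(r,v)\le 4/\epsilon$ the ES-tree is exact; otherwise pick $\ell$ with $\mathbf{dist}_G(r,v)\in[2^{\ell-1},2^\ell)$, so $4/\epsilon\le 2^\ell\le 2\delta$ and scale $\ell$ is present. By \Cref{def:ATObundle} there is an $a$ with $\mathcal{T}(\pi_{r,v},\tau_a)\le q_1\mathbf{dist}_G(r,v)+n\le q_1\delta_\ell+n$, so \Cref{thm:SSSPEfficient} (with $\eta_{diam}=2$) yields $\widetilde{\mathbf{dist}}_{\ell,a}(r,v)\le \mathbf{dist}_G(r,v)+2+\epsilon 2^\ell$. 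Since $2^{\ell-1}\le\mathbf{dist}_G(r,v)$ and $2^\ell\ge 4/\epsilon$, we have $2+\epsilon 2^\ell\le \epsilon 2^{\ell-1}+2\epsilon 2^{\ell-1}=3\epsilon\,2^{\ell-1}\le 3\epsilon\,\mathbf{dist}_G(r,v)$, i.e.\ a $(1+3\epsilon)$-estimate; running the whole construction with $\epsilon/3$ in place of $\epsilon$ gives the $(1+\epsilon)$-approximate $\delta$-restricted $\mathcal{SSSP}$ of \Cref{def:SSSPGeneric}.

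For the running time, the ES-trees cost $O(n^2/\epsilon)$. Summing \Cref{thm:SSSPEfficient}'s bound $O(n\delta_\ell q_1\log n/\epsilon+n^2\log n)$ over the $O(\log(\epsilon\delta))=O(\log n)$ scales, over the $40c\log n$ indices $a$, and over $G,\overleftarrow G$, and crucially using the \emph{geometric} identity $\sum_\ell \delta_\ell=\sum_\ell 2^\ell=O(\delta)$, the first terms total $O(n q_1\delta\log^2 n/\epsilon)$ and the second total $O(n^2\log^3 n)$. Since $q_1\delta=2\cdot\frac{\delta q_1}{2^1}\le 2\max_{1\le i\le\mu}\frac{\delta q_i}{2^i}$, the grand total is $O\!\left(n\bigl(\max_{1\le i\le\mu}\{\delta q_i/2^i\}+n\bigr)\log^3 n/\epsilon^2\right)$, as claimed (the $\epsilon/3$ rescaling affects only constants). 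The one genuinely non-routine point — and the reason the bound is stated in terms of $\delta q_i/2^i$ — is the decision to serve \emph{every} dyadic depth threshold with the same, finest bundle $\mathcal{S}_1$: if one instead matched each scale $2^\ell$ with its ``own'' bundle $\mathcal{S}_{i(\ell)}$ (the largest $i$ with $2^i\lesssim\epsilon 2^\ell$), the costs would sum to $\sum_\ell 2^\ell q_{i(\ell)}$, which does not telescope; fixing the bundle and letting the depth thresholds form the geometric series is what collapses the sum to a single $\delta q_1$ term. Everything else is the familiar ``minimum over scales'' bookkeeping.
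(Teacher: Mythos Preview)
The paper does not actually prove this corollary here --- it calls the argument ``rather straight-forward'' and defers to \cite{nearOptDenseSSSP} --- so there is no in-paper proof to compare against. Your argument is correct and establishes the stated bound.

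Your route is a little unusual relative to what the hypothesis invites: rather than matching each dyadic scale $\delta_\ell$ to its ``own'' bundle $\mathcal{S}_{i(\ell)}$ with $2^{i(\ell)}\approx\epsilon\delta_\ell$, you feed the single finest bundle $\mathcal{S}_1$ into \Cref{thm:SSSPEfficient} at every scale and let the geometric sum $\sum_\ell\delta_\ell=O(\delta)$ collapse the first cost term to $O(nq_1\delta\log^2 n/\epsilon)$, which is trivially at most $2n\max_i\{\delta q_i/2^i\}\log^2 n/\epsilon$. This is clean and suffices. Your closing remark slightly overstates the case against the matched approach, however: that route \emph{does} yield a bound, namely $O\bigl(n\max_i\{2^i q_i\}\log^3 n/\epsilon^2\bigr)$, obtained by rewriting the scale-$\ell$ cost as $\Theta(n\cdot 2^{i(\ell)}q_{i(\ell)}\log n/\epsilon^2)$ and bounding each term by the maximum over $i$. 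The two bounds are incomparable in general --- yours is tighter when $q_i$ grows with $i$, the matched one when $q_i\sim n/2^i$, which is precisely the regime exercised in \Cref{thm:TopologicalOrderMaintenanceBootstrap}. Either way, your proof of the corollary as stated is complete.
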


Let us now describe the implementation of a data structure $\mathcal{E}_r$ that stipulates the guarantees given in \Cref{thm:SSSPEfficient}. 
Since the proof that this is indeed a valid implementation of \Cref{thm:SSSPEfficient} is quite similar to the proof sketch we give in \Cref{sec:overview}, we refer the reader to the full version of the article \cite{nearOptDenseSSSP}.

\paragraph{Initialization.} Throughout the algorithm, we define $\delta_{max} = \lceil (1+\epsilon)\delta + \epsilon n/q \rceil$ and define the complete graph $\mathcal{H} = (\mathcal{V}, \mathcal{V}^2, w)$\footnote{Here, we are again slightly abusing notation by referring to $\mathcal{V}$ as partition and node set, however, context and the fact that this implicitly refers to a one-to-one correspondence between partition sets and nodes ensures that no ambiguity arises.} with weight function 
\[
w(X,Y) = \inf\{ w(x,y) | (x,y) \in E(X,Y)\}
\]
for $X, Y \in \mathcal{V}$. We use the convention that the infimum of the empty set is $\infty$. We use $\mathcal{H}$ to avoid dealing explicitly with $G/ \mathcal{V}$ which is a multi-graph, instead in $\mathcal{H}$ we use the same node set with simple edges as the infimum over weights of the multi-edges (even if there is no such edge). 

We use a standard min-heap data structure\footnote{See for example \cite{cormen2009introduction}.} $Q_{X,Y}$ over the set $E(X,Y)$ for each ordered pair $(X,Y)$ to maintain the weight $w(X,Y)$. We henceforth denote by $Q_{X,Y}.\textsc{MinValue}$ the value $w(X,Y)$ and by $Q_{X,Y}.\textsc{MinElem}$ a corresponding edge $(x,y)$ with $x \in X, y \in Y$, and use the convention of denoting the node in $\mathcal{V}$ that contains vertex $x \in V$ by $X^x$. We initialize the data structure $\mathcal{E}_r$ by constructing $\mathcal{H}$ and by running Dijkstra's algorithm\footnote{See \cite{cormen2009introduction} for an efficient implementation.} from $X^r$ on $\mathcal{H}$. We then initialize a distance estimate $\widetilde{\mathbf{dist}}(X^r, Y)$ for each $Y \in \mathcal{V}$ to $\mathbf{dist}_{\mathcal{H}}(X^r,Y)$. If we have $\widetilde{\mathbf{dist}}(X^r,Y) > \delta_{max}$ at any point in the algorithm, we set it to $\infty$. Further, we also maintain the distance estimates $\widetilde{\mathbf{dist}}(r, u)$ for each $u \in V$ equal to $\widetilde{\mathbf{dist}}(X^r, X^u) + \eta_{diam}$, i.e. every time we increase $\widetilde{\mathbf{dist}}(X^r, X^u)$, we also increase $u$'s distance estimate\footnote{We will show that $\widetilde{\mathbf{dist}}(X^r, X^u)$ is a monotonically increasing value over time.}. This allows us to henceforth focus on the distance estimates of nodes which is easier to describe. We also store the corresponding shortest-path tree $T$ truncated at distance $\delta_{max}$ that serves as a certificate of the distance estimates.

Finally, we partition for each node $X \in \mathcal{V}$, the in-neighbors set in $\mathcal{H}$ of $X$ into different buckets based on their $\tau$-distance: for each $X$, we initialize bucket $B_{-1}(X) = \{X\}$ and for $0 \leq j \leq \lg n$ we initialize the bucket $B_j(X)$ to 
\[ 
\{ 2^{j} \leq \chi(X,Y, \tau) < 2^{j+1} | Y \in N_{\mathcal{H}}^{in}(X), X \neq Y\}
\]
where we define 
\begin{equation} \label{eq:chi}
 \chi(X ,Y, \tau) \stackrel{\text{def}}{=}          \begin{cases}
                    \tau(Y) - (\tau(X) + |X| - 1) & \text{if } \tau(X) < \tau(Y) \\
                    \chi(Y, X, \tau) & \text{otherwise}
                 \end{cases}
\end{equation}
that is $\chi(\cdot)$ is similar to $\mathcal{T}(\cdot)$ (in fact $ \chi(X ,Y, \tau) \leq \mathcal{T}(X, Y, \tau)$), however, as $\tau$ maps nodes $X$ and $Y$ to disjoint intervals, $\mathcal{T}(\cdot)$ measures the distance between the starting points of the intervals, while $\chi(\cdot)$ measures the distance between the intervals (i.e. the closest endpoints of the intervals).

At any stage, we let $B_{\leq j}(X) = \bigcup_{j' \leq j} B_{j'}(X)$. We store each set $B_j(X)$ explicitly as a linked list, store for each $Y \in B_j(X)$ a pointer to the bucket, and maintain the buckets to partition the in-neighbors of each $X$. 

\paragraph{Handling Edge Deletions.} The edge deletion procedure takes two parameters: the edge to be deleted $(u,v)$ and a collection of tuples $U$ that encode refinements of $\mathcal{V}$ during this stage. To handle the update, we initialize a min-heap $Q = \emptyset$ that keeps track of the nodes in $\mathcal{H}$, that cannot be reached from $X^r$ in the truncated shortest-path tree $T$ (i.e. whose certificate for the current distance estimate was compromised).

We start our update procedure by processing updates to $(\mathcal{V}, \tau)$ (check the remark of \Cref{thm:SSSPEfficient} for a description of these updates encoded by $U$). For any node $X \in \mathcal{V}^{OLD} \setminus \mathcal{V}^{NEW}$, that was split into subsets $X_1, X_2, \dots, X_k \in V^{NEW}$ (i.e. for every tuple $(X, X_1, X_2, \dots, X_k) \in U$), we query for each $X_i$, its size. Then, we let the largest node $X_i$ \emph{inherit} the original node $X$ (that is the nodes are equal in our data structure at this stage although the partition sets are not), and create a new node $X_{i'}$ in $\mathcal{H}$ for other $i' \neq i$, and new heap structures $Q_{X_{i'}, Y}$ for every $Y \in \mathcal{V}$. Then for each $X_{i'}$, $i' \neq i$, we scan each edge $(x,y)$ in $E(X_{i'}, V \setminus X_{i'})$, remove it from the heap $Q_{X_i, X^y}$ and add it to the new heap $Q_{X_{i'}, X^y}$. We also initialize the distance estimates for each $X_{i'}$, $\widetilde{\mathbf{dist}}(X^r, X_{i'})$ to take the value $\widetilde{\mathbf{dist}}(X^r, X)$, also for $X_i$. We then find the edge $(w,x)$ in $T$ where $X = X^x$. Clearly, we now have that $X_{i'} = X^x$ for some ${i'}$ and we connect $X_{i'}$ in the tree $T$ since this edge is now a certificate for $X_{i'}$'s distance estimate. The rest of the nodes, i.e. the nodes $X_1, X_2, \dots, X_{{i'}-1}, X_{{i'}+1}, \dots, X_k$, we add to $Q$ since we do not have a certificate for them yet.

Finally, when \emph{all} the node splits for the current stage where processed, we update the buckets $B_j(Y)$ for all $j$ and $Y$ to \emph{almost} stipulate the initialization rules. We point out that all edges that have to be assigned to a different bucket have to be incident to $X_1, X_2, \dots, X_k$ by property \ref{prop:tauUpdate}. Then, for each $X_{i'}$ (also $X_i$), we compute $j$ to be the largest integer such that some number in $|X_{i'}|, |X_{i'}|+1, \dots, |X| - 1$ is divisible by $2^j$. Then, we update all nodes in $B_{\leq j+1}(X_{i'})$ by scanning and reassigning them, and similarly reassign $X_{i'}$ to a new bucket for each $Y$ where $X_{i'} \in B_{\leq j}(Y)$. 

Finally, when all node splits are processed, the node set of $\mathcal{H}$ reflects the current $\mathcal{V}$, and we can delete the edge $(u,v)$ from $\mathcal{H}$ by deleting it from the heap it is contained in. If $(u,v)$ was equal to $Q_{X^u, X^v}.\textsc{MinElem}$, and $(X^u, X^v) \in T$, we delete it from $T$ and insert $X^v$ into $Q$.

We then rebuild our certificate $T$: we take the node $Y$ from $Q$ with the smallest distance estimate $\widetilde{\mathbf{dist}}(X^r, Y)$ until $Q$ is empty or the smallest distance estimate $\infty$. Now, let $j$ be the largest integer such that $ \widetilde{\mathbf{dist}}(X^r, Y)$ is divisible by $\lceil 2^{j} \cdot \frac{\epsilon}{q} \rceil$. We then check if there exists a node $X \in B_{\leq j}(Y)$ such that
\[
    \widetilde{\mathbf{dist}}(X^r, X) + Q_{X,Y}.\textsc{MinValue} \leq \widetilde{\mathbf{dist}}(X^r, Y).
\]
In this case, the edge $(x,y) = Q_{X,Y}.\textsc{MinElem}$ serves as a certificate that the distance from $X^r$ to $Y$ is at most $\widetilde{\mathbf{dist}}(X^r, Y)$ and therefore we add $(x,y)$ to $T$. If there exists no such vertex $X$, then we increase the value $\widetilde{\mathbf{dist}}(X^r, Y)$ by one or to $\infty$ if it is currently at least $\delta_{max}$ and reinsert $Y$ and children $Z_1, Z_2, \dots, Z_k$ of $Y$ in $T$ into $Q$ (after deleting the edges $(Y, Z_i)$ from $T$). This completes the description of the algorithm. Again, we refer the reader interested in the proof of \Cref{thm:SSSPEfficient} to \cite{nearOptDenseSSSP}.

\subsection{Bootstrapping an Algorithm for Unrestricted Depth}
\label{subsec:bootstrapDense}

Next, let us prove the following theorem which show is a detailed version of \Cref{thm:TopologicalOrderMaintenanceOverview}. Combined with \Cref{cor:SSSPtoATO} (where we set the depth threshold parameter $\delta$ to $Wn$), this immediately implies our main result, \Cref{thm:ContributionSSSPResult}.

\begin{theorem}
\label{thm:TopologicalOrderMaintenanceBootstrap}
For any $0 \leq i \leq \lg(Wn)$, given a decremental digraph $G=(V,E,w)$, we can maintain a hierarchy $\mathcal{S} =\{\mathcal{S}_i\}_i$ where each $\mathcal{S}_i$ is a $\mathcal{ATO}(G, 2^i, 40 c \log n)$-bundle of expected quality $\tilde{O}(n/2^i)$. The algorithm runs in total expected update time $O(c^5 n^2 \log^{17} n \lg^{5}(Wn))$ against a non-adaptive adversary and is correct with probability at least $1-n
^{-c+2}$ for any failure probability parameter $c \geq 2$.
\end{theorem}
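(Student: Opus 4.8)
The plan is to prove \Cref{thm:TopologicalOrderMaintenanceBootstrap} by bootstrapping: I want to build $\mathcal{ATO}$-bundles level by level, using \Cref{lma:reductionATOtoSSSP} to turn an $\mathcal{SSSP}$ data structure of depth threshold $2^i$ into an $\mathcal{ATO}(G, 2^i, 40c\log n)$-bundle, and \Cref{cor:SSSPtoATO} (equivalently \Cref{thm:SSSPEfficient}) to turn the already-constructed bundles $\mathcal{S}_1,\dots,\mathcal{S}_{i-1}$ at smaller depth thresholds into the $\mathcal{SSSP}$ data structure needed to build $\mathcal{S}_i$. Concretely, I would set up an induction on $i$ from $1$ to $\lg(Wn)$ with the hypothesis: for all $j < i$ we have an $\mathcal{ATO}(G, 2^j, 40c\log n)$-bundle $\mathcal{S}_j$ of expected quality $\tilde O(n/2^j)$ maintained within a per-level budget of roughly $O(c^5 n^2 \log^{17} n \cdot \lg^4(Wn))$ total update time. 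The base case $i=0$ (depth threshold $O(1)$) is handled directly by a classic ES-tree / the trivial reachability structure from \Cref{thm:EStree}, which gives an exact $\delta$-restricted $\mathcal{SSSP}$ for constant $\delta$ in time $O(m\delta) = \tilde O(n^2)$, and feeding that into \Cref{lma:reductionATOtoSSSP} yields $\mathcal{S}_0$.

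For the inductive step I would first assemble, via \Cref{cor:SSSPtoATO} applied with the collection $\{\mathcal{S}_j\}_{1\le j\le \mu}$ for $\mu = \lfloor\lg(2^i)\rfloor - 1 = i-1$, a $(1+\epsilon)$-approximate $2^i$-restricted $\mathcal{SSSP}$ data structure with running time $T_{SSSP}(n,m,2^i,\epsilon) = O\big(n\,(\max_{j\le i-1}\{2^i q_j / 2^j\} + n)\log^3 n/\epsilon^2\big)$; plugging in $q_j = \tilde O(n/2^j)$ gives $\max_j 2^i q_j/2^j = \tilde O(n \cdot \max_j 2^i/2^{2j})$, and since the worst $j$ here needs care — this is where I'd be most careful — I would instead note that \Cref{cor:SSSPtoATO} only ever invokes a bundle on a \emph{subgraph} whose vertex count is already commensurate with the depth threshold it is being queried at (cf. \Cref{rmk:LaminarFamilyOfGraphs}), so the effective bound is $\tilde O(n^2)$ per application. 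Then feeding this $\mathcal{SSSP}$ structure into \Cref{lma:reductionATOtoSSSP} with $\alpha = 1+\epsilon \le 2$ and $\delta = 2^{i-1}$ (so that $2\alpha\delta \approx 2^i$) produces the bundle $\mathcal{S}_i$ of quality $\tilde O(n/2^i)$, with total expected update time bounded by \eqref{eq:totalRunningTime}, i.e. a sum over $O(\lg 2^i)$ outer indices and $O(2^j c\log^2 n)$ inner indices of $T_{SSSP}(m_{j,k}, n/2^j, 2^i, 2)$ with $\sum_j m_{j,k} \le 16cm\log^2 n$; since each $T_{SSSP}$ term is $\tilde O((n/2^j)^2)$ and the subgraph sizes are laminar, the per-level total telescopes to $O(c^5 n^2 \log^{17} n \lg^4(Wn))$ after accounting for the $O(\log n)$ blow-up from the $40c\log n$ independent copies and the nested recursion depth.

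Summing over all $i \le \lg(Wn)$ levels multiplies the per-level bound by another $\lg(Wn)$, yielding the claimed $O(c^5 n^2 \log^{17} n \lg^5(Wn))$ total. For correctness, \Cref{lma:successRate} bounds each instance's failure probability by $2n^{-c-1}$, \Cref{lma:correctness} and \Cref{lma:ProbOfEi} give the quality guarantee, and a union bound over the $O(\lg(Wn))$ levels and $O(\log n)$ copies per level (each already at failure probability $n^{-c}$ inside \Cref{lma:reductionATOtoSSSP}) gives overall failure probability $n^{-c+2}$ for $c \ge 2$; I must also verify the induction is not circular by checking that \Cref{cor:SSSPtoATO} at depth threshold $2^i$ only consults bundles $\mathcal{S}_j$ with $j < i$, which holds since $\mu = i-1$. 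The main obstacle I anticipate is controlling the cumulative polylogarithmic overhead: a naive recursion of depth $\lg(Wn)$ where each level costs a $\log n$ factor over the previous would give an $n^{o(1)}$ — or worse — blow-up, so the real work (alluded to at the end of \Cref{sec:overview} as "a careful bootstrapping argument") is to observe that only a \emph{constant} number of recursive levels actually contribute the $\log n$ factor — because once $2^j$ drops below $\mathrm{poly}\log n$ a direct ES-tree is already within budget — which caps the overhead at a fixed $\mathrm{poly}\log n$ rather than $(\log n)^{\Theta(\log(Wn))}$, and this is the step I would spend the most effort making rigorous.
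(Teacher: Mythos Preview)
Your induction on the level $i$ alone has a genuine gap. When you invoke \Cref{lma:reductionATOtoSSSP} to build $\mathcal{S}_i$, the reduction needs a $2$-approximate $\delta$-restricted $\mathcal{SSSP}$ structure on many \emph{subgraphs} $H\subseteq G$ of size $n/2^j$, and the running-time formula \eqref{eq:totalRunningTime} sums roughly $2^j$ such instances at each scale $j$. But the bundles $\mathcal{S}_1,\dots,\mathcal{S}_{i-1}$ you have available are all bundles \emph{for the full graph $G$}. Using them via \Cref{rmk:LaminarFamilyOfGraphs} and \Cref{cor:SSSPtoATO} yields $T_{SSSP}=\tilde O(n^2)$ per instance (the $n$ in \Cref{cor:SSSPtoATO} is the size of the graph the bundle lives on), not $\tilde O((n/2^j)^2)$. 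Plugging that into \eqref{eq:totalRunningTime} gives $\sum_j 2^j\cdot \tilde O(n^2)=\tilde O(n^2\cdot 2^i)$ just for level $i$, which is far over budget. Your sentence ``each $T_{SSSP}$ term is $\tilde O((n/2^j)^2)$'' is precisely the unproved step: obtaining that bound requires ATO-bundles \emph{for $H$ itself}, which an induction on $i$ never constructs. Your proposed fix---bottom out with an ES-tree once $2^j$ is polylog---targets the wrong axis: it handles small \emph{depth threshold}, whereas the blow-up comes from small \emph{subgraphs} at arbitrarily large depth $2^i$, where an ES-tree costs $m_H\cdot 2^i$ and is useless.

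The paper closes this gap by inducting on $n$, not on $i$, and introducing a size threshold $\gamma=\Theta(\log\log(Wn))$. For \emph{large} subgraphs $H$ (at least $n/2^\gamma$ vertices) it does exactly what you propose: use the already-built $\mathcal{S}_{i-1}$ on $G$ via \Cref{rmk:LaminarFamilyOfGraphs}, paying the full $\tilde O(n^2)$ per instance---acceptable because there are only $O(2^\gamma)=\mathrm{polylog}$ such instances in \eqref{eq:totalRunningTime}. For \emph{small} subgraphs $H$ (fewer than $n/2^\gamma$ vertices) it invokes the induction hypothesis on the smaller vertex count to build a fresh ATO-bundle hierarchy $\mathcal{S}'$ \emph{for $H$}, costing $\tilde O(|V(H)|^2)$; the factor $2^{-2\gamma}$ gained in the squared vertex count is exactly what makes the small-graph contribution at most half the inductive budget, so the induction closes. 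Within a fixed $n$ the levels $i$ are still built in order (since the large-graph case uses $\mathcal{S}_{i-1}$), so the structure is really a double recursion: outer on $n$, inner on $i$. Your plan has only the inner one.
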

\begin{proof}
In order to prove our theorem formally, we need to fix the constants hidden by the big-$O$ notation in some of our statements. We therefore henceforth denote the constant hidden by \Cref{cor:SSSPtoATO} to maintain the SSSP data structure by $c_{SSSP}$, the constant hidden in \Cref{lma:reductionATOtoSSSP} to obtain an $\mathcal{ATO}$-bundle from an SSSP data structure by $c_{SSSP \to ATO}$ and finally, we denote the constant hidden in the theorem that we want to prove by $c_{Total}$ where we require that $c_{Total} \geq (c_{SSSP \to ATO} \cdot c_{SSSP})^2 \cdot 2^{49}$.

Without further due, let us prove the theorem by induction on $n$, the number of vertices in graph $G$. The base case with $n \leq 1$ is easily established since there are no paths in a graph of only one vertex thus we obtain arbitrarily good quality and the running time is a small constant (at least smaller than $c_{Total}$).

Let us now give the inductive step $n \mapsto n+1$: for each $i$, we iteratively construct an $\mathcal{ATO}(G, 2^i, 40 c \log n)$-bundle $\mathcal{S}_i$ as described in \Cref{lma:reductionATOtoSSSP}. Thus, we have to show how to implement a $2$-approximate $2^{i-2}$-restricted $\mathcal{SSSP}$ data structure required in the reduction (note that for $i \leq 2$ this task is trivial, so we omit handling it as special levels).

Note that each data structure $\mathcal{SSSP}$ that we are asked to implement for an $\mathcal{ATO}$-bundle $\mathcal{S}_i$ at level $i$ is run on a different graph $H \subseteq G$. To obtain an efficent algorithm, we will implement the data structure differently depending on the size of such $H$. If $H$ has at least $n/2^\gamma$ vertices for some $\gamma = \Theta( \lg \log Wn)$ that we fix later, we call $H$ a \emph{large} graph. Otherwise, we say $H$ is \emph{small}. Now, we implement $\mathcal{SSSP}$ as follows:
\begin{itemize}
    \item if $H$ is \emph{small}, then we use the induction hypothesis, find a $\mathcal{ATO}$-bundle $\mathcal{S}'$ for $H$ and and invoke \Cref{cor:SSSPtoATO} on $\mathcal{S}'$. We note that we need to set the parameter that controls the failure probability for $\mathcal{S}'$ to $c \cdot 4 \log n$ to ensure that it succeeds with high probability (this is since $\mathcal{S}'$ only succeeds with probability polynomial in $|V(H)|$ which might be very small).
    \item if $H$ is \emph{large}, we exploit \Cref{rmk:LaminarFamilyOfGraphs} which states that when the reduction asks to maintain approximate distances on some graph $H \subseteq G$, it is sufficient to maintain distance estimates on any graph $F$ such that $H \subseteq F \subseteq G$ and in particular, it is ok to simply run on the entire graph $G$. Therefore, we simply use $\mathcal{S}_{i-1}$ in combination with \Cref{cor:SSSPtoATO} and maintain distances in $G$. 
\end{itemize}

Let us now analyze the total running time. We start by calculating the running time required by each level $i$ separately. For some fixed $i$, we have that by \Cref{lma:reductionATOtoSSSP} we have running time 
\begin{equation} \label{eq:totalCostBoundATO}
c_{SSSP \to ATO} \left(\sum_{j=0}^{\lceil\lg 2^{i-2} \rceil} \; \sum_{k=0}^{2^{j+3} c \log^2 n} T_{SSSP}(m_{j,k}, n/2^{j}, 2^{i-2}, 2) + n^2 \log^3 n\right)
\end{equation}
where $\sum_j m_{j,k} \leq 16c \cdot m \log^2 n$ for all $k$, to maintain $\mathcal{S}_i$.

Let us analyze the terms $T_{SSSP}(m_{j,k}, n/2^{j}, 2^{i-2}, 2)$ based on whether $j < \gamma$ or not (i.e. depending on how the SSSP data structure was implemented):
\begin{itemize}
    \item if $j < \gamma$: then, by the induction hypothesis, we require time at most 
        \begin{align*}
    \begin{split}
    & c_{Total} (c 4 \log n)^5 \cdot \left(\frac{n}{2^j}\right)^2 \log^{17}(n) \lg^{5}(Wn) 2^{-2j} \\
    & < c_{Total} \cdot 2^8 \cdot c^5 n^2 \log^{22}(n) \lg^{5}(Wn)  2^{-j}2^{-\gamma}
    \end{split}
    \end{align*}
    to maintain the new $\mathcal{ATO}$-bundle $\mathcal{S}'$ on the graph $H$ and again by the induction hypothesis we have that the bundle has quality $\frac{(4\log n c +2)40000n \log^5 n 2^{-j}}{2^{i-3}} \log^3 n$. 
    
    Thus, maintaining SSSP on $H$ using $\mathcal{S}'$ as described in \Cref{cor:SSSPtoATO} can be done in time 
    \[
     c_{SSSP}((c \cdot 4\log n+2)40000 \cdot 2^3)(n^2 \log^8 n 2^{-2j}) < c_{SSSP} \cdot c \cdot 2^{22} (n^2 \log^9 n 2^{-j} 2^{-\gamma}).
    \]
    Combined, we obtain that we can implement the entire SSSP data structure with total running time at most
    \begin{align*}
    \begin{split}
    &c_{Total} \cdot 2^8 \cdot c^5 n^2 \log^{22}(n) \lg^{5}(Wn) 2^{-j}2^{-\gamma} + c_{SSSP} \cdot c \cdot 2^{22} (n^2 \log^9 n 2^{-j}2^{-\gamma})\\
    &\leq c_{Total} \cdot c_{SSSP} \cdot c^5 \cdot 2^{-j}2^{-\gamma} \cdot 2^{22} \cdot n^2 \log^{22}(n) \lg^{5}(Wn).
    \end{split}
    \end{align*}

    Combining these bounds and summing over all \emph{small} graph terms in equation \ref{eq:totalCostBoundATO}, we obtain that the total contribution is at most
    \begin{align*}
    \begin{split}
    & c_{SSSP \to ATO} \cdot \lg(nW) \cdot (2^3 c\log^2 n) \left(c_{Total} \cdot c_{SSSP} \cdot c^5 \cdot 2^{-\gamma} \cdot 2^{22} \cdot n^2 \log^{22}(n) \lg^{5}(Wn)\right) \\
    &\leq c^6 \cdot c_{SSSP \to ATO} \cdot c_{Total} \cdot c_{SSSP} \left(2^{22} \cdot n^2 \log^{24}(n) \lg^{6}(Wn) 2^{-\gamma}\right).
    \end{split}
    \end{align*}
    This completes the analysis of the small graph data structures.
    \item otherwise ($j \geq \gamma$): then, we run the SSSP structure from \Cref{cor:SSSPtoATO} on $\mathcal{S}_{i-1}$ which gives running time at most
    \[
    c_{SSSP}\left(n 2^i \cdot \frac{(c+2)40000n \log^5 n}{2^{i-3}} \log^3 n\right) \leq c_{SSSP} \cdot c ( 2^{20} \cdot n^2 \log^8 n)
    \] 
    where we used $c \geq 2$. Since there are at most $c_{SSSP \to ATO} \cdot c \cdot (2^3 \lg(nW)\log^2 n 2^{\gamma})$ terms for large graphs, where $j \geq \gamma$, we have that the total cost of all $\mathcal{SSSP}$ data structures on \emph{large} graphs is at most
    \begin{align*}
    \begin{split}
   &c_{SSSP \to ATO} \cdot c \cdot (2^3 \lg(nW)\log^2 (n) 2^{\gamma}) \cdot \left(c_{SSSP} \cdot c ( 2^{20} \cdot n^2 \log^8 n) \right)\\
   &= c_{SSSP \to ATO} \cdot c^2 \cdot c_{SSSP} \cdot \left(2^{23} \cdot n^2 \log^{10} (n)  \lg(nW) 2^{\gamma}\right)
    \end{split}
    \end{align*}
\end{itemize}

It now only remains to choose $\gamma$ and combine the two bounds. We set $\gamma = 24 \lceil\lg (c^2 \cdot c_{SSSP \to ATO} \cdot c_{SSSP} \cdot \lg^{3}(Wn) \log^{7}(n)) \rceil$, and obtain that the total running time summed over large and small graphs is at most
\begin{align*}
\begin{split}
& c_{SSSP \to ATO}^2 \cdot c_{SSSP}^2 \cdot c^4 \cdot 2^{48}(n^2 \log^{17} n \lg^{4}(Wn))) \\ &+ \frac{c^4 c_{Total} (n^2 \log^{17} n \lg^{3}(Wn))}{2}
\\ &\leq  c^4 c_{Total} (n^2 \log^{17} n \lg^{4}(Wn))
\end{split}
\end{align*}
where we used our initial assumption on the size of $c_{Total}$.

Finally, we point out that there are at most $\lg(nW)$ levels $i$ and therefore, the total update time is at most
\[
 c_{Total} (c^4 n^2 \log^{17} n \lg^{5}(Wn))  
\]
as required.

Further, we point out that every $\mathcal{ATO}$-bundle $\mathcal{S}_i$ that was constructed runs correctly with high probability at least $1-n^{-c}$, while every $\mathcal{ATO}$-bundle $\mathcal{S}'$ is maintained correctly with probability at least $1-n^{-4c}$ (recall that we set the failure parameter of these data structures to $c \cdot 4 \log n$). Noting that we only have $\lg(nW)$ instances of the former bundles, and at most $n^3$ of the latter, taking a simple union bound over the events that any bundle instance fails gives a total failure probability of at most $n^{-c+2}$. 
\end{proof}

\chapter{Deterministic Algorithms for Decremental SSR, SCC and SSSP}

\label{chap_deterministic_scc}

In this chapter, we are concerned with obtaining the first deterministic data structures that improve upon the classic ES-trees in decremental graphs.

\ContributionDetSCCResult*

\ContributionDetSSSPResult*

While we will touch on the expander techniques developed in the article that this section is based on, we emphasize the above theorems as the main results in the context of this thesis and refer readers that are mostly interested in the expander tools to \cite{DetDecrementalSSSP}.

\section{Overview}
 
We now give an overview of the data structures. We first focus on the problem of obtaining a data structure for the decremental SCC problem and only in the latter part of this overview show how to obtain a decremental SSSP data structure.

\paragraph{Directed Expanders.} We start with the definition of directed \emph{expanders}.

\begin{defn}
[Expanders]A directed graph $G$ is a \emph{$\phi$-vertex expander}
if it has no $\phi$-vertex-sparse vertex-cut. Similarly, $G$ is
\emph{$\phi$-(edge) expander} if it has no $\phi$-sparse cut.\footnote{Note that an isolated vertex is an expander (in both edge and vertex
versions).} 
\end{defn}

Intuitively, expanders are graphs that are ``robustly connected''
and, in particular, they are strongly connected. It is well-known
that many problems become much easier on expanders. So, given a problem
on general graph, we would like to reduce the problem to expanders. 

It is further well-known that every \emph{undirected} graph admits the following
\emph{expander decomposition}: for any $\phi>0$, a $\tilde{O}(\phi$)-fraction
of vertices/edges can be removed so that the remaining is a set of
vertex-disjoint $\phi$-vertex/edge expander. Unfortunately, this
is impossible in directed graphs. Consider, for example, a DAG. However,
a DAG is the only obstacle; for any $\phi>0$, we can remove a $\Otil(\phi)$-fraction
of vertices/edges, so that the remaining part can be partitioned into a DAG and a set of vertex-disjoint $\phi$-vertex/edge
expanders. This observation can be made precise as follows.
\begin{fact}
[Directed $\phi$-Expander Decomposition]\label{prop:exp decomp}Let $G=(V,E)$
be any directed $n$-vertex graph and $\phi>0$ be a parameter. There
is a partition $\{R,X_{1},\dots,X_{k}\}$ of $V$ such that 
\begin{enumerate}
\item $|R|\le O(\phi n\log n)$;
\item $G[X_{i}]$ is a $\phi$-vertex expander for each $i$;
\item Let $D$ be obtained from $G$ by deleting $R$ and contracting each
$X_{i}$. Then, $D$ is a DAG.
\end{enumerate}
\end{fact}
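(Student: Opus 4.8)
The plan is to derive the directed expander decomposition from the known \emph{undirected} expander decomposition by applying it to an auxiliary undirected graph built from the strongly connected structure of $G$, and then arguing that the directed behaviour is controlled. First I would recursively peel off sparse vertex-cuts: if $G$ (on vertex set $V$) has a $\phi$-vertex-sparse vertex-cut $(L,S,R')$ with $|L|\le|R'|$, put the separator $S$ into the ``removed'' set $R$, recurse on $G[L]$ and on $G[R'\cup S]$ (or more carefully on the two induced sides), and otherwise stop, declaring the current piece a $\phi$-vertex expander. This is exactly the scheme used implicitly in \Cref{lma:balancedVertexSep} / \Cref{lma:sep}: each separator is small \emph{relative to the smaller side}, so a charging argument (each vertex lands on the smaller side $O(\log n)$ times, each time being charged for $O(\phi)$ separator vertices) yields $|R|\le O(\phi n\log n)$, which is item~1. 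Each surviving piece $G[X_i]$ has no $\phi$-vertex-sparse cut by the stopping condition, giving item~2.

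The only genuinely new content is item~3: after deleting $R$ and contracting each $X_i$ to a single node, the resulting multigraph $D$ is a DAG. The key observation is that contracting an entire expander $G[X_i]$ in particular contracts each strongly connected component it contains into the same node, so it suffices to show that \emph{no cycle of $G\setminus R$ passes through two distinct pieces $X_i\neq X_j$}. Equivalently, the pieces $X_i$ respect the SCC structure of $G\setminus R$: each SCC of $G\setminus R$ lies inside a single $X_i$. I would establish this as an invariant of the recursion: when we split a piece $G[Y]$ along a one-way vertex separator $(L,S,R')$ (so, say, $E_{G}(L,R')=\emptyset$ in $G[Y]$), no SCC of $G[Y]\setminus S = G[Y\setminus S]$ can intersect both $L$ and $R'$, because a cycle through both sides would have to use an edge from $R'$ to $L$ in one direction and from $L$ to $R'$ in the other, contradicting one-wayness. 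Since $S$ is moved into $R$, the SCCs of the two recursive subproblems $G[L]$ and $G[R']$ are exactly the SCCs of $G[Y\setminus S]$ that happen to lie on one side, and the invariant is preserved. At the leaves, each SCC of $G\setminus R$ sits inside some $X_i$; contracting each $X_i$ therefore contracts each such SCC, and the contraction of a graph along a partition that refines its SCC partition is a DAG (the condensation is a DAG, and we have only contracted further). This gives item~3.

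The step I expect to be the main obstacle is making the separator-finding recursion simultaneously (a) produce \emph{one-way} vertex separators — so that the SCC-refinement invariant above actually holds — and (b) keep the separators balanced enough that the $O(\phi n\log n)$ bound survives. A purely ``sparsest balanced cut'' procedure need not give a one-way cut; one has to argue that whenever a piece is not a $\phi$-vertex expander, one can extract a one-way sparse vertex-cut, for instance by taking a BFS/ball-growing argument from the sparse side (in the spirit of the layered-BFS separator in \Cref{lma:sep}) to convert a two-way sparse cut into a one-way one of comparable size, or by observing that in a non-expander either the forward or the backward cut is already one-way after deleting the middle. A secondary technical point is bookkeeping the $\log n$ factor: each vertex may be charged across $O(\log n)$ levels of recursion because every recursive call is on a piece of at most half the size (by balancedness), and at each level it contributes at most an $O(\phi)$ fraction to $R$; summing the geometric series in the sizes against the $\log n$ depth gives the stated $O(\phi n\log n)$. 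Once these two points are pinned down, items~1–3 follow as sketched, completing the proof of \Cref{prop:exp decomp}.
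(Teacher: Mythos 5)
The paper states this decomposition as a \emph{Fact} and gives no proof at all --- it is treated as folklore, with the paper's focus being on the algorithmic and dynamic maintenance of (approximations of) such decompositions. So there is no paper argument to compare against; what follows is an assessment of your proof on its own.

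Your recursive sparse-vertex-cut-peeling argument is correct, but the ``main obstacle'' you flag is not actually there. In this paper's Preliminaries, a vertex-cut $(L,S,R)$ of $G$ is \emph{by definition} one-way: the defining condition is $E(L,R)=\emptyset$ or $E(R,L)=\emptyset$, and $\phi$-vertex-sparse means $|S|<\phi|L|$ with $|L|\le|R|$. A $\phi$-vertex expander is then a graph with no $\phi$-vertex-sparse vertex-cut in this (already one-way) sense. Consequently, whenever a piece is not a $\phi$-vertex expander, the cut you extract is automatically one-way; there is no two-way-to-one-way conversion to perform, and no tension with the $O(\phi n\log n)$ bound. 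Your second worry about balancedness is likewise unnecessary: the charging you describe --- each separator vertex is charged to a vertex of the smaller side $L$ at rate $|S|/|L|<\phi$, and each vertex can be on the smaller side at most $\log n$ times since $|L|\le|Y|/2$ --- works verbatim for arbitrarily unbalanced cuts, giving $|R|\le\phi n\log n$ without any balancedness requirement.

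Two small points worth pinning down. First, the recursion should be explicitly on $G[L]$ and $G[R']$ with $S$ sent to the global removed set; your parenthetical about ``the two induced sides'' is the right instinct, but ``$G[R'\cup S]$'' as written would double-count $S$. Second, the SCC-refinement invariant deserves a slightly more careful statement than ``a cycle through both sides would have to use an edge from $R'$ to $L$ in one direction and from $L$ to $R'$ in the other'': since $E_{G[Y]}(L,R')=\emptyset$ and every $G$-edge between $L$ and $R'$ has both endpoints inside $Y$, any walk in $G[V\setminus R]$ starting in $L$ can never reach $R'$ (edges out of $L$ lead only into $L$ or into the eventually-removed $S$), so no SCC of $G[V\setminus R]$ meets both $L$ and $R'$; propagating this down the recursion tree shows every SCC of $G[V\setminus R]$ lies in a single $X_i$. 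Combined with the observation that each $G[X_i]$ is strongly connected (a non-strongly-connected piece has a vertex-cut with $S=\emptyset$, which is trivially $\phi$-sparse for any $\phi>0$), the $X_i$ are exactly the SCCs of $G[V\setminus R]$ and contraction produces the condensation, which is a DAG. With these adjustments the proof is complete.
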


The edge version of \Cref{prop:exp decomp} can be stated as follows:
for any unweighted $m$-edge graph $G=(V,E)$, there is a partition
$\{X_{1},\dots,X_{k}\}$ of $V$ and $R\subset E$ where $|R|\le O(\phi m\log m)$,
each $G[X_{i}]$ is a $\phi$-expander, and $D$ is a DAG (where $D$
is defined as above). It can be generalized to weighted graphs as
well. 

\paragraph{High-level Framework.} This decomposition motivates the framework of our algorithm, although for the sake of efficiency we only maintain an approximate version. The decomposition suggests that we need four main ingredients: 
\begin{enumerate}
\item a dynamic expander decomposition in directed graphs, 
\item a fast algorithm on vertex-expanders,
\item a fast algorithm on DAGs, and
\item a way to deal with the small remaining part $R$.
\end{enumerate}
We point out however that component 3. will only be required by the SSSP data structure.

\paragraph{The High-Level Algorithm for decremental SCC.} In order to develop some intuition for the above framework, let us derive our data structure for decremental SCC.

First, let us state the following Theorem which is an idealized version of our main technical result. We will pretend for the first part of the overview that we can indeed prove this Theorem but we stress that we obtain a considerably weaker guarantees in our algorithm. 

\begin{restatable}[Idealized Expander Decomposition]{thm}{ExpanderDecompTheorem}
	\label{thm:expanderDecom}
	There is a deterministic algorithm $\mathcal{A}$ that given an unweighted, directed, decremental graph $G$ and a parameter $\phi \in (0,1)$, maintains a directed $\phi$-expander decomposition, such that $R$ is an incremental set of final size $\Ohat(\phi^{-1})$. The algorithm to maintain the expander decomposition runs in time $\Ohat(m\phi^{-2})$.
\end{restatable}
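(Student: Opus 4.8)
The plan is to follow the standard recipe for dynamic decompositions: build a static decomposition once, repair it \emph{locally} after each deletion via a pruning subroutine, and only occasionally rebuild a piece from scratch. I would initialize $\{R, X_1,\dots,X_k\}$ by invoking the static directed $\phi$-expander decomposition of \Cref{prop:exp decomp} in its vertex version, so that after deleting $R$ and contracting the $X_i$ the graph is a DAG; for each $i$ I also store a flow/embedding certificate witnessing that $G[X_i]$ is a $\phi$-vertex expander. On a deletion of an edge $e$ that is incident to $R$ or that joins two distinct pieces $X_i,X_j$, no work is needed: the contracted graph only loses an edge, so it stays a DAG, and every $G[X_i]$ is untouched. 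The nontrivial case is $e=(u,v)$ with $u,v$ in the same piece $X_i$.

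For that case the core subroutine is a \emph{directed vertex-expander pruning} procedure run on $G[X_i]$: over the sequence of deletions that have so far hit $X_i$, it maintains a set $W_i \subseteq X_i$ with $G[W_i]$ still a $\Omega(\phi)$-vertex expander, while the complement $P_i = X_i\setminus W_i$ stays small, of size $\Ohat(D_i/\phi)$ where $D_i$ is the number of deletions charged to $X_i$. The vertices that newly leave $W_i$ are handed back to the top level, where the static decomposition of \Cref{prop:exp decomp} is re-run on the induced graph on those vertices, its expander pieces are inserted into the hierarchy as new contracted nodes, and the $\Ohat(\phi)$-fraction the static routine discards is appended to $R$ (so $R$ only grows, i.e.\ is incremental). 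To keep $D_i$ bounded, once $D_i$ passes a $\Theta(\phi)$-fraction of $|X_i|$ I discard the certificate and re-decompose $G[X_i]$ wholesale. One must check that re-inserting fresh expander pieces preserves acyclicity of the contracted graph: because an expander has no sparse vertex cut, the pruned vertices attach to $W_i$ ``on one side'', so they can be given a topological position consistent with where $X_i$ sat in the old DAG, and the edges among pruned vertices and to $R$ do not close a cycle once $R$ is removed.

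The bookkeeping is then a charging/potential argument: each deletion triggers pruning of $\Ohat(\phi^{-1})$ volume, re-decomposing a vertex set of size $s$ costs $\Ohat(s\phi^{-1})$ time and sends an $\Ohat(\phi)$-fraction to $R$, and a vertex is re-decomposed $\Ohat(1)$ times before it either settles permanently inside a stable expander or lands in $R$; summing over the $m$ deletions gives total update time $\Ohat(m\phi^{-2})$ and bounds the final size of $R$ by the claimed quantity. The remaining subtlety is that in the idealized regime the pruned mass is essentially re-absorbed into expanders rather than accumulating, which is what keeps $R$ from blowing up with $m$.

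The main obstacle, and exactly where the adjective \emph{idealized} carries the weight, is the directed vertex-expander pruning routine. Its undirected counterpart---maintaining a large sub-expander of a decremental expander with pruned volume essentially linear in the number of deletions---is classical, but it exploits cut symmetry in an essential way; in digraphs a near-balanced one-way sparse vertex cut need not exist, and the cut-matching certificate of expansion has to be replaced by a directed flow/embedding argument that one can moreover update efficiently. Turning that into an algorithmic, rather than merely existential, statement is precisely what forces the genuinely weaker theorem proved later and is where the directed-expander machinery and the congestion-balancing technique of the underlying paper enter; in the idealized proof I would simply posit a pruning oracle with the stated guarantees and verify that all of the above accounting goes through.
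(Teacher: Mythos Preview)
The paper does not prove this theorem. It is presented explicitly as an idealized statement for exposition: immediately after stating it the authors write that they ``will pretend for the first part of the overview that we can indeed prove this Theorem but we stress that we obtain a considerably weaker guarantee in our algorithm.'' There is no proof in the paper to compare your proposal against; the actual results proved are the weaker \Cref{thm:pruning}, \Cref{thm:path-to-witness}, and \Cref{thm:robust witness}, and the high-level algorithm for SCC is assembled from those rather than from a genuine dynamic $\phi$-expander decomposition.

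You correctly identify where the difficulty lies (a directed vertex-expander pruning routine with near-ideal parameters), and you are right that this is exactly what the paper replaces by congestion balancing and the robust-witness machinery. But even as a conditional sketch your argument has real gaps beyond positing the oracle. First, the step where freshly decomposed expander pieces on pruned vertices are ``given a topological position consistent with where $X_i$ sat in the old DAG'' is asserted, not argued: pruned vertices can have edges both into and out of the surviving piece $W_i$ and into and out of other pieces, so acyclicity of the contracted graph after re-insertion is not automatic. Second, your accounting for $|R|$ hinges on the line ``the pruned mass is essentially re-absorbed into expanders rather than accumulating''; without that, each of up to $m$ deletions feeds $\Ohat(\phi^{-1})$ vertices into a re-decomposition that discards an $\Ohat(\phi)$-fraction, and $R$ grows with the number of deletions rather than being bounded solely in terms of $\phi$. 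The paper avoids both issues by never attempting to maintain a true expander decomposition: it keeps only a large $\phi$-witness per piece, re-embeds it $\Ohat(\phi^{-1})$ times via congestion balancing, and handles vertices outside the witness with short-path forests and sparse vertex cuts rather than by re-decomposing them.
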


Observe that every set $X$ in $G$, that forms a $\phi$-expander (for any $\phi$) is part of an SCC in $G$. Thus, given the above Theorem, the graph $G$ where expanders $X_1, X_2, \dots, X_k$ are contracted, forms a condensation of $G \setminus R$. 

To this end, we note that in \cite{chechik2016decremental} (which we explained in some detail in \Cref{chap:rand_scc}), a data structure by Lacki was used to deal with separator vertices set $S$ to restore SCCs in $G$ given the set $R$ and the condensation of $G \setminus S$. We state his result below (here the result is stated as a reduction which is straight-forward to obtain and is proven in \cite{detDiSSSP}).  
  
\begin{restatable}[see \cite{Lacki11, chechik2016decremental}]{theorem}{LackiProp}
\label{thm:lacki}
Let $G=(V,E)$ be a decremental graph. Let $\mathcal{A}$ be a data structure that {\bf 1)} maintains a monotonically growing set $S \subseteq V$ and after every adversarial update reports any additions made to $S$ and {\bf 2)} maintains the SCCs in $G \setminus S$ explicitly in total update time $T(m,n)$ and supports SCC path queries in $G \setminus S$ in almost-path-length query time.

Then, there exists a data structure $\mathcal{B}$ that maintains the SCCs of $G$ explicitly and supports SCC path-queries in $G$ (in almost-path-length query time). The total update time is $O(T(m,n) + m|S|\log n)$, where $|S|$ refers to the final size of the set $S$. 
\end{restatable}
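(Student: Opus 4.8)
The plan is to reduce the task to maintaining single‑source reachability in a growing family of decremental DAGs, following \lacki's technique. Let $\hat G$ denote the graph obtained from $G$ by contracting every SCC of $G\setminus S$ into a single \emph{supernode}; we have access to $\hat G$ since $\mathcal{A}$ maintains the SCCs of $G\setminus S$ explicitly. Two structural facts drive the argument: first, the SCCs of $\hat G$ are in bijection, via un‑contraction, with the SCCs of $G$, so it suffices to maintain $\textsc{Condensation}(\hat G)$; second, $\hat G\setminus S=\textsc{Condensation}(G\setminus S)$ is a DAG, so $S$ is a feedback vertex set of $\hat G$. Moreover $\hat G$ is ``decremental'' in the relevant monotone sense: the adversary only deletes edges, $\mathcal{A}$ only refines the SCCs of $G\setminus S$ (splitting supernodes), and whenever $\mathcal{A}$ reports a new element $s_i$ of $S$ the supernode currently containing $s_i$ is split into $\{s_i\}$ together with the new supernodes of $G\setminus S$ inside it. Thus over time $\hat G$ only loses edges and has its nodes refined.

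Next I would set up the hierarchy. Fix the order $s_1,s_2,\dots$ in which vertices enter $S$, and for $i\ge 0$ let $\hat G_i:=\hat G\setminus\{s_{i+1},s_{i+2},\dots\}$ and $D_i:=\textsc{Condensation}(\hat G_i)$, which is a DAG for every $i$. Then $D_0=\textsc{Condensation}(G\setminus S)$ is exactly what $\mathcal{A}$ provides, and once $|S|$ stabilizes at $k$ we have $D_k=\textsc{Condensation}(\hat G)=\textsc{Condensation}(G)$, the object we want. The key point is that $D_i$ arises from $D_{i-1}$ by inserting the single vertex $s_i$ together with its incident edges, with endpoints remapped through the contraction map of $D_{i-1}$: the only SCC of $D_i$ that is not already an SCC of $D_{i-1}$ is the one containing $s_i$, and it equals $R^{\mathrm{out}}_{i-1}(s_i)\cap R^{\mathrm{in}}_{i-1}(s_i)$, where $R^{\mathrm{out}}_{i-1}(s_i)$ (resp.\ $R^{\mathrm{in}}_{i-1}(s_i)$) is the set of nodes of $D_{i-1}$ reachable from (resp.\ reaching) $s_i$ once $s_i$ is added. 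To maintain $R^{\mathrm{out}}_{i-1}(s_i)$ I would run a decremental single‑source‑reachability structure of Italiano \cite{italiano1988finding} on the DAG obtained from $D_{i-1}$ by attaching $s_i$ with only its out‑edges — this is still a DAG, and it is decremental since $D_{i-1}$ only loses edges and splits nodes while $s_i$ only loses edges; symmetrically for $R^{\mathrm{in}}_{i-1}(s_i)$ on the reverse graph. Each such instance costs $O(m)$ total, and there are $2|S|$ of them, giving $O(m|S|)$ for the reachability part; an additional $O(\log n)$ factor per edge operation covers maintaining the contraction maps and propagating each level‑$i$ merge upward as endpoint remappings to the higher levels (e.g.\ via union--find on nodes), which yields the claimed $O(T(m,n)+m|S|\log n)$ bound. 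An SCC query in $G$ is answered by comparing images in $D_k$, and a path between strongly connected $u,v$ is reconstructed by unfolding the hierarchy: a level‑$i$ merge certifies a path through $s_i$ assembled from a DAG path in $D_{i-1}$ and, inside each traversed supernode, a within‑$(G\setminus S)$ path supplied by $\mathcal{A}$'s path query; standard link‑cut‑tree bookkeeping keeps this output‑sensitive up to almost‑path‑length factors.

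The main obstacle I expect is not the reachability maintenance — that is essentially a black‑box use of Italiano's DAG structure — but the bookkeeping around node refinements. Concretely: (i) one must argue that the node splits inflicted on each $D_i$, both by $\mathcal{A}$'s refinements and by the arrival of later $s_j$'s, are compatible with the amortized $O(m)$‑total guarantee of Italiano's structure, essentially because splits only refine and the incident‑edge counts telescope; and (ii) one must show that a merge discovered at level $i$ can be pushed up to all higher levels and into the path‑query machinery without exceeding the $O(\log n)$ amortized overhead, which is where the union--find / link‑cut‑tree accounting must be done carefully. The remaining steps — correctness of the bijection between $\mathrm{SCC}(\hat G)$ and $\mathrm{SCC}(G)$, the observation that the only new cycles in $D_i$ pass through $s_i$, and the fact that every $D_i$ is a DAG — are routine.
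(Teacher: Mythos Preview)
Your proposal is correct and is precisely \lacki's hierarchy technique combined with Italiano's decremental DAG reachability, which is exactly what the paper invokes: the theorem is stated as a black box with citation to \cite{Lacki11,chechik2016decremental} and the paper explicitly says the reduction ``is straight-forward to obtain and is proven in \cite{detDiSSSP}'', offering no in-text proof of its own. The two obstacles you flag---handling node splits inside Italiano's $O(m)$ amortization, and propagating merges upward with only an $O(\log n)$ overhead---are real but are resolved in the cited works in the manner you anticipate (telescoping edge counts for splits; union--find / tree bookkeeping for the contraction maps), so nothing is missing.
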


Using these two Theorems in conjunction, it is not hard to see that setting $\phi=n^{-1/3}$, we can obtain a data structure that maintains SCCs in a decremental graph in total update time $\Ohat(mn^{2/3})$.

\paragraph{Certifying Directed Expanders.} In order to maintain an expander decomposition as described in \Cref{thm:expanderDecom} (which we cannot do! We obtain a considerable weaker guarantee), a basic procedure that is required to even compute an expander decomposition is an algorithm to efficiently certify that $G$ is a $\phi$-expander \emph{or} outputs a $\phi$-vertex-sparse vertex-cut. 

We introduce the notion of embeddings which has been used heavily in the static setting to certify expanders \cite{khandekar2009graph, orecchia2008partitioning, Louis10}:

\begin{defn}
[Embedding and Embedded Graph]Let $G=(V,E)$ be a directed graph.
An \emph{embedding} $\pset$ in $G$ is a collection of simple directed
paths in $G$ where each path $P\in\pset$ has associated \emph{value}
$\val(P)>0$. We say that $\pset$ has \emph{length} $\len$ if every
path $P\in\pset$ contains at most $\len$ edges. We say that $\pset$
has \emph{vertex-congestion} $\congest$ if, for every vertex $v\in V$,
$\sum_{P\in\pset_{v}}\val(P)\le\congest$ where $\pset_{v}$ is the
set of paths in $\pset$ containing $v$. We say that $\pset$ has
\emph{edge-congestion }$\congest$ if, for every edge $e\in E$, $\sum_{P\in\pset_{e}}\val(P)\le\congest$
where $\pset_{e}$ is the set of paths in $\pset$ containing $e$.

Given an embedding $\pset$, there is a corresponding weighted directed
graph $W$ where, for each path $P\in\pset$ from $u$ to $v$, there
is a directed edge $(u,v)$ with weight $\val(P)$. We call $W$ an
\emph{embedded graph }corresponding to $\pset$ and say that $\pset$
embeds $W$ into $G$. 
\end{defn}

The following fact shows that, to certify that $G$ is a vertex expander,
it is enough to \emph{embed} an (edge)-expander $W$ into $G$ with small
congestion. 
\begin{fact}
\label{fact:certify vertex expansion}Let $G=(V,E)$ be a graph. Let
$W=(V,E',w)$ be a $\phi$-expander with minimum weighted degree $1$.
If $W$ can be embedded into $G$ with vertex congestion $\congest$,
then $G$ is a $(\phi/\congest)$-vertex expander. 
\end{fact}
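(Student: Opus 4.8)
The plan is to argue by contraposition: I will assume $G$ is \emph{not} a $(\phi/\congest)$-vertex expander and produce a $\phi$-sparse cut of $W$, contradicting the hypothesis that $W$ is a $\phi$-expander. So fix a $(\phi/\congest)$-vertex-sparse vertex-cut $(L,S,R)$ of $G$ with $|L|\le|R|$; by definition $|S|<(\phi/\congest)|L|$, and up to reversing the orientation of every edge (which interchanges the two defining cases and swaps $\delta^{in}$ with $\delta^{out}$) I may assume $E_G(L,R)=\emptyset$. Observe that $L\ne\emptyset$, since otherwise $|S|<(\phi/\congest)\cdot 0=0$, which is impossible; hence $|R|\ge|L|\ge 1$ and $R\ne\emptyset$, so $(L,V\setminus L)$ is a genuine (both sides nonempty) cut of the vertex set of $W$.

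The key step is a \emph{crossing-paths} claim: if $P\in\pset$ is an embedding path from a vertex $u\in L$ to a vertex $v\notin L$, then $P$ passes through some vertex of $S$. To see this, let $x$ be the first vertex of $P$ lying outside $L$; the edge of $P$ entering $x$ has its tail in $L$, and since $E_G(L,R)=\emptyset$ its head $x$ cannot lie in $R$, so $x\in S$. (In the symmetric orientation, where $E_G(R,L)=\emptyset$, one takes the last vertex of $P$ outside $L$ instead.)

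Now I bound the outgoing weight of $L$ in $W$. Each edge $(u,v)$ of $W$ with $u\in L$ and $v\notin L$ has weight $w(u,v)=\sum_{P}\val(P)$, the sum running over embedding paths $P$ from $u$ to $v$, and by the claim each such $P$ belongs to $\pset_s$ for some $s\in S$. Summing over all crossing edges and then applying the vertex-congestion bound vertex-by-vertex over $S$,
\[
\delta^{out}_W(L)=\sum_{u\in L,\;v\notin L} w(u,v)\;\le\;\sum_{s\in S}\sum_{P\in\pset_s}\val(P)\;\le\;\congest\,|S|\;<\;\congest\cdot\tfrac{\phi}{\congest}\,|L|\;=\;\phi\,|L|.
\]
Since $W$ has minimum weighted degree $1$, we get $\vol_W(L)=\sum_{v\in L}\deg_W(v)\ge|L|$, so $\min\{\delta^{in}_W(L),\delta^{out}_W(L)\}\le\delta^{out}_W(L)<\phi\,\vol_W(L)$; that is, $(L,V\setminus L)$ is a $\phi$-sparse cut of $W$, contradicting that $W$ is a $\phi$-expander. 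This proves $G$ is a $(\phi/\congest)$-vertex expander.

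The argument is short; the only point demanding care is the crossing-paths claim together with the associated book-keeping — handling both orientations ($E_G(L,R)=\emptyset$ versus $E_G(R,L)=\emptyset$), checking that the degenerate cases ($S=\emptyset$, or $L$ a singleton) cause no trouble, and making sure the weight of each crossing edge is counted \emph{at least once} among the paths through $S$ without any circular double counting in the chain of inequalities. None of this is deep, so I expect no genuine obstacle.
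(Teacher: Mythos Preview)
Your proof is correct and follows essentially the same approach as the paper: both arguments show that every embedding path corresponding to an edge of $W$ leaving $L$ must pass through $S$, then combine the vertex-congestion bound $\delta^{out}_W(L)\le\congest|S|$ with the minimum-degree bound $\vol_W(L)\ge|L|$ and the $\phi$-expansion of $W$. The only cosmetic difference is that the paper argues directly (take any vertex cut and conclude $|S|\ge(\phi/\congest)|L|$), whereas you phrase the same computation as a contraposition.
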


\begin{proof}
Consider a vertex cut $(L,S,R)$ in $G$ where $|L| \le |R|$. 
Suppose that $E(L,R) = \emptyset$, otherwise $E(R,L) = \emptyset$ and the proof is symmetric.
Observe that each edge $e\in E_{W}(L,V\setminus L)$ in $W$ corresponds to a path in $G$ that goes
out of $L$ and, hence, must contain some vertex from $S$.
So the total weight of these edges in $W$ can be at most $\delta_{W}^{out}(L)\le|S|\cdot\congest$.
At the same time, $\delta_{W}^{out}(L)\ge\phi\vol_{W}(L)\ge\phi|L|$
as $W$ is a $\phi$-expander with minimum weighted degree 1. So $|S|\ge\frac{\phi}{\congest}|L|$
as desired.
\end{proof}
In our actual algorithm, instead of certifying that $G$ is a vertex
expander (i.e.~$G$ has no sparse vertex-cut), we relax to the task
to only certifying that $G$ has no balanced sparse vertex-cut. This, in turn, 
motivates the definition of $\phi$-witness:
\begin{defn}
[Witness]\label{def:witness}We say that $W$ is a\emph{ $\phi$-witness}
of $G$ if $V(W)\subseteq V(G)$, $W$ is a $\Omegahat(1)$-(edge)-expander
where $9/10$-fraction of vertices have weighted degree at least $1/2$,
and there is an embedding of $W$ into $G$ with vertex-congestion
$1/\phi$. (Note that $E(W)$ does not have to be a subset of $E(G)$.)
We say that $W$ is a $\phi$-short-witness if it is a $\phi$-witness
and the embedding has length $\Ohat(1/\phi)$. We say that $W$ is
a \emph{large} witness if $|V(W)|\ge9|V(G)|/10$.\footnote{The constant $9/10$ is somewhat arbitrary.}
\end{defn}

We sometimes informally refer to a graph that contains a large witness
as an \emph{almost vertex-expander}. This is because of the below
fact whose proof is similar to \Cref{fact:certify vertex expansion}.
\begin{fact}
Let $G=(V,E)$ be a graph that contains a large $\phi$-witness $W$.
Then $G$ has no $1/3$-vertex-balanced $(\phi/n^{o(1)})$-vertex-sparse
vertex cut. 
\end{fact}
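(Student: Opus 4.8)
The plan is to mirror the proof of \Cref{fact:certify vertex expansion}, but to localize the expansion argument to the side of the cut that the witness still covers. Suppose toward a contradiction that $(L,S,R)$ is a $1/3$-vertex-balanced vertex cut of $G$; by convention $|L|\le|R|$, so $|L|\ge|V|/3$, and, being a vertex cut, either $E(L,R)=\emptyset$ or $E(R,L)=\emptyset$. I would treat the case $E(L,R)=\emptyset$ in detail (the case $E(R,L)=\emptyset$ being symmetric, with incoming $W$-edges in place of outgoing ones). Set $A:=V(W)\cap L$. Since $W$ is a \emph{large} witness, $|V(G)\setminus V(W)|\le|V|/10$, so $|A|\ge|L|-|V|/10\ge 7|V|/30$ and similarly $|V(W)\cap R|\ge 7|V|/30>0$; hence $A$ is a proper nonempty subset of $V(W)$.

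The first step is a charging argument from the outgoing $W$-edges of $A$ to the vertices of $S$. Let $\pset$ be the embedding of $W$ into $G$ of vertex-congestion $1/\phi$. For each edge $e=(u,v)\in E_W(A,V(W)\setminus A)$ we have $u\in L$, $v\in S\cup R$, and the path $P_e\in\pset$ runs from $u$ to $v$ in $G$; the first vertex of $P_e$ leaving $L$ must lie in $S$, since a predecessor in $L$ followed by a vertex in $R$ would give an edge of $E(L,R)$. So every such $P_e$ meets $S$, and since the $P_e$ carry total value $\delta_W^{out}(A)$ while each vertex carries total path-value at most $1/\phi$, summing over $s\in S$ yields $\delta_W^{out}(A)\le|S|/\phi$.

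The second step is a lower bound on $\delta_W^{out}(A)$ using that $W$ is a $\Omegahat(1)$-expander in which a $9/10$-fraction of vertices have weighted degree at least $1/2$: at most $|V(W)|/10\le|V|/10$ vertices of $A$ have degree below $1/2$, so $\vol_W(A)\ge\tfrac12(|A|-|V|/10)\ge|V|/15\ge|L|/15$, and the expander property gives $\delta_W^{out}(A)\ge\Omegahat(1)\cdot\vol_W(A)\ge\Omegahat(1)\cdot|L|$. Combining the two bounds, $|S|\ge\phi\cdot\delta_W^{out}(A)\ge(\phi/n^{o(1)})|L|$, so $(L,S,R)$ is not $(\phi/n^{o(1)})$-vertex-sparse, a contradiction, which proves the fact.

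The only place that needs care is the constant juggling in the second step: one must check that the largeness of $W$ together with the $9/10$-degree guarantee still leaves $\vol_W(V(W)\cap L)=\Omega(|L|)$ even when $|L|$ is only a $1/3$-fraction of $V$. As the numbers above show, the slack ($7/30$ against $1/10$) is comfortable, so no new idea is required beyond the embedding-congestion charging; the rest is bookkeeping.
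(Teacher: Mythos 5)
Your proposal is correct and follows exactly the route the paper indicates (the paper states only that the proof "is similar to Fact \ref{fact:certify vertex expansion}" without spelling it out). You correctly localize that argument to $A = V(W)\cap L$, supply the vertex-congestion charging from $E_W(A, V(W)\setminus A)$ to $S$, and verify that the largeness of $W$ plus the $9/10$-degree guarantee give $\vol_W(A) = \Omega(|L|)$; the only thing worth adding is that for the edge-expansion lower bound $\delta_W^{out}(A) \geq \Omegahat(1)\cdot\min\{\vol_W(A),\vol_W(V(W)\setminus A)\}$ one should also note that $\vol_W(V(W)\cap R) \geq |L|/15$ by the same $7/30$-vs.-$1/10$ count, but this is the identical bookkeeping you already performed for $A$ and costs nothing extra.
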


We can now finally state the result we obtain for certifying an expander.

\begin{thm}
\label{lem:certify-witness} 
There is a deterministic algorithm $\certifywitness(G,\phi,\eps)$ that takes as input a directed $n$-vertex graph $G=(V,E)$, $\phi\in(0,1/\log^2(n)]$, and $\eps \in (0,1)$ in $\Ohat(m/\phi)$ time, either
	\begin{itemize}
		\item finds a $\Otil(\phi)$-vertex-sparse $\Omega(\epsilon/n^{o(1)})$-vertex-balanced
		cut $S$, or
		\item certifies that there exists a $\phi$-witness $W$ of $G$ such that $|V(W)|\ge(1-\epsilon)n$ and every edge in $W$ has weight at least $1$. Let $\alphaex = 1/n^{o(1)}$ be the precise expansion factor of $W$ guaranteed by this lemma (we will use this parameter in other lemmas).
	\end{itemize}
\end{thm}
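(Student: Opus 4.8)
## Proof Proposal for \cutorcert/\certifywitness

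The plan is to follow the standard cut-matching-game paradigm, adapted to the directed vertex-expansion setting. The algorithm \certifywitness$(G,\phi,\eps)$ runs $T = \Ohat(1)$ rounds of a game between a \emph{cut player} and a \emph{matching player}. We maintain a growing embedded graph $W$ on a vertex set $A \subseteq V$ with $|A| \geq (1-\eps)n$, together with an explicit embedding of $W$ into $G$ of vertex-congestion $1/\phi$ and length $\Ohat(1/\phi)$. Initially $W$ has no edges. In each round, the cut player examines the \emph{current} witness $W$: if $W$ is already an $\Omegahat(1)$-expander on a $9/10$-fraction of high-degree vertices, we are done and output $W$; otherwise the cut player produces a bipartition (or a sparse cut) of $V(W)$ certifying that $W$ is far from expanding. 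The matching player then attempts to route a large (fractional) matching between the two sides through $G$ with vertex-congestion budget roughly $1/(\phi T)$ per round and length $\Ohat(1/\phi)$; if it succeeds, the matched pairs are added as edges to $W$ (with appropriate values), which provably increases the ``potential'' of $W$ toward being an expander; if the routing gets stuck, the unrouted demand exposes a $\Otil(\phi)$-vertex-sparse, $\Omega(\eps/n^{o(1)})$-vertex-balanced cut $S$ in $G$, which we output. After $T$ rounds the accumulated matchings make $W$ an expander, and the per-round congestion budgets sum to $1/\phi$, giving a valid $\phi$-witness.

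The key steps, in order: (i) Implement the \textbf{cut player} — I would use the potential-function / random-walk-matrix approach of Khandekar--Rao--Vazirani as adapted to directed graphs, or more likely the spectral/flow-based cut player of the companion paper, which given a non-expanding $W$ returns in $\Ohat(|E(W)|)$ time a cut whose sparsity drives the potential down by a $1/\Ohat(1)$ factor per round; this bounds the number of rounds by $T = \Ohat(1)$. (ii) Implement the \textbf{matching player} as a call to a directed vertex-capacitated flow / \EmbedMatching-type subroutine: route demand from one side of the cut to the other in $G$, where each vertex has capacity $\approx 1/(\phi T)$ and each flow path has length $\Ohat(1/\phi)$. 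Use a standard length-constrained blocking-flow or the Bounded-height BFS / Even--Shiloach-style layered approach so the routing runs in $\Ohat(m/\phi)$ time total across all rounds. (iii) \textbf{Cut extraction:} when the matching player fails to route an $\eps$-fraction of the demand, take the residual min-vertex-cut; since the routed flow used congestion at most $1/(\phi T)$ and length $\Ohat(1/\phi)$, a counting argument (à la the cut-or-certify lemmas in \Cref{chap:rand_scc}, Lemma~\ref{lma:sep}) shows the residual cut $S$ has $|S| = \Otil(\phi)\cdot \min\{|L|,|R|\}$ and is $\Omega(\eps/n^{o(1)})$-vertex-balanced. (iv) \textbf{Witness verification:} once the cut player certifies $W$ expands on a $9/10$-fraction, bundle the $T$ rounds of matchings; rescaling values so minimum weighted degree is $\geq 1$ and expansion is $\alphaex = 1/n^{o(1)}$ follows from the fact that $T = \Ohat(1)$ and each round contributes congestion $1/(\phi T)$, so total vertex-congestion is $1/\phi$ and total length is $\Ohat(1/\phi)$. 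Finally invoke \Cref{fact:certify vertex expansion} / the ``almost vertex-expander'' fact to conclude.

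The main obstacle I expect is the \textbf{matching player in directed graphs with vertex congestion}. Unlike the undirected edge-expansion setting, where the matching player is a single max-flow, here we need a \emph{directed} flow that respects vertex capacities, has bounded path length $\Ohat(1/\phi)$, and whose failure certificate is a \emph{balanced} sparse vertex-cut — all while running in near-linear time amortized over $\Ohat(1)$ rounds. Length-constrained flows do not admit exact max-flow/min-cut duality, so I would need an approximate length-constrained flow routine (losing $n^{o(1)}$ factors, which is why the final expansion is only $1/n^{o(1)}$ rather than $\Omegatil(1)$), and I must argue that the approximate residual cut is still vertex-balanced — this requires carefully choosing which side to embed from (the smaller side, as in \cutorcert) and handling the asymmetry of directed cuts $E(L,R)=\emptyset$ vs.\ $E(R,L)=\emptyset$ exactly as in the proof of \Cref{fact:certify vertex expansion}. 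A secondary technical point is bookkeeping the embedding explicitly (storing the actual paths, not just the embedded graph $W$) within the $\Ohat(m/\phi)$ time budget, since later applications (\pathtowitness, \shortoracle) need to recover these paths; this is handled by having the blocking-flow subroutine emit path decompositions as it runs.
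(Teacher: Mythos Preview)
The thesis does not actually prove this theorem. It is stated in the overview of \Cref{chap_deterministic_scc} as a black-box tool, and the chapter explicitly announces that it ``do[es] not cover proofs for these techniques, but rather give[s] an intuitive approach to expanders'' and refers the reader to the underlying publication \cite{detDiSSSP} for details. So there is no in-paper proof to compare your proposal against.

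That said, your high-level plan---run a cut-matching game for $T=\Ohat(1)$ rounds, with a deterministic cut player on the evolving witness $W$ and a length-bounded vertex-capacitated flow subroutine as matching player, accumulating per-round congestion $1/(\phi T)$ so the total is $1/\phi$, and extracting a balanced sparse vertex-cut when the matching player fails---is exactly the paradigm the underlying paper follows (and is the standard approach in this line of work, cf.\ the references \cite{khandekar2009graph,orecchia2008partitioning,Louis10} the thesis cites). The obstacles you flag (directed, vertex-capacitated, length-constrained routing; ensuring the failure certificate is \emph{balanced}; explicit path bookkeeping) are precisely the right ones.

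One caveat: the KRV cut player you mention first is randomized; for a \emph{deterministic} $\certifywitness$ one needs a deterministic cut player, which is typically obtained by recursion on smaller expander-decomposition instances (this is the source of much of the $n^{o(1)}$ overhead, not only the approximate length-constrained flow). Your parenthetical ``or more likely the spectral/flow-based cut player of the companion paper'' is pointing in the right direction, but be aware that this recursion is where most of the technical work and the subpolynomial losses live.
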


\paragraph{(Almost) Expander Decomposition using Certification.} Note that finding large $\phi$-vertex-sparse vertex-cuts in expander certification allows us not only to add these cuts to the set $R$ and then recurse on the remaining strongly-connected subgraph but also to do so efficiently (since both sides of the cut are roughly of equal size so each vertex only participates a polylogarithmic number of times in such a recursion). It is straight-forward to show that adding sparse cuts and recursing leads to the bound on $R$ as described in \Cref{thm:expanderDecom} and certainly has $R$ as an incremental set. 

However, certifying $\phi$-expanders is a static procedure. Therefore a second procedure is required: essentially, once a $\phi$-expander $X$ with induced graph $G' = G[X]$ is certified, we have that for $\Omegatil(|X| \phi)$ updates, a subset $X'$ of $X$ still forms a $\phi$-expander and that $X'$ is of size $|X|/2$ during this period. In fact, we can maintain efficiently a set $\overline{P}$ that comes fairly close to $X'$:

\begin{restatable}[Directed Expander Pruning]{thm}{expanderPruning}
\label{thm:pruning} There is a deterministic algorithm that given a directed unweighted decremental multi-graph $G'$ with $n$ vertices and $m$ edges that is initially a $\phi$-expander and a parameter $L \ge 1$. The algorithm maintains an incremental set $P\subseteq V$ using $\tilde{O}\left(\frac{m n^{1/L}}{\gamma_{_L}(\phi)}\right)$ total update time such that for $\overline{P} = V(G') \setminus P$, we have that $G'[\overline{P}]$ is a $\gamma_{_L}(\phi)$-expander and $\vol_{G'}(P)\le O\left(\frac{t n^{1/L}}{\gamma_{_L}(\phi)}\right)$ after $t$ updates, where $\gamma_{_L}(\phi) = \phi^{3^{O(L)}}$.
\end{restatable}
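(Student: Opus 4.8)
The plan is to adapt the ``local flow plus pruning'' paradigm behind undirected expander pruning to the directed setting, and to obtain the running-time savings through an $L$-level recursion. Alongside $\overline{P} = V(G') \setminus P$ we will maintain a \emph{witness} $W$ of $G'[\overline{P}]$ in the sense of \Cref{def:witness}: a $\Omegahat(1)$-expander embedded into $G'[\overline{P}]$ with small vertex-congestion; by the reasoning behind \Cref{fact:certify vertex expansion}, the mere existence of such a witness certifies that $G'[\overline{P}]$ is a $\gamma$-vertex-expander for some $\gamma = \phi^{\Omega(1)}/n^{o(1)}$. Since $G'$ is initially a $\phi$-expander, invoking \Cref{lem:certify-witness} (with $\epsilon = n^{-o(1)}$, and adding the balanced sparse cut it may return once and for all to $P$) produces a starting witness. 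The deletion stream is then processed so that, whenever enough embedding volume has been destroyed, a pruning subroutine is run that repairs the witness and grows $P$; the whole argument reduces to showing that this subroutine adds at most $O(n^{1/L}/\gamma_L(\phi))$ units of volume to $P$ per update and runs in amortized time $\tilde{O}(n^{1/L}/\gamma_L(\phi))$ per unit of destroyed embedding volume.

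\textbf{The core pruning subroutine.} When an edge $(u,v)$ of $G'$ is deleted, each embedding path of $W$ through $(u,v)$ is destroyed and deposits a unit of ``source excess'' at its endpoints. The subroutine collects the accumulated excess and tries to re-route it through $G'[\overline{P}]$ with a directed \emph{local unit-flow} computation -- a push--relabel process that only explores a ball of volume proportional to the excess, run with capacity and length bounds $\Ohat(1/\phi)$ in the base case $L=1$. If all the excess routes, the repaired paths form a witness for the new $G'[\overline{P}]$. Otherwise the blocking flow exposes a level cut which, because the graph is directed, is out-sparse \emph{or} in-sparse; to catch cuts of the other orientation we symmetrically run the subroutine on $\rev{G'}$, and in either case we add the smaller side of the cut to $P$ together with the new boundary excess that pruning creates, iterating until no excess remains. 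A Lov\'asz--Simonovits-type potential argument, carried out on the weighted embedded graph rather than on $G'$ itself, bounds the pruned volume by $\Ohat(1/\phi)$ times the destroyed embedding volume and forces only a constant-power degradation of the expansion parameter; since vertices never leave $P$, the set $P$ is incremental.

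\textbf{The $L$-level recursion and the main obstacle.} Routing the local flow to the full length $\Ohat(1/\phi)$ is too expensive when $\phi$ is small, so instead of routing directly we will maintain a depth-$L$ hierarchy: an expander-decomposition step built on \Cref{lem:certify-witness} breaks $G'[\overline{P}]$ into vertex-disjoint pieces of size at most $n^{1-1/L}$ (plus a remainder absorbed into $P$), the pruning algorithm with parameter $L-1$ is run recursively and decrementally inside each piece, and at the top level the flow only has to traverse the contracted quotient graph over these pieces, which loses a factor of at most $n^{1/L}$ per routing step rather than $1/\phi$. Each level of the hierarchy degrades the expansion by a constant power -- from the decomposition, the embedding, and the pruning -- so after $L$ levels the guarantee is exactly of the form $\gamma_L(\phi) = \phi^{3^{O(L)}}$; with each vertex participating in $O(\log n)$ pieces per level by balancedness, the total work sums to $\tilde{O}(m\, n^{1/L}/\gamma_L(\phi))$. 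The hard part will be making this recursion sound: one must show that pruning triggered deep in the hierarchy propagates upward so that the quotient graphs themselves stay expanders of the appropriately degraded quality, that the aggregate pruned volume across all $L$ levels telescopes to the claimed $O(t\, n^{1/L}/\gamma_L(\phi))$ bound rather than blowing up by an $n^{\Omega(1)}$ factor, and that the directional asymmetry of directed sparse cuts is handled uniformly at every level -- running and pruning in both $G'$ and $\rev{G'}$ without the two interfering. The clean cross-level volume telescoping is the technically heaviest ingredient, and it is where the potential-function bookkeeping of the directed local-flow subroutine has to be threaded through the recursion.
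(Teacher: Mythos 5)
The thesis itself does not prove \Cref{thm:pruning}: it is explicitly imported from \cite{detDiSSSP} (``we do not cover proofs for these techniques''), so there is no in-paper proof to compare against. That said, your sketch has two structural problems that I want to flag.

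First, the architecture is inverted. You propose to maintain, alongside $P$, a witness $W$ of $G'[\overline{P}]$ in the sense of \Cref{def:witness} and to re-route destroyed embedding paths via local flow. But in the framework of this chapter the dependency runs the other way: \Cref{thm:robust witness} (robust witness maintenance, via congestion balancing) maintains the witness $W$, and then \emph{invokes} \Cref{thm:pruning} on $W$ -- a graph with very strong expansion $1/n^{o(1)}$ -- to survive a batch of deletions cheaply. Precisely because $W$ is a high-quality expander, the factors $n^{1/L}/\gamma_L(\phi)$ are subpolynomial where the pruning is used. If \Cref{thm:pruning} in turn requires maintaining another witness under deletions, you have re-created the congestion-balancing problem inside its own subroutine; a single adversarial deletion can destroy up to $\Ohat(1/\phi)$ units of embedding weight (the edge-congestion of a $\phi$-witness is not bounded by the definition), and controlling the resulting cascade of re-embeddings is exactly the hard part that \Cref{thm:robust witness} is responsible for. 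The pruning theorem is meant to be a flow-based primitive that operates directly on the decremental expander $G'$ (sourcing excess at the endpoints of a deleted edge, running a directed local unit-flow in both $G'$ and $\rev{G'}$, and moving a sparse level cut into $P$ when the flow is blocked), with no embedding bookkeeping.

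Second, the $L$-level recursion you describe -- statically decomposing $G'[\overline{P}]$ into vertex-disjoint pieces of size $\le n^{1-1/L}$, pruning recursively inside each piece, and routing only across the quotient graph -- cannot establish the stated conclusion, because the theorem asserts that $G'[\overline{P}]$ is \emph{as a whole} a $\gamma_L(\phi)$-expander, not that it decomposes into several small expanders. A partition of $\overline{P}$ into expander pieces plus a DAG quotient (as in \Cref{prop:exp decomp}) is a weaker object: the pieces may have sparse cuts between them at any scale, and no local argument on the quotient recovers global vertex-expansion of the union. The $n^{1/L}$ and $\phi^{3^{O(L)}}$ parameters come from a recursion \emph{inside the flow computation} (a bounded-height/bounded-length approximate local flow, recursing on the residual with a constant-power loss in $\phi$ at each of $L$ levels), not from a vertex partition of the host graph. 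Without that distinction, the ``cross-level volume telescoping'' you flag as the hard part has no well-defined quantity to telescope.
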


Now, observe that the $\phi$ dependency of the above algorithm is fairly bad. However for very large $\phi$ (say $\sim 1/n^{\sqrt{\log n}} = 1/n^{o(1)}$), and using $L = \Theta(\log \log n)$, the above guarantees are only off from the optimal case by a subpolynomial factor. 

Luckily, given a graph $G'$ that is a $\phi$-expander, we can use the witness graph $W$ that we used to certify that $G'$ is a $\phi$-expander in-place of $G'$ at the expense that whenever we remove an edge from $W$, we have to remove up to $\Ohat(\phi^{-1})$ vertices from the expander. Using that $W$ is a $\phi'$-expander for $\phi'$ very large, it follows that we can maintain a set that comes close to $X'$ for $\Omegahat(|X| \phi)$ updates even for $G'$ in total time $\Ohat(m \phi^{-1})$ (we note that while the running time for \Cref{thm:pruning} is only $\Ohat(m)$, with every update it might cause us to remove $\Ohat(\phi^{-1})$ vertices from $X'$ explicitly).

\paragraph{Expander Set Maintenance.} Let us now address the problem of the last paragraph: we might have some vertices in $X \setminus X'$, i.e. vertices that were removed in \Cref{thm:pruning} from our graph $G'$ that was a $\phi$-expander at the first stage and was slightly decomposed during the last updates. We next have to determine what to do with vertices in $X \setminus X'$. We resort again to an idea from Chechik et al. \cite{chechik2016decremental}: We maintain an ES-tree in the graph $G' / X'$ from $X'$ to depth $\delta = \phi^{-1}$ in the graph $G$ where $X'$ is a single node obtained from contraction. Technically, $X'$ is a decremental set and thus not remain the same over time, we can use the augmented ES-tree from \Cref{lma:AugmentedGES} to deal with this technicality.

Then, we implement the following procedure: whenever a vertex $v \in X \setminus X'$ is a distance larger than $\phi^{-1}$, we find a balanced separator from $v$ (see for example \Cref{subsec:separators}), add separator vertices to $R$ as well, and recurse on the SCCs that do not contain $X'$ by trying to certify that they are expanders again. By basic balanced separator arguments we can bound the number of vertices we add to $R$ in this way again by $\Otil(\phi^{-1})$. Instead of refering to the set $X'$ however, we will henceforth simply refer to the (pruned) witness $W$. We summarize this in the somehow technical Theorem below which also bounds the running (which we need in all of its explicity to derive our SSSP data structure).

\begin{restatable}{thm}{PathToWitnessTheorem}
\label{thm:ES from witness} \label{thm:path-to-witness} There is a data structure $\pathtowitness(G,W,\phi)$ that takes as input an $n$-vertex $m$-edge graph $G = (V,E)$, a set $W \subseteq V$ with $|W| \geq |V|/2$ and a parameter $\phi > 0$. The algorithm must process two kinds of updates. The first deletes any edge $e$ from $E$; the second removes a vertex from $W$ (but the vertex remains in $V$), while always obeying the promise that $|W| \geq |V|/2$. The data structure must maintain a forest of trees $\fout$ such that every tree $T \in \fout$ has the following properties: all edges of $T$ are in $E(G)$; $T$ is rooted at a vertex of $W$; every edge in $T$ is directed away from the root; and $T$ has depth $\Ohat(1/\phi)$. The data structure also maintains a forest $\fin$ with the same properties, except each edge in $T$ is directed towards the root. 
	
At any time, the data structure may perform the following operation:  it finds a $\Ohat(\phi)$-sparse vertex cut $(L,S,R)$ with $W \cap (L \cup S) = \emptyset$ and replace $G$ with $G[R]$. (This operation is NOT an adversarial update, but is rather the responsibility of the data structure.) The data structure maintains the invariant that every $v \in V$ is present in exactly one tree from $\fout$ and exactly one from $\fin$; given any $v$, the data structure can report the roots of these trees in $O(\log(n))$ time. (Note that as $V$ may shrink over time, this property only needs to hold for vertex $v$ in the \emph{current} set $V$.) The total time spent processing updates and performing sparse-cut operations is $\Ohat(m/\phi)$. 
\end{restatable}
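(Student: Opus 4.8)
The plan is to reduce the whole task to maintaining a single generalized ES-tree from an artificial super-source, and to use the balanced separator procedure of \Cref{lma:sep} to discard the vertices that drift too far from $W$. Concretely, I would add a new vertex $s^\ast$ to $G$ together with weight-$1$ edges $(s^\ast,w)$ and $(w,s^\ast)$ for every $w\in W$, and maintain on this augmented graph a GES $\mathcal{E}_{s^\ast}$ rooted at $s^\ast$ with the trivial feedback vertex set $S=V$ and depth parameter $d=\Theta(\log n/\phi)=\Ohat(1/\phi)$ (\Cref{def:GES}, \Cref{lma:SimpleGES}); equivalently one may run one GES on $G$ and one on the reverse graph. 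Since $\mathcal{E}_{s^\ast}$ maintains an out-shortest-path tree and an in-shortest-path tree to depth $d$, I would define $\fout$ to be the out-tree with the root $s^\ast$ deleted: the edges $(s^\ast,w)$ vanish, each $w\in W$ becomes the root of its own tree, every remaining tree edge lies in $E(G)$ and points away from a $W$-root, and the depth is at most $d-1=\Ohat(1/\phi)$; $\fin$ is obtained symmetrically from the in-tree. To answer root queries in $O(\log n)$ time I would layer a link-cut (or Euler-tour) tree on top of each of $\fout$ and $\fin$; the $O(md)$ tree modifications made by the GES over its lifetime cost only $\Ohat(m/\phi)$ extra time through this layer. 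An adversarial edge deletion is forwarded to $\mathcal{E}_{s^\ast}$ verbatim, and removing a vertex $u$ from $W$ is implemented by deleting the two edges $(s^\ast,u)$ and $(u,s^\ast)$.

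\paragraph{The sparse-cut operation.}
After every update I would repeatedly call $\mathcal{E}_{s^\ast}.\textsc{GetUnreachableVertex}()$. As long as it returns some $v\neq\bot$, there is a vertex $v$ with $\mathbf{dist}(s^\ast,v)>d$ or $\mathbf{dist}(v,s^\ast)>d$. In the first case I would invoke $\textsc{InSep}(v,G,V,d/4)$, in the second $\textsc{OutSep}(v,G,V,d/4)$, obtaining a tuple $(S_{Sep},V_{Sep})$, and set $(L,S,R):=(V_{Sep},S_{Sep},V\setminus(V_{Sep}\cup S_{Sep}))$. Then I would call $\mathcal{E}_{s^\ast}.\textsc{Delete}(L\cup S)$, set $G\gets G[R]$, and report the cut $(L,S,R)$, iterating until \textsc{GetUnreachableVertex} returns $\bot$. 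Termination of this inner loop is immediate since $v\in L$, so each iteration deletes at least one vertex.

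\paragraph{Correctness.}
Four points need to be checked. First, \emph{$W$-disjointness}: by property~\ref{step:separator-distance} of \Cref{lma:sep}, every vertex of $L\cup S=V_{Sep}\cup S_{Sep}$ is within distance $d/4$ of $v$ in the relevant direction; if some $w\in W$ lay in $L\cup S$, then via the edge $(s^\ast,w)$ or $(w,s^\ast)$ we would get $\mathbf{dist}(s^\ast,v)\le d/4+1\le d$ (resp.\ $\mathbf{dist}(v,s^\ast)\le d$), contradicting the choice of $v$; hence $W\cap(L\cup S)=\varnothing$ and $W\subseteq R$. Second, \emph{it is a vertex cut}: property~4 of \Cref{lma:sep} gives $E(R,L)=\varnothing$ (for \textsc{InSep}) or $E(L,R)=\varnothing$ (for \textsc{OutSep}) in the current $G$, so $(L,S,R)$ is a vertex-cut in the sense of \Cref{chap:prelim}. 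Third, \emph{sparsity and balance}: since $W\subseteq R$ and the promise gives $|W|\ge|V|/2$, we have $|L|\le|V|-|W|\le|W|\le|R|$; and property~3 of \Cref{lma:sep} (with feedback set $V$) gives $|S|\le|L|\cdot 2\log n/(d/4)\le\phi|L|$ once the constant in $d=\Theta(\log n/\phi)$ is large enough, so the cut is $\phi$-vertex-sparse, in particular $\Ohat(\phi)$-sparse. Fourth, \emph{the tree invariant}: once the inner loop terminates, every $v$ in the current $V$ satisfies $\max\{\mathbf{dist}(s^\ast,v),\mathbf{dist}(v,s^\ast)\}\le d$, so $v$ lies in the out-tree and in the in-tree of $\mathcal{E}_{s^\ast}$ and hence in exactly one tree of $\fout$ and one of $\fin$ after $s^\ast$ is deleted; each $w\in W$ is covered because $\mathbf{dist}(s^\ast,w)=\mathbf{dist}(w,s^\ast)=1\le d$. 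Finally, replacing $G$ by $G[R]$ preserves the promise $|W|\ge|V|/2$ since $W\subseteq R$ and $|W|\ge|V_{\mathrm{old}}|/2\ge|R|/2$.

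\paragraph{Running time, and the main obstacle.}
The GES to depth $d=\Ohat(1/\phi)$ costs $O(md)=\Ohat(m/\phi)$ total by \Cref{lma:SimpleGES}, and this budget absorbs all adversarial deletions, the super-source edge deletions, and the batched vertex deletions $\textsc{Delete}(L\cup S)$. Each call to \textsc{InSep}/\textsc{OutSep} costs $O(|E(V_{Sep})|)=O(|E(L\cup S)|)$ by \Cref{lma:sep}; since every edge incident to $L\cup S$ is permanently removed from $G$ when we pass to $G[R]$, these edge sets are pairwise disjoint over all cut operations, so their total cost is $O(m)$; the link-cut layer multiplies the $O(md)$ tree updates by $O(\log n)$, still $\Ohat(m/\phi)$. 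Summing, the total update time is $\Ohat(m/\phi)$. The step I expect to be most delicate is the joint calibration of the two depth scales — the GES depth $d$ that defines ``far'' and the separator search depth $d/4$ — so that one choice simultaneously yields an $\Ohat(\phi)$-sparse cut (which forces $d$ to be large relative to $1/\phi$) and keeps $L\cup S$ disjoint from $W$ (which forces the search depth to stay strictly below $d$), together with the amortization argument showing that the possibly $\Theta(n)$ cut operations cost only $O(m)$ in aggregate because each deletes the edges it touches for good.
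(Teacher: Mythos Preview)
Your approach is correct and matches the paper's own sketch almost exactly. The thesis does not give a full proof of this theorem (it is imported from \cite{detDiSSSP} and only discussed informally in the ``Expander Set Maintenance'' paragraph preceding the statement), but the idea there is precisely yours: run an ES-tree rooted at a single representative of $W$ to depth $\Theta(1/\phi)$, and whenever a vertex $v$ falls out of the tree, invoke the balanced-separator procedure of \Cref{lma:sep} from $v$ to produce the sparse vertex cut and pass to $G[R]$.

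The one cosmetic difference is how the ``single representative of $W$'' is realized. The paper's sketch contracts $W$ into one node and uses the augmented GES of \Cref{lma:AugmentedGES} so that a vertex leaving $W$ is processed as a $\textsc{SplitNode}$ operation on the contracted node. You instead add a super-source $s^\ast$ with unit edges to and from every $w\in W$, so that a vertex leaving $W$ becomes two ordinary edge deletions. Both reductions give the same tree structure and the same $O(md)=\Ohat(m/\phi)$ bound; your version is arguably the simpler of the two since it avoids the node-split machinery entirely, while the paper's version reuses the exact interface already built in \Cref{subsec:EStree}. Your correctness checks (the $W$-disjointness via property~\ref{step:separator-distance} of \Cref{lma:sep}, the sparsity bound via property~\ref{prop:balanceS}, the $|L|\le|R|$ argument from $|W|\ge|V|/2$, and the $O(m)$ amortization of all separator calls by charging to permanently removed edges) are all sound and are exactly the arguments one needs.
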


Although the data structure works for \emph{any} set $W$, $W$ will always correspond to a $\phi$-witness in the higher-level algorithm. The adversarial update that removes a vertex from $W$ corresponds to the event that the witness shrinks in the higher-level algorithm. 
The forests $\fin$ and $\fout$ allow the algorithm to return paths of length $\Ohat(1/\phi)$ from any $v \in V(G)$ to/from $W$: find the tree that contains $v$ and follow the path to the root, which is always in $W$. The requirement that each tree has low-depth will be necessary to reduce the update time. But once we add this requirement, we encounter the issue that some vertices may be very far from $W$, so we need to give the data structure a way to remove them from $V(G)$. This is the role of the sparse-cut operation: we will show in the proof that if $v$ is far from $W$, it is always possible to find a sparse vertex cut $(L,S,R)$ such that $v$ is in $L$ and hence removed from $G$. (The higher-level algorithm will process this operation by adding $S$ to $\Shat$, so that $L$ becomes part of a different SCC in $\gstar[\vstar \setminus \Shat]$.)

To summarize this paragraph, while we do not get the guarantees of an idealized expander decomposition, we can ensure that the expander sets computed at some stage are maintained such that their diameter is bound by $\Ohat(\phi^{-1})$ quite efficiently. This turns out to be enough for our applications.

\paragraph{Maintaining Shortest Paths in Expander Sets.} We also point out that since we would like to return the shortest paths in our SSSP data structure, \Cref{thm:path-to-witness} is not sufficient since it only allows us to find shortest paths that do not cross the witness graph $G$. However, we also require a second data structure that explicitly maintains approximate shortest paths between all pairs of vertices in an expander (here the approximation factor will be subpolynomial, that is quite huge). The input $W$ will always correspond to a large $\phi$-witness, and will thus have expansion $1/n^{o(1)}$. This data structure is not new to our paper, as it is essentially identical to an analogous structure for undirected graphs in \cite{ChuzhoyS20_apsp}. The only major difference is that we need to plug in our new expander pruning algorithm for directed graphs (Theorem \ref{thm:pruning}). Note that the theorem below will only allow us to find paths in $E(W)$, not $E(G')$, the graph of interest. However it is possible to use the embedding of $W$ to recover the correspond short paths in $E(G)$. 

\begin{restatable}{thm}{OracleTheorem}
	\label{thm:short-path-oracle} 
	There is a deterministic data structure $\shortoracle(W)$ that takes as input an $n$-vertex $m$-edge $1/n^{o(1)}$-expander $W$ subject to decremental updates. Each update can delete an arbitrary batch of vertices and edges from $W$, but must obey the promise that the resulting graph remains a $\phi$-expander. Given any query $u,v \in V(W)$, the algorithm returns in $n^{o(1)}$ time a directed simple path $P_{uv}$ from $u$ to $v$ and a directed simple path $P_{vu}$
	of $v$ to $u$, both of length at most $n^{o(1)}$. The total update time of the data structure is $\Ohat(m)$.
\end{restatable}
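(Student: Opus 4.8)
The plan is to port the decremental all-pairs-shortest-path data structure for \emph{undirected} expanders of \cite{ChuzhoyS20_apsp} to the directed setting, substituting the directed expander pruning algorithm of \Cref{thm:pruning} for its undirected counterpart wherever pruning is invoked, and replacing every flow/embedding subroutine by its directed analogue supplied by the witness machinery (\Cref{lem:certify-witness}, \Cref{thm:path-to-witness}), duplicated into an ``out'' and an ``in'' version. The single structural fact used throughout is that a $1/n^{o(1)}$-expander has weak diameter $\Ohat(1)$: by the standard ball-growing argument the volume of the out-ball of radius $i$ around any vertex grows by a factor $(1+\phi)$ per step until it captures more than half the total volume, so within $\Ohat(1/\phi)=\Ohat(1)$ hops every vertex reaches, and is reached from, every sufficiently large vertex set; in particular short directed paths exist between all pairs, and they can be read off from forests of the type maintained by \Cref{thm:path-to-witness}.

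First I would build a shallow \emph{expander hierarchy} $W=W_0,W_1,\dots,W_r$ with $r=\Theta(\log\log n)$, where $|V(W_{j+1})|=|V(W_j)|^{1-\Omega(1)}$ and $W_{j+1}$ is a $1/n^{o(1)}$-expander embedded into $W_j$ (via $\certifywitness$-style embeddings, in both directions) with edge-congestion $\congest=n^{o(1)}$ and length $\len=n^{o(1)}$; the bottom graph $W_r$ has only $n^{o(1)}$ vertices, so queries there are answered by brute-force search. Each level $j$ carries (i) the pruning data structure of \Cref{thm:pruning} run on $W_j$ with $L=\Theta(\log\log n)$, so that the maintained core stays a $\gamma_L(1/n^{o(1)})=1/n^{o(1)}$-expander, and (ii) an out-forest and an in-forest of depth $\Ohat(1)$ (built as in \Cref{thm:path-to-witness}) that give, for each $v\in V(W_j)$, a short path from $v$ to a \emph{connector} vertex of $V(W_{j+1})$ and a short path from a connector vertex back to $v$.

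A query $(u,v)$ is then answered top-down: use the out-forest of level $0$ to find a short path $u$-to-$u'$ with $u'\in V(W_1)$ and the in-forest to find a short path $v'$-to-$v$ with $v'\in V(W_1)$; recursively obtain a path $u'$-to-$v'$ inside $W_1$ of length $\Ohat(1)$; replace each of its edges by the $n^{o(1)}$-length path of $W_0$ realizing that edge's embedding; concatenate the three pieces; and finally shortcut repeated vertices to make the path simple (affordable since its length is $n^{o(1)}$). Writing $L_j$ for the length bound at level $j$, we get $L_j\le n^{o(1)}L_{j+1}+\Ohat(1)$ with $L_r=n^{o(1)}$, so $L_0=(n^{o(1)})^{r}=n^{o(1)}$ as $r=\Theta(\log\log n)$; the same recursion bounds the query time. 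For updates, a deletion (of edges/vertices) in $W_0$ triggers \Cref{thm:pruning}, which may move a small-volume set out of the core; each removed edge or pruned vertex destroys at most $n^{o(1)}$ of the embedding paths of edges of $W_1$ (by the congestion bound), so the induced update stream on $W_1$ is decremental, and one propagates down the hierarchy. To keep each $W_j$ an expander under these induced deletions one runs the usual ``halving phase'' argument on top of the pruning: when the pruned part of some $W_j$ exceeds a constant fraction of its volume, the vertices that have left are handled by recursively (re)building a fresh sub-hierarchy on them, so every vertex participates in $\Ohat(1)$ such sub-structures overall and the total update time telescopes to $\Ohat(m)$.

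The main obstacle is making this recursion \emph{close} quantitatively: directed pruning degrades expansion to $\gamma_L(\phi)=\phi^{3^{O(L)}}$, which forces $L=O(\log\log n)$ to keep it at $1/n^{o(1)}$; this in turn caps the hierarchy depth at $O(\log\log n)$ and fixes the per-level shrink factor; and all of this must be balanced simultaneously against the $n^{o(1)}$ congestion and length blow-ups so that the accumulated subpolynomial factors over all $O(\log\log n)$ levels (and over all halving phases) remain $n^{o(1)}$ and the running time remains $\Ohat(m)$. A secondary, directed-specific point is that every object in \cite{ChuzhoyS20_apsp}, the forests, the connectors, and the routing used to produce the embedding of $W_{j+1}$ into $W_j$, must be maintained in both an out- and an in-version, and the pruned part at each level must be folded back into the recursion so that the whole construction stays within the decremental model without ever losing expansion. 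Verifying that these pieces compose is exactly the bookkeeping already carried out in \cite{ChuzhoyS20_apsp} for undirected graphs, now with \Cref{thm:pruning} and \Cref{thm:path-to-witness} plugged in for the directed primitives; that is where essentially all of the work lies.
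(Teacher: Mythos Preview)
Your proposal is correct and matches the paper's approach: the paper does not give a proof but simply states that the construction is ``essentially identical to an analogous structure for undirected graphs in \cite{ChuzhoyS20_apsp}'', with the directed pruning of \Cref{thm:pruning} substituted for its undirected counterpart. Your sketch of the $O(\log\log n)$-level expander hierarchy, per-level pruning plus in/out forests, and recursive query and update propagation is exactly that construction ported to directed graphs, spelled out in more implementation detail than the paper itself provides.
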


This completes our overview of components on witness graphs that certify that some graph $G'$ is a $\phi$-expander. It remains to address a central problem.

\paragraph{Sparse Witnesses in Dense Graphs.} In the previous paragraphs, we have reduced the problem of certifying an almost vertex-expander to maintaining a large witness (and then maintaining the witness using pruning) and sketched the ideas to maintain low-diameter components and a small set $R$ of removed vertices. However, although finding a low congestion embeddings in vertex expanders can be done very efficiently in the static setting (using the well known cut-matching game), there is
one crucial obstacle in the dynamic setting:

Consider the following simple scenario. We start with a complete graph
$G$ and parameter $\phi=\Omegahat(1)$. A standard (static) construction
of a large $\phi$-witness runs in $\Ohat(m)$ time and gives an \emph{unweighted} $\Omegahat(1)$-expander $W$ where all vertex degrees are $\Theta(\log n)$.
Let $\pset$ be the embedding of $W$. Observe that each path from $\pset$ has
value $1$ and $|\pset|=O(n\log n)$. 

Unfortunately, once the adversary knows $\pset$, he can destroy each embedding path $P\in\pset$ by deleting any edge in $P$. In total,
he can delete only $O(n\log n)$ edges in $G$ to destroy the whole
embedding of $W$. The algorithm would then have to construct a new
witness, which the adversary could again destroy with $O(n\log n)$
deletions. This process continues until $G$ has a balanced, sparse
vertex-cut, which might not happen until $\Omega(n^{2})$ deletions.
That is, this standard approach requires the algorithm to re-embed
a new witness $\tilde{\Omega}(n)$ times, which is not only slow,
but requires too many changes to the witness. 

\paragraph{Congestion Balancing.} To overcome this obstacle, we introduce a technique that we call \emph{congestion balancing} to maintain a witness $W$ that only needs to be re-embedded $\Ohat(1/\phi)$ times.

\begin{restatable}[Robust Witness Maintenance]{thm}{RobustWitnessTheorem}
	\label{thm:robust witness}There is a
	deterministic algorithm $\rwitness(G,\phi)$ that takes as input  a directed decremental $n$-vertex
	graph $G$ and a parameter $\phi \in (0,1/\log^2(n)]$. The algorithm maintains a large (weighted) $\phi$-short-witness $W$ of $G$ using $\Ohat(m/\phi^{2})$
	total update time such that every edge weight in $W$ is a positive multiple of $1/d$, for some number $d \leq 2\davg$, where $\davg$ is the initial average degree of $G$. The total edge weight in $W$ is $O(n\log n)$. After every edge deletion, 
	the algorithm either updates $W$ or outputs a $(\phi n^{o(1)})$-vertex-sparse
	$(1/n^{o(1)})$-vertex-balanced vertex-cut and terminates. 
	
	Let $W^{(i)}$ be $W$ after the $i$-th update. There exists a set
	$R$ of \emph{reset indices} where $|R|=\Ohat(\phi^{-1})$, such that
	for each $i\notin R$, $W^{(i)}\supseteq W^{(i+1)}$. That is, the
	algorithm has $\Ohat(\phi^{-1})$ phases such that, within each phase,
	$W$ is a decremental graph. The algorithm reports when each phase begins. It explicitly maintains the embedding $\pset$ of $W$ into $G$ and reports all changes made to $W$ and $\pset$.
\end{restatable}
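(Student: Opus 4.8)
The plan is to drive the whole construction by a \emph{capacity function} $\kappa : E(G) \to \tfrac1d\mathbb{Z}_{>0}$, where $d = \lceil 2\davg\rceil$ is fixed once and for all, initialized to $\kappa(e) = 1/d$ for every $e$ so that $\sum_{e}\kappa(e) = m/d = O(n)$; $\kappa$ is only ever raised, and each $\kappa(e)$ is capped at $2$. At all times we maintain a weighted graph $W$ together with an embedding $\pset$ of $W$ into the current $G$ with \emph{edge}-congestion bounded by $\kappa$, i.e.\ $\sum_{P\in\pset_e}\val(P)\le\kappa(e)$ for every $e$, \emph{vertex}-congestion at most $1/\phi$, and length $\Ohat(1/\phi)$; by the cut-matching construction below the embedded graph $W$ has maximum weighted degree $\Otil(1)$ and hence total weight $O(n\log n)$, its edge weights are multiples of $1/d$, and $9/10$ of its vertices have weighted degree $\ge 1/2$, so $W$ is a large $\phi$-short-witness of $G$. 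To build $W,\pset$ from scratch for a given $\kappa$ I would run the directed cut-matching game: each of its $\Otil(1)$ rounds is a single length-$\Ohat(1/\phi)$-bounded (approximate) max-flow in $G$ with edge capacities $\kappa$ and vertex capacities $1/\phi$ that either routes a weighted fractional matching between the two sides of the current bisection --- whose flow paths we add to $\pset$ and whose matching edges we add to $W$ --- or is blocked, in which case min-cut duality produces a cut $S$ whose $\kappa$-capacity across it is at most $\phi$ times the $\kappa$-volume of its smaller side. If this obstructing cut is $\Omega(1/n^{o(1)})$-vertex-balanced we stop the game, convert it to a $(\phi n^{o(1)})$-vertex-sparse vertex cut using the $1/\phi$ vertex-congestion exactly as in \Cref{fact:certify vertex expansion}, output it, and terminate; otherwise we raise $\kappa$ on $\partial S$ (below) and restart the game.

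\paragraph{Handling deletions inside a phase.} A \emph{phase} is a maximal interval between two consecutive resets. When the adversary deletes an edge $e$ of $G$, every path $P\in\pset$ through $e$ is destroyed; we delete those paths from $\pset$ and subtract $\val(P)$ from each $W$-edge they carried. Since this only ever removes weight (and weight-$0$ edges) from $W$, we get $W^{(i)}\supseteq W^{(i+1)}$ whenever $i$ is not a reset index, which is exactly the decremental-within-a-phase property. The removed weight is fed to the directed expander-pruning routine of \Cref{thm:pruning} run \emph{on $W$ itself} (legitimate since $W$ is a $1/n^{o(1)}$-expander, so pruning needs only near-linear total time); as long as the volume pruned from $W$ stays within its budget, the surviving part of $W$ is still a large expander, the remaining embedding still certifies it, and no reset is needed --- we simply report the decremental changes to $W$ and $\pset$. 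A reset is declared exactly when the destroyed embedding weight accumulated in the current phase reaches $\Omegahat(\phi\cdot|V(W)|)$, i.e.\ when the pruning budget is exhausted.

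\paragraph{Resets and congestion balancing.} On a reset we first raise capacities: taking the $\kappa$-sparse cut $S$ that last obstructed routing, we double $\kappa(e)$ on every $e$ crossing $S$ (each capped at $2$), then rebuild $W,\pset$ against the new $\kappa$ by re-running the cut-matching game as above. The correctness of the reset loop is the congestion-balancing argument, carried out with the potential $\Phi=\sum_{e}\kappa(e)$, which starts at $O(n)$, never exceeds $2m$, and only increases. The key point is that each capacity raise forces $\Phi$ to grow by a multiplicative factor $1+\Omegahat(\phi)$: an obstructing cut carries a constant fraction of the $\Otil(n)$ routing demand (a property of how the cut-matching game picks its bisections), yet is $\kappa$-sparse, so doubling $\kappa$ across it increases $\Phi$ by at least $\Omegahat(\phi)\cdot\Theta(n\log n)$ while also supplying fresh capacity to re-route that demand --- preventing the adversary from exploiting the same cut again. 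As $\Phi$ multiplies by $1+\Omegahat(\phi)$ each time and travels from $O(n)$ to at most $2m$, there are $\Ohat(\phi^{-1})\log(m/n)=\Ohat(\phi^{-1})$ resets, giving $|R|=\Ohat(\phi^{-1})$. For running time: per phase the within-phase work (path removals plus \Cref{thm:pruning}) is $\Ohat(m/\phi)$, and each reset's rebuild is $\Otil(1)$ length-$\Ohat(1/\phi)$-bounded blocking flows on a graph whose capacities are multiples of $1/d$ capped at $2$, hence of total value $O(n\log n)$ and cost $\Ohat(m/\phi)$; summing over the $\Ohat(\phi^{-1})$ phases yields the claimed $\Ohat(m/\phi^{2})$ total update time.

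\paragraph{Main obstacle.} The bookkeeping of the cut-matching game and the wiring to \Cref{thm:pruning} are routine; the genuine difficulty is the congestion-balancing potential argument, where one must pin down the reset budget, the capacity cap $2$, and the doubling rule so that simultaneously: (i) every capacity raise provably multiplies $\Phi$ by $1+\Omegahat(\phi)$, which requires showing that any obstructing cut produced by the game absorbs an $\Omega(1)$-fraction of the current routing demand; (ii) capacities stay $O(1)$ multiples of $1/d$ with $d\le 2\davg$, so that the re-embedding flows remain of value $O(n\log n)$ and hence cheap; and (iii) the cut returned on termination is genuinely vertex-sparse and vertex-balanced rather than merely $\kappa$-sparse --- the translation from a capacitated edge cut to a vertex cut runs through the $1/\phi$ vertex-congestion bound, with all the $n^{o(1)}$ losses of the directed cut-matching game tracked consistently along the way.
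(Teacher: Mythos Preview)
Your overall architecture---a capacity function $\kappa$ initialised to $1/d$, a cut-matching game that routes matchings under edge capacities $\kappa$ and vertex capacity $1/\phi$, expander pruning (\Cref{thm:pruning}) applied to $W$ within a phase, and doubling $\kappa$ on an obstructing cut when routing fails---matches the paper's sketch closely.

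The gap is the potential argument, which you correctly flag as the main obstacle but then set up incorrectly. You take $\Phi=\sum_e\kappa(e)$ and claim each doubling multiplies $\Phi$ by $1+\Omegahat(\phi)$ because the obstructing cut ``carries a constant fraction of the $\Otil(n)$ routing demand''. But the increase in $\Phi$ when you double across a cut $S$ is exactly the current $\kappa$-capacity of $S$, and an obstructing cut by definition has $\kappa$-capacity \emph{strictly below} the demand it separates---that is what ``obstructing'' means. In your own scheme you double only on cuts that are \emph{not} $1/n^{o(1)}$-vertex-balanced (the balanced ones you output and terminate on); for such a cut with small side $L$ the separated matching demand is at most $|L|<n^{1-o(1)}$, so the $\kappa$-capacity and hence the increase to $\Phi$ are both below $n^{1-o(1)}$. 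This is neither $\Omegahat(\phi n)$ in general nor a multiplicative $1+\Omegahat(\phi)$ gain once $\Phi$ has grown past $\Theta(n)$. Your cap $\kappa(e)\le 2$ makes matters worse: once the cut edges sit at the cap, doubling adds nothing to $\Phi$ at all.

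The paper uses a different potential. Put $c(e)=\log(d\kappa(e))$ and let $\Pi(G,\kappa)$ be the minimum, over all \emph{very large} $\phi$-witnesses $W'$ of $G$, of the total $c$-cost of embedding $W'$; since any such embedding has vertex congestion $1/\phi$, one has $\Pi=O(n/\phi)$. The step the paper calls ``more involved than the one for matching'' is showing that every doubling raises $\Pi$ by $\Omegahat(n)$, which yields $\Ohat(1/\phi)$ doublings directly. Only \emph{after} that bound is in hand does the paper touch your quantity $\sum_e\kappa(e)$: each doubling adds at most $O(n)$ to it (the cut capacity cannot exceed that round's routing demand), so total capacity stays $\Ohat(n/\phi)$; since a phase ends only after the adversary has deleted $\Omegahat(n)$ capacity, there are $\Ohat(1/\phi)$ phases. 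Note also that the paper decouples doublings (which occur inside witness construction whenever a round of the cut-matching game fails to route) from resets (which occur when pruning exhausts its budget), whereas you tie exactly one doubling to each reset; a single reconstruction may trigger many doublings before it succeeds.
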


To obtain the above result, we observe that an arbitrary embedding $\pset$ might not be robust to adversarial deletions, because a small number of edges might have most of the flow. To balance the edge-congestion, we introduce a capacity $\kappa(e)$ on each edge. Initially we set $\kappa(e) = 1/d$, where $d$ is the average degree in the input graph. At each step, the algorithms uses approximate flows and the cut-matching game to try to find a witness with vertex congestion $\Otil(1/\phi)$ and edge-congestions $\kappa(e)$. If it fails, the subroutine finds a low-capacity cut $C$; it then doubles capacities in $C$ and tries again. Since we assume a witness does exist, the algorithm will eventually find a witness once the edge-capacities are high enough.

Once we have a witness $W$ with embedding $\pset$, we use the lazy approach. Say the adversary deletes an edge $(u,v)$. Because our embedding obeyed capacity constraints, this can remove at most edges from $W$ of total weight at most $\kappa(u,v)$. To maintain expansion, we feed these deletions into our expander pruning algorithm (Theorem \ref{thm:pruning}) to yield a pruned set $P$, and shrink our witness to $W[V(W) - P]$. To guarantee that $W$ remains a large witness, we end the phase once the pruned set $P$ it too large. We will show that we end a phase only after the adversary deletes $\Omegahat(n)$ edge-capacity from the graph.

As with $\rmatching$, the crux of our analysis will be to show that the total of number of doubling steps is $\Ohat(1/\phi)$. To do so, we again use costs $c(e) = \log(d\kappa(e))$ and use a potential function $\Pi(G,\kappa)$ which measures the \emph{min-cost embedding} in $G$ among all \emph{very large} $\phi$-witness. As the vertex congestion is $1/\phi$, this potential $\Pi(G,\kappa)$ is at most $n/\phi$. Also, we are able to show that each doubling step increases the potential by $\Omegahat(n)$ using an argument that is more involved than the one for matching. Therefore, there are at most $\Ohat(1/\phi)$ doubling steps as desired. 

Given this bound, we can bound the total number of phases: each doubling step adds at most $n$ to the total capacity $\kappa$, and the initial capacity is at most $1/d \cdot m = n$. So the final total capacity is at most $\Ohat(n/\phi)$. As each phase must delete $\Omegahat(n)$ capacity, there are at most $\Ohat(1/\phi)$ phases. 

Again, to highlight the strength of this result, the above theorem shows we only need to re-embed a witness $\Ohat(\phi^{-1})$ times throughout the entire sequence of deletions, whereas we showed that the standard technique might require $\Omegatil(n)$ re-embeddings in the worst case.

\paragraph{A Deterministic Data Structure for decremental SSSP.} Finally, we note that much like in \Cref{chap_rand_decr_sssp}, we can actually use a low-diameter decomposition to find an approximate generalized topological order (ATO) which we can then use in conjunction with the ES-tree to obtain a fast implementation of decremental SSSP. The main difference between our data structure in the deterministic setting apart from the way that the low-diameter decomposition is maintained, is that we can also not allow a randomized separator but have to use deterministic separator vertices when constructing the ATO. We use the rest of this article to prove that it is indeed possible to obtain a non-trivial ATO even under these restrictions.

\section{Deterministic SSSP in decremental Graphs}

In this section, we prove our main results: \Cref{thm:ContributionDetSSSPResult}. Recall that our decremental SSR/SCC result combines our new expander-based framework with earlier techniques for decremental SCC in \cite{Lacki11,ChechikHILP16}. Our decremental SSSP results uses the new framework in a similar way, but now combines it with earlier tools for decremental SSSP in \cite{GutenbergW20a,nearOptDenseSSSP}. In particular, we start with the following proposition, which essentially combined Proposition \ref{thm:lacki} and Theorem \ref{thm:path-to-witness}.

\begin{restatable}{prop}{PGWNProp}
	\label{prop:PGWN}
Let $G=(V,E,w)$ be a weighted decremental graph, and $s \in V$ a fixed source. Let $\mathcal{A}$ be a data structure given some integer $d >0$, that processes edge deletions to $E$ and after every edge deletion ensures that {\bf 1)} $G$ is strongly-connected and has diameter at most $d$ and {\bf 2)} supports path queries between any two vertices in $G$ that returns a path of length $\Ohat(d)$ in almost-path-length query time and runs in total update time $T(m,n,d)$ (here we assume $T(m_1, n_1, d_1) + T(m_2, n_2, d_2) \leq T(m,n,d)$ for all choices $m,n,d$ and $m_1, m_2, n_1, n_2, d_1, d_2$ such that $m = m_1 + m_2$, $n = n_1 + n_2$ and $d = d_1 + d_2$). At any time the data structure may perform the following operation: it finds and outputs a $\Ohat(1/d)$-sparse cut $(L,S,R)$ where $|L| \leq |R|$ and replaces $G$ with $G[R]$; here we only require the algorithm to output $L$ and $S$ explicitly. (This sparse-cut operation is not an adversarial update, but is rather something the data structure can do of its own accord at ay time.)

Then, there exists a deterministic data structure $\mathcal{B}$ that can report $(1+\epsilon)$-approximate distance estimates and corresponding paths from $s$ to any vertex $v \in V$ in the graph $G$ in almost-path-length query time and has total update time $\Ohat((T(m,n,\delta) + n^3/\delta + n^2 \delta + mn^{2/3})\log W/\epsilon)$ for any choice of $\delta, \epsilon > 0$. (Note that the data structure can cause $V(G)$ to shrink over time via sparse-cut operations, so it only has to answer queries for vertices $u,v$ in the \emph{current} graph.)
\end{restatable}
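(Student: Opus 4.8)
The plan is to combine \Cref{prop:PGWN}'s hypothesis data structure $\mathcal{A}$ with the ATO-framework of \Cref{chap_rand_decr_sssp}, but replacing the randomized separator procedure (\Cref{lma:sepIntro}) by a deterministic one and the randomized SCC maintenance by a deterministic low-diameter decomposition. At a high level, I would maintain a graph $G' \subseteq G$ whose SCCs all have small weak diameter, run a generalized topological order $(\mathcal{V},\tau)$ on $G'$ (now maintained deterministically via \Cref{thm:path-to-witness} and the $\lacki$-style reduction of \Cref{thm:lacki} rather than via \Cref{thm:SCCinDecrGraph}), and whenever an SCC $X$ of $G'$ exceeds the diameter threshold $\frac{\delta|X|}{n}$, invoke $\mathcal{A}$ on $G'[X]$ to get the promised $\Ohat(1/d)$-sparse cut $(L,S,R)$, add $S$ to the separator set $\Shat$, and recurse on the side $L$ containing few vertices. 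The point of $\mathcal{A}$'s sparse-cut guarantee — $|L|\le|R|$ and sparsity $\Ohat(1/d)$ — is exactly the balanced-separator property that the ATO quality analysis in \Cref{chap_rand_decr_sssp} needs: charging each separator vertex $O(\log n)$ times over its lifetime bounds $\sum_i 2|\Shat_i|\cdot n/2^i = \tilde O(n^2/\delta)$, giving a deterministic $\mathcal{ATO}(G,\approx\epsilon\delta)$ of quality $\tilde O(n/\delta)$ for paths of weight $\approx\delta$.

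Next I would feed this ATO into \Cref{thm:SSSPEfficient} (equivalently \Cref{thm:SSSPEfficientIntro}), which is already deterministic and needs no randomness, to obtain $(1+\epsilon)$-approximate $\delta$-restricted distance estimates from $s$ in time $\tilde O(n\delta q/\epsilon + n^2)$; with $q = \tilde O(n/\delta)$ this is $\tilde O(n^2/\epsilon + n^2)$. To cover all depth scales we run this for $\delta = 2^0, 2^1, \dots, 2^{\lceil\lg(Wn)\rceil}$ in parallel and answer a query by taking the minimum estimate over all scales via a min-heap (valid since all estimates are overestimates and at least one is accurate), costing the claimed $\log W/\epsilon$ overhead. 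The running time bookkeeping then splits into: (i) the cost of running $\mathcal{A}$ on the induced subgraphs arising from diameter violations — since each diameter-violating cut is balanced, the standard recursion argument shows each vertex/edge participates in $O(\log n)$ invocations, giving $\Ohat(T(m,n,\delta)\log W/\epsilon)$ by the subadditivity assumption on $T$; (ii) the cost of maintaining $(\mathcal{V},\tau)$ deterministically, which via \Cref{thm:lacki} and \Cref{thm:path-to-witness} contributes the $\Ohat((n^3/\delta + n^2\delta + mn^{2/3})\log W/\epsilon)$ terms — the $m n^{2/3}$ comes from the decremental SCC machinery à la \Cref{thm:ContributionDetSCCResult}, the $n^2\delta$ from the ES-tree of depth $\Ohat(d)=\Ohat(\delta)$ used to monitor distances from each witness, and the $n^3/\delta$ from the number of separator vertices times $m$-type work; (iii) the $\tilde O(n^2/\epsilon)$ from \Cref{thm:SSSPEfficient} itself, which is dominated.

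The main obstacle I expect is twofold. First, the quality analysis of \Cref{chap_rand_decr_sssp} crucially exploited \emph{randomized} ball-growing (\Cref{lma:sepIntro}) so that the expected number of times a fixed shortest-path edge lands in a separator is small; in the deterministic setting the adversary can force any single deterministic separator to cut $|\pi_{x,y}\cap S^*|\sim|S^*|$ many path edges, which is precisely the $\sqrt n$-loss limitation described for the framework of \cite{GutenbergW20a}. Overcoming this requires replacing the "small separator" goal with the expander-based guarantee of \Cref{thm:path-to-witness}: instead of cutting, we \emph{contract} each certified almost-expander (whose weak diameter is $\Ohat(1/\phi)$), so that a shortest path only incurs topological-order backtracking proportional to the size of the expander it sits inside, and — by the expander-decomposition recursion — these sizes shrink geometrically, making $\mathcal{T}'(\pi_{s,t},\tau)$ telescope to $\tilde O(n^2/\delta)$ deterministically. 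Second, \Cref{thm:path-to-witness}'s sparse-cut operation must be correctly interfaced with the $\lacki$ reduction of \Cref{thm:lacki} so that removing $L$ from $G'$ is reflected as $L$ joining a different SCC in $\gstar[\vstar\setminus\Shat]$ while $\Shat$ stays monotone and small — getting the accounting here to simultaneously satisfy the hypotheses of both \Cref{thm:lacki} (monotone $S$, path queries in $G\setminus S$) and \Cref{prop:PGWN}'s promise about the hypothesis structure $\mathcal{A}$ is the fiddly technical core. The remaining steps — combining scales, the heap, and the final union of the three running-time terms — are routine once these two points are settled.
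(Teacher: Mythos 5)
Your high-level architecture is right — maintain a low-diameter decomposition and an approximate topological order from the sparse cuts supplied by $\mathcal{A}$, feed the ATO into the SSSP-via-ATO reduction, run a deterministic SCC data structure for the topological order (correctly yielding the $mn^{2/3}$ term), and combine scales — but there are three substantive gaps.

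First, your quality accounting is internally inconsistent. In your opening paragraph you claim the deterministic charging argument produces an $\mathcal{ATO}$ of quality $\tilde O(n/\delta)$, and in your second paragraph you plug this into \Cref{thm:SSSPEfficient} to obtain $\tilde O(n^2/\epsilon)$. But as the ``limitations of the framework'' discussion in \Cref{chap_rand_decr_sssp} explains, and as you yourself concede in your third paragraph, a deterministic separator can be forced to give only $\mathcal{T}'(\pi_{s,t},\tau)=\Ohat(n^2/\delta)$, i.e.\ quality $\approx n^2/\delta^2$ for a weight-$\delta$ path, not $n/\delta$. You never carry that correction back into the running-time accounting. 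Once the quality is fixed, the SSSP-via-ATO cost is $\Ohat(n^3/(\delta\epsilon)+n^2\delta/\epsilon)$; that, and not ``an ES-tree of depth $\delta$ monitoring witnesses'' or ``separator vertices times $m$-type work'', is the actual source of the $n^3/\delta$ and $n^2\delta$ terms in the statement. The witness and ES-tree costs are swallowed by the hypothesis term $T(m,n,\delta)$.

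Second, you are iterating over the wrong scale parameter and missing the heavy/light edge split. The proof uses a relaxed ATO (\Cref{defn:ATOdecomposition}) with an \emph{additive} quality term $n\cdot\mathbf{dist}_G(s,t)/\nu$, which comes from partitioning $E$ into $E^{\textrm{heavy}}$ (edges of weight exceeding $\nu$, each of which may contribute topological-order regression up to $n$) and $E^{\textrm{light}}$; without this split the argument fails for weighted graphs. Consequently the family of structures is indexed by the weight scale $\nu=2^i$, $0\le i\le\lg W$, with a single fixed hop parameter $\delta$, which is where the lone $\log W$ factor comes from. Your plan iterates over $\delta=2^0,\dots,2^{\lceil\lg Wn\rceil}$ and gives no mechanism to control the $n\cdot\mathbf{dist}/\nu$ error.

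Third, the expander-contraction remedy you propose for the quality loss is neither needed nor what the proof does. Expander machinery lives entirely inside the \emph{implementation} of the hypothesis data structure $\mathcal{A}$ (this is the content of the proof of \Cref{thm:ContributionDetSSSPResult}, not of \Cref{prop:PGWN}); the proposition is agnostic to how $\mathcal{A}$ is built. The ATO is still constructed from sparse cuts and the SCCs of $G'=G\setminus E(\hat S)\setminus E^{\textrm{heavy}}$ exactly as in \cite{GutenbergW20a}; the weaker deterministic quality is simply accepted, written into the relaxed definition, and balanced away by the eventual choice $\delta=n^{1/3}$.
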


It is straight-forward to obtain \Cref{thm:ContributionDetSSSPResult} from the proposition, and \Cref{thm:robust witness}.

\begin{proof}[Proof of \Cref{thm:ContributionDetSSSPResult}.]
We now show how to implement the data structure $\mathcal{A}$ required by the setup of \Cref{prop:PGWN}, with $T(m,n,\delta) = \Ohat(m\delta^2)$ as follows. Given the graph $G$, we can invoke the algorithm described in \Cref{thm:robust witness} with parameter $\phi = \hat{\Theta}(1/\delta)$, such that the algorithm  maintains a $\phi$-short-witness $W$ that restarts up to $\Ohat(1/\phi) = \Ohat(\delta)$ times. Whenever $W$ starts a new \emph{phase}, we use the data structures from \Cref{thm:path-to-witness} and 	\Cref{thm:short-path-oracle} on $G$ and $W$ until the phase ends. We forward the sparse cuts $(L,S,R)$ found in the algorithm from \Cref{thm:path-to-witness} and  \Cref{thm:robust witness} and update $G$ accordingly. Thus after the algorithm from \Cref{thm:robust witness} terminates, the graph $G$ contains only a constant fraction of the vertices that the algorithm in \Cref{thm:robust witness} was initialized upon. We then repeat the above construction and note that after at most $O(\log n)$ times, the graph $G$ is the empty graph. 

We note that to obtain a path between any two vertices in the current graph $G$, we can query the data structures from \Cref{thm:path-to-witness} and \Cref{thm:short-path-oracle} to obtain such a path of length $\Ohat(\delta)$ in almost-path-length time. We further observe that if we set $\phi$ to $\frac{1}{\delta n^{o(1)}}$, for a large enough subpolynomial factor $n^{o(1)}$, then we can ensure that vertices in $G \setminus W$ are at all times at most $\delta/3$ away from some vertex in $W$ by \Cref{thm:path-to-witness}, have that any two vertices in $W$ are at distance at most $\delta/3$ to each other in $G$ by \Cref{thm:short-path-oracle} and \Cref{thm:robust witness}, and again, that there exists a path to every vertex in $G \setminus W$ to a vertex in $W$ of length at most $\delta/3$. But this implies that any two vertices in $G$ are at all times at distance at most $\delta$ and therefore the diameter of $G$ is upper bounded by $\delta$, as required.

The total update time of the data structure $\mathcal{A}$ is at most $\Ohat(m/\phi^2) = \Ohat(m\delta^2)$ by adding the running time of \Cref{thm:robust witness} with the running time induced by the algorithms in \Cref{thm:path-to-witness} and 	\Cref{thm:short-path-oracle} which are restarted in $\Ohat(\delta)$ phases. 

We thus derive an algorithm $\mathcal{B}$ as specified in \Cref{prop:PGWN}, where we use the above data structure $\mathcal{A}$ and where we set $\delta = n^{1/3}$ which gives total update time
\begin{align*}
&\Ohat((T(m,n,\delta) + n^3/\delta + n^2 \delta + mn^{2/3})\log W/\epsilon)= n^{2+2/3 + o(1)} \log W/\epsilon.
\end{align*}
\end{proof}

The rest of this section is dedicated to prove \Cref{prop:PGWN}. We refer the reader to the additional preliminaries from last chapter in \Cref{sec:prelim} for notation used in the next section.  We then again use the abstraction of approximate topological orders which we reduce the problem to and finally prove that an approximate topological order can be maintained efficiently.

\subsection{SSSP via Approximate Topological Orders}

Let us briefly restate the concept of an approximate topological order which we define slightly different but still quite similar to \Cref{chap_rand_decr_sssp} and which we implement similar to \cite{GutenbergW20a}. Recall that the main idea of an approximate topological order is as follow: consider the generalized topological order $(\mathcal{V}, \tau)$ of a graph $G$. Then $G / \mathcal{V}$ is a directed acyclic graph by definition. But this implies that for any (shortest) $s$-to-$t$ path $\pi_{s,t}$ in $G$ we have that  every edge $(X,Y)$ on $\pi_{s,t} / \mathcal{V}$ in $G / \mathcal{V}$ has $\tau(X) < \tau(Y)$. Since further $\tau$ maps to numbers between $1$ and $n$, we have thus that summing along the topological difference of the edges of  $\pi_{s,t} / \mathcal{V}$, we that $\mathcal{T}(\pi_{s,t}, (\mathcal{V}, \tau))$ is at most $n$.

Next, let us assume that the sum of diameters of all SCCs in $\mathcal{V}$ is at most $\epsilon \delta$, then for any shortest path $\pi_{s,t}$, we can upper bound the difference in weight between $\pi_{s,t}/\mathcal{G}$ path in $G / \mathcal{V}$ as opposed to $\pi_{s,t}$ in $G$ by an additive term of $\epsilon \delta$. So, if $\pi_{s,t}$ is of weight at least $\delta$, the additive term can be subsumed in a multiplicative error of $(1\pm\epsilon)$.

Now, the gist of this set-up is that given this upper bound on $\mathcal{T}(\pi_{s,t}, (\mathcal{V}, \tau))$, we can implement a fast SSSP data structure as follows. We know that on a path of length $\delta$ in $G / \mathcal{V}$ there are at most $\delta / 2^i$ edges that have topological order difference more than $2^i n/\delta$ by the pigeonhole principle for any $i$. But this implies that adding an additive error of $\epsilon 2^i$ on each such edge would only amount to an $(1+\epsilon)$ multiplicative error of a shortest path of length $\delta$. But this allowance for a significant additive error can be exploited to speed-up the SSSP data structure significantly because it allows for vertices to consider the neighbors that are close in topological order difference more closely while being more lenient when passing updates to vertices that are far in terms of topological order difference. 

Before we state a data structure from \Cref{chap_rand_decr_sssp} that exploits this very efficiently, let us now state more formally the construct of an approximate topological order. Here, we point out one last issue: we cannot assume that SCCs in $G$ have small diameter in general. Therefore we maintain the generalized topological order on a graph $G'$ initialized to $G$ where we, additionally to adversarial edge updates to $G$, also take vertex separators $S$ such that edges incident to $S$ are deleted from $G'$. This ensures that all SCCs in $G'$ have small diameter. Relating back to $G$ (where no separator was deleted) we have that $\mathcal{T}(\pi_{s,t}, (\mathcal{V}, \tau))$ might be increased by this operation since some edge $(X,Y)$ on $\pi_{s,t} /\mathcal{V}$ with $X$ or $Y$ containing a separator vertex $S$, such that $(X,Y)$ might now go "backwards" in the topological order, i.e. have $\tau(X) > \tau(Y)$. This increases $\mathcal{T}(\pi_{s,t}, (\mathcal{V}, \tau))$ by up to $2n-2$ for every separator vertex since we might move along $(X,Y)$ all the way back in the topological order and then forward again. However, by choosing small separators, we can still bound $\mathcal{T}(P, (\mathcal{V}, \tau))$ by a non-trivial upper bound.

Without further due, let us give the formal definition of an approximate topological order.

\begin{restatable}{defn}{ATO}
\label{defn:ATOdecomposition}
Given a decremental weighted digraph $G=(V,E,w)$ and parameter $\eta \leq n$ and $\nu \leq W$, we say a dynamic tuple $(\mathcal{V}, \tau)$ where $\mathcal{V}$ partitions $V$, and $\tau : \mathcal{V} \rightarrow [1,n]$, is an $\mathcal{ATO}(G, \eta, \nu)$ if at each stage
\begin{enumerate}
    \item $\mathcal{V}$ forms a refinement of all earlier versions of $\mathcal{V}$ and $\tau$ is a \emph{nesting} function, i.e. $\tau$ initially assigns each set in $X$ in the initial version of $\mathcal{V}$ a number $\tau(X)$, such that no other set $Y$ in $\mathcal{V}$ has $\tau(Y)$ in the interval $[\tau(X), \tau(X) + |X| - 1]$. If some set $Y \in \mathcal{V}$ is split at some stage into disjoint subsets $Y_1, Y_2, .., Y_l$, then we let $\tau(Y_1) = \tau(Y)$ and $\tau(Y_{i+1}) = \tau(Y_i) + |Y_i|$. We then return a pointer to each new subset $Y_i$ such that all vertices in $Y_i$ can be accessed in time $O(|Y_i|)$. The value $\tau(X)$ for each $X \in \mathcal{V}$ can be read in constant time. \label{prop:TauUpdate}
    \item each set $X$ in $\mathcal{V}$ has weak diameter $\mathbf{diam}(X, G) \leq \frac{|X| \eta \nu}{n}$, and \label{prop:contractLittle2}
    \item At each stage, for any vertices $s, t \in V$, the shortest-path $\pi_{s,t}$ in $G$ satisfies $\mathcal{T}(\pi_{s,t}, (\mathcal{V}, \tau)) = \Ohat
    \left(\frac{n^2}{\eta} + n\cdot \frac{\mathbf{dist}_G(s,t)}{\nu}\right)$. \label{prop:TauTotal2}
\end{enumerate}
\end{restatable}

Here, we captured in Property \ref{prop:TauUpdate}, that the vertex sets in $\mathcal{V}$ decompose over time, that $\tau$ is \emph{nesting} and that all sets are easily accessible. In Property \ref{prop:contractLittle2}, we capture that the sum of diameters of the vertex sets in $\mathcal{V}$ is small. It is not hard to see that by summing the upper bound on the diameter of all such sets $X$ in $\mathcal{V}$, we get that the sum of diameters is bounded by $\eta \nu$. Finally, we give an upper bound for the topological order difference for any shortest-path in $G$.

The main result of the next section, shows that we can maintain an $\mathcal{ATO}$ using data structure $\mathcal{A}$ from \Cref{prop:PGWN}.

\begin{restatable}{lemma}{AtoDecomp}
\label{lem:ATOdecomposition}
Given a decremental weighted digraph $G=(V,E,w)$, parameters $\eta \leq n, \nu \leq W$, and a data structure $\mathcal{A}$ as described in \Cref{prop:PGWN} that can for each SCC $X$ in $\mathcal{V}$ at any point return a path between any two vertices $u,v \in X$ of length $\Ohat(\frac{|X|\eta \nu}{n})$ in near-linear time. Then, we can deterministically maintain a $\mathcal{ATO}(G, \eta, \nu)$ in total update time $\Ohat(T(m,n,\eta) + mn^{2/3})$.
\end{restatable}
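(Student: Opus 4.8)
The plan is to follow the template of \Cref{chap_rand_decr_sssp} and \cite{GutenbergW20a}, but to replace the randomized ball-growing separator by the deterministic sparse cuts that the data structure $\mathcal{A}$ exposes. We maintain a decremental graph $G' \subseteq G$ on vertex set $V$: starting from $G'=G$, we additionally delete from $G'$ every edge incident to a monotonically growing \emph{separator set} $\hat{S} \subseteq V$. On top of $G'$ we maintain its generalized topological order $(\mathcal{V},\tau)$ with $\tau$ nesting; this is the deterministic analogue of \Cref{thm:SCCinDecrGraph}, obtained by running the deterministic decremental SCC algorithm of \Cref{thm:ContributionDetSCCResult} and recomputing Tarjan's topological order \cite{tarjan1972depth} on the current condensation as in \cite{GutenbergW20a}, at total cost $\Ohat(mn^{2/3})$. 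Property~\ref{prop:TauUpdate} of \Cref{defn:ATOdecomposition} is then immediate, since $G'$ is decremental and $\tau$ is nesting.

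To enforce Property~\ref{prop:contractLittle2}, we monitor the diameter of each SCC of $G'$ with the help of $\mathcal{A}$. For every SCC $X$ that arises in $\mathcal{V}$ we start a fresh instance $\mathcal{A}_X$ of the data structure from \Cref{prop:PGWN} on (a suitably weight-rounded copy of) $G'[X]$ with depth threshold $d_X = \Theta\!\left(\frac{|X|\eta\nu}{n}\right)$, and feed it all edge deletions affecting $X$. Whenever $\mathcal{A}_X$ performs a sparse-cut operation, returning $(L,S,R)$ with $|L|\le|R|$ and sparsity $\Ohat(1/d_X)$, we add $S$ to $\hat{S}$ (so $E(S)$ leaves $G'$), pass the induced changes to the topological-order structure, and recursively start new $\mathcal{A}$-instances on every newly formed SCC inside $L\cup S$ (the singletons in $S$ need none). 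Initialization is the analogous loop that sweeps SCCs in rounds by decreasing size, exactly as \textsc{Init} in \Cref{subsec:InitATO}. Because $\mathcal{A}$ only ever removes a \emph{smaller} side $L$, a standard charging argument shows that each vertex is cut off inside an SCC of a given size-scale at most once, hence lies in only $O(\log n)$ successive $\mathcal{A}$-instances; combined with the sub-additivity and monotonicity of $T$, the total time over all instances is $\Ohat(T(m,n,\eta))$ (up to the weight-rounding that lets $\mathcal{A}$ run at depth $O(\eta)$ rather than $O(\eta\nu)$), and the cost of producing the cuts and rerunning Tarjan is subsumed. When the loop is idle, $\mathcal{A}_X$ certifies $\mathbf{diam}(G'[X]) \le d_X$; since $G'[X]\subseteq G$, the weak diameter of $X$ in $G$ is at most $O\!\left(\frac{|X|\eta\nu}{n}\right)$, giving Property~\ref{prop:contractLittle2} up to the usual constant-factor adjustments.

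The substance of the proof is Property~\ref{prop:TauTotal2}. Fix any stage and any $s,t$, and let $\pi_{s,t}$ be the shortest $s$--$t$ path. As in \Cref{chap_rand_decr_sssp} it suffices to bound the backward part $\mathcal{T}'(\pi_{s,t},(\mathcal{V},\tau)) = \sum_{(u,v)\in\pi_{s,t}}\min\{0,\tau(X^v)-\tau(X^u)\}$, since $\mathcal{T}(\pi_{s,t},(\mathcal{V},\tau)) \le 2\,\mathcal{T}'(\pi_{s,t},(\mathcal{V},\tau)) + n$. Partition $\hat{S}$ into levels $\hat{S}_0,\hat{S}_1,\dots$, placing a vertex into $\hat{S}_j$ if the SCC in which it was cut off had size in $[n/2^{j+1},\,n/2^{j})$. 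Two facts drive the estimate. First, an edge of $\pi_{s,t}$ contributes to $\mathcal{T}'$ only if it has already been removed from $G'$ (as $G'/\mathcal{V}$ is acyclic in topological order $\tau$), hence only if one endpoint lies in $\hat{S}$; and by the nesting property, once that endpoint joined $\hat{S}_j$ its $\tau$-value and that of its then-same-SCC neighbor are forever confined to a single interval of length $<n/2^j$, so the edge contributes at most $n/2^j$. Second, by the $\Ohat(1/d_X)$-sparsity of the cuts together with the one-cut-per-scale charging above, $|\hat{S}_j| = \Ohat\!\left(\frac{n 2^j}{\eta\nu}\right)$. Since each path vertex meets at most two path edges, and also $|\pi_{s,t}\cap\hat{S}_j|\le|\pi_{s,t}|=O(\mathbf{dist}_G(s,t))$ (after normalizing the minimum edge weight), we obtain $\mathcal{T}'(\pi_{s,t},(\mathcal{V},\tau)) \le \sum_j 2\min\{|\hat{S}_j|,\,|\pi_{s,t}|\}\cdot\frac{n}{2^j}$; splitting the sum at the scale where $|\hat{S}_j|$ and $|\pi_{s,t}|$ cross yields a low-scale contribution of $\Ohat\!\left(\frac{n^2}{\eta}\right)$ and a high-scale contribution of $\Ohat\!\left(\frac{n\,\mathbf{dist}_G(s,t)}{\nu}\right)$, which is the claimed bound.

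The main obstacle is this last step — obtaining Property~\ref{prop:TauTotal2} with exactly these exponents. Unlike in \Cref{chap_rand_decr_sssp}, we cannot use a randomized separator to guarantee that a fixed shortest path meets few cut vertices, so on the low scales we are forced to use the crude bound $|\pi_{s,t}\cap\hat{S}_j|\le|\hat{S}_j|$; making this survive requires carefully choosing the weight-rounding granularity so that $\mathcal{A}$ runs at depth $O(\eta)$ (producing the $T(m,n,\eta)$ term, not $T(m,n,\eta\nu)$) while still keeping the per-level separator count at $\Ohat(n2^j/(\eta\nu))$ and the per-edge distortion at $O(\eta\nu)$, and then balancing the two tails of the sum. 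The remaining ingredients — the recursion, the charging for the running time, and Properties~\ref{prop:TauUpdate} and~\ref{prop:contractLittle2} — are essentially the bookkeeping already carried out in \Cref{chap:rand_scc} and \cite{GutenbergW20a}.
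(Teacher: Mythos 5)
Your overall skeleton — maintain $G' \subseteq G$ whose SCCs have small weak diameter, maintain a nesting generalized topological order on $G'$ via \Cref{thm:ContributionDetSCCResult} and Tarjan, run per-SCC instances of $\mathcal{A}$ that trigger sparse cuts, charge each vertex once per size scale — is the right template and matches the paper. But there is a genuine gap in how you handle edge weights and in how you obtain the second term of Property~\ref{prop:TauTotal2}.

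The paper's key device, which you omit, is a \emph{heavy/light edge split}: edges of weight $> \nu$ are designated $E^{\mathrm{heavy}}$ and \emph{permanently} excluded from $G'$, so that $\mathcal{A}$ runs on the light subgraph \emph{unweighted} with depth $d = \frac{|X|\eta}{2n}$; the weighted diameter bound $\frac{|X|\eta\nu}{n}$ then follows because every light edge has weight at most $\nu$. You instead keep all edges in $G'$ and wave at ``a suitably weight-rounded copy''. This creates two problems. First, the cut sparsity is $\Ohat(1/d)$ for whatever depth $d$ is passed to $\mathcal{A}$; with a rounded/scaled depth of $O(\eta)$ (as you need for the $T(m,n,\eta)$ term), the cuts at scale $j$ are $\Ohat(2^j/\eta)$-sparse and hence $|\hat{S}_j| = \Ohat(n 2^j/\eta)$, not $\Ohat(n 2^j/(\eta\nu))$. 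You cannot simultaneously run at depth $O(\eta)$ and enjoy $1/(\eta\nu)$-sparsity; you yourself flag this tension as the ``main obstacle'' but offer no resolution, and there is none with $\mathcal{A}$ as specified. Second, even granting either $|\hat{S}_j|$ bound, the $\min\{|\hat{S}_j|,|\pi_{s,t}|\}$ split does not yield the $\frac{n\,\mathbf{dist}_G(s,t)}{\nu}$ term: balancing the two sides at the crossover scale collapses the whole sum to $\Ohat(n^2/\eta)$ (or $\Ohat(n^2/(\eta\nu))$), with no separate $n\,\mathbf{dist}/\nu$ contribution appearing. In the paper, that term comes from an entirely different source: since heavy edges are never in $G'$, a heavy edge on $\pi_{s,t}$ can go ``backward'' by up to $n$ with no nesting-interval control, but there are at most $\mathbf{dist}_G(s,t)/\nu$ such edges on any shortest path, giving $n\cdot\mathbf{dist}_G(s,t)/\nu$. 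Without the heavy/light split you have no analogue of this argument, and your min-split does not substitute for it. Finally, the weight-rounded graph is integer-weighted, while the $\mathcal{A}$ implementation used later (\Cref{thm:robust witness}, \Cref{thm:path-to-witness}, \Cref{thm:short-path-oracle}) operates on essentially unweighted graphs; edge subdivision would inflate the size by up to $W/\nu$ per edge, which the heavy/light split avoids by construction.
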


We now restate \Cref{thm:SSSPEfficient} from \Cref{chap_rand_decr_sssp} although we present it in a slightly modified form to adapt it to the modified definition of an $\mathcal{ATO}$ that we use for this paper. However, the adaption is obtained straight-forwardly and we refer the reader to \Cref{thm:SSSPEfficient} to verify.

\begin{restatable}[see \Cref{thm:SSSPEfficient}]{theorem}{ssspSimple2}
\label{thm:SSSPEfficient2}
Given $G=(V,E,w)$, a decremental weighted digraph, a source $r \in V$, an approximation parameter $\epsilon > 0$, and access to $(\mathcal{V}, \tau)$ an $\mathcal{ATO}(G, \eta, \nu)$.

Then, there exists a deterministic data structure that maintains a distance estimate $\widetilde{\mathbf{dist}}(r,v)$ for every vertex $v \in V$ such that at each stage of $G$, $\mathbf{dist}_G(r,v) \leq \widetilde{\mathbf{dist}}(r,v)$ and if $\mathbf{dist}_G(r,v) \in [\eta \nu/\epsilon, 2\eta \nu/\epsilon)$, then 
\[
\widetilde{\mathbf{dist}}(r,v) \leq (1+\epsilon)\mathbf{dist}_G(r,v)
\]
and the algorithm can for each such vertex $v$, report a path of length $(1+\epsilon)\mathbf{dist}_G(r,v)$ in the graph $G / \mathcal{V}$ in almost-path-length time. The total time required by this structure is 
\[
\Ohat\left(\frac{n^3}{\eta\epsilon} + \cdot \frac{n^2 \eta}{\epsilon}\right).
\]
\end{restatable}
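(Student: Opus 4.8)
The plan is to instantiate the deterministic data structure described in \Cref{subsec:alphaDeltaSSSPReduction} --- the bucketed, lazily-updated ES-tree on the contracted graph $\mathcal{H}=(\mathcal{V},\mathcal{V}^2,w)$ together with its certificate tree $T$, i.e. exactly the structure of \Cref{thm:SSSPEfficient} --- with a careful choice of its two free parameters, the depth threshold $\delta$ and the quality parameter $q$, so that the additive form of Property~\ref{prop:TauTotal2} of \Cref{defn:ATOdecomposition} supplies precisely the hypothesis \Cref{thm:SSSPEfficient} needs. First I would match the diameter budget: Property~\ref{prop:contractLittle2} here reads $\mathbf{diam}(X,G)\le|X|\eta\nu/n$, which is the diameter clause (Property~\ref{prop:ContractLittle}) of \Cref{def:ATO} with $\eta_{diam}$ set to $\eta\nu$. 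I would then put $\delta:=\Theta(\eta\nu/\epsilon)$, chosen large enough that the target band $[\eta\nu/\epsilon,2\eta\nu/\epsilon)$ is contained in $[0,\delta]$ while still $\eta_{diam}=\eta\nu=\Theta(\epsilon\delta)$, and pass approximation parameter $\Theta(\epsilon)$ to \Cref{thm:SSSPEfficient}. By Property~\ref{prop:TauTotal2}, every vertex $v$ with $\mathbf{dist}_G(r,v)\le\delta$ satisfies $\mathcal{T}(\pi_{r,v},(\mathcal{V},\tau))=\Ohat(n^2/\eta+n\delta/\nu)$, so I would take $q:=\Ohat(n^2/(\eta\delta)+n/\nu)=\Ohat(\epsilon n^2/(\eta^2\nu)+n/\nu)$, the least value for which $q\delta+n$ dominates that bound. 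With these settings the precondition of \Cref{thm:SSSPEfficient} --- that $\mathbf{dist}_G(r,v)\le\delta$ and $\mathcal{T}(\pi_{r,v},\tau)\le q\delta+n$ --- holds for \emph{every} vertex within depth $\delta$, in particular throughout the band $\mathbf{dist}_G(r,v)\in[\eta\nu/\epsilon,2\eta\nu/\epsilon)$.

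For correctness, \Cref{thm:SSSPEfficient} then yields $\mathbf{dist}_G(r,v)\le\widetilde{\mathbf{dist}}(r,v)$ at all times, and $\widetilde{\mathbf{dist}}(r,v)\le\mathbf{dist}_G(r,v)+\eta_{diam}+\Theta(\epsilon)\delta=\mathbf{dist}_G(r,v)+\Theta(\eta\nu)$ for $v$ in the band; since such a $v$ has $\mathbf{dist}_G(r,v)\ge\eta\nu/\epsilon$, the $\Theta(\eta\nu)$ additive slack is a $\Theta(\epsilon)$-fraction of $\mathbf{dist}_G(r,v)$, so $\widetilde{\mathbf{dist}}(r,v)\le(1+\Theta(\epsilon))\mathbf{dist}_G(r,v)$, and rescaling $\epsilon$ by the appropriate constant from the start gives the stated $(1+\epsilon)$ guarantee. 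The data structure is deterministic because \Cref{thm:SSSPEfficient} is. To answer a path query I would follow the root-to-$X^v$ branch of the certificate tree $T$: each of its edges has the form $(x,y)$ with $x\in X$, $y\in Y$ for a parent--child node pair $(X,Y)$ of $G/\mathcal{V}$, so the branch projects onto a path in $G/\mathcal{V}$ of total weight $\widetilde{\mathbf{dist}}(X^r,X^v)\le\widetilde{\mathbf{dist}}(r,v)\le(1+\epsilon)\mathbf{dist}_G(r,v)$, and is read off in time linear in its number of edges by walking parent pointers.

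For the running time I would substitute $\delta=\Theta(\eta\nu/\epsilon)$ and the chosen $q$ into the $O(n\delta q\log n/\epsilon+n^2\log n)$ bound of \Cref{thm:SSSPEfficient}; the $n^2/\eta$ contribution to the quality produces the $\Ohat(n^3/(\eta\epsilon))$ term, the $n/\nu$ contribution produces the $\Ohat(n^2\eta/\epsilon)$ term, the $n^2\log n$ summand is dominated, and --- crucially --- all dependence on $\nu$ and on the magnitude of $\delta$ cancels, leaving the claimed $\Ohat(n^3/(\eta\epsilon)+n^2\eta/\epsilon)$ bound. The only step that is not purely mechanical, and hence the part I expect to be the main obstacle, is confirming that this reparametrization is \emph{faithful}: one has to recheck that partitioning each node's in-neighbors in $\mathcal{H}$ into buckets by the interval-gap function $\chi(\cdot,\cdot,\tau)$ (which obeys $\chi\le\mathcal{T}$, so the buckets only refine) still charges a total additive error of $O(\epsilon/q)\cdot\mathcal{T}(\pi_{r,v},(\mathcal{V},\tau))$ along any shortest path, and that the $O(n^2)$ bound on the re-bucketing triggered by node splits --- which hinges on picking the largest $j$ with some integer in $\{|X_{i'}|,|X_{i'}|+1,\dots,|X|-1\}$ divisible by $2^j$ --- survives the change from the multiplicative ``$q\cdot w(\pi)+n$'' form of quality used in \Cref{def:ATO} to the additive ``$\Ohat(n^2/\eta+n\,\mathbf{dist}_G(s,t)/\nu)$'' form of \Cref{defn:ATOdecomposition}. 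This is exactly the analysis given for \Cref{thm:SSSPEfficient} in \cite{nearOptDenseSSSP}, and I would carry it over essentially verbatim, invoking Property~\ref{prop:TauTotal2} of \Cref{defn:ATOdecomposition} wherever Property~\ref{prop:TauTotal} of \Cref{def:ATO} was used.
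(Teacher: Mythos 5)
Your proposal is correct and follows essentially the same route as the paper, which itself proves this statement only by asserting that it is a straightforward re-instantiation of \Cref{thm:SSSPEfficient} with $\eta_{diam}=\eta\nu$, $\delta=\Theta(\eta\nu/\epsilon)$, and the quality $q$ read off from Property~\ref{prop:TauTotal2}, exactly as you do, including reading the $G/\mathcal{V}$-path off the certificate tree $T$. The only quibble is bookkeeping: with your (and the paper's) parameter choices the literal substitution into $O(n\delta q\log n/\epsilon+n^2\log n)$ gives $\Ohat\bigl(n^3/(\eta\epsilon)+n^2\eta/\epsilon^2\bigr)$ rather than $n^2\eta/\epsilon$ in the second term, a discrepancy the paper glosses over as well and which is harmless since the downstream results (\Cref{prop:PGWN}, \Cref{thm:ContributionDetSSSPResult}) only claim $\poly(1/\epsilon)$ dependence.
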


We can now prove \Cref{prop:PGWN}.

\begin{proof}[Proof of \Cref{prop:PGWN}.]
For every $0 \leq i \leq \lg W$, where $W$ is the aspect ration of $G=(V,E,w)$, we maintain at level $i$, an $\mathcal{ATO}(G, \delta, 2^i)$ using \Cref{lem:ATOdecomposition}, and then running \Cref{thm:SSSPEfficient2} on $G$ and the $\mathcal{ATO}(G, \delta, 2^i)$ from our source vertex $s$ to depth $\delta \cdot 2^i$. Thus, each such data structure maintains for every vertex $v$ at distance $[\delta \cdot 2^i/\epsilon', \delta \cdot 2^{i+1}/\epsilon')$ from $s$ an $(1+\epsilon')$-approximate distance estimate. We can therefore find for every vertex $v$ at distances larger than $\delta/\epsilon'$ from $s$ a distance estimate in some of these data structures that gives the right approximation, and since all data structures overestimate the distance, we can find the right distance estimate by comparing all distance estimates $\widetilde{\mathbf{dist}}(s,v)$. Finally, we can maintain a simple ES-tree in time $O(m \delta/\epsilon')$ to obtain exact distances from $s$ to every vertex at distance at most $\delta$.

It is not hard to verify that the total update time of all data structures is
\begin{align*}
&\sum_{0 \leq i \leq \lg W} \left( \Ohat\left(\frac{n^3}{\delta\epsilon'} + \cdot \frac{n^2 \delta}{\epsilon'}\right) + \Ohat(T(m,n,\delta) + mn^{2/3})\right) \\
&= \Ohat((T(m,n,\delta) + n^3/\delta + n^2 \delta + mn^{2/3})\log W/\epsilon').
\end{align*}
for $\epsilon'$ to be set $\epsilon' = \epsilon/n^{o(1)}$ which is again subsumed in the $\Ohat$-notation.

To answer path queries for a $s$-to-$v$ path $\pi_{s,v}$, we query the corresponding shortest path data structure where we found a $(1+\epsilon')$-approximation. This gives us the path $\widetilde{\pi_{s,v}}$ in $G / \mathcal{V}$ for some $\mathcal{ATO}$ $(\mathcal{V}, \tau)$. We then identify for every vertex $x$ on $\widetilde{\pi_{s,v}}$ the corresponding SCC in $\mathcal{V}$ and the two endpoints in $G$ of the incident edges on $\widetilde{\pi_{s,v}}$. We can then query for a path between these two vertices in the $\mathcal{ATO}$ data structure. Summing over all exposed paths, by \Cref{lem:ATOdecomposition}, we can extend the path $\widetilde{\pi_{s,v}}$ to a path in $G$ of length $(1+\epsilon')\mathbf{dist}_G(s,v) + \Ohat(\eta \nu)$. But we have that $\mathbf{dist}_G(s,v) \geq \delta/\epsilon'$. Thus, setting $\epsilon'$ to $\epsilon/2$ divided by the subpolynomial factor hidden in $\Ohat(\eta \nu)$, we obtain a path of length $(1+\epsilon)\mathbf{dist}_G(s,v)$. Since each piece on the path can be obtained in almost-path-length time, we can also construct the extension of path $\widetilde{\pi_{s,v}}$ to a path in $G$ in almost-path-length time. This completes the proof.
\end{proof}
\subsection{A Deterministic Algorithm to Maintain an Approximate Topological Order}

Finally, let us prove the main ingredient to achieve our result. 

\AtoDecomp*
\begin{proof}
We start the proof by partitioning the edge set $E$ of the initial graph $G$ into edge set $E^{heavy}$ and $E^{light}$. We assign every edge $e \in E$ to $E^{heavy}$ if its weight $w(e)$ is larger than $\nu$, and to $E^{light}$ if $w(e) \leq \nu$. 

We now describe our algorithm where we focus on the graph $G$ where the edge set $E^{heavy}$ is removed. As we will see later, there can only be few edges from $E^{heavy}$ on any shortest path. Let us start the proof by giving an overview and then a precise implementation. We finally analyze correctness and running time. 

\paragraph{Algorithm.} Our goal is subsequently to maintain an incremental set $\hat{S} \subseteq V$ such that every SCC $X$ in $G' = G \setminus E(\hat{S}) \setminus E^{heavy}$ has unweighted diameter at most $\frac{|X|\eta}{n}$. Since each edge weight is at most $\nu$ this will imply that every SCC $X$ in the weighted version of $G'$ has diameter at most $\frac{|X|\eta\nu}{n}$. 

We then maintain $(\mathcal{V}, \tau)$ as the generalized topological order of $G'$ using the data structure described in \Cref{thm:SCCinDecrGraph} which is a straight-forward extension of \Cref{thm:ContributionSCCmain} using internally the algorithm by Tarjan \cite{tarjan1972depth} as described in \cite{GutenbergW20a, nearOptDenseSSSP}.

To maintain $G'$, we initialize a data structure $\mathcal{A}$ on every SCC $X$ in the initial set $\mathcal{V}$ on the graph $G'[X]$ with parameter $d = \frac{|X|\eta}{2n}$. Then, whenever such a data structure $\mathcal{A}$ that currently operates on some graph $G'[Y]$, announces a sparse cut $(L, S, R)$ and sets its graph to $G'[R]$, we add $S$ to $\hat{S}$ and then initialize a new data structure $\mathcal{A}'$ on $G'[L]$ with parameter $d = \frac{|L|\eta}{2n}$. Further, if the data structure $\mathcal{A}$ was initialized on a graph with vertex set at least twice as large as $R$, we delete $\mathcal{A}$, and initialize a new data structure $\mathcal{A}''$ on $G'[R]$ with $d = \frac{|R|\eta}{2n}$. This completes the description of the algorithm.

\paragraph{Correctness of the Algorithm.} We prove each property of the theorem individually:
\begin{itemize}
    \item \underline{Property \ref{prop:TauUpdate}}: It is straight-forward to see that since $(\mathcal{V}, \tau)$ is the generalized topological order of $G' \subseteq G$ and since it is maintained to satisfy the nesting property, that Property \ref{prop:TauUpdate} follows immediately. 
    \item \underline{Property \ref{prop:contractLittle2}:} Observe that $\mathcal{V}$ is the set of SCCs in $G'$. Further, observe that we maintain the data structures $\mathcal{A}_1, \mathcal{A}_2, \dots$ such that vertex set of all graphs that they run on spans all vertices in $V \setminus S$. For the vertices in $S$ we have that each $s \in \hat{S}$ forms a trivial SCC and therefore certainly satisfies the constraint. For each set $X$ that some data structure $\mathcal{A}$ runs upon, we have that the unweighted diameter is at most the $d$ that $\mathcal{A}$ was initialized with. Observe that we delete data structures if the size of the initial vertex set $Y$ is decreased by factor $2$. Thus, we have that the data structure $\mathcal{A}$ was initialized for some $d = \frac{|Y|\eta}{2n} \leq \frac{|X|\eta}{n}$. Since the largest edge weight in $G'$ is $\nu$, we thus have that for each SCC $X$ in $\mathcal{V}$, we have $\mathbf{diam}(X,G') \leq \frac{|X|\eta \nu}{n}$. Adding edges in $E(\hat{S})$ and $E^{heavy}$ can further only decrease the weak diameter and therefore we finally obtain that,
    \[
        \mathbf{diam}(X,G) \leq \frac{|X|\eta \nu}{n}.
    \]
    \item \underline{Property \ref{prop:TauTotal2}}: In order to establish the last property, let us partition the set $\hat{S}$ into sets $S_1, S_2, \dots, S_{\lg n}$ where a vertex $s$ is in $S_i$ if it joined $\hat{S}$ after a data structure $\mathcal{A}$ announced it that was initialized on a graph $G'[Y]$ where $Y$ was of size $[n/2^{i+1},n/2^i)$. Since we delete data structures after their initial vertex set has halved in size, we have that are such data structure that added vertices to a set $S_i$ ran with $d \geq \frac{(n/2^{i+1}) \eta}{2n} = \frac{n \eta}{2^{i+2}}$. Since each such set of vertices $S$ that was added to $S_i$ is $\Ohat(1/d)$-sparse and we then only compute sparse cuts on the induced subgraphs of the cut, we further have that there are at most $\Ohat(n/d) = \Ohat(2^i/\eta)$ vertices in $S_i$ at the end of the algorithm. Further, we observe that every edge $(u,v)$ that was contained in the subgraph $G'[Y]$ when $\mathcal{A}$ was initialized has both endpoints in $Y$ and therefore by property \ref{prop:TauUpdate}, we have $|\tau(u) - \tau(v)| < |Y| \leq n/2^{i-1}$. 
    
    Now, let us fix any shortest path $\pi_{s,t}$ in $G$ (in the current version). Instead of analyzing $\mathcal{T}(\pi_{s,t}, (\mathcal{V}, \tau))$, let us analyze 
    \[
    \mathcal{T}'(\pi_{s,t}, (\mathcal{V}, \tau)) \stackrel{\text{def}}{=} \sum_{(u,v) \in \pi_{s,t}} \max\{ 0 , \tau(u) - \tau(v)\}.
    \]
    which only considers the edges on the path that go "backwards" in the topological order. However, it can be seen that for every path $\mathcal{T}(\pi_{s,t}, (\mathcal{V}, \tau)) \leq 2\mathcal{T}'(\pi_{s,t}, (\mathcal{V}, \tau)) + n$.

    For edges on $\pi_{s,t}$ in $E^{heavy}$, we observe that each such edge $(u,v)$ can contribute to $\mathcal{T}'(\pi_{s,t}, (\mathcal{V}, \tau))$ at most $n$ since $\tau(u) - \tau(v) \leq n$ (trivially since both numbers are taken from the interval $[1,n]$). Further, since each such edge adds weight at least $\nu$ to the shortest path, there are at most $\frac{\mathbf{dist}(s,t)}{\nu}$ such edges. Thus, the total contribution by all these edges is at most $n \frac{\mathbf{dist}(s,t)}{\nu}$.
    
    For the edges on $\pi_{s,t}$ in $E^{light}$, we observe that each edge $(u,v)$ that contributes to $\mathcal{T}'(\pi_{s,t}, (\mathcal{V}, \tau))$ is not in $G'$ since $(\mathcal{V}, \tau)$ is a generalized topological order of $G'$ and therefore directed "forwards" (recall the definition in \cref{sec:prelim}). Thus, each such edge is in $E(\hat{S})$ and therefore incident to some vertex $s$ in some $S_i$. But then it adds at most $n/2^{i-1}$ to $\mathcal{T}'(\pi_{s,t}, (\mathcal{V}, \tau))$ by our previous discussion. Since a path only visits each vertex once, and by our bound on the size of $S_i$, we can now bound the total contribution by
    \[
        \mathcal{T}'(\pi_{s,t}, (\mathcal{V}, \tau)) \leq \sum_{i} |S_i| n/2^{i-1} = \Ohat(n^2/\eta + n \frac{\mathbf{dist}(s,t)}{\nu})
    \]
\end{itemize}

\paragraph{Bounding the Running Time.} Observe that for any vertex $x \in V$, that between any two times that it part of a graph $G'[Y]$ that a data structure $\mathcal{A}$ is invoked upon and of graph $G'[X] \subseteq G'[Y]$, the set $X$ is of at most half the size of $Y$. This follows by the definition of data structure $\mathcal{A}$ which whenever a sparse cut $(L,S,R)$ is output, continues on the graph $G'[R]$ where $R$ is larger than $L$ while no data structure is thereafter initialized on a graph containing any vertex in $S$. 

But if the SCC that some vertex $x$ is contained in halves in size every time between two data structures $\mathcal{A}$ are initialized upon $x$, then we have that $x$ participates in at most $\lg n$ data structures over the entire course of the algorithm. Since each edge $(x,y)$ or $(y,x)$ for any $y \in V$ is only present in the induced graph containing $x$, we have that no data structure that is not initialized on a graph with vertex set contain $x$ has $(x,y)$ or $(y,x)$ in its graph. Thus, every edge only participates in $\lg n$ graphs. 

Finally, we observe that the distance parameter $d$ that each data structure $\mathcal{A}$ is upper bounded by $\eta/2$. Thus, by the (super-)linear behavior of the function $T(m,n,d)$, we have that the total update time for all data structures in $\Ohat(T(m,n,\eta))$. Further, we have by \Cref{thm:SCCinDecrGraph} that the data structure maintaining $(\mathcal{V}, \tau)$ can be implemented in time $\Ohat(mn^{2/3})$. The time required for all remaining operations is subsumed in both bounds. 

\paragraph{Returning the Paths.} For any SCC $X$ in $\mathcal{V}$, we have that there is a data structure $\mathcal{A}$ on $G'[X]$ that allows for SCC queries. Since by our previous discussion each such data structure runs with $d$ at most $\frac{|X|\eta}{n}$ and each edge on the path has weight at most $\nu$ (recall that $G'$ only contains edges of small weight), we can return the path from data structure $\mathcal{A}$ on query.   
\end{proof}

\chapter{A Deterministic Algorithm for Incremental SSSP}

\label{chap:incr_sssp}

In this chapter, we prove the second part of \Cref{thm:ContributionSSSPResult}. We state the precise Theorem.

\begin{theorem}[Incremental Part of \Cref{thm:ContributionSSSPResult}]
\label{thm:ContributionIncrSSSPResult}
Given a decremental input graph $G=(V,E,w)$ with $n = |V|, m=|E|$ and aspect ratio $W$, a dedicated source $r \in V$ and $\epsilon > 0$, there is a deterministic algorithm that maintains a distance estimate $\widetilde{\mathbf{dist}}(r,x)$, for every $x \in V$, such that
\[
    {\mathbf{dist}}_G(r,x) \leq \widetilde{\mathbf{dist}}(r,x) \leq (1+\epsilon) {\mathbf{dist}}_G(r,x)
\]
at any stage of $G$. The algorithm has total update time $\tilde{O}(n^2 \log W/\poly(\epsilon))$. Distance queries are answered in $O(1)$ time, and a corresponding path $P$ can be returned in $O(|P|)$ time.
\end{theorem}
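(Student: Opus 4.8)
The plan is to mirror, in the incremental setting, the machinery developed in \Cref{chap_rand_decr_sssp} for decremental SSSP: maintain an approximate topological order $(\mathcal{V},\tau)$ of the current graph together with a depth-restricted SSSP data structure built on top of it, so that the running time per distance scale becomes $\tilde O(n^2)$ rather than $\tilde O(m\sqrt n)$. The first step is to observe that in an incremental graph, distances only decrease, so the ``generalized topological order'' viewpoint dualizes cleanly: instead of an $\mathcal{ATO}$ whose partition $\mathcal{V}$ refines over time as SCCs split, we will have a partition that \emph{coarsens} over time as edge insertions merge SCCs, and we must redesign $\tau$ accordingly so that a set $X$ formed by merging $X_1,\dots,X_k$ receives an interval containing all the old intervals (the nesting property now points the other way). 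The key quantitative fact to re-establish is that for any shortest $s$-to-$t$ path $\pi_{s,t}$ in any version of $G$, the topological-order discrepancy $\mathcal{T}(\pi_{s,t},\tau)$ stays bounded by roughly $q\cdot w(\pi_{s,t})+n$ for a quality $q=\tilde O(n/(\epsilon\delta))$ — here we can afford a \emph{deterministic} separator procedure (\Cref{lma:balancedVertexSep}, \Cref{lma:sep}) rather than the randomized exponential-ball separator, because the bound $|\pi_{s,t}\cap S^*|\le |S^*|$ is acceptable once we are aiming only for $\tilde O(n^2)$ total (the $\sqrt n$-loss that forced randomization in the dense decremental regime is not an issue here since the incremental problem is, structurally, the easier direction and the grid lower-bound example does not bite the same way — we only ever pay $n\cdot|S^*|$ once per merge event and merges are monotone).

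Concretely, I would proceed as follows. First, set up the weight reduction of \Cref{sec:reductionWeights} so that $W=n^{O(1)}$ and run $\lg(nW)$ parallel copies, one per distance scale $\delta=2^i$; a min-heap over the copies answers queries in $O(1)$. Second, for a fixed scale $\delta$, initialize $G'=G$ (empty, since incremental) and maintain $(\mathcal{V},\tau)=\textsc{GeneralizedTopOrder}(G')$ using the incremental-SCC/topological-order maintenance; whenever an insertion would create an SCC $X$ in $G'$ of weak diameter exceeding $\frac{|X|\epsilon\delta}{n}$, invoke the deterministic balanced separator of \Cref{lma:balancedVertexSep} inside $G[X]$, add the separator vertices to a monotone set $S$, and \emph{never} put the separated edges back into $G'$ (so $G'=G\setminus E(S)$ always has all SCCs of small weak diameter, and the $\mathcal{ATO}$ properties \ref{prop:ContractLittle}-type bound holds). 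Third, bound $|S\cap[\text{size }2^i]|=\tilde O(2^i n/(\epsilon\delta))$ by the balancedness recursion exactly as in the review of \cite{GutenbergW20a}, giving quality $q=\tilde O(n^2/(\epsilon\delta n))$; plug this into a deterministic analogue of \Cref{thm:SSSPEfficient} (the buckets-by-$\tau$-difference ES-tree, which is \emph{already deterministic}) to get a $(1+\epsilon)$-approximate distance estimate for all $v$ with $\mathbf{dist}(r,v)\in[\delta,2\delta)$ in total time $\tilde O(n\delta q/\epsilon+n^2)=\tilde O(n^2/\epsilon)$ per scale. Fourth, handle the base case $\delta<$ some polylog by a plain incremental ES-tree in $\tilde O(m\cdot\mathrm{polylog})$, and sum over the $\lg(nW)$ scales to get $\tilde O(n^2\log W/\poly(\epsilon))$. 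Path reconstruction follows the shortest-path tree $T$ maintained by the ES-tree together with the short intra-component paths certified by the small-diameter property of $G'$'s SCCs.

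The main obstacle I anticipate is the \emph{monotonicity direction} of the data structures: every ingredient in \Cref{chap_rand_decr_sssp} (the ES-tree of \Cref{thm:EStree}, the $\mathcal{ATO}$ maintenance, the topological-order nesting) is phrased for \emph{decremental} graphs where distances grow and sets split, but here distances shrink and sets merge. The ES-tree itself is fine (Even--Shiloach works in both the incremental and decremental settings, as noted after \Cref{thm:EStree}), but the subtle point is that when an insertion merges two SCCs $X_1,X_2$ of $G'$ into $X$, the new $X$ might violate the diameter bound immediately, and one must re-run the separator inside $G[X]$ \emph{and} re-initialize the ES-tree pieces and the $\tau$-buckets for the merged component while still charging everything against a monotone potential — the natural potential being ``total volume of separator-incident edges already removed'' plus ``number of distinct components each vertex has belonged to,'' which is $O(\log n)$ per vertex by balancedness. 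Getting the bookkeeping so that a merge is charged to the smaller of the two merged sides (a union-find-style weighted-merge argument) is where the real care is needed; once that amortization is in place, the rest is a faithful transcription of the decremental analysis with every ``split'' replaced by ``merge'' and every ``refinement'' by ``coarsening.'' A secondary subtlety is that the incremental problem permits a \emph{deterministic} algorithm, so one must avoid the random-root trick of \cite{roditty2008improved} entirely and instead pay the crude $|\pi\cap S^*|\le|S^*|$ bound — verifying that this still yields $q=\tilde O(n/(\epsilon\delta))$ rather than something $\sqrt n$ worse requires checking that, unlike the dense decremental grid example, no single insertion sequence can force $\Omega(\sqrt n)$ top-level separator vertices to all lie on one shortest path simultaneously, which holds because a top-level separator is only ever computed once (at the moment the full graph's SCC first becomes large), after which components only grow and the separator of the \emph{final} large component absorbs the cost.
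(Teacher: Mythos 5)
Your proposal takes a completely different route from the paper, and it has a genuine gap that I do not think can be repaired along the lines you sketch.

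The paper's actual proof of \Cref{thm:ContributionIncrSSSPResult} (\Cref{chap:incr_sssp}) does \emph{not} use approximate topological orders, separators, or any of the $\mathcal{ATO}$ machinery from \Cref{chap_rand_decr_sssp}. Instead it is built on a ``lazy ES-tree'' with a new notion of \emph{forward neighborhoods}: for each vertex $u$, $\mathcal{FN}(u)$ is the set of out-neighbors whose current distance estimate exceeds $u$'s, maintained via cached (possibly stale) positions in an array $\texttt{Cache}_u$. The central structural fact (\Cref{inv:warm}, generalized to \Cref{inv:correct}) is a \emph{directed} analogue of the undirected ``vertices at distance $\ge 3$ have disjoint neighborhoods'' property of \cite{bernstein2016deterministic}: if $u,v$ both lie on a shortest path and $\mathcal{FN}(u)\cap\mathcal{FN}(v)\neq\emptyset$, then their distance estimates are close. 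Heavy vertices (large $\mathcal{FN}$) are scanned lazily — every $2^{h(u)}$ decrements — and the error they induce is bounded by the disjointness argument plus a pigeonhole over heaviness levels. This is an inherently incremental mechanism (distance estimates only decrease, $\texttt{Cache}$ positions only decrease, \Cref{lem:cachedecreases}), deterministic from the outset, and it never touches SCCs, contractions, or topological orders.

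The gap in your approach is the quality bound for the $\mathcal{ATO}$. You want $q=\tilde O(n/(\epsilon\delta))$ with a \emph{deterministic} balanced separator (\Cref{lma:balancedVertexSep}) and you argue this is attainable because ``a top-level separator is only ever computed once'' in the incremental setting. That claim is not supported. As edges are inserted the diameter of the components of $G'=G\setminus E(S)$ grows, components merge, and you must re-separate repeatedly; each re-separation on a component of size $\Theta(n)$ produces a top-level separator whose edges go backwards by up to $n$ in $\tau$, exactly as in the decremental limitation discussion (``Limitations of the Framework,'' end of \cref{subsec:ExtendSSSPToGenGraphs}). The grid obstruction applies unchanged: the adversary can build a $\sqrt n\times\sqrt n$ bidirected grid one edge at a time, forcing $|S^*|=\Omega(\sqrt n)$ separator vertices each of which can lie on a single shortest path, so the deterministic bound $|\pi\cap S^*|\le|S^*|$ gives $\mathcal T(\pi,\tau)=\Omega(n^{1.5})$, i.e.\ quality $\Omega(\sqrt n\cdot n/\delta)$, not $\tilde O(n/\delta)$. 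The paper escapes this in the \emph{decremental} case only by randomizing the separator (\Cref{lma:sepIntro}); since your target theorem is deterministic, that escape is unavailable, and you have not supplied a replacement. Without a correct quality bound, the $\tilde O(n\delta q/\epsilon+n^2)$ running time of a hypothetical incremental analogue of \Cref{thm:SSSPEfficient} becomes $\tilde O(n^{2.5})$, not $\tilde O(n^2)$. A secondary issue: even the ``dualization'' of $\tau$-nesting under component merges (intervals growing rather than shrinking) is not given, and the bucket-by-$\tau$-difference ES-tree of \Cref{thm:SSSPEfficient} is stated for decremental graphs where distance estimates only increase; its correctness and running-time arguments are phrased in terms of monotone increase, so this step also requires a nontrivial reworking that the proposal does not provide.

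In short, the paper sidesteps the whole $\mathcal{ATO}$/separator apparatus in the incremental setting precisely because it cannot be made both deterministic and near-quadratic, and instead introduces a new, self-contained lazy-scanning technique (forward neighborhoods plus heaviness levels plus $\texttt{Cache}$/$\texttt{Expire}$ bookkeeping), which is what you would need to reconstruct to prove the stated theorem.
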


As opposed to previous sections, we do not start this chapter with an overview, but instead present a first $O(n^{2+2/3}/\epsilon)$ total update time algorithm that is simple and captures our main ideas. We then give an overview that explains how to extend the ideas to obtain the running time stated in \Cref{thm:ContributionIncrSSSPResult}. The rest of the chapter is then concerned with proving \Cref{thm:ContributionIncrSSSPResult} (however, we only sketch some of the more technical proofs). We point out that in order to avoid clutter, we first prove \Cref{thm:ContributionIncrSSSPResult} for unweighted graphs, and only in the last section show that in a weighted graph the dependency on the weight ratio becomes only polylogarithmic, and therefore we can use the reduction stated in \Cref{chap:prelim} to reduce the dependency on $W$ to $\log W$ as claimed.

\section{Warm-up: An $O(n^{2+2/3}/\epsilon)$ Time Algorithm}

In this section we describe an algorithm for incremental SSSP on unweighted directed graphs with total update time  $O(n^{2+2/3}/\epsilon)$. This algorithm illustrates the main ideas used in our $\tilde{O}(n^2 \log W/\epsilon)$ algorithm.

\begin{theorem}
There is a deterministic algorithm that given an unweighted directed graph $G=(V,E)$ subject to edge insertions, a vertex $r \in V$, and $\epsilon>0$, maintains for every vertex $v$ an estimate $\widetilde{\mathbf{dist}}(r,v)$ such that after every update $\mathbf{dist}(r,v)\leq \widetilde{\mathbf{dist}}(r,v)\leq (1+\epsilon)\mathbf{dist}(r,v)$, in total time $O(n^{2+2/3}/\epsilon)$.
\end{theorem}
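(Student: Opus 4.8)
The plan is to fix a hop/distance threshold $h = n^{2/3}$ and to split the maintenance of $\widetilde{\mathbf{dist}}(r,\cdot)$ into a ``short range'' part (distances at most $h$) and a ``long range'' part (distances exceeding $h$). For the short range I would run a single incremental ES-tree from $r$ truncated at depth $h$, exactly as in \Cref{sec:esTree} but run in the incremental direction: on inserting an edge $(u,v)$ we check whether $\widetilde{\mathbf{dist}}(r,u)+1<\widetilde{\mathbf{dist}}(r,v)$ and, if so, decrease $\widetilde{\mathbf{dist}}(r,v)$ and cascade the update to the out-neighbours of $v$. Since the graph is incremental, every label $\widetilde{\mathbf{dist}}(r,v)$ is non-increasing, integer-valued, and never exceeds $h$ before it is frozen at $\infty$; hence each vertex $v$ triggers at most $h+1$ relaxation rounds, each costing $O(\deg^{out}(v))$, for a total of $O(\sum_v \deg^{out}(v)\,h)=O(mh)=O(n^{2+2/3})$ because $m\le n^2$. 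In an unweighted graph a shortest $r$-to-$v$ path of length at most $h$ uses at most $h$ edges, so this structure returns $\mathbf{dist}(r,v)$ exactly whenever $\mathbf{dist}(r,v)\le h$.

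For the long range I would exploit that any $v$ with $\mathbf{dist}(r,v)>h$ can tolerate an additive error of $\epsilon h\le \epsilon\,\mathbf{dist}(r,v)$, which lets me collapse the $\Theta(n)$ possible distance values into only $O(\log n)$ geometric scales $\delta\in\{2h,4h,\dots,n\}$ and, within each scale, into $O(1/\epsilon)$ coarse levels of width $\Theta(\epsilon h)$. Concretely, I would maintain a hop-reduced auxiliary graph $\widehat{G}$ on $V$: using the exact $h$-hop distances already available (and $h$-hop ES-trees from the set of vertices that the short-range tree certifies at hop-distance exactly $h$, re-rooted as the frontier advances), I install a shortcut edge of small integer length whenever a vertex's certified distance crosses a multiple of $\Theta(\epsilon h)$, so that every shortest path of weight $\approx \delta$ maps to a path of only $O(\delta/(\epsilon h))$ edges in $\widehat{G}$ while distances are preserved up to $\pm\epsilon\delta$. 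Running an incremental ES-tree on $\widehat{G}$ truncated at the corresponding shallow depth, once per scale, then yields the required $(1+\epsilon)$-estimates for the long range, and balancing the cost of maintaining the shortcuts and of these shallow ES-trees against the $O(mh)$ cost of the short-range tree is exactly what forces $h=n^{2/3}$ and gives overall time $O(n^{2+2/3}/\epsilon)$.

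Finally I would combine the short-range tree and the $O(\log n)$ long-range structures by keeping, for every vertex $v$, a min-heap over its current estimates in all of them; the query returns the minimum in $O(1)$ time, all estimates are overestimates of $\mathbf{dist}(r,v)$, and at least one of them is $(1+\epsilon)$-accurate for $v$'s scale, so correctness follows; a corresponding path is read off by following tree pointers and expanding shortcut edges through the stored hop-bounded subpaths. The part I expect to be genuinely delicate is the long-range structure: maintaining the hop-reduced graph $\widehat{G}$ deterministically and incrementally while simultaneously (i) keeping it distance-preserving up to the additive budget $\epsilon\delta$ as the frontier of certified distances sweeps outward, and (ii) charging the shortcut insertions and re-rootings so that the total stays within $O(n^{2+2/3}/\epsilon)$. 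This is the incremental, directed analogue of the approximate-topological-order and bounded-``topological-difference'' machinery used decrementally in \Cref{chap_rand_decr_sssp}, and the directed incremental setting is where the main work lies; the subsequent sections improving the bound to $\tilde{O}(n^2)$ essentially replace this crude hop-reduction with a sharper one.
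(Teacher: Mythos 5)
Your short-range component matches the paper exactly: an ES-tree truncated at depth $h=n^{2/3}$ run in the incremental direction, with the standard $O(mh)=O(n^{2+2/3})$ total update time charge of $O(\deg(v))$ per decrement of $\widetilde{\mathbf{dist}}(r,v)$. You have also correctly identified the key slack, namely that vertices at distance greater than $h$ can absorb an additive error of $\epsilon h$. But from there your long-range proposal departs entirely from the paper's, and it leaves the genuinely hard part unspecified in a way that I do not believe can be filled in along the lines you sketch. You propose a hop-reduced auxiliary graph $\widehat{G}$ built by running $h$-hop ES-trees from the frontier vertices (those the short-range tree certifies at distance exactly $h$). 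In the worst case this frontier has $\Theta(n)$ vertices, and each $h$-hop ES-tree already costs $O(mh)=O(n^{2+2/3})$, so the total cost of even constructing these trees once is $\Theta(n^{3+2/3})$, far over budget; moreover, the frontier shifts after every insertion, and re-rooting or incrementally maintaining a collection of directed ES-trees at a moving frontier is not something the paper (or any existing technique I am aware of) provides. Maintaining such a shortcut/hopset structure deterministically and incrementally in a \emph{directed} graph is itself an open problem: the paper explicitly discusses in its conclusion that directed hopsets have a $\Omega(n^{1/17})$ lower bound on hop-diameter and that the only known non-trivial constructions are not dynamic. You also mischaracterize the paper's later improvement to $\tilde{O}(n^2)$: it does not replace your hop-reduction with a sharper one; rather it replaces the two-level light/heavy classification with $\Theta(\log n)$ heaviness levels per vertex and $\log n$ lazy ES-trees, which is a refinement of a different mechanism entirely.

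The paper's long-range structure is a single ``lazy'' ES-tree to full depth $n$ (alongside the exact depth-$h$ tree), based on a directed analogue of the ``disjoint neighborhoods'' trick, not on shortcuts. Each vertex $u$ keeps a cache $\texttt{Cache}_u$ of (possibly stale) distance estimates of its out-neighbors, defines a \emph{forward neighborhood} $\mathcal{FN}(u)$ as the out-neighbors whose cached estimate exceeds $\widetilde{\mathbf{dist}}(r,u)+1$, and is declared \emph{heavy} when $|\mathcal{FN}(u)|\ge\gamma=6n^{2/3}/\epsilon$. Heavy vertices rescan $\mathcal{FN}(u)$ only every $n^{1/3}$ decrements of $\widetilde{\mathbf{dist}}(r,u)$; light vertices rescan on every decrement. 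Correctness follows because (i) light vertices contribute no error, (ii) Invariant~\ref{inv:warm} bounds the estimate gap between a vertex and anything in its forward neighborhood by $n^{1/3}$, and (iii) at most $2n/\gamma$ heavy vertices on any fixed shortest path can have pairwise-disjoint forward neighborhoods, so the cumulative additive error is $O\left(\frac{n}{\gamma}\cdot n^{1/3}\right)=O(\epsilon n^{2/3})\le\epsilon\,\mathbf{dist}(r,t)$ when $\mathbf{dist}(r,t)>n^{2/3}$. The running time is dominated by scans of $\mathcal{FN}(u)$ for light vertices ($O(n\gamma)$ per vertex) and full neighborhood scans of heavy vertices every $n^{1/3}$ decrements ($O(mn^{2/3})$ total), with the light/heavy toggling amortized against the $\gamma/2$ insertions needed to become heavy again. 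This is a purely combinatorial, degree-threshold argument and does not involve hopsets or auxiliary graphs at all, which is why it avoids the obstacles your sketch runs into.
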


To obtain this result, we take inspiration from a simple property of \emph{undirected} graphs: \emph{Any two vertices at distance at least $3$ have disjoint neighborhoods}. This observation is crucial in several spanner/hopset constructions as well as other graph algorithms (for example \cite{awerbuch1985complexity, elkin20041, bodwin2016better, elkin2016hopsets, henzinger2016dynamic}), as well as partially dynamic SSSP on undirected graphs \cite{bernstein2016deterministic, bernstein2017deterministic}. In~\cite{bernstein2016deterministic}, Bernstein and Chechik exploit this property for partially dynamic undirected SSSP in the following way. The property implies that for any vertex $v$ on a given shortest path from $r$ to some $t$, the neighborhood of $v$ is disjoint from almost all of the other vertices on this shortest path. Thus there cannot be too many high-degree vertices on any given shortest path, and therefore high-degree vertices are allowed to induce large additive error which can be exploited to increase the efficiency of the algorithm (much like in \Cref{chap_rand_decr_sssp}).

Whilst we would like to argue along the same lines, this property is unfortunately not given in \emph{directed} graphs: {there could be two vertices $u$ and $v$ at distance $3$, and a third vertex $z$ that only has in-coming edges from $u$ and $v$. Clearly, $u$ and $v$ can now still be at distance $3$ whilst their out-neighborhoods overlap}. We overcome this issue by introducing \emph{forward neighborhoods} $\mathcal{FN}(u)$ that only include vertices from the out-neighborhood $\mathcal{N}^{out}(u)$ that are estimated to be further away from the source vertex $r$ than $u$. Now, suppose there are two vertices $u$ and $v$ both appear on some shortest path from $r$ to some $t$ and whose forward neighborhoods overlap. Let $w$ be a vertex in $\mathcal{FN}(u) \cap \mathcal{FN}(v)$. Since $w$ has a larger distance estimate than $u$ and the edge $(u,w)$ is in the graph, the distance estimates of $u$ and $w$ must be close, assuming that each distance estimate does not incur much error.
Similarly, the distance estimates of $v$ and $w$ must be close. But then the distance estimates of $u$ and $v$ must also be close. Therefore, the forward neighborhood of each vertex on a long shortest path must only overlap with the forward neighborhoods of few other vertices on the path. In summary, our extension of the property to directed graphs is that if the distance estimates of $u$ and $v$ differ by a lot, then $u$ and $v$ have disjoint forward neighborhoods.

\paragraph{The data structure}

In order to illustrate our approach, we present a data structure that only maintains approximate distances for vertices $u$ that are at distance $\mathbf{dist}(r,u) > n^{2/3}$. This already improves the state of the art since  we can maintain the exact distance $\mathbf{dist}(r,u)$ if $\mathbf{dist}(r,u)\leq n^{2/3}$ simply by using a classic ES-tree to depth $n^{2/3}$ which runs in time $O(mn^{2/3})$. 

To understand the motivation behind our main idea, let us first consider a slightly modified version of the classic ES-trees that achieves the same running time: We maintain for each vertex $u\in V$ an array $A_u$ with $n$ elements where $A_u[i]$ is the set of all vertices $v \in \mathcal{N}^{out}(u)$ with $\mathbf{dist}(r,v)=i$. Then, when $\mathbf{dist}(r,u)$ decreases, the set of vertices in $\mathcal{N}^{out}(u)$ whose estimated distance from $r$ decreases is exactly the set of vertices stored in $A_u[\mathbf{dist}^{NEW}(r,u) + 2, n]$ which we call the \emph{forward neighborhood} $\mathcal{FN}(u)$ of $u$. (Recall that $A_u[i,j]$ is  the subarray of $A$ from index $i$ to index $j$, inclusive.) That is since each such vertex $v$ has at estimated distance more than $\mathbf{dist}(r,u) + 1$ thus relaxing the edge $(u,v)$ is ensured to decrease $v$'s distance. Thus, we only need to scan edges with tail $u$ and head $v \in \mathcal{FN}(u)$, however, we also need to update $A_u$ whenever an in-neighbor of $u$ decreases its distance estimate.

For our data structure which we call a ``lazy'' ES-tree, we relax several constraints and use a lazy update rule. Instead of maintaining the exact value of $\mathbf{dist}(r,v)$ for all $v \in \mathcal{N}^{out}(u)$, we only maintain an \emph{approximate} distance estimate $\widetilde{\mathbf{dist}}(r,v)$. Whilst we still maintain an array for each vertex $u \in V$, we now only update the position of $v$ only after $\widetilde{\mathbf{dist}}(r,v)$ has decreased by at least $n^{1/3}$ or if $(u,v)$ was scanned by $u$. To emphasize that this array is only updated occasionally, instead of using the notation $A_u$, we use the notation $\texttt{Cache}_u$. Again, we define $\texttt{Cache}_u[\widetilde{\mathbf{dist}}(r,u) + 2, n]$ to be the \emph{forward neighborhood} of $u$ denoted $\mathcal{FN}(u) \subseteq \mathcal{N}^{out}(u)$. Further, if $\mathcal{FN}(u)$ is small (say of size $O(n^{2/3})$), we say $u$ is \emph{light}. Otherwise, we say that $u$ is \emph{heavy}.

Now, we distinguish two scenarios for our update rule: if $u$ is \emph{light}, then we can afford to update the distance estimates of the vertices in $\mathcal{FN}(u)$ after every decrease of $\widetilde{\mathbf{dist}}(r,u)$. However, if $u$ is \emph{heavy}, then we only update the vertices in $\mathcal{FN}(u)$ after the distance estimate $\widetilde{\mathbf{dist}}(r,u)$ has been decreased by at least $n^{1/3}$ since the last scan of $\mathcal{FN}(u)$.

Additionally, for each edge $(u,v)$, every time $\widetilde{\mathbf{dist}}(r,v)$ decreases by at least $n^{1/3}$, we update $v$'s position in $\texttt{Cache}_u$.

Finally, we note that $|\mathcal{FN}(u)|$ changes over time and so we need to define the rules for when a vertex changes from light to heavy and vice versa more precisely. Initially, the graph is empty and we define every vertex to be light. Once the size of $\mathcal{FN}(u)$ is increased to $\gamma = 6n^{2/3}/\epsilon$, we set $u$ to be heavy. On the other hand, when $|\mathcal{FN}(u)|$ decreases to $\gamma/2$, we set $u$ to be light. Whenever $u$ becomes light, we immediately scan all $v\in \mathcal{FN}(u)$ and decrease each $\widetilde{\mathbf{dist}}(r,v)$ accordingly. This completes the description of our algorithm.

\paragraph{Running time analysis}

Let us now analyze the running time of the lazy ES-tree. For each vertex $u$, every time $\widetilde{\mathbf{dist}}(r,u)$ decreases by $n^{1/3}$, we might scan $u$'s entire in- and out-neighborhoods. Since $\widetilde{\mathbf{dist}}(r,u)$ can only decrease at most $n$ times, the total running time for this part of the algorithm is $O(nm /n^{1/3}) = O(mn^{2/3})$. 

For every light vertex $u$, we scan $\mathcal{FN}(u)$ every time $\widetilde{\mathbf{dist}}(r,u)$ decreases. Since $\widetilde{\mathbf{dist}}(r,u)$ can only decrease at most $n$ times and since $u$ is light, the total running time for all vertices spent for this part of the algorithm is $O(\sum_{v \in V} n\gamma )=O(n^{2+2/3}/\epsilon)$.

Whenever a vertex $u$ changes from heavy to light, we scan $\mathcal{FN}(u)$. If $u$ only changes from heavy to light once per value of $\widetilde{\mathbf{dist}}(r,u)$, then the running time is $O(n^{2+2/3}/\epsilon)$ by the same argument as the previous paragraph. So, we only consider the times in which $u$ toggles between being light and heavy whilst having the same value of $\widetilde{\mathbf{dist}}(r,u)$. Since the position of vertices in $\texttt{Cache}_u$ can only decrease, the only way for $u$ to become heavy while keeping the same value of $\widetilde{\mathbf{dist}}(r,u)$ is if an edge is inserted. Since $\gamma/2$ edges must be inserted before $u$ becomes heavy since it last became light, there were $\gamma/2$ edge insertions with tail $u$. Since each inserted edge is only added to a single $\mathcal{FN}(u)$ (namely to the forward neighborhood of its tail), we can amortize the cost of scanning the $\gamma/2$ vertices in $\mathcal{FN}(u)$ over the $\gamma/2$ insertions.

Combining everything, and since the classic ES-tree to depth $n^{2/3}$ takes at most $O(mn^{2/3})$ update time when run to depth $n^{2/3}$, we establish the desired running time.

\paragraph{Analysis of correctness}
Let us now argue that our distance estimates are maintained with multiplicative error $(1+\epsilon)$. The idea of the argument can be roughly summarized by the following points: 
\begin{enumerate}
    \item the light vertices do not contribute any error,
    \item we can bound the error contributed by pairs of heavy vertices whose forward neighborhoods overlap, and
    \item the number of heavy vertices on any shortest path with pairwise disjoint forward neighborhoods is small.
\end{enumerate}
We point out that while the main idea of allowing large error in heavy parts of the graphs is similar to \cite{bernstein2016deterministic}, we rely on an entirely new method to prove that this incurs only small total error. We start our proof by proving the following useful invariant. 
\begin{invariant}\label{inv:warm}
After every edge update, if $v\in \mathcal{FN}(u)$ then $|\widetilde{\mathbf{dist}}(r,v) - \widetilde{\mathbf{dist}}(r,u)| \leq n^{1/3}$. 
\end{invariant}
\begin{proof}
First suppose that $\widetilde{\mathbf{dist}}(r,u) \leq \widetilde{\mathbf{dist}}(r,v)$. Since $\widetilde{\mathbf{dist}}(r,u)$ and $\widetilde{\mathbf{dist}}(r,v)$ can only decrease, we wish to show that $\widetilde{\mathbf{dist}}(r,u)$ cannot decrease by too much without $\widetilde{\mathbf{dist}}(r,v)$ also decreasing. This is true simply because every time $\widetilde{\mathbf{dist}}(r,u)$ decreases by at least $n^{1/3}$, $\widetilde{\mathbf{dist}}(r,v)$ is set to at most $\widetilde{\mathbf{dist}}(r,u)+1$. 

Now suppose that $\widetilde{\mathbf{dist}}(r,u) > \widetilde{\mathbf{dist}}(r,v)$. Since $\widetilde{\mathbf{dist}}(r,u)$ and $\widetilde{\mathbf{dist}}(r,v)$ can only decrease, we wish to show that $\widetilde{\mathbf{dist}}(r,v)$ cannot decrease by too much while remaining in $\mathcal{FN}(u)$. This is true simply because every time $\widetilde{\mathbf{dist}}(r,v)$ decreases by at least $n^{1/3}$, we update $v$'s position in $\texttt{Cache}_u$. If $\widetilde{\mathbf{dist}}(r,v)< \widetilde{\mathbf{dist}}(r,u)+2$ and $v$'s position in $\texttt{Cache}_u$ is updated, then $v$ leaves $\mathcal{FN}(u)$.
\end{proof}

Consider a shortest path $\pi_{r,t}$ for any $t \in V$, at any stage of the incremental graph $G$. Let $t_0=s$. Then, for all $i$, let $r_{i+1}$ be the first heavy vertex after $t_{i}$ on $\pi_{r,t}$ and let $t_{i+1}$ be the last vertex on $\pi_{r,t}$ whose forward neighborhood intersects with the forward neighborhood of $r_{i+1}$ (possibly $t_{i+1} = r_{i+1}$). Thus, we get pairs $(r_1, t_1), (r_2, t_2), \dots, (r_k, t_k)$. Additionally, let $r_{k+1} = t$. Since the forward neighborhoods of all $r_i$'s are disjoint and of size at least $\gamma/2$ (recall that $r_i$ is heavy), we have that there are at most $k\leq 2n/\gamma$ pairs $(r_i, t_i)$.

For each $i$, let $v_i$ be some vertex in $\mathcal{FN}(r_i)\cap \mathcal{FN}(t_i)$. Note that $v_i$ exists by definition of $t_i$. By Invariant~\ref{inv:warm}, $|\widetilde{\mathbf{dist}}(r, r_i) - \widetilde{\mathbf{dist}}(r, v_i)| \leq n^{1/3}$ and $|\widetilde{\mathbf{dist}}(r, t_i) - \widetilde{\mathbf{dist}}(r, v_i)| \leq n^{1/3}$. Thus, $\widetilde{\mathbf{dist}}(r, t_i) - \widetilde{\mathbf{dist}}(r, r_i) \leq 2n^{1/3}$.

Let $t'_i$ be the vertex on $\pi_{r,t}$ succeeding $t_i$ (except $t'_0 = r$). If $t'_i\in \mathcal{FN}(t_i)$ then by Invariant~\ref{inv:warm}, $\widetilde{\mathbf{dist}}(r, t'_i) - \widetilde{\mathbf{dist}}(r, t_i) \leq n^{1/3}$. Otherwise, $t'_i\not\in \mathcal{FN}(t_i)$ so $\widetilde{\mathbf{dist}}(r, t'_i)\leq \widetilde{\mathbf{dist}}(r, t_i)+1$. So regardless, we have $\widetilde{\mathbf{dist}}(r, t'_i) - \widetilde{\mathbf{dist}}(r, t_i) \leq n^{1/3}$ and therefore, since $\widetilde{\mathbf{dist}}(r, t_i) - \widetilde{\mathbf{dist}}(r, r_i) \leq 2n^{1/3}$, we have $\widetilde{\mathbf{dist}}(r, t'_i) - \widetilde{\mathbf{dist}}(r, r_i) \leq 3 n^{1/3}$.

We will show that if $u$ is a light vertex and $(u,v)$ is an edge, then $\widetilde{\mathbf{dist}}(r,v)\leq \widetilde{\mathbf{dist}}(r,u)+1$. Consider the last of the following events that occurred: a) edge $(u,v)$ was inserted, b) $\widetilde{\mathbf{dist}}(r,u)$ was decremented, or c) $\widetilde{\mathbf{dist}}(r,u)$ became light. In case a), the algorithm decreases $\widetilde{\mathbf{dist}}(r,v)$ to be at most $\widetilde{\mathbf{dist}}(r,u)+1$. In cases b) and c), the algorithm updates the distance estimate of all vertices in $\mathcal{FN}(u)$, so if $\widetilde{\mathbf{dist}}(r,v)>\widetilde{\mathbf{dist}}(r,u)+1$ then $\widetilde{\mathbf{dist}}(r,v)$ is decreased to $\widetilde{\mathbf{dist}}(r,u)+1$. Thus we have shown that $\widetilde{\mathbf{dist}}(r, r_{i+1})-  \widetilde{\mathbf{dist}}(r, t'_i) = d(t'_i, r_{i+1})$. 

Putting everything together, $\pi_{r,t}$ can be partitioned into (possibly empty) path segments $\pi_{r,t}[t'_i, r_{i+1}]$ and $\pi_{r,t}[r_{i+1}, t'_{i+1}]$. Observe that by definition for each path segment $\pi_{r,t}[t'_i, r_{i+1}]$, the vertices of all edge tails on that segment are light. Thus, by preceding arguments, we can now bound $\widetilde{\mathbf{dist}}(r, t)$ by
\begin{align*}
    \widetilde{\mathbf{dist}}(r, t) &\leq \sum_{i=0}^k \widetilde{\mathbf{dist}}(r, r_{i+1}) - \widetilde{\mathbf{dist}}(r, t'_i) + \sum_{i=0}^{k-1} \widetilde{\mathbf{dist}}(r, t'_{i+1}) - \widetilde{\mathbf{dist}}(r, r_{i+1})\\
    &< \sum_{i=0}^k d(t'_i, r_{i+1}) + 3k n^{1/3} \leq \mathbf{dist}(r, t) + n^{2/3}\epsilon
\end{align*}
The last inequality comes from our bound on $k$ and the definition of $\gamma$. Thus, if $\mathbf{dist}(r, t)>n^{2/3}$ then $\widetilde{\mathbf{dist}}(r, t) \leq (1+\epsilon)\mathbf{dist}(r, t)$. Otherwise, $\mathbf{dist}(r, t)\leq n^{2/3}$ so the classic ES-tree up to depth $n^{2/3}$ finds the exact value of $\mathbf{dist}(r, t)$.

\section{Overview}

Let us now describe how to improve the construction above to derive an $\tilde{O}(n^2 \log W/\epsilon^{2.5})$ algorithm. For the rest of this and the next sections, we focus on proving the theorem below which only deals with unweighted graphs, and extend the theorem using standard edge rounding techniques.

\begin{theorem}[Unweighted version of Theorem \ref{thm:ContributionIncrSSSPResult}]
There is a deterministic algorithm that given an unweighted directed graph $G=(V,E)$ subject to edge insertions, a vertex $r\in V$, and $\epsilon>0$, maintains for every vertex $v$ an estimate $\widetilde{\mathbf{dist}}(r,v)$ such that after every update $\mathbf{dist}(r,v)\leq \widetilde{\mathbf{dist}}(r,v)\leq (1+\epsilon)\mathbf{dist}(r,v)$, and runs in total time $\tilde{O}(n^2/\epsilon)$. A query for the approximate shortest path from $r$ to any vertex $v$ can be answered in time linear in the number of edges on the path.
\end{theorem}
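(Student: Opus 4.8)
The plan is to improve the warm-up $O(n^{2+2/3}/\epsilon)$ algorithm by applying the ``lazy ES-tree with forward neighborhoods'' idea recursively across $O(\log n)$ density scales, rather than with the single threshold $\gamma \approx n^{2/3}$. First I would set up a geometric sequence of thresholds $\gamma_0 < \gamma_1 < \dots < \gamma_L$ with $L = O(\log n)$, and for each vertex $u$ classify it into a density class according to the current size of $\mathcal{FN}(u)$. As before, a vertex whose forward neighborhood has size in roughly $[\gamma_{i}, \gamma_{i+1})$ is only allowed to push updates to its forward neighborhood after its distance estimate has decreased by some slack $\beta_i$; the key is to make the allowed slack grow with the density class so that the total work at level $i$ — roughly (number of decrements) $\times$ ($\gamma_{i+1}$ / $\beta_i$) summed over all vertices — is $\tilde O(n^2/\epsilon)$ at every level, which forces a relation like $\beta_i \approx \epsilon\,\gamma_i / n$ (so that $\gamma_{i+1}/\beta_i \approx n/\epsilon$ up to constants). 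The toggling-between-classes argument from the warm-up generalizes: whenever a vertex moves down a density class we rescan its (now smaller) forward neighborhood, and this is charged either to a distance decrement or, for same-distance oscillations, to the $\gamma/2$ edge insertions that must have occurred since the last downgrade, each insertion being charged $O(1)$ because an inserted edge enters exactly one forward neighborhood.

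Next I would handle correctness, which is where the multi-scale version gets delicate. The analogue of Invariant~\ref{inv:warm} still holds level by level: if $v \in \mathcal{FN}(u)$ and $u$ is in density class $i$, then $|\widetilde{\mathbf{dist}}(r,v) - \widetilde{\mathbf{dist}}(r,u)| = O(\beta_i)$, for the same two reasons (whenever $\widetilde{\mathbf{dist}}(r,u)$ drops by $\beta_i$ we refresh $\mathcal{FN}(u)$, and whenever $\widetilde{\mathbf{dist}}(r,v)$ drops by $\beta_i$ we refresh $v$'s slot in $\texttt{Cache}_u$, which would evict $v$ if it no longer belongs in the forward part). Fix a shortest path $\pi_{r,t}$. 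I would decompose it level by level: within each density class $i$, the vertices on $\pi_{r,t}$ of that class have pairwise disjoint forward neighborhoods \emph{unless} their distance estimates are close, exactly as in the warm-up, so the number of ``representative'' class-$i$ vertices on the path with genuinely disjoint forward neighborhoods is at most $2n/\gamma_i$ (disjointness forces each to eat $\gamma_i/2$ of the $\le n$ vertices). Each such representative contributes additive error $O(\beta_i)$ via the invariant, so the total error contributed by class $i$ is $O((n/\gamma_i)\cdot \beta_i) = O(\epsilon)$ if we choose $\beta_i \approx \epsilon \gamma_i / n$; summing over the $L = O(\log n)$ levels gives total additive error $O(\epsilon \log n)$ on any shortest path, and rescaling $\epsilon$ absorbs the $\log n$. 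Light vertices (class $0$, with $\mathcal{FN}$ of size $O(1/\epsilon)$ or below the first threshold) contribute no error because they are rescanned on every decrement, giving the base case; and a classic ES-tree to depth $n/\gamma_0$ handles the short-distance regime exactly in $\tilde O(n^2/\epsilon)$ time. Path reporting works as in a standard ES-tree: store with each distance estimate a witnessing in-edge and follow it back, in time linear in the path length.

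The main obstacle I anticipate is making the error telescoping across \emph{interleaved} density classes rigorous. In the warm-up there were only two classes and a clean alternating decomposition of $\pi_{r,t}$ into light segments and ``heavy bridge'' segments $[r_i,t_i]$; with $L$ classes the path is a shuffle of segments of many densities, and I need a uniform way to cover $\pi_{r,t}$ by $O(\log n)$ families of bridges — one per class — such that (a) the bridges of class $i$ are internally ``light relative to class $i$'' and their endpoints are class-$\ge i$ vertices whose forward neighborhoods witness a short estimate jump, and (b) the error charged at class $i$ only depends on the $O(n/\gamma_i)$ representatives of that class and not on lower classes. Concretely, I expect to argue: start with $\pi_{r,t}$, peel off class-$L$ bridges first (there are few of them), then within each remaining gap peel off class-$(L-1)$ bridges, and so on; at the bottom every residual segment has all edge-tails light and thus zero error, exactly as in the warm-up's light segments. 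Showing this peeling is well-defined and that the per-class error bound $O(n\beta_i/\gamma_i)$ survives the nesting — in particular that a class-$i$ representative's ``partner'' $t_i$ on the path genuinely exists because \emph{some} common forward-neighbor exists, and that the succeeding-vertex bookkeeping ($t_i'$ in or out of $\mathcal{FN}(t_i)$) still yields only $O(\beta_i)$ slack — is the technical heart of the argument. Everything else (the running-time accounting across levels, the heavy$\leftrightarrow$light toggling charged to insertions, the weighted reduction via \Cref{sec:reductionWeights}) follows the warm-up closely and should be routine.
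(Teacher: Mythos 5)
Your running-time accounting has a gap that a fixed set of thresholds $\gamma_0 < \cdots < \gamma_L$ on a single depth-$n$ lazy ES-tree cannot close. If $\beta_i \approx \epsilon\gamma_i/n$ as you propose, then a vertex $u$ in class $i$ performs up to $n/\beta_i$ scans of its forward neighborhood, each of size $\Theta(\gamma_i)$, for $n\gamma_i/\beta_i = n\cdot(n/\epsilon)$ work \emph{per vertex per class}, i.e., $\Omega(n^3/\epsilon)$ total. Your expression ``(number of decrements) $\times$ ($\gamma_{i+1}/\beta_i$) summed over all vertices'' actually evaluates to $n\cdot(n/\epsilon)\cdot n = n^3/\epsilon$, not $\tilde O(n^2/\epsilon)$. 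Conversely, capping the per-vertex, per-class work at $\tilde O(n/\epsilon)$ requires $\gamma_i/\beta_i = \tilde O(1/\epsilon)$, i.e., $\beta_i \approx \epsilon\gamma_i$; but then class $i$ contributes additive error $\approx (n/\gamma_i)\,\beta_i = \epsilon n$, which is useless for distances $\ll n$. No $\tau$-independent choice of $\gamma_i,\beta_i$ satisfies both constraints simultaneously; the tension is intrinsic to the depth-$n$ single-structure design.

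The missing ingredient, and the paper's key move, is a \emph{second} logarithmic hierarchy over distance scales. For each $\tau\in\{1,2,4,\dots\}$ there is a separate lazy ES-tree $\mathcal{E}_\tau$, responsible only for vertices with $\mathbf{dist}(r,v)\in[\tau,2\tau)$ and run to depth $\tau_{max} = \Theta(\tau)$. Inside $\mathcal{E}_\tau$ the heaviness thresholds scale inversely with $\tau$: heaviness level $i$ is triggered when $|\mathcal{FN}(u)| \gtrsim 2^i\cdot\frac{n\log n}{\epsilon\tau}$, and the slack is $\beta_i = 2^i$ decrements. Then the per-vertex, per-level work is $(\tau/2^i)\cdot\bigl(2^i n\log n/(\epsilon\tau)\bigr) = \tilde O(n/\epsilon)$ — the $\tau$ from the tree's depth cancels against the $1/\tau$ in the threshold, and the $2^i$ factors cancel as well — giving $\tilde O(n^2/\epsilon)$ after summing over $u$, $i$, and $\tau$. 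On the error side, level $i$ in $\mathcal{E}_\tau$ contributes $\bigl(n/\gamma_i^{(\tau)}\bigr)\cdot 2^i = \Theta(\epsilon\tau/\log n)$, summing to $\Theta(\epsilon\tau)$ over $\log n$ levels; since $\mathcal{E}_\tau$ only answers for $\mathbf{dist}\geq\tau$ this is exactly the $(1+\epsilon)$ multiplicative budget. This $\tau$-dependence in the thresholds is the degree of freedom your fixed $\gamma_i$'s lack; without it the error and running-time constraints are incompatible.

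Your correctness peeling (fix a heaviness level $h$, find bridge pairs $(r_i, t_i)$, contract the bridges, recurse on $h-1$) is essentially what the paper does and is sound, and the charging of heavy$\leftrightarrow$light oscillations to edge insertions is the right instinct. I'll also note that the paper handles the well-definedness of interleaved classes by redefining $\mathcal{FN}(u)$ to be the window $\texttt{Cache}_u\bigl[\lfloor\widetilde{\mathbf{dist}}_\tau(r,u)-1\rfloor_{2^{h(u)}},\tau_{max}\bigr]$ rather than the warm-up's $\texttt{Cache}_u[\widetilde{\mathbf{dist}}(r,u)+2,\cdot]$; this pre-emptively pulls in vertices about to enter the forward zone, which is what prevents a vertex from changing heaviness on every single decrement. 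But all of this is downstream of the two-hierarchy running-time idea, which you deferred as routine when it is in fact the crux.
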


There are two main differences between our $\tilde{O}(n^2/\epsilon)$ time algorithm and our warm-up $O(mn^{2/3}/\epsilon)$ time algorithm from the previous section:

\begin{enumerate}

\item Recall that the warm-up algorithm consisted of 1) a classic ES-tree of bounded depth to handle small distances, and 2) a ``lazy'' ES-tree (of depth $n$) to handle large distances. For our $\tilde{O}(n^2/\epsilon)$ time algorithm we will have $\log n$ ES-trees of varying degrees of laziness and to varying depths where each ES-tree is suited to handle a particular range of distances. In particular, for each $i$ from 0 to $\log n-1$, we have one lazy ES-tree that handles distances between $2^i$ and $2^{i+1}$. The ES-trees that handle larger distances can tolerate more additive error, and are thus lazier. 

\item Recall that in the warm-up algorithm, each vertex $v$ was of one of two types: light or heavy, depending the size of the forward neighborhood $\mathcal{FN}(v)$. For our $\tilde{O}(n^2/\epsilon)$ time algorithm, each vertex will be in one of $\Theta(\log n)$ heaviness levels. Roughly speaking, a vertex has heaviness $i$ in the lazy ES-tree up to depth $\tau$ if $|\mathcal{FN}(u)|\approx 2^i \frac{n}{ \tau}$. 
\end{enumerate}

Consider one of our $\log n$ lazy ES-trees. Let $\tau$ be its depth and let $\widetilde{\mathbf{dist}}_{\tau}(r,v)$ be its distance estimate for each vertex $v$. A central challenge caused by introducing $\log n$ heaviness levels for each lazy ES-tree is handling the event that a vertex changes heaviness level. We describe why unlike in the warm-up algorithm, handling changes in heaviness levels is not straightforward and requires careful treatment. In the warm-up algorithm, whenever a vertex $u$ changes from heavy to light, we scan all $v\in \mathcal{FN}(u)$ and decrease each $\widetilde{\mathbf{dist}}(r,v)$ accordingly. Then, in the analysis of the warm-up algorithm, we argued that if $u$ only changes from heavy to light once per value of $\widetilde{\mathbf{dist}}(r,u)$, we get the desired running time. Now that we have many heaviness levels and we are aiming for a running time of $\tilde{O}(n^2/\epsilon)$, we can no longer allow each vertex to change heaviness level every time we decrement $\widetilde{\mathbf{dist}}_{\tau}(r,u)$. In particular, suppose we are analyzing a lazy ES-tree up to depth $D$. Suppose for each vertex $u$, every time we decrement $\widetilde{\mathbf{dist}}_{\tau}(r,u)$, we change $u$'s heaviness level and scan $\mathcal{FN}(u)$ as a result. Then since $|\mathcal{FN}(u)|$ could be $\Omega(n)$, the final running time would be $\Omega(n^2D)$, which is too large. Thus, unlike in the warm-up algorithm, we require that the heaviness of each vertex does not change too often. 

Without further modification of the algorithm, the heaviness level of a vertex $u$ can change a number of times in succession. Suppose each index of  $\texttt{Cache}_u$ from index $\widetilde{\mathbf{dist}}_{\tau}(r,u)-\log n+2$ to index $\widetilde{\mathbf{dist}}_{\tau}(r,u)+1$ contains many vertices such that each of the next $\log n$ times we decrement $\widetilde{\mathbf{dist}}_{\tau}(r,u)$, $\mathcal{FN}(u)$ increases by enough that $u$ increases heaviness level upon each decrement of $\widetilde{\mathbf{dist}}_{\tau}(r,u)$. We would like to forbid $u$ from changing heaviness levels so frequently. To address this issue, we \emph{change the definition} of the forward neighborhood $\mathcal{FN}(u)$. 

In particular, if $\texttt{Cache}_u$ contains many vertices in the set of indices that closely precede $\texttt{Cache}_u[\widetilde{\mathbf{dist}}_{\tau}(r,u)]$, we \emph{preemptively} add these vertices to $\mathcal{FN}(u)$. In the above example, instead of increasing the heaviness of $u$ for every single decrement of $\widetilde{\mathbf{dist}}_{\tau}(r,u)$, we would preemptively increase the heaviness of $u$ by a lot to avoid increasing its heaviness again in the near future. Roughly speaking, vertex $u$ has heaviness $h(u)$ if $h(u)$ is the maximum value such that there are $\sim\frac{2^{h(u)} n}{\tau}$ vertices in $\texttt{Cache}_u[\widetilde{\mathbf{dist}}_{\tau}(r,u)-2^{h(u)},\tau]$. (Note that this definition of heaviness is an oversimplification for the sake of clarity.)

Like in the warm-up algorithm, the heaviness level of a vertex $u$ determines how often we scan $\mathcal{FN}(u)$. If a vertex $u$ has heaviness $h(u)$, this means that we scan $\mathcal{FN}(u)$ whenever the value of $\widetilde{\mathbf{dist}}_{\tau}(r,u)$ becomes a multiple of $2^{h(u)}$. 

In summary, when we decrement $\widetilde{\mathbf{dist}}_{\tau}(r,u)$, the algorithm does roughly the following:
\begin{itemize}
    \item If the value of $\widetilde{\mathbf{dist}}_{\tau}(r,u)$ is a multiple of $2^{h(u)}$, scan all $v\in \mathcal{FN}(u)$ and decrement $\widetilde{\mathbf{dist}}_{\tau}(r,v)$ if necessary.
    \item If the value of $\widetilde{\mathbf{dist}}_{\tau}(r,u)$ is a multiple of $2^{h(u)}$, increase the heaviness of $u$ if necessary.
    \item Regardless of the value of $\widetilde{\mathbf{dist}}_{\tau}(r,u)$, check if $u$ has left the forward neighborhood of any other vertex $w$, and if so, decrease the heaviness of $w$ if necessary.
\end{itemize}

\section{The Data Structure}

For each number $\tau$ between 1 and $n$ such that $\tau$ is a power of 2, we maintain a ``lazy ES-tree'' data structure $\mathcal{E}_\tau$. The guarantee of the data structure $\mathcal{E}_\tau$ is that for each vertex $v \in V$ with $\mathbf{dist}(r,v)\in [\tau, 2\tau)$, the estimate $\widetilde{\mathbf{dist}}_{\tau}(r,v)$ maintained by $\mathcal{E}_\tau$ satisfies  $\mathbf{dist}(r,v)\leq \widetilde{\mathbf{dist}}_{\tau}(r,v) \leq (1+\epsilon)d(s, v)$. Let $\tau_{max}=2\tau(1+\epsilon)$. Since $\mathcal{E}_{\tau}$ does not need to provide a $(1+\epsilon)$-approximation for distances $\mathbf{dist}(r,v)> 2\tau$, the largest distance estimate maintained by $\mathcal{E}_{\tau}$ is at most $\tau_{max}$. We use the distance estimate $\tau_{max}+1$ for all vertices that do not have distance estimate at most $\tau_{max}$. For all $u\in V$, the final distance estimate $\widetilde{\mathbf{dist}}(r,u)$ is the minimum distance estimate $\widetilde{\mathbf{dist}}_{\tau}(r,u)$ over all data structures $\mathcal{E}_\tau$, treating each $\tau_{max}+1$ as $\infty$.

\paragraph{Definitions.}

We begin by making precise the definitions and notation from the algorithm overview section. For each data structure $\mathcal{E}_\tau$ and for each vertex $u\in V$ we define the following:
\begin{itemize}
\item $\widetilde{\mathbf{dist}}_{\tau}(r,u)$ is the distance estimate maintained by the data structure $\mathcal{E}_\tau$.
    \item $\texttt{Cache}_u$ is an array of 
$\tau_{max}$ lists of vertices whose purpose is to store (possibly outdated) information about $\widetilde{\mathbf{dist}}_{\tau}(r,v)$ for all $v\in \mathcal{N}^{out}(u)$. Every time we update the position of a vertex $v\in \mathcal{N}^{out}(u)$ in $\texttt{Cache}_u$, we move $v$ to $\texttt{Cache}_u[\widetilde{\mathbf{dist}}_{\tau}(r,v)]$.
    \item $h(u)$ is the \emph{heaviness} of $u$. Intuitively, if $u$ has large heaviness, this means that $u$ has a large \emph{forward neighborhood} (defined later) and that we scan $u$'s forward neighborhood infrequently.
    \item $\texttt{CacheIndex}(u)=\lfloor \widetilde{\mathbf{dist}}_{\tau}(r,u)-1 \rfloor_{2^{h(u)}}$. (Recall that $\lfloor x \rfloor_y$ is the largest multiple of $y$ that is at most $x$.) The purpose of $\texttt{CacheIndex}(u)$ is to define the forward neighborhood of $u$, which we do next.
    \item The \emph{forward neighborhood} of $u$, denoted $\mathcal{FN}(u)$ is defined as the the set of vertices in $\texttt{Cache}_u[\texttt{CacheIndex}(u),\tau_{max}]$. Note that $\mathcal{FN}(u)$ is defined differently from the warm-up algorithm due to reasons described in the algorithm overview section.
    \item $\texttt{Expire}_u$ is an array of $\tau_{max}$ lists of vertices whose purpose is to ensure that $u$ leaves $\mathcal{FN}(v)$ once $\widetilde{\mathbf{dist}}_{\tau}(r,u)$ becomes less than $\texttt{CacheIndex}(v)$. In particular, $v\in \texttt{Expire}_u[i]$ if $u\in \mathcal{FN}(v)$ and $\texttt{CacheIndex}(v)=i$.
    \item We also define $\texttt{CacheIndex}$ with a second parameter, which will be useful for calculating the heaviness of vertices. Let $\texttt{CacheIndex}(v,2^i)=\lfloor \widetilde{\mathbf{dist}}_{\tau}(r,v)-1 \rfloor_{2^i}$. Note that $\texttt{CacheIndex}(u,2^{h(u)})$ is the same as $\texttt{CacheIndex}(u)$.
\end{itemize}

\paragraph{Initialization.}

We assume without loss of generality that the initial graph is the empty graph. To initialize each $\mathcal{E}_{\tau}$, we initialize $\widetilde{\mathbf{dist}}_{\tau}(r,r)$ to 0, and for each $u\in V\setminus \{r\}$, we initialize $\widetilde{\mathbf{dist}}_{\tau}(r,u)$ to $\tau_{max} + 1$. Additionally, for each $u\in V\setminus \{r\}$ we initialize the heaviness $h(u)$ to $0$, and we initialize the arrays $\texttt{Cache}_u$ and $\texttt{Expire}_u$ by setting each of the $\tau_{max} + 1$ fields in each array to an empty list.

\paragraph{The edge update algorithm.}

The pseudocode for the edge update algorithm is given in Algorithm~\ref{alg:insertEdge1}. We also outline the algorithm in words.

The procedure $\textsc{InsertEdge}(u,v)$ begins by updating $\texttt{Cache}_u$ and $\texttt{Expire}_v$ to reflect the new edge. Then, it calls $\textsc{IncreaseHeaviness}(u)$ to check whether the heaviness of $u$ needs to increase due to the newly inserted edge. Then, it initializes a set $H$ storing edges. 

Initially $H$ contains only the edge $(u,v)$. The purpose of $H$ is to store edges $(x,y)$ after the distance estimate $\widetilde{\mathbf{dist}}_{\tau}(r,x)$ has changed. We then extract one edge at a time and check whether the decrease in $x$'s distance estimate also translates to a decrease of $y$'s distance estimate by checking whether $\widetilde{\mathbf{dist}}_{\tau}(r,y) > \widetilde{\mathbf{dist}}_{\tau}(r,x)+1$. If so, then $\widetilde{\mathbf{dist}}_{\tau}(r,y)$ can be decremented and we keep the edge in $H$. Otherwise, we learned that $(x,y)$ cannot be used to decrease $\widetilde{\mathbf{dist}}_{\tau}(r,y)$ and we remove $(x,y)$ from $H$. We point out that in our implementation a decrease of $\Delta$ is handled in the form of $\Delta$ decrements where the edge is $\Delta + 1$ times extracted from $H$ until it is removed from $H$.

\begin{algorithm}
{
\fontsize{10}{10}\selectfont
\caption{Algorithm for handling edge updates.}
\label{alg:insertEdge1}
\SetKwProg{procedure}{Procedure}{}{}
\procedure{$\textsc{InsertEdge}(u, v)$}{
    Add $v$ to $\texttt{Cache}_u[\widetilde{\mathbf{dist}}_{\tau}(r,v)]$\;\label{line:insertcache}
    \If{$\widetilde{\mathbf{dist}}_{\tau}(r,v) \geq \normalfont{\texttt{CacheIndex}}(u)$}{
        Add $u$ to $\texttt{Expire}_v[\texttt{CacheIndex}(u)]$
    }
    $\textsc{IncreaseHeaviness}(u)$\;
    
    \If{$\widetilde{\mathbf{dist}}_{\tau}(r,v)>\widetilde{\mathbf{dist}}_{\tau}(r,u)+1$ \label{line:ifThenhLoop}}{
        Let $H$ be a set storing edges $(x,y)$ \;
        $H.\textsc{Insert}(u,v)$\; \label{line:Hinsert1}
        \While{$H \neq \emptyset$}{\label{line:while}
            Let tuple $(x,y)$ be any tuple in $H$\;
            \If{$\widetilde{\mathbf{dist}}_{\tau}(r,y) > \widetilde{\mathbf{dist}}_{\tau}(r,x)+1$ \label{line:ifThenDecrement}}{
                $\textsc{Decrement}(x,y)$\;\label{line:calldecrement}
            }\Else{
                $H.\textsc{Remove}(x,y)$
            }
        }
    }
}

\label{alg:decrease1}
\SetKwProg{procedure}{Procedure}{}{}
\procedure{$\textsc{Decrement}(u, v)$}{
    $\widetilde{\mathbf{dist}}_{\tau}(r,v) = \widetilde{\mathbf{dist}}_{\tau}(r,v) - 1$\;

    \If{$\widetilde{\mathbf{dist}}_{\tau}(r,v)$ \normalfont{ is a multiple of } $2^{h(v)}$}
    {
        $\textsc{IncreaseHeaviness}(v)$\;
        \ForEach{$w \in \mathcal{FN}(v)$\label{line:fndecrement}}{
            Move $w$ to $\texttt{Cache}_v[\widetilde{\mathbf{dist}}_{\tau}(r,w)]$\;
            Move $v$ to $\texttt{Expire}_w[\texttt{CacheIndex}(v)]$\;
            $H.\textsc{Insert}(v,w)$\label{line:Hinsert2}
        }
    }
    \ForEach{$w\in \normalfont{\texttt{Expire}}_v[\widetilde{\mathbf{dist}}_{\tau}(r,v)+1]$\label{line:exp}}{
        Move $v$ to $\texttt{Cache}_w[\widetilde{\mathbf{dist}}_{\tau}(r,v)]$\;\label{line:begin}
        Remove $w$ from $\texttt{Expire}_v$\;
        $\textsc{DecreaseHeaviness}(w)$\label{line:end}
    }
}

\label{alg:incheav1}
\SetKwProg{procedure}{Procedure}{}{}
\procedure{$\textsc{IncreaseHeaviness}(u)$}{
    $i' \gets \argmax_{i \in \mathbb{N}}\{| \texttt{Cache}_u[\texttt{CacheIndex}(u,2^i), \tau_{max}]| \geq (2^i-1) \frac{12 n \log n }{\epsilon\tau}\} $\label{line:calc1}\;

    \If{$i' > h(u)$\label{line:ifloop1}}{
        \ForEach{$v\in \normalfont{\texttt{Cache}}_u[\texttt{CacheIndex}(u,2^{i'}), \tau_{max}]$\label{line:loop1}}{
            Move $v$ to $\texttt{Cache}_u[\widetilde{\mathbf{dist}}_{\tau}(r,v)]$\;
            Remove $u$ from $\texttt{Expire}_v$\;
        }

        $h(u) \gets \argmax_{i \leq i'}\{| \texttt{Cache}_u[\texttt{CacheIndex}(u,2^i), \tau_{max}]| \geq  (2^i-1) \frac{6 n \log n }{\epsilon\tau}\}$\label{line:calc2} \;
        
        \ForEach{$v\in \mathcal{FN}(u)$\label{line:loop2}}{
            Add $u$ to $\texttt{Expire}_v[\texttt{CacheIndex}(u)]$\;
        }
    }
}

\SetKwProg{procedure}{Procedure}{}{}
\procedure{$\textsc{DecreaseHeaviness}(u)$}{
    $i' \gets \argmax_{i \in \mathbb{N}}\{| \texttt{Cache}_u[\texttt{CacheIndex}(u,2^i), \tau_{max}]| \geq (2^i-1) \frac{6 n \log n }{\epsilon\tau} \}$\;\label{line:i'dec}

    \If{$i' < h(u)$\label{line:preloop3}}{
        \ForEach{$v\in \mathcal{FN}(u)$\label{line:loop3}}{
            Move $v$ to $\texttt{Cache}_u[\widetilde{\mathbf{dist}}_{\tau}(r,v)]$\;
            Remove $u$ from $\texttt{Expire}_v$\;
        }

        $h(u)\gets \argmax_{i \in \mathbb{N}}\{|  \texttt{Cache}_u[\texttt{CacheIndex}(u,2^i), \tau_{max}]| \geq  (2^i-1) \frac{6 n \log n }{\epsilon\tau}\}$\label{line:sethd}\;
 
        \ForEach{$v\in \mathcal{FN}(u)$\label{line:loop4}}{
            Add $u$ to $\texttt{Expire}_v[\texttt{CacheIndex}(u)]$\;
            $H.\textsc{Insert}(u,v)$\label{line:Hdec}
        }
    }
}
}
\end{algorithm}

The procedure $\textsc{Decrement}(u,v)$ begins by decrementing $\widetilde{\mathbf{dist}}_{\tau}(r,v)$. Then, it checks whether $\widetilde{\mathbf{dist}}_{\tau}(r,v)$ is a multiple of $2^{h(v)}$. If so, it calls $\textsc{IncreaseHeaviness}(v)$ to check whether the recent decrements of $\widetilde{\mathbf{dist}}_{\tau}(r,v)$ have caused $\mathcal{FN}(v)$ to increase by enough that the heaviness $h(v)$ has increased. Also, if $\widetilde{\mathbf{dist}}_{\tau}(r,v)$ is a multiple of $2^{h(v)}$, $\texttt{CacheIndex}(v)$ and thus $\mathcal{FN}(v)$ have changed. Thus, we scan each vertex $w \in \mathcal{FN}(v)$ and update the position of $w$ in $\texttt{Cache}_v$. Then, we insert for each such vertex $w \in \mathcal{FN}(v)$ the edge $(v,w)$ into $H$ which has the eventual effect of decreasing $\widetilde{\mathbf{dist}}_{\tau}(r,w)$ to value at most $\widetilde{\mathbf{dist}}_{\tau}(r,v) + 1$. Since we perform these actions every $2^{h(u)}$ decrements of $\widetilde{\mathbf{dist}}_{\tau}(r,v)$, as we show later, we incur roughly $2^{h(u)}$ additive error on each out-going edge of $v$.

Additionally, the procedure $\textsc{Decrement}(u,v)$ checks whether decrementing $\widetilde{\mathbf{dist}}_{\tau}(r,v)$ has caused $v$ to expire from any of the forward neighborhoods that contain $v$. The vertices whose forward neighborhood $v$ needs to leave are stored in $\texttt{Expire}_v[\widetilde{\mathbf{dist}}_{\tau}(r,v)+1]$. For each $w\in \texttt{Expire}_v[\widetilde{\mathbf{dist}}_{\tau}(r,v)+1]$, we update $v$'s position in $\texttt{Cache}_w$ which causes $v$ to leave $\mathcal{FN}(w)$. Then, we call $\textsc{DecreaseHeaviness}(u)$ to check whether removing $v$ from $\mathcal{FN}(w)$ has caused the heaviness of $w$ to decrease. 

The procedures $\textsc{IncreaseHeaviness}(u)$ and $\textsc{DecreaseHeaviness}(u)$ are similar. We first describe $\textsc{DecreaseHeaviness}(u)$. On line~\ref{line:sethd} in $\textsc{DecreaseHeaviness}(u)$, $h(u)$ is set to $\argmax_{i \in \mathbb{N}}\{|  \texttt{Cache}_u[\texttt{CacheIndex}(u,2^i), \tau_{max}]| \geq  (2^i-1) \frac{6 n \log n }{\epsilon\tau}\}$. We note that $\texttt{Cache}_u$ may contain out-of-date information when $\textsc{DecreaseHeaviness}(u)$ is called, however, we wish to update $h(u)$ based on up-to-date information. Thus, before line~\ref{line:sethd}, we update $\texttt{Cache}_u$. However, we do not have time to update \emph{every} index of $\texttt{Cache}_u$, so instead we only update the relevant indices. To do so, it suffices to first calculate the value $i'$, which is the expression for $h(u)$ but using the out-of-date version of $\texttt{Cache}_u$, and then scan all $v\in \normalfont{\texttt{Cache}}_u[\texttt{CacheIndex}(u,2^{i'}), \tau_{max} ]$, updating the position of each such $v$ in $\texttt{Cache}_u$.

Recall that a smaller value of $h(u)$ means that we scan $\mathcal{FN}(u)$ more often. Thus, after we decrease $h(u)$ in $\textsc{DecreaseHeaviness}(u)$, the vertices $v\in \mathcal{FN}(u)$ might not have been scanned recently enough according to the new value of $h(u)$. Thus, to conclude the procedure $\textsc{DecreaseHeaviness}(u)$, we scan each $v\in \mathcal{FN}(u)$ and add $(u,v)$ to the set $H$ so that $\textsc{Decrement}(u,v)$ is called later.

The main difference between $\textsc{IncreaseHeaviness}(u)$ and $\textsc{DecreaseHeaviness}(u)$ is that the constants in the expressions for calculating $i'$ and $h(u)$ are different from each other, which ensures that $u$ does not change heaviness levels too often. Additionally, the last step of  $\textsc{DecreaseHeaviness}(u)$ where we insert into $H$ is not necessary for $\textsc{IncreaseHeaviness}(u)$.

\section{Analysis of correctness}

For each vertex $t$, the algorithm obtains the distance estimate $\widetilde{\mathbf{dist}}(r, t)$ by taking the minimum $\widetilde{\mathbf{dist}}_{\tau}(r,t)$ over all $\tau$ (excluding when $\widetilde{\mathbf{dist}}_{\tau}(r,t)=\tau_{max}$+1). The goal of this section, is to prove that 
\[
\mathbf{dist}(r, t) \leq \widetilde{\mathbf{dist}}(r, t) \leq (1+\epsilon)\mathbf{dist}(r, t)
\]
for $\mathbf{dist}(r, t) \in [\tau, 2\tau]$. We prove this statement in two steps starting by giving a lower bound on $\widetilde{\mathbf{dist}}(r, t)$. 

\begin{lemma}\label{lem:correct1}
At all times, for all $\tau$, for any $t \in V$, we have $\mathbf{dist}(r, t) \leq \widetilde{\mathbf{dist}}_{\tau}(r,t)$.
\end{lemma}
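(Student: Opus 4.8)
The plan is to prove, by induction over the sequence of atomic operations the algorithm performs on a fixed tree $\mathcal{E}_\tau$, the slightly sharpened invariant: after every such operation, for every $v\in V$ either $\widetilde{\mathbf{dist}}_{\tau}(r,v)=\tau_{max}+1$ (which, following the description of the data structure, we read as $+\infty$), or $\mathbf{dist}(r,v)\le\widetilde{\mathbf{dist}}_{\tau}(r,v)\le\tau_{max}$. This implies the lemma. The key structural observation I would use to make the induction manageable is that the \emph{only} state of $\mathcal{E}_\tau$ that ever changes value is a distance estimate $\widetilde{\mathbf{dist}}_{\tau}(r,\cdot)$, and the only line that writes to it is the first line of $\textsc{Decrement}(u,v)$; every other line of Algorithm~\ref{alg:insertEdge1} merely rearranges the auxiliary structures $\texttt{Cache}_\cdot$, $\texttt{Expire}_\cdot$, the worklist $H$, or the heaviness values $h(\cdot)$. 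Hence it suffices to handle three cases: (i) the empty initial graph, (ii) an edge insertion, and (iii) one execution of the decrement step.

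Cases (i) and (ii) I would dispatch immediately. On the empty graph $\widetilde{\mathbf{dist}}_{\tau}(r,r)=0=\mathbf{dist}(r,r)$ and all other estimates equal $\tau_{max}+1$, so the invariant holds. An edge insertion can only lower true distances $\mathbf{dist}(r,\cdot)$ while leaving all estimates untouched, so it only reinforces $\mathbf{dist}(r,v)\le\widetilde{\mathbf{dist}}_{\tau}(r,v)$; the same reasoning shows every bookkeeping call ($\textsc{IncreaseHeaviness}$, $\textsc{DecreaseHeaviness}$, and the cache/expire updates inside $\textsc{Decrement}$) trivially preserves the invariant, since it does not write $\widetilde{\mathbf{dist}}_{\tau}$.

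The heart of the argument is case (iii). First I would verify that $\textsc{Decrement}$ is called from exactly one place, line~\ref{line:calldecrement}, and only when the guard on line~\ref{line:ifThenDecrement} --- that is, $\widetilde{\mathbf{dist}}_{\tau}(r,v)>\widetilde{\mathbf{dist}}_{\tau}(r,u)+1$ --- currently holds for the pair $(u,v)$ just popped from $H$, and that every pair ever placed into $H$ is a directed edge of the current graph: line~\ref{line:Hinsert1} inserts the newly inserted edge itself, while lines~\ref{line:Hinsert2} and~\ref{line:Hdec} insert pairs $(v,w)$ with $w\in\mathcal{FN}(v)\subseteq\mathcal{N}^{out}(v)$. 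Consequently, when $\textsc{Decrement}(u,v)$ fires we have $(u,v)\in E$, and just before the write $\widetilde{\mathbf{dist}}_{\tau}(r,v)\ge\widetilde{\mathbf{dist}}_{\tau}(r,u)+2$, so afterwards $\widetilde{\mathbf{dist}}_{\tau}(r,v)\ge\widetilde{\mathbf{dist}}_{\tau}(r,u)+1$. The inductive hypothesis, applied to $u$ (note $\widetilde{\mathbf{dist}}_{\tau}(r,u)\le\tau_{max}-1<\tau_{max}+1$, so $u$ is in the non-$\infty$ case), gives $\widetilde{\mathbf{dist}}_{\tau}(r,u)\ge\mathbf{dist}(r,u)$, hence
\[
\widetilde{\mathbf{dist}}_{\tau}(r,v)\;\ge\;\widetilde{\mathbf{dist}}_{\tau}(r,u)+1\;\ge\;\mathbf{dist}(r,u)+1\;\ge\;\mathbf{dist}(r,v),
\]
the last step because $(u,v)$ is an edge. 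The decremented value does not exceed its previous value, so it stays $\le\tau_{max}$ unless it crosses that threshold, in which case the algorithm records $\tau_{max}+1$ and the $\infty$ branch of the invariant applies. Finally $v\ne r$ (else the guard would demand $0>\widetilde{\mathbf{dist}}_{\tau}(r,u)+1\ge1$) and the guard never fires on a self-pair, so $\widetilde{\mathbf{dist}}_{\tau}(r,r)$ stays $0$; this closes the induction.

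I expect the one point needing real care --- rather than the calculation above --- is the bookkeeping claim underlying case (iii): that the guard on line~\ref{line:ifThenDecrement} is re-tested against the \emph{current} estimates each time an edge is taken from $H$, so the processing order of $H$ is irrelevant and no transient state of the worklist can trigger an unjustified decrement, together with the invariant $H\subseteq E$. Pinning these down is exactly what certifies that each single decrement is a valid edge relaxation; the rest is a routine audit confirming that no other line of Algorithm~\ref{alg:insertEdge1} modifies $\widetilde{\mathbf{dist}}_{\tau}$.
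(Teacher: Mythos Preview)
Your proof is correct and follows essentially the same approach as the paper: both argue that $\textsc{Decrement}(u,v)$ is invoked only from line~\ref{line:calldecrement}, only on genuine edges $(u,v)$, and only under the guard $\widetilde{\mathbf{dist}}_{\tau}(r,v)>\widetilde{\mathbf{dist}}_{\tau}(r,u)+1$, so the decremented value still satisfies $\widetilde{\mathbf{dist}}_{\tau}(r,v)\ge\widetilde{\mathbf{dist}}_{\tau}(r,u)+1\ge\mathbf{dist}(r,u)+1\ge\mathbf{dist}(r,v)$. The paper's proof is a two-sentence version of your argument; your additional care in verifying $H\subseteq E$, handling initialization and edge insertions explicitly, and checking $v\neq r$ is welcome but not a different route.
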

\begin{proof}
It suffices to show that we only decrement $\widetilde{\mathbf{dist}}_{\tau}(r,v)$ if $v$ has an in-coming edge from a vertex with distance estimate more than 1 below $\widetilde{\mathbf{dist}}_{\tau}(r,v)$.
We only invoke the procedure $\textsc{Decrement}(u,v)$ from line \ref{line:calldecrement}, and we invoke it under the condition that $(u,v)$ is an edge and $\widetilde{\mathbf{dist}}_{\tau}(r,v) > \widetilde{\mathbf{dist}}_{\tau}(r,u) + 1$. Therefore after running $\textsc{Decrement}(u,v)$ we still have $\widetilde{\mathbf{dist}}_{\tau}(r,v) \geq \widetilde{\mathbf{dist}}_{\tau}(r,u) + 1$. 
\end{proof}

Let us next prove a small, but helpful lemma.

\begin{lemma}\label{lem:cachedecreases}
For all vertices $u,v\in V$, the index of $\normalfont{\texttt{Cache}_u}$ containing $v$ can only decrease over time.
\end{lemma}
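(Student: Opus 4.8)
\textbf{Proof plan for Lemma~\ref{lem:cachedecreases}.}
The plan is to show that every operation in Algorithm~\ref{alg:insertEdge1} that touches the position of a vertex $v$ inside $\texttt{Cache}_u$ moves $v$ to index $\widetilde{\mathbf{dist}}_{\tau}(r,v)$, and that $\widetilde{\mathbf{dist}}_{\tau}(r,v)$ is itself a monotonically non-increasing quantity. Combining these two facts immediately gives that the index of $\texttt{Cache}_u$ containing $v$ never grows. So the first step is to establish the monotonicity of $\widetilde{\mathbf{dist}}_{\tau}(r,v)$: the only place a distance estimate is modified is inside $\textsc{Decrement}(u,v)$ on the first line, where it is decreased by exactly $1$; the initialization sets it to $0$ or $\tau_{max}+1$ and nothing else ever changes it. Hence for every vertex $v$, $\widetilde{\mathbf{dist}}_{\tau}(r,v)$ only decreases over the course of the algorithm.

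The second step is a case analysis over all lines of the pseudocode that insert $v$ into, or move $v$ within, some array $\texttt{Cache}_u$. These are: line~\ref{line:insertcache} in $\textsc{InsertEdge}$, which places $v$ in $\texttt{Cache}_u[\widetilde{\mathbf{dist}}_{\tau}(r,v)]$; the line inside the loop at~\ref{line:fndecrement} in $\textsc{Decrement}$, which moves each $w\in\mathcal{FN}(v)$ to $\texttt{Cache}_v[\widetilde{\mathbf{dist}}_{\tau}(r,w)]$; line~\ref{line:begin} in $\textsc{Decrement}$, which moves $v$ to $\texttt{Cache}_w[\widetilde{\mathbf{dist}}_{\tau}(r,v)]$; the loop at~\ref{line:loop1} in $\textsc{IncreaseHeaviness}$, moving each $v$ to $\texttt{Cache}_u[\widetilde{\mathbf{dist}}_{\tau}(r,v)]$; and the loops at~\ref{line:loop3} in $\textsc{DecreaseHeaviness}$, again moving each $v$ to $\texttt{Cache}_u[\widetilde{\mathbf{dist}}_{\tau}(r,v)]$. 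In every one of these cases the target index is exactly the \emph{current} value of the distance estimate of the vertex being moved. I would present this as a short enumerated list in the proof text, quoting the relevant line numbers, and observe that no other line of the algorithm writes to a $\texttt{Cache}$ array.

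Putting the two steps together: suppose at some time $v$ sits at index $j$ of $\texttt{Cache}_u$, which by the case analysis means that when $v$ was last placed there we had $\widetilde{\mathbf{dist}}_{\tau}(r,v)=j$. At any later time, either $v$ has not been moved and it is still at index $j$ while $\widetilde{\mathbf{dist}}_{\tau}(r,v)\le j$ by monotonicity, or $v$ has been moved again, in which case it now sits at index $\widetilde{\mathbf{dist}}_{\tau}(r,v)$, which is again $\le j$ by monotonicity of the estimate since the previous placement. Either way the index containing $v$ is at most $j$, so it can only decrease, as claimed.

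I do not expect a genuine obstacle here; the lemma is a bookkeeping statement. The only mild care needed is to be exhaustive in the case analysis — in particular not to forget the placements triggered indirectly through the heaviness-adjustment procedures and through the $\texttt{Expire}$-driven loop at line~\ref{line:exp} — and to note explicitly that operations on $\texttt{Expire}_v$ and on the heaviness value $h(u)$ do not themselves reposition anything inside a $\texttt{Cache}$ array, so they are irrelevant to the claim.
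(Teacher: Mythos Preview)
Your proposal is correct and follows exactly the same approach as the paper's proof: every placement of $v$ in $\texttt{Cache}_u$ is at index $\widetilde{\mathbf{dist}}_{\tau}(r,v)$, and this estimate is monotonically non-increasing. The paper compresses this into two sentences without the explicit line-by-line case analysis, but your more exhaustive enumeration is a faithful expansion of the same argument.
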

\begin{proof}
Whenever we insert $v$ into to $\texttt{Cache}_u$ or move $v$ to a new index in $\texttt{Cache}_u$, $v$ is placed in $\texttt{Cache}_u[\widetilde{\mathbf{dist}}_{\tau}(r,v)]$. Since $\widetilde{\mathbf{dist}}_{\tau}(r,v)$ is monotonically decreasing over time, the lemma follows.
\end{proof}

Now, before giving an upper bound on the stretch of the distance estimate, we prove the following invariant which is analogous to Invariant~\ref{inv:warm} from the warm-up algorithm.

\begin{invariant}\label{inv:correct}
For all $u,v\in V$, after processing each edge update, if $v\in \mathcal{FN}(u)$ then $|\widetilde{\mathbf{dist}}_{\tau}(r,v) - \widetilde{\mathbf{dist}}_{\tau}(r,u)| \leq 2^{h(u)}$.
\end{invariant}
\begin{proof} 
We first note that the invariant is initially satisfied since $\mathcal{FN}(u)$ is initially empty.
First we prove that there is no $v \in \mathcal{FN}(u)$ with $\widetilde{\mathbf{dist}}_{\tau}(r,u) - \widetilde{\mathbf{dist}}_{\tau}(r,v) > 2^{h(u)}$. We first note that if $v\in \texttt{Cache}_u[\widetilde{\mathbf{dist}}_{\tau}(r,v)]$, then this inequality holds simply from the definitions of $\mathcal{FN}$ and $\texttt{CacheIndex}$. Thus, it suffices to show that if an event occurs that could potentially cause the inequality to be violated, then we have $v\in \texttt{Cache}_u[\widetilde{\mathbf{dist}}_{\tau}(r,v)]$.
We point out that the inequality could only be violated due to three events:
\begin{enumerate}
    \item \uline{$v \in \mathcal{FN}(u)$ and $\widetilde{\mathbf{dist}}_{\tau}(r,v)$ decreases:} We observe that when $\widetilde{\mathbf{dist}}_{\tau}(r,v)$ decrements, we iterate through each vertex $w \in \normalfont{\texttt{Expire}}_v[\widetilde{\mathbf{dist}}_{\tau}(r,v)+1]$ (line~\ref{line:exp}). Since we update $\texttt{Expire}_v$ immediately after $v$ is moved in $\texttt{Cache}_u$, we have that if $\widetilde{\mathbf{dist}}_{\tau}(r,v)=\texttt{CacheIndex}(u)-1$ then $u\in \normalfont{\texttt{Expire}}_v[\widetilde{\mathbf{dist}}_{\tau}(r,v)+1]$. Thus, if $\widetilde{\mathbf{dist}}_{\tau}(r,v)$ decrements to $\texttt{CacheIndex}(u)-1$, then the loop on line~\ref{line:exp} moves $v$ to $\texttt{Cache}_u[\widetilde{\mathbf{dist}}_{\tau}(r,v)]$.
    \item 
    \uline{$h(u)$ decreases:} We note that only the procedure $\textsc{DecreaseHeaviness}(u)$ can decrease $h(u)$.
    (In particular, $h(u)$ cannot decrease in $\textsc{IncreaseHeaviness}(u)$ by Lemma~\ref{lem:sizefn}.) In $\textsc{DecreaseHeaviness}(u)$, $i'$ and $h(u)$ are each set to the expression 
    \[
    \argmax_{i \in \mathbb{N}}\{| \texttt{Cache}_u[\texttt{CacheIndex}(u,2^i), \tau_{max}]| \geq (2^i-1) \frac{6 n \log n }{\epsilon\tau} \}
    \]
    on lines \ref{line:i'dec} and \ref{line:sethd}, respectively. Between these two lines, $\widetilde{\mathbf{dist}}_{\tau}(r,u)$ remains fixed, and thus $\texttt{CacheIndex}(u,2^i)$ also remains fixed for all $i$. Between the lines \ref{line:i'dec} and \ref{line:sethd}, we move each vertex $y$ in $\texttt{Cache}_u[\texttt{CacheIndex}(u,2^{i'}), \tau_{max} ]$ to $\texttt{Cache}_u[\widetilde{\mathbf{dist}}_{\tau}(r,y)]$. By Lemma \ref{lem:cachedecreases} this can only decrease the indices of vertices in $\texttt{Cache}_u$ and therefore the size of  $\normalfont{\texttt{Cache}}_u[\texttt{CacheIndex}(u,2^{i'}), \tau_{max}]$ can only decrease. Thus, when we pick the new $h(u)$, it satisfies $h(u) \leq i'$. It follows that each vertex $y\in \mathcal{FN}(u)$
    has been moved to $\texttt{Cache}_u[\widetilde{\mathbf{dist}}_{\tau}(r,y)]$.
    
    \item \uline{$v$ is added to $\mathcal{FN}(u)$:} A vertex $v$ can be added to $\mathcal{FN}(u)$ if either the edge $(u,v)$ is inserted, the distance $\widetilde{\mathbf{dist}}_{\tau}(r,u)$ decreases to a multiple of $2^{h(u)}$, or $h(u)$ increases. If the edge $(u,v)$ is inserted then $v$ is added to $\texttt{Cache}_u[\widetilde{\mathbf{dist}}_{\tau}(r,v)]$ on line \ref{line:insertcache}. If $\widetilde{\mathbf{dist}}_{\tau}(r,u)$ decreases to a multiple of $2^{h(u)}$ then in the loop on line \ref{line:fndecrement}, if $y\in\mathcal{FN}(u)$ then $y$ is moved to $\texttt{Cache}_u[\widetilde{\mathbf{dist}}_{\tau}(r,y)]$. It remains to argue about the last case, where $h(u)$ is increased: we observe that in procedure $\textsc{IncreaseHeaviness}(u)$, we first pick a new potential heaviness $i'$ on line \ref{line:calc1} and then scan all vertices in $\texttt{Cache}_u[\texttt{CacheIndex}(u, 2^{i'}), \tau_{max}]$, moving each vertex $y$ to $\texttt{Cache}_u[\widetilde{\mathbf{dist}}_{\tau}(r,y)]$. Then, we take the new value $h(u) \leq i'$ in line \ref{line:calc2} and since we choose $h(u)$ among values smaller than $i'$, 
    each vertex $y\in \mathcal{FN}(u)$
    has been moved to $\texttt{Cache}_u[\widetilde{\mathbf{dist}}_{\tau}(r,y)]$.
\end{enumerate}

It remains to prove that there is no $v \in \mathcal{FN}(u)$ with $\widetilde{\mathbf{dist}}_{\tau}(r,v) - \widetilde{\mathbf{dist}}_{\tau}(r,u) > 2^{h(u)}$. Again, we point out that the inequality could only be violated due to three events:
\begin{enumerate}
    \item \uline{$h(u)$ decreases:} Again, only the procedure $\textsc{DecreaseHeaviness}(u)$ can decrease $h(u)$. In line \ref{line:Hdec} in $\textsc{DecreaseHeaviness}(u)$, for every vertex $v \in \mathcal{FN}(u)$ that could potentially have its distance estimate decreased, $(u,v)$ is inserted into the set $H$, which has the eventual effect that $\widetilde{\mathbf{dist}}_{\tau}(r,v)\leq\widetilde{\mathbf{dist}}_{\tau}(r,u)+1$, once $H$ is empty.
    \item \uline{$\widetilde{\mathbf{dist}}_{\tau}(r,u)$ is decremented:} Let $h^{NEW}(u)$ be the value of $h(u)$ at the point in time when we have just decremented $\widetilde{\mathbf{dist}}_{\tau}(r,u)$. Let $\ell$ be the smallest multiple of $2^{h^{NEW}(u)}$ that is at least $\widetilde{\mathbf{dist}}_{\tau}(r,u)$. Let $h_{\ell}(u)$ be the value of $h(u)$ at the point in time when $\widetilde{\mathbf{dist}}_{\tau}(r,u)$ was decremented to $\ell$. We note that if $h_{\ell}(u)\leq h^{NEW}(u)$ then $\ell$ is a multiple of $2^{h_{\ell}(u)}$. Thus, when $\widetilde{\mathbf{dist}}_{\tau}(r,u)$ was decremented to $\ell$, if $\widetilde{\mathbf{dist}}_{\tau}(r,v)>\widetilde{\mathbf{dist}}_{\tau}(r,u)+1$ then we added $(u,v)$ to $H$, which has the effect of decreasing $\widetilde{\mathbf{dist}}_{\tau}(r,v)$ to $\ell+1$. 
    Thus, once we finish processing the current edge update, we have $\widetilde{\mathbf{dist}}_{\tau}(r,v)-\ell\leq 1$. By definition, $\ell-\widetilde{\mathbf{dist}}_{\tau}(r,u)\leq 2^{h(u)}-1$, so we have $\widetilde{\mathbf{dist}}_{\tau}(r,v)-\widetilde{\mathbf{dist}}_{\tau}(r,u)\leq 2^{h(u)}$. 
    \item \uline{$v$ is added to $\mathcal{FN}(u)$:} Since we are assuming that $\widetilde{\mathbf{dist}}_{\tau}(r,v) > \widetilde{\mathbf{dist}}_{\tau}(r,u)$, the only way $v$ can be added to $\mathcal{FN}(u)$ is if the edge $(u,v)$ is inserted. In this case, if $\widetilde{\mathbf{dist}}_{\tau}(r,v)>\widetilde{\mathbf{dist}}_{\tau}(r,u)+1$, then the algorithm inserts $(u,v)$ into the set $H$, which has the eventual effect that $\widetilde{\mathbf{dist}}_{\tau}(r,v)\leq\widetilde{\mathbf{dist}}_{\tau}(r,u)+1$.
\end{enumerate}
\end{proof}


Next, we prove a lower bound on the size of the forward neighborhoods.

\begin{lemma}\label{lem:sizefn}
For all $u\in V$, $|\mathcal{FN}(u)|\geq(2^{h(u)} - 1)  \frac{6 n\log n}{\epsilon\tau}$ at all times except lines~\ref{line:begin} to \ref{line:end} and during $\textsc{DecreaseHeaviness}(u)$.
\end{lemma}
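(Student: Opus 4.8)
The plan is to prove Lemma~\ref{lem:sizefn} by tracking the quantity $|\mathcal{FN}(u)| = |\texttt{Cache}_u[\texttt{CacheIndex}(u), \tau_{max}]|$ through all events that can change it, and showing the lower bound $(2^{h(u)}-1)\frac{6n\log n}{\epsilon\tau}$ is restored by the end of each call to the edge-update algorithm. The key observation is that $h(u)$ is \emph{always} set — both in $\textsc{IncreaseHeaviness}(u)$ (line~\ref{line:calc2}) and in $\textsc{DecreaseHeaviness}(u)$ (line~\ref{line:sethd}) — to the maximizer $\argmax_{i}\{|\texttt{Cache}_u[\texttt{CacheIndex}(u,2^i),\tau_{max}]| \geq (2^i-1)\frac{6n\log n}{\epsilon\tau}\}$, computed on an \emph{up-to-date} version of $\texttt{Cache}_u$ (the preceding loops over $\texttt{Cache}_u[\texttt{CacheIndex}(u,2^{i'}),\tau_{max}]$ exist precisely to refresh the relevant indices, as argued in the proof of Invariant~\ref{inv:correct}). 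So immediately after either procedure sets $h(u)$, the bound $|\texttt{Cache}_u[\texttt{CacheIndex}(u,2^{h(u)}),\tau_{max}]| \geq (2^{h(u)}-1)\frac{6n\log n}{\epsilon\tau}$ holds by definition of the argmax, and since $\texttt{CacheIndex}(u) = \texttt{CacheIndex}(u,2^{h(u)})$, this is exactly $|\mathcal{FN}(u)| \geq (2^{h(u)}-1)\frac{6n\log n}{\epsilon\tau}$.

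First I would establish the invariant holds initially: $h(u)=0$ so the required bound is $(2^0-1)\frac{6n\log n}{\epsilon\tau}=0$, which is trivially true of the empty $\mathcal{FN}(u)$. Then I would enumerate the events that alter $|\mathcal{FN}(u)|$: (a)~$\widetilde{\mathbf{dist}}_{\tau}(r,u)$ is decremented, which changes $\texttt{CacheIndex}(u)$ only when the new value is a multiple of $2^{h(u)}$, at which point $\textsc{IncreaseHeaviness}(u)$ is called (line in $\textsc{Decrement}$) and resets $h(u)$ to the argmax, restoring the bound; when the new value is \emph{not} a multiple of $2^{h(u)}$, $\texttt{CacheIndex}(u)$ is unchanged and so $\mathcal{FN}(u)$ is unchanged (its contents can only have elements removed via $\texttt{Expire}_u$ processing, but those have distance estimate exactly $\texttt{CacheIndex}(u)-1 < \texttt{CacheIndex}(u)$, so they were never in $\mathcal{FN}(u)$ to begin with — I need to double-check this boundary carefully). (b)~A vertex $v$ leaves $\mathcal{FN}(u)$ because $\widetilde{\mathbf{dist}}_{\tau}(r,v)$ drops below $\texttt{CacheIndex}(u)$; this is precisely the mechanism handled by lines~\ref{line:begin}–\ref{line:end}, and the call to $\textsc{DecreaseHeaviness}(u)$ there resets $h(u)$ via the argmax — which is why these lines are listed as an exception in the lemma statement, since \emph{during} that window the bound may transiently fail. (c)~An edge insertion $(u,v)$ or heaviness change of some other vertex adds a vertex to $\mathcal{FN}(u)$, which only \emph{increases} $|\mathcal{FN}(u)|$ and hence cannot violate a lower bound.

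The main obstacle I anticipate is the careful bookkeeping around the two exceptional windows — lines~\ref{line:begin} to \ref{line:end} and the interior of $\textsc{DecreaseHeaviness}(u)$ — and showing that \emph{outside} of them the bound is genuinely restored. In particular, I must verify that $\textsc{DecreaseHeaviness}(u)$ always lowers $h(u)$ enough that the new $\mathcal{FN}(u)$ satisfies the bound: this follows because line~\ref{line:sethd} picks $h(u)$ as the largest $i$ with $|\texttt{Cache}_u[\texttt{CacheIndex}(u,2^i),\tau_{max}]| \geq (2^i-1)\frac{6n\log n}{\epsilon\tau}$, and such an $i$ always exists ($i=0$ works), so the chosen $h(u)$ automatically witnesses the bound. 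I also need Lemma~\ref{lem:cachedecreases} to ensure that the refresh loops inside the heaviness procedures can only move vertices to \emph{lower} indices, so that the count $|\texttt{Cache}_u[\cdot,\tau_{max}]|$ the argmax is computed against is monotone and consistent with what $\mathcal{FN}(u)$ will be after the update. Finally, I should note that after $\textsc{IncreaseHeaviness}(u)$ one must also check no intermediate state during the refresh loop on line~\ref{line:loop1} is claimed to satisfy the bound — but the lemma only excepts $\textsc{DecreaseHeaviness}$, so I need to confirm $\textsc{IncreaseHeaviness}$ never \emph{decreases} $|\mathcal{FN}(u)|$ below the bound mid-execution, which holds because increasing $h(u)$ can only shrink $\mathcal{FN}(u)$ after the refresh, and the final choice on line~\ref{line:calc2} is again the argmax guaranteeing the bound; the transient during the loop has $h(u)$ still at its old (smaller) value for which the bound held by the inductive hypothesis, and the refresh only moves vertices to indices $\geq \widetilde{\mathbf{dist}}_{\tau}(r,v)$ which, combined with monotonicity, does not drop the count below the old threshold.
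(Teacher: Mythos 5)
Your proposal is correct and follows essentially the same approach as the paper's proof: enumerate the events that could violate the invariant, and show each is handled by the argmax choice of $h(u)$ on line~\ref{line:calc2} or line~\ref{line:sethd}. The paper organizes the case analysis by which side of the inequality changes ($h(u)$ increases vs.\ $\mathcal{FN}(u)$ shrinks) while you organize by triggering event, but these decompositions cover the same cases and rely on the same mechanism; the only small imprecisions in your write-up (e.g., the remark that the refresh ``moves vertices to indices $\geq \widetilde{\mathbf{dist}}_{\tau}(r,v)$'', and the claim that $h(u)$ is ``always'' set in the heaviness procedures rather than only when the {\bf if} guard fires) do not affect the substance of the argument.
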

\begin{proof}
The inequality in the lemma statement could be violated due to two events: 
\begin{itemize}
    \item \uline{$h(u)$ increases:} $\textsc{IncreaseHeaviness}(u)$ is the only procedure that can increase $h(u)$. $\textsc{IncreaseHeaviness}(u)$ specifically sets $h(u)$ so that it satisfies $|\mathcal{FN}(u)|\geq (2^{h(u)} - 1)  \frac{6 n\log n}{\epsilon\tau}$.
    \item \uline{$\mathcal{FN}(u)$ shrinks:} There are two scenarios that could cause $\mathcal{FN}(u)$ to shrink. Either, 1) $h(u)$ decreases, in which case it is set so that $|\mathcal{FN}(u)|\geq (2^{h(u)} - 1)  \frac{6 n\log n}{\epsilon\tau}$, or 2) a vertex $v\in\mathcal{FN}(u)$ has its distance estimate $\widetilde{\mathbf{dist}}_{\tau}(r,v)$ decremented causing $v$ to leave $\mathcal{FN}(u)$. In this case, $v$ leaves $\mathcal{FN}(u)$ only if $\widetilde{\mathbf{dist}}_{\tau}(r,v)$ decrements to $\texttt{CacheIndex}(u)-1$ and $v$'s position in $\texttt{Cache}_u$ is updated to $\texttt{Cache}_u[\widetilde{\mathbf{dist}}_{\tau}(r,v)]$. We observe that when $\widetilde{\mathbf{dist}}_{\tau}(r,v)$ decrements, we iterate through each vertex $w\in \normalfont{\texttt{Expire}}_v[\widetilde{\mathbf{dist}}_{\tau}(r,v)+1]$ (line~\ref{line:exp}). Since we update $\texttt{Expire}_v$ immediately every time $v$ is moved to a new index in $\texttt{Cache}_u$, we have that if $\widetilde{\mathbf{dist}}_{\tau}(r,v)=\texttt{CacheIndex}(u)-1$ then $u\in \normalfont{\texttt{Expire}}_v[\widetilde{\mathbf{dist}}_{\tau}(r,v)+1]$. Thus, if $v$ has left $\mathcal{FN}(u)$, then the loop on line~\ref{line:exp} calls $\textsc{DecreaseHeaviness}(u)$, which specifically sets $h(u)$ so that it satisfies $|\mathcal{FN}(u)|\geq (2^{h(u)} - 1)  \frac{6 n\log n}{\epsilon\tau}$.
\end{itemize}
\end{proof}

\noindent
We are now ready to prove the final lemma, establishing the correctness of the algorithm.

\begin{lemma}\label{lem:correct}
After processing each edge update, for each $t \in V$ and each $\tau$, $\mathbf{dist}(r, t)\leq\widetilde{\mathbf{dist}}_{\tau}(r,t)$ and if $\mathbf{dist}(r, t) \in [\tau, 2\tau)$ then $\widetilde{\mathbf{dist}}_{\tau}(r,t) \leq (1+\epsilon)\mathbf{dist}(r, t)$.
\end{lemma}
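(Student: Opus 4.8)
\textbf{Proof plan for Lemma~\ref{lem:correct}.}

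The lower bound $\mathbf{dist}(r,t) \leq \widetilde{\mathbf{dist}}_{\tau}(r,t)$ is already established by Lemma~\ref{lem:correct1}, so the entire work is the upper bound under the hypothesis $\mathbf{dist}(r,t) \in [\tau, 2\tau)$. The plan is to mirror the structure of the warm-up analysis, but now using Invariant~\ref{inv:correct} (the multi-level analogue of Invariant~\ref{inv:warm}) together with the size bound from Lemma~\ref{lem:sizefn} in place of the simple ``heavy means $|\mathcal{FN}| \geq \gamma/2$'' statement. Fix a shortest path $\pi_{r,t} = \langle r = v_0, v_1, \dots, v_{\ell} = t\rangle$ in the current graph, where $\ell = \mathbf{dist}(r,t) < 2\tau$. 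First I would set up a decomposition of $\pi_{r,t}$ into segments governed by ``heavy'' vertices, where now a vertex $u$ is considered to be at heaviness level $h(u)$ and the additive error it can induce on an outgoing path edge is $\sim 2^{h(u)}$ by Invariant~\ref{inv:correct}. The key quantitative fact to extract is: the forward neighborhoods of vertices at the \emph{same} heaviness level $i$, if they are pairwise disjoint along $\pi_{r,t}$, can number at most $O\!\bigl(\tfrac{n}{(2^i-1)\cdot 6n\log n/(\epsilon\tau)}\bigr) = O\!\bigl(\tfrac{\epsilon\tau}{2^i \log n}\bigr)$ of them, since each such neighborhood has size at least $(2^i-1)\frac{6n\log n}{\epsilon\tau}$ by Lemma~\ref{lem:sizefn} and they all sit inside $V$.

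The heart of the argument is to bound, for each heaviness level $i$, the total additive error contributed by edges whose tail has heaviness $i$. I would argue as in the warm-up: walk along $\pi_{r,t}$; whenever we encounter a vertex $u$ of heaviness $i$, it is in the forward neighborhood of at most a bounded number of \emph{other} level-$i$ vertices on the path whose forward neighborhoods overlap with $\mathcal{FN}(u)$ (two level-$i$ vertices $u,u'$ with a common vertex $w \in \mathcal{FN}(u)\cap\mathcal{FN}(u')$ have $|\widetilde{\mathbf{dist}}_{\tau}(r,u) - \widetilde{\mathbf{dist}}_{\tau}(r,u')| \leq 2\cdot 2^i$ by Invariant~\ref{inv:correct}, so they are ``close'' in distance estimate). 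Grouping consecutive level-$i$ vertices into maximal blocks whose forward neighborhoods form a connected overlap-chain, each block contributes only $O(2^i)$ additive error (telescoping the distance estimates through a common witness vertex, exactly as $\widetilde{\mathbf{dist}}(r,t_i) - \widetilde{\mathbf{dist}}(r,r_i) \leq 2n^{1/3}$ in the warm-up), and the forward neighborhoods of distinct blocks are disjoint. Hence there are at most $O\!\bigl(\tfrac{\epsilon\tau}{2^i\log n}\bigr)$ blocks at level $i$, and summing over all $O(\log n)$ levels $i$ the total additive error is $\sum_{i} O(2^i) \cdot O\!\bigl(\tfrac{\epsilon\tau}{2^i\log n}\bigr) = \sum_i O\!\bigl(\tfrac{\epsilon\tau}{\log n}\bigr) = O(\epsilon\tau)$. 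On the path segments between blocks, all edge tails have heaviness $0$ (or more precisely are ``light'' at the moment of relaxation), and exactly as in the warm-up I would verify the claim: if $u$ is currently at heaviness $0$ and $(u,v)$ is an edge, then $\widetilde{\mathbf{dist}}_{\tau}(r,v) \leq \widetilde{\mathbf{dist}}_{\tau}(r,u) + 1$, by casing on the last of the events ``$(u,v)$ inserted'' / ``$\widetilde{\mathbf{dist}}_{\tau}(r,u)$ decremented'' / ``$h(u)$ set to $0$,'' each of which triggers a scan of $\mathcal{FN}(u) = \texttt{Cache}_u[\widetilde{\mathbf{dist}}_{\tau}(r,u)-1, \tau_{max}]$ or a direct decrement via $H$. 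These light segments therefore contribute no error beyond the true length, so $\widetilde{\mathbf{dist}}_{\tau}(r,t) \leq \mathbf{dist}(r,t) + O(\epsilon\tau) \leq \mathbf{dist}(r,t) + O(\epsilon)\cdot\mathbf{dist}(r,t) = (1+O(\epsilon))\mathbf{dist}(r,t)$, using $\mathbf{dist}(r,t) \geq \tau$; rescaling $\epsilon$ gives the clean $(1+\epsilon)$ bound. A small wrinkle to handle is that $\widetilde{\mathbf{dist}}_{\tau}$ is capped at $\tau_{max} = 2\tau(1+\epsilon)$: I must check that this cap is never reached for a vertex with true distance in $[\tau,2\tau)$, which follows because the additive error is $O(\epsilon\tau)$ and $2\tau + O(\epsilon\tau) < \tau_{max}$ for the chosen constants.

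The main obstacle I anticipate is making the block decomposition fully rigorous in the \emph{dynamic, multi-level} setting. In the warm-up there were only two levels and Invariant~\ref{inv:warm} was a single clean inequality; here heaviness levels of path vertices can differ, a vertex's heaviness changes over time (so ``the path'' is a snapshot but the distance estimates were built up through a history of heaviness changes), and Invariant~\ref{inv:correct} only controls $|\widetilde{\mathbf{dist}}_{\tau}(r,v) - \widetilde{\mathbf{dist}}_{\tau}(r,u)|$ for the \emph{current} heaviness. The delicate point is to verify that when I telescope distance estimates along a segment — writing $\widetilde{\mathbf{dist}}_{\tau}(r,v_{j+1}) \leq \widetilde{\mathbf{dist}}_{\tau}(r,v_j) + (\text{edge cost} + \text{local error})$ — the ``local error'' charged at $v_j$ is genuinely bounded by $2^{h(v_j)}$ \emph{in the final configuration}, and that the overlap-chain / block structure I use to bound the \emph{count} of contributing vertices at each level is consistent with what Lemma~\ref{lem:sizefn} guarantees (which itself has exceptions: it fails during $\textsc{DecreaseHeaviness}(u)$ and on lines~\ref{line:begin}–\ref{line:end}, so I must only invoke it at a quiescent moment, i.e., after the edge update is fully processed, which is exactly the statement's hypothesis). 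Getting these interlocking monotonicity facts (Lemma~\ref{lem:cachedecreases}, monotonicity of $\widetilde{\mathbf{dist}}_{\tau}$, and the ``$H$ is emptied before the update returns'' property) organized into a clean invariant-preservation argument is where the real care is needed; the arithmetic of summing $\sum_i 2^i \cdot \epsilon\tau/(2^i\log n)$ is routine once the combinatorial skeleton is in place.
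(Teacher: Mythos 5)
Your plan follows the paper's proof closely: lower bound from Lemma~\ref{lem:correct1}, and for the upper bound, a level-by-level decomposition of $\pi_{r,t}$ where Invariant~\ref{inv:correct} bounds the per-segment error by $O(2^{h})$, Lemma~\ref{lem:sizefn} bounds the number of heaviness-$h$ segment representatives by $\epsilon\tau/(6(2^{h}-1)\log n)$, each level contributes $O(\epsilon\tau/\log n)$, and the heaviness-$0$ segments contribute no error, yielding $\widetilde{\mathbf{dist}}_{\tau}(r,t) \leq \mathbf{dist}(r,t) + \epsilon\tau$.

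There is, however, one imprecision that would become a genuine gap if carried out as written: you describe the blocks at level $i$ as ``maximal consecutive level-$i$ vertices whose forward neighborhoods form a connected overlap-chain'' and then claim each block contributes $O(2^i)$ error ``through a common witness vertex.'' A connected overlap-chain has no common witness in general --- if $u_1,\dots,u_k$ are such that $\mathcal{FN}(u_j)\cap\mathcal{FN}(u_{j+1})\neq\emptyset$ for each $j$, Invariant~\ref{inv:correct} only gives $|\widetilde{\mathbf{dist}}_{\tau}(r,u_{j+1})-\widetilde{\mathbf{dist}}_{\tau}(r,u_j)|\leq 2\cdot 2^i$ \emph{per link}, so the block error grows linearly in $k$, not as a constant multiple of $2^i$. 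The paper avoids this by defining $t_{i+1}$ to be the \emph{last} heaviness-$h$ vertex whose forward neighborhood intersects $\mathcal{FN}(r_{i+1})$ \emph{directly}; this guarantees a single common witness $v_i \in \mathcal{FN}(r_{i+1})\cap\mathcal{FN}(t_{i+1})$ and, simultaneously, pairwise disjointness of $\mathcal{FN}(r_1),\mathcal{FN}(r_2),\dots$, which is exactly what makes the $n/|\mathcal{FN}|$ counting argument go through. Relatedly, your ``flat'' sum over levels needs the paper's recursive structure (process $h'=\log n$ down to $0$, contract each heavy segment $\pi'[r_{i+1},t'_{i+1}]$, recurse on the residual) to obtain a genuine \emph{partition} of $\pi_{r,t}$ so the telescope of distance estimates closes; without it, intervals at different levels can nest or overlap and the final inequality $\widetilde{\mathbf{dist}}_{\tau}(r,t) \leq \sum(\text{segment contributions})$ is not established. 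You do flag these as the delicate points, which is the right instinct; resolving them requires exactly the paper's pairing rule and recursive contraction.
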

\begin{proof}
Our main argument is a generalization of the proof of correctness from the warm-up algorithm. Fix a heaviness level $h>0$. Let $r = t_0$. Then, we define $r_{i+1}$ be the first vertex with heaviness $h$ after $t_{i}$ on $\pi_{r,t}$ and let $t_{i+1}$ be the last vertex on $\pi_{r,t}$ of heaviness $h$ whose forward neighborhood intersects with the forward neighborhood of $r_{i+1}$ (possibly $t_{i+1} = r_{i+1}$). Thus, we get pairs $(r_1, t_1), (r_2, t_2), \dots, (r_k, t_k)$. Additionally, let $r_{k+1} = t$.

By definition, the forward neighborhoods of all $r_i$'s are disjoint. By Lemma \ref{lem:sizefn}, for each $r_i$, $|\mathcal{FN}(r_i)|\geq(2^h - 1)  \frac{6 n\log n}{\epsilon\tau}$ and since all $r_i$'s have disjoint forward neighborhoods, we have at most $k$ pairs $(r_i, t_i)$ with 
\[
k \leq \frac{n}{(2^h - 1)  \frac{6 n\log n}{\epsilon\tau}} \leq \frac{\epsilon\tau}{6 (2^h-1)\log n}.
\]

For any $i$, let $v_i$ be a vertex in $\mathcal{FN}(r_i) \cap \mathcal{FN}(t_i)$ (which exists by definition of $t_i$). By Invariant \ref{inv:correct}, we have $|\widetilde{\mathbf{dist}}_{\tau}(r,r_i) - \widetilde{\mathbf{dist}}_{\tau}(r,v_i)| \leq 2^{h}$ and $|\widetilde{\mathbf{dist}}_{\tau}(r,v_i) - \widetilde{\mathbf{dist}}_{\tau}(r,t_i)| \leq 2^{h}$. Thus, $\widetilde{\mathbf{dist}}_{\tau}(r,t_i) - \widetilde{\mathbf{dist}}_{\tau}(r,r_i) \leq 2^{h+1}$.

Let $t'_i$ be the vertex on $\pi_{r,t}$ succeeding $t_i$ (except $t'_0 = s$). If $t'_i\in \mathcal{FN}(t_i)$ then by Invariant \ref{inv:correct}, we have $\widetilde{\mathbf{dist}}_{\tau}(r,t'_i) - \widetilde{\mathbf{dist}}_{\tau}(r,t_i) \leq 2^{h}$ and otherwise, $t'_i\not\in \mathcal{FN}(t_i)$ so $\widetilde{\mathbf{dist}}_{\tau}(r,t'_i) < \texttt{CacheIndex}(t_i) < \widetilde{\mathbf{dist}}_{\tau}(r,t_i)$. So regardless, we have $\widetilde{\mathbf{dist}}_{\tau}(r,t'_i) - \widetilde{\mathbf{dist}}_{\tau}(r,t_i) \leq 2^{h}$. Combining this with the previous paragraph, we have $\widetilde{\mathbf{dist}}_{\tau}(r,t'_i) - \widetilde{\mathbf{dist}}_{\tau}(r,r_i) \leq 3*2^{h}$.

Now, let $h_{max} = \log n$ be the maximum heaviness level. We handle heaviness level $h'$ (initially $h_{max}$) by finding the pairs $(r_i,t_i)$ 
for heaviness $h'$ on the path $\pi'$ (initially $\pi_{r,t}$). This partitions the path $\pi'$ into segments $\pi'[t'_i, r_{i+1}]$ and $\pi'[r_{i+1}, t'_{i+1}]$. We observe that all arc tails in these path segments have heaviness less than $h'$. We contract the path segments $\pi'[r_{i+1}, t'_{i+1}]$ to obtain the new path $\pi'$, decrement $h'$ and recurse. 
We continue this scheme until $h'$ is $0$. By the previous analysis for each heaviness level $h'$, summing over the distance estimate difference of vertex endpoints of each contracted segment we obtain at most $\frac{3(2^{h'})\epsilon\tau }{6(2^{h'}-1)\log n}\leq \frac{\epsilon\tau}{\log n}$ (since $h'>0$) total error. Thus, each heaviness level larger than 0 contributes at most $\frac{\epsilon\tau}{\log n}$ additive error and overall they only induce additive error ${\epsilon\tau}$. 

For $h' = 0$, we argue that the algorithm induces no error on edges on $\pi'$ where each arc tail is of heaviness $0$. We will show that if $u$ is vertex of heaviness $0$ and $(u,v)$ is an edge, then $\widetilde{\mathbf{dist}}_{\tau}(r,v)\leq \widetilde{\mathbf{dist}}_{\tau}(r,u)+1$. This is straightforward to see from the algorithm description, but we describe the argument in detail for completeness. Consider the last of the following events that occurred: a) edge $(u,v)$ was inserted, b) $\widetilde{\mathbf{dist}}_{\tau}(r,u)$ was decremented, or c) the heaviness of $\widetilde{\mathbf{dist}}_{\tau}(r,u)$ became $0$. Case a occurs in the $\textsc{InsertEdge}(u,v)$ procedure where the algorithm decreases $\widetilde{\mathbf{dist}}_{\tau}(r,v)$ to be at most $\widetilde{\mathbf{dist}}_{\tau}(r,u)+1$. Case b occurs in the $\textsc{Decrement}(v)$ procedure. Here, the algorithm checks whether $\widetilde{\mathbf{dist}}_{\tau}(r,v)$ is a multiple of $2^{h(v)}$, which is true since $h(v)=0$. Then the algorithm updates the distance estimate of all vertices in $\mathcal{FN}(u)$, so if $\widetilde{\mathbf{dist}}_{\tau}(r,v)>\widetilde{\mathbf{dist}}_{\tau}(r,u)+1$ then $\widetilde{\mathbf{dist}}_{\tau}(r,v)$ is decreased to $\widetilde{\mathbf{dist}}_{\tau}(r,u)+1$. Case c occurs in the $\textsc{DecreaseHeaviness}(u)$ procedure where again the algorithm updates the distance estimate of all vertices in $\mathcal{FN}(u)$. 

By definition, the path $\pi'$ above is of length at most $\mathbf{dist}(r, t)$ and therefore we obtain an upper bound on $\widetilde{\mathbf{dist}}_{\tau}(r,t)$ of $\mathbf{dist}(r, t) + {\epsilon\tau}$. Then, when $\mathbf{dist}(r, t) \geq \tau$, the additive error of ${\epsilon\tau}$ is subsumed in the multiplicative $(1+\epsilon)$-approximation, as required.
\end{proof}



\section{Running time analysis}

We will show that the total running time of each data structure $\mathcal{E}_{\tau}$ is $\tilde{O}(n^2/\epsilon)$. Since there are $O(\log n)$ values of $\tau$, this implies that the total running time of the algorithm is $\tilde{O}(n^2/\epsilon)$. For the rest of this section we fix a value of $\tau$.

We crucially rely on the following invariant, which guarantees that  the heaviness of each vertex $u$ is chosen to be maximal, in the sense that if $h(u)$ were larger then we would have an upper bound on the size of $\mathcal{FN}(u)$.

\begin{invariant}\label{inv:time}
At all times, for all $u\in V$ and all integers $i$ such that $h(u)< i \leq \log n$,
    \[|\normalfont{\texttt{Cache}}_u[\texttt{CacheIndex}(u,2^i), \tau_{max}]| \leq  (2^i - 1) \frac{12 n \log n }{\epsilon\tau}.\]
\end{invariant}
\begin{proof}
We note that the invariant is satisfied on initialization since $\texttt{Cache}_u$ is initially empty. Let us now consider the events that could cause the invariant to be violated for some fixed $i$:
\begin{enumerate}
    \item \uline{$h(u)$ is decreased:} We note that $h(u)$ is only decreased in line \ref{line:sethd} of $\textsc{DecreaseHeaviness}(u)$, where it is set to a value that satisfies the invariant.
    (In particular, $h(u)$ cannot decrease in $\textsc{IncreaseHeaviness}(u)$ by Lemma~\ref{lem:sizefn}.)
    \item \uline{A vertex $v$ is added to $\texttt{Cache}_u$:} This scenario could only occur due to an insertion of an edge $(u, v)$. However, after adding $v$ to $\texttt{Cache}_u$ (and $u$ to $\texttt{Expire}_v$), we directly invoke the procedure $\textsc{IncreaseHeaviness}(u)$, which we analyze below.
    \item \uline{$\texttt{CacheIndex}(u,2^i)$ is decreased:} Here, we note that $\texttt{CacheIndex}(u,2^i)$ decreases only if $\widetilde{\mathbf{dist}}_{\tau}(r,u)$ decreases to a multiple of $2^i$, in which case also call $\textsc{IncreaseHeaviness}(u)$.
\end{enumerate}

For the last two cases, it remains to prove that the procedure $\textsc{IncreaseHeaviness}(u)$ indeed resolves a violation of the invariant. If we do not enter the {\bf if} statement on line~\ref{line:ifloop1}, then by the definition of $i'$, the invariant is satisfied. If we do enter the {\bf if} statement, then invariant is satisfied for all $i>i'$. By Lemma \ref{lem:cachedecreases} the indices of vertices in $\texttt{Cache}_u$ can only decrease and therefore during the course of $\textsc{IncreaseHeaviness}(u)$, the size of $\normalfont{\texttt{Cache}}_u[\texttt{CacheIndex}(u,2^{i'}), \tau_{max}]$ can only decrease. Thus, when $\textsc{IncreaseHeaviness}(u)$ terminates, it is still the case that the invariant holds for all $i>i'$. On the other hand, if $i\leq i'$, then we set $h(u)$ on line \ref{line:calc2} so that the invariant is satisfied. 

\end{proof}

We can now prove the most important lemma of this section bounding the time spent in the loops starting at lines
\ref{line:fndecrement}, \ref{line:loop1}, \ref{line:loop2}, \ref{line:loop3} and \ref{line:loop4}.

\begin{lemma}\label{lem:iscan}
The total time spent in the loops starting in lines
\ref{line:fndecrement}, \ref{line:loop1}, \ref{line:loop2}, \ref{line:loop3} and \ref{line:loop4} is $O(n^2 \log^4 n / \epsilon)$.
\end{lemma}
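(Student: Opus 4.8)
The plan is to account separately for the five loops and show that, up to polylog factors, each vertex and each edge only contributes $\tilde O(n/\epsilon)$ work across the whole update sequence, for a fixed $\tau$; multiplying by $n$ vertices (or $m\le n^2$ edges, but we will phrase things per-vertex) and by the $O(\log n)$ choices of $\tau$ gives the claimed $O(n^2\log^4 n/\epsilon)$. The central quantitative facts I would lean on are Invariant~\ref{inv:time} (the heaviness of $u$ is maximal, so $|\mathcal{FN}(u)|$ and more generally $|\texttt{Cache}_u[\texttt{CacheIndex}(u,2^i),\tau_{max}]|$ are $O(2^i\cdot n\log n/(\epsilon\tau))$ for $i>h(u)$), Lemma~\ref{lem:sizefn} (the matching lower bound $|\mathcal{FN}(u)|\ge(2^{h(u)}-1)\cdot 6n\log n/(\epsilon\tau)$ at the relevant times), Lemma~\ref{lem:cachedecreases} (positions in $\texttt{Cache}_u$ only decrease, so the heaviness $h(u)$, once the cache is refreshed, essentially only moves up over time except for the controlled decreases), and the fact that $\widetilde{\mathbf{dist}}_{\tau}(r,u)$ is monotonically decreasing and bounded by $\tau_{max}=O(\tau)$.

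First I would handle the loop on line~\ref{line:fndecrement}, inside $\textsc{Decrement}(u,v)$: this scans $\mathcal{FN}(v)$ only when $\widetilde{\mathbf{dist}}_{\tau}(r,v)$ hits a multiple of $2^{h(v)}$. Fixing a vertex $v$ and a heaviness value $h$, while $h(v)=h$ the number of such hits over the whole run is at most $\tau_{max}/2^{h}=O(\tau/2^h)$, and each scan costs $|\mathcal{FN}(v)|=O(2^{h}n\log n/(\epsilon\tau))$ by Invariant~\ref{inv:time} (applied with $i=h+1$, say, together with the definition of $\mathcal{FN}$). The product is $O(n\log n/\epsilon)$ per heaviness value, and since there are $O(\log n)$ heaviness values and $n$ vertices this loop costs $O(n^2\log^2 n/\epsilon)$ in total. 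Next, the loops on lines~\ref{line:loop1} and~\ref{line:loop2} live in $\textsc{IncreaseHeaviness}(u)$, and the loops on lines~\ref{line:loop3} and~\ref{line:loop4} in $\textsc{DecreaseHeaviness}(u)$; the scan in each of these costs $O(|\texttt{Cache}_u[\texttt{CacheIndex}(u,2^{i'}),\tau_{max}]|)$ or $O(|\mathcal{FN}(u)|)$, which by Invariant~\ref{inv:time} is $O(2^{i'}n\log n/(\epsilon\tau))$. So everything reduces to bounding, for each vertex $u$, the total of $2^{i'}$ over all calls to $\textsc{IncreaseHeaviness}(u)$ and $\textsc{DecreaseHeaviness}(u)$; I would argue this sum is $O(\tau\log n)$ per vertex, which then yields $O(n^2\log^2 n/\epsilon)$ per loop and $O(n^2\log^4 n/\epsilon)$ overall after the extra $\log n$ slack.

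The main obstacle — and the step I would spend the most care on — is exactly this charging of the heaviness-change cost, because a vertex can oscillate between heaviness levels. The key structural point is the gap between the constant $\frac{12n\log n}{\epsilon\tau}$ used to compute the trial value $i'$ and the constant $\frac{6n\log n}{\epsilon\tau}$ used to set the new $h(u)$: this factor-two hysteresis guarantees that between a $\textsc{DecreaseHeaviness}(u)$ call that lowers $h(u)$ below some level $j$ and a subsequent $\textsc{IncreaseHeaviness}(u)$ call that raises it back to $\ge j$, a constant fraction of $2^{j}n\log n/(\epsilon\tau)$ \emph{distinct} vertices must have newly entered $\texttt{Cache}_u[\texttt{CacheIndex}(u,2^{j}),\tau_{max}]$, which (using Lemma~\ref{lem:cachedecreases} and the monotonicity of $\widetilde{\mathbf{dist}}_{\tau}(r,u)$) can only happen via new edge insertions into $u$ or via $\texttt{CacheIndex}(u,2^j)$ decreasing — the latter happening at most $\tau_{max}/2^{j}$ times. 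I would make this precise by a per-$(u,j)$ potential/token argument: insertions into $u$ (there are at most $n$ over the run, distributed across the at most $n$ heads) and the $O(\tau/2^j)$ index-decrease events pay for the $O(2^j n\log n/(\epsilon\tau))$ work of each up-or-down crossing of level $j$; summing the geometric series over $j$ and over the $n$ choices of $u$ gives the bound. I would also need the small observation that $h(u)$ cannot decrease inside $\textsc{IncreaseHeaviness}(u)$ (already used in the proof of Invariant~\ref{inv:time} via Lemma~\ref{lem:sizefn}), so that up-moves and down-moves genuinely alternate and the crossings of a fixed level $j$ are in bijective-pairs. Once this amortization is in hand, the remaining arithmetic is routine, and combining the five loop bounds finishes the proof of the lemma.
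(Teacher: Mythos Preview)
Your plan is on the right track and uses exactly the ingredients the paper uses: Invariant~\ref{inv:time} to bound the size of each scan, the $12$-vs-$6$ hysteresis to force growth between consecutive scans at a fixed level, and pairing the decrease-heaviness scans with preceding increase-heaviness scans. The treatment of the loop on line~\ref{line:fndecrement} is fine and matches the paper.

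There is, however, a subtle gap in your charging for the heaviness-change loops. You propose to charge the work of an $\textsc{IncreaseHeaviness}(u)$ call to the levels $j$ that $h(u)$ \emph{crosses}, paying $\Theta(2^j n\log n/(\epsilon\tau))$ per crossing, so that a call moving $h(u)$ from $a$ to $b$ is charged $\sum_{j=a+1}^{b}2^j\sim 2^b$. But the actual cost of the call is $\Theta(2^{i'} n\log n/(\epsilon\tau))$ with $i'$ taken from line~\ref{line:calc1}, and after the cache refresh on line~\ref{line:loop1} the new heaviness set on line~\ref{line:calc2} can be strictly smaller than $i'$ (many stale entries may drop out). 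In that case $h(u)$ only crosses levels up to $h(u)_{\text{new}}<i'$, and your charges sum to $\sim 2^{h(u)_{\text{new}}}$, leaving the difference $2^{i'}-2^{h(u)_{\text{new}}}$ unpaid. Nothing in the algorithm forces $i'-h(u)_{\text{new}}$ to be bounded.

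The fix --- and this is precisely what the paper does --- is to track, for each level $i$, the number of calls with $i'=i$ (the paper calls this an ``$i$-scan''), rather than crossings of $h(u)$. The hysteresis argument you sketched then applies verbatim: between two consecutive calls with $i'=i$, there is a moment where $|\texttt{Cache}_u[\texttt{CacheIndex}(u,2^i),\tau_{max}]|<(2^i-1)\tfrac{6n\log n}{\epsilon\tau}$ (namely right after line~\ref{line:calc2} or line~\ref{line:sethd}) and a later moment where it is $\ge(2^i-1)\tfrac{12n\log n}{\epsilon\tau}$; by Lemma~\ref{lem:cachedecreases} this forces either $\Omega(2^i n\log n/(\epsilon\tau))$ fresh insertions with tail $u$ or a decrease of $\texttt{CacheIndex}(u,2^i)$. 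This yields $O(\tau/2^i)$ many $i$-scans per vertex, and multiplying by the scan size and summing over $i$ gives the bound. The decrease-heaviness scans are then handled exactly as you say, by pairing each with the preceding increase.

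One small separate slip: the lemma is stated for a single $\mathcal{E}_\tau$, so you should not be multiplying by the $O(\log n)$ choices of $\tau$ here; that factor enters only in Theorem~\ref{thm:mainResultUnweightedUpperBound}.
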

\begin{proof}
We start our proof by pointing out that the time spent in the loop starting in line \ref{line:loop2} is subsumed by the time spent by the loop in line \ref{line:loop1} for the following reason. On line \ref{line:calc2} the heaviness is chosen so that the forward neighborhood is over a more narrow range of indices that in loop on line \ref{line:loop1}. Furthermore, By Lemma \ref{lem:cachedecreases} the indices of vertices in $\texttt{Cache}_u$ can only decrease and therefore between lines \ref{line:loop1} and \ref{line:loop2}, for all $i$ the size of $\normalfont{\texttt{Cache}}_u[\texttt{CacheIndex}(u,2^{i}), \tau_{max}]$ can only decrease.

Similarly, the running time spent in the loop starting in line \ref{line:loop4} is subsumed by the running time of the loop starting in line \ref{line:loop3}. Thus, we only need to bound the running times of the loops starting in lines \ref{line:fndecrement}, \ref{line:loop1}, and \ref{line:loop3}.

To bound their running times, we define the concept of $i$-scanning: we henceforth refer to the event of iterating through $\normalfont{\texttt{Cache}}_u[\texttt{CacheIndex}(u,2^{i}), \tau_{max}]$ by \emph{$i$-scanning $\normalfont{\texttt{Cache}_u}$}, for any $0 \leq i \leq \log n$, choosing the largest $i$ applicable.

Lines \ref{line:fndecrement}, \ref{line:loop1} and \ref{line:loop3} all correspond to $i$-scanning $\normalfont{\texttt{Cache}_u}$: the loop on line \ref{line:fndecrement} $h(u)$-scans $\normalfont{\texttt{Cache}_u}$, the loop on line \ref{line:loop1} $i'$-scans $\normalfont{\texttt{Cache}_u}$ for $i'$ chosen on line \ref{line:calc1}, and the loop at line \ref{line:loop3} $h(u)$-scans $\normalfont{\texttt{Cache}_u}$. We now want to bound the total number of $i$-scans in order to bound the total running time. 

\begin{claim}
\label{clm:iscans}
For all $u\in V$ and all integers $0\leq i\leq \log n $, the algorithm $i$-scans $\normalfont{\texttt{Cache}}_u$ at most $O(\tau \log^2 n /2^i)$ times over the course of the entire update sequence.
\end{claim}
\begin{proof}
We first observe that we $i$-scan $\texttt{Cache}_u$ on line \ref{line:fndecrement} only if we are in the procedure $\textsc{Decrement}(u', u)$ for some $u'$, and $\hat{d}_{\tau}$ is decreased to a value that is a multiple of $2^i$. Since each invocation of $\textsc{Decrement}(u', u)$, decreases $\widetilde{\mathbf{dist}}_{\tau}(r,u)$ by $1$ and since  $\widetilde{\mathbf{dist}}_{\tau}(r,u)$ is monotonically decreasing, starting at $\tau_{max} + 1$, we conclude that the number of $i$-scans on line~\ref{line:fndecrement} is bound by $O(\tau/2^i)$. 

Next, let us bound the number of $i$-scans executed in the loop starting on line \ref{line:loop1} in procedure $\textsc{IncreaseHeaviness}(u)$. We claim that between any two $i$-scans of $\texttt{Cache}_u$ on line \ref{line:loop1}, either $\widetilde{\mathbf{dist}}_{\tau}(r,u)$ becomes a multiple of $2^i$ or at least $(2^i-1) \frac{ n \log n }{\epsilon\tau}$ edges emanating from $u$ are inserted into the graph. Observe that this claim immediately implies that there can be at most $\tau/2^i + \frac{n}{2^i \frac{ n \log n }{\epsilon\tau}} = O(\tau \log n / 2^i)$ $i$-scans on line \ref{line:loop1}. 

To prove this claim, let $t_1$ and $t_2$ be two points in time at which $i$-scans occur. We will prove that if $\widetilde{\mathbf{dist}}_{\tau}(r,u)$ did not become a multiple of $2^i$ between times $t_1$ and $t_2$ then there were many edge insertions between times $t_1$ and $t_2$. Observe first, that $\texttt{CacheIndex}(u,2^i)$ only changes when $\widetilde{\mathbf{dist}}_{\tau}(r,u)$ decreases to become a multiple of $2^i$. Thus, we assume for the rest of the proof that $\texttt{CacheIndex}(u,2^i)$ remains fixed between times $t_1$ and $t_2$. Therefore, the size of $\texttt{Cache}_u[\texttt{CacheIndex}(u,2^i), \tau_{max}]$ can only be increased if a new edge $(u,v)$ is inserted with $v$ at distance $\widetilde{\mathbf{dist}}_{\tau}(r,v) \geq \texttt{CacheIndex}(u,2^i)$. 

Now, let $i'$ be such that at time $t_1$, we $i$-scan with $i' = i$ was selected in line \ref{line:calc1}. However, observe that since at $t_2$, we only $i'$-scan with $i' = i$, if $i' > h(u)$. Thus, at some point $t$ such that $t_1 \leq t < t_2$, we either decreased the heaviness to below $i'$ on line \ref{line:i'dec}, or we already set $h(u)$ to a smaller value than $i'$ at time $t_1$ in line \ref{line:calc2}. In either case we certified that
\[
    | \texttt{Cache}_u[\texttt{CacheIndex}(u,2^i), \tau_{max}]| < (2^i-1) \frac{6 n \log n }{\epsilon\tau}.
\]
Since again, at time $t_2$, we picked $i' = i$, we certified on line \ref{line:calc1} that, 
\[
    | \texttt{Cache}_u[\texttt{CacheIndex}(u,2^i), \tau_{max}]| \geq (2^i-1) \frac{12 n \log n }{\epsilon\tau}.
\]
We have shown that between times $t_1$ and $t_2$, the size of $\texttt{Cache}_u[\texttt{CacheIndex}(u,2^i), \tau_{max}]$ can only increase due to edge insertions. Thus, we conclude that at least $6(2^i-1) \frac{n \log n }{\epsilon\tau}$ edges with tail $u$ must have been inserted between times $t_1$ and $t_2$. 

Finally, we prove that the number of $i$-scans in the loop starting on line  \ref{line:loop3} is bounded. We first observe that each time an $h(u)$-scan is executed, we afterwards decrease the heaviness by at least one: By Lemma \ref{lem:cachedecreases} the indices of vertices in $\texttt{Cache}_u$ can only decrease and therefore between lines \ref{line:loop3} and \ref{line:sethd} for any $i$ the size of  $\normalfont{\texttt{Cache}}_u[\texttt{CacheIndex}(u,2^{i'}), \tau_{max}]$ can only decrease. Thus, when we pick the new $h(u)$, it satisfies $h(u) \leq i'$.

Now, we use the fact that there are at most $\log n$ heaviness values to bound the number of $i$-scans in the loop starting on line  \ref{line:loop3}. Since the number of vertices scanned when we increase $h(u)$ is more than the number of vertices scanned on line~\ref{line:loop3} when we decrease $h(u)$, the total number of vertices scanned in the loop on line~\ref{line:loop3} is at most $\log n$ times the number of vertices scanned in the loop on line~\ref{line:loop2}.
Thus, there are at most $O(\tau \log^2 n / 2^i)$ $i$-scans on line  \ref{line:loop3}.
\end{proof}

Now, the running time of each of these $i$-scans can be bound by $O(2^i \frac{n \log n}{\epsilon \tau})$ by Invariant \ref{inv:time}, so we obtain the claimed running time of
\[
    \sum_i O\left((\tau \log^2 n /2^i)\left(2^i \frac{n \log n}{\epsilon \tau}\right)\right)= O(n \log^4 n / \epsilon).
\]
\end{proof}

We can now reuse claim \ref{clm:iscans} to bound the total time spent in the loop on line \ref{line:while} in the procedure $\textsc{InsertEdge}(u,v)$.

\begin{lemma}\label{lem:insertEdge}
The total running time spent in the loop starting on line \ref{line:while} excluding calls to $\textsc{Decrement}(u,v)$ is bounded by $O(n^2 \log^4 n / \epsilon)$.
\end{lemma}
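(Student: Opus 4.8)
The plan is to reduce the cost of the \textsc{while}-loop on line~\ref{line:while} to a pure counting problem: how often the loop body executes over the whole update sequence. Observe that a single iteration of that loop, \emph{once we exclude the recursive call $\textsc{Decrement}(x,y)$ on line~\ref{line:calldecrement}}, does nothing more than extract an arbitrary tuple $(x,y)$ from $H$, evaluate the test on line~\ref{line:ifThenDecrement}, and — in the \textsc{else} branch — perform one $H.\textsc{Remove}(x,y)$. Implementing $H$ as a doubly linked list together with a dictionary keyed by edges makes each of these three operations cost $O(1)$ worst case. Hence the quantity to bound is, up to a constant factor, the total number of iterations of the loop on line~\ref{line:while}, summed over all calls to $\textsc{InsertEdge}$.

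I would split these iterations into \emph{decrement iterations}, where the test on line~\ref{line:ifThenDecrement} passes and $\textsc{Decrement}(x,y)$ is invoked (leaving $(x,y)$ in $H$), and \emph{removal iterations}, where $(x,y)$ is deleted from $H$. The number of decrement iterations equals the total number of calls to $\textsc{Decrement}$. Since $\textsc{Decrement}(x,y)$ is the only place any distance estimate is changed and it decreases $\widetilde{\mathbf{dist}}_{\tau}(r,y)$ by exactly one, while $\widetilde{\mathbf{dist}}_{\tau}(r,y)$ starts at $\tau_{max}+1$ and by Lemma~\ref{lem:correct1} never drops below $\mathbf{dist}(r,y)\ge 0$, there are at most $\tau_{max}+1 = O(\tau) = O(n)$ calls $\textsc{Decrement}(\cdot,y)$ for each fixed $y\in V$; summing over $y$ gives $O(n^2)$ decrement iterations.

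For the removal iterations I would use that the number of $H.\textsc{Remove}$ operations over the whole sequence is at most the number of $H.\textsc{Insert}$ operations, since a removal iteration always deletes the tuple it just extracted, and each \textsc{Insert} introduces at most one element. Insertions into $H$ occur at exactly three sites: line~\ref{line:Hinsert1}, once per call to $\textsc{InsertEdge}$, hence at most $\binom{n}{2}=O(n^2)$ times; line~\ref{line:Hinsert2}, once per iteration of the loop on line~\ref{line:fndecrement}; and line~\ref{line:Hdec}, once per iteration of the loop on line~\ref{line:loop4}. By Lemma~\ref{lem:iscan}, the total time spent in the loops on lines~\ref{line:fndecrement} and~\ref{line:loop4} is $O(n^2\log^4 n/\epsilon)$, which dominates the total number of iterations of those loops and therefore the number of insertions at lines~\ref{line:Hinsert2} and~\ref{line:Hdec}. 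Hence the total number of \textsc{Insert} calls, and so the number of removal iterations, is $O(n^2\log^4 n/\epsilon)$.

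Adding the two classes, the loop on line~\ref{line:while} runs $O(n^2)+O(n^2\log^4 n/\epsilon)=O(n^2\log^4 n/\epsilon)$ times in total, and multiplying by the $O(1)$ per-iteration cost incurred outside of $\textsc{Decrement}$ yields the claim. The proof is essentially bookkeeping; the only point requiring care is making the enumeration of the three insertion sites into $H$ exhaustive and recognising that the two nontrivial ones are already charged by the scanning bound of Lemma~\ref{lem:iscan}, so no new potential-function argument is needed here.
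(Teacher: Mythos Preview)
Your proof is correct and follows essentially the same approach as the paper: split the loop iterations into those that decrement a distance estimate (at most $n\cdot\tau_{max}=O(n^2)$) and those that remove from $H$ (bounded by the total number of insertions into $H$, which you enumerate at the same three sites). The only cosmetic difference is that the paper re-derives the insertion count at lines~\ref{line:Hinsert2} and~\ref{line:Hdec} directly from Claim~\ref{clm:iscans}, whereas you invoke Lemma~\ref{lem:iscan} as a black box; your route is slightly cleaner since the loop-time bound already dominates the iteration count.
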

\begin{proof}
We first observe that on line \ref{line:Hinsert1} we only add newly inserted edges into $H$. Thus, we add a total of at most $n^2$ edges to $H$ during line \ref{line:Hinsert1}. The remaining edges are only inserted into $H$ during $i$-scans in the lines \ref{line:Hinsert2} and \ref{line:Hdec}. Since by claim \ref{clm:iscans} there are at most $O(\tau \log^2 n/2^i)$ $i$-scans of $\texttt{Cache}_u$ for any $u\in V$, and each $i$-scan is over at most $O(2^in \log n / \epsilon \tau)$ elements, similarly to the preceding lemma, we conclude that we iterate over at most $O(n \log^4 n/ \epsilon)$ elements in all $i$-scans of $\texttt{Cache}_u$ over all values of $i$, for a fixed $u\in V$. Since each element that we iterate over in each $i$-scan can only result in the insertion of a single edge into $H$, we can bound the total number of insertions into $H$ over the entire course of the algorithm by $O(n^2 \log^4 n / \epsilon)$. Further, we observe that each iteration of the loop in line \ref{line:while} either removes an edge from the set $H$, or decrements a distance estimate, we can bound the total number of iterations of the loop by $O(n^2 \log^4 n / \epsilon) + n \tau_{max} = O(n^2 \log^4 n / \epsilon)$. Since each iteration takes $O(1)$ time, ignoring calls to $\textsc{Decrement}(u,v)$, the lemma follows.
\end{proof}

We are now ready to finish the running time analysis.

\begin{lemma} \label{lma:mainResultUnweightedUpperBound}
The total running time of a data structure $\mathcal{E}_{\tau}$ is $O(n^2 \log^5 n / \epsilon)$.
\end{lemma}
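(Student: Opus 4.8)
The plan is to assemble the total running time of $\mathcal{E}_\tau$ from the contributions of the pieces already analyzed, checking that nothing falls through the cracks. The update procedure consists of: (a) the bookkeeping in \textsc{InsertEdge} that precedes the main loop (inserting into $\texttt{Cache}_u$ and $\texttt{Expire}_v$, the initial call to \textsc{IncreaseHeaviness}); (b) the loop on line \ref{line:while} together with the recursively invoked \textsc{Decrement}; (c) the loops inside \textsc{Decrement}, \textsc{IncreaseHeaviness} and \textsc{DecreaseHeaviness} that scan $\mathcal{FN}(u)$ or a prefix of $\texttt{Cache}_u$; and (d) the per-call overhead of \textsc{IncreaseHeaviness}/\textsc{DecreaseHeaviness} that is not inside a scanning loop (chiefly the computation of $i'$ on lines \ref{line:calc1}, \ref{line:i'dec}, \ref{line:calc2}, \ref{line:sethd}). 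First I would invoke \Cref{lem:iscan} to bound the cost of all the $i$-scanning loops (lines \ref{line:fndecrement}, \ref{line:loop1}, \ref{line:loop2}, \ref{line:loop3}, \ref{line:loop4}) by $O(n^2\log^4 n/\epsilon)$, and \Cref{lem:insertEdge} to bound the loop on line \ref{line:while} (excluding \textsc{Decrement}) by $O(n^2\log^4 n/\epsilon)$.

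Next I would bound the remaining two sources. For (a): each edge is inserted once, each insertion does $O(1)$ work before the main loop plus one call to \textsc{IncreaseHeaviness}; there are at most $n^2$ edges, so this contributes $O(n^2)$ plus the \textsc{IncreaseHeaviness} cost which is folded into the next item. For (d): the cost of a single \textsc{IncreaseHeaviness}/\textsc{DecreaseHeaviness} call \emph{outside} its scanning loops is dominated by evaluating the $\argmax$ expressions; using a standard running count of $|\texttt{Cache}_u[j,\tau_{max}]|$ for each relevant $j$ (updated in $O(1)$ amortized time whenever a vertex moves in $\texttt{Cache}_u$, which itself happens only during $i$-scans or edge insertions, hence $O(n^2\log^4 n/\epsilon)$ moves total by \Cref{clm:iscans}), each $\argmax$ can be computed in $O(\log n)$ time. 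The number of \textsc{IncreaseHeaviness}/\textsc{DecreaseHeaviness} calls is itself controlled: \textsc{IncreaseHeaviness} is called once per edge insertion and once per decrement of $\widetilde{\mathbf{dist}}_{\tau}(r,u)$ that hits a multiple of $2^{h(u)}$ (an $h(u)$-scan, of which there are $O(n^2\log^2 n/\epsilon)$ total summing over vertices by \Cref{clm:iscans}), and \textsc{DecreaseHeaviness} is called at most once per vertex leaving a forward neighborhood, i.e.\ at most once per move in a $\texttt{Expire}$ array, which is again bounded by the number of moves in $\texttt{Cache}$ arrays, $O(n^2\log^4 n/\epsilon)$. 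So (d) contributes $O(n^2\log^5 n/\epsilon)$.

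Finally, for the \textsc{Decrement} calls themselves: each call decrements some $\widetilde{\mathbf{dist}}_{\tau}(r,v)$ by one, and each estimate can only decrease $\tau_{max} = O(\tau)$ times, so there are $O(n\tau) = O(n^2)$ calls; the work inside \textsc{Decrement} outside of its scanning loops (lines \ref{line:fndecrement} and \ref{line:exp}) is $O(1)$ per call except for the iteration over $\texttt{Expire}_v[\widetilde{\mathbf{dist}}_{\tau}(r,v)+1]$, whose total size over all calls is again bounded by the number of $\texttt{Expire}$-array insertions, hence $O(n^2\log^4 n/\epsilon)$. Adding all the pieces, the total running time of $\mathcal{E}_\tau$ is $O(n^2\log^5 n/\epsilon)$, as claimed. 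I expect the main obstacle to be the careful accounting in item (d): arguing that the per-call overhead of the heaviness-adjustment procedures is genuinely $O(\log n)$ (which requires maintaining the suffix-size counters correctly and charging their maintenance to $\texttt{Cache}$ moves) and that the number of such calls is bounded by quantities already controlled by \Cref{clm:iscans} and \Cref{lem:sizefn}, since these calls are triggered indirectly through $\texttt{Expire}$ lists rather than directly by the update sequence. Once that is pinned down, summing over the $O(\log n)$ choices of $\tau$ gives the claimed $\tilde{O}(n^2/\epsilon)$ total and hence \Cref{thm:ContributionIncrSSSPResult} for unweighted graphs.
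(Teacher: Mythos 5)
Your proof follows the same decomposition and accounting as the paper's: it invokes \Cref{lem:iscan}, \Cref{lem:insertEdge}, and \Cref{clm:iscans} for the scanning loops and the while-loop, bounds the $\texttt{Expire}$-loop iterations by the number of $\texttt{Expire}$-array insertions, and pays an $O(\log n)$ per-call overhead for $\textsc{IncreaseHeaviness}$/$\textsc{DecreaseHeaviness}$ (your suffix-count scheme is a variant of the paper's binary tree over $\texttt{Cache}_u$). The only small imprecision is the parenthetical claim of $O(1)$-amortized counter updates; an honest $O(\log n)$ per move still fits within the $O(n^2\log^5 n/\epsilon)$ budget, so the conclusion stands.
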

\begin{proof}
We begin with the procedure $\textsc{InsertEdge}(u,v)$. We note that this procedure takes constant time except for the {\bf while} loop, if we ignore the calls to $\textsc{IncreaseHeaviness}(u)$. Since there are at most $n^2$ edge insertions, the running time can be bounded by $O(n^2)$. Further, the total running time spend in the {\bf while} loop starting in line \ref{line:while} excluding calls to $\textsc{Decrement}(u,v)$ is bounded by $O(n^2 \log^4 n / \epsilon)$ by lemma \ref{lem:insertEdge}.

Next, let us bound the total time spent in procedure $\textsc{Decrement}(u,v)$. We first observe that the loop on line \ref{line:exp} iterates through each vertex $w$ in $\texttt{Expire}_u[\widetilde{\mathbf{dist}}_{\tau}(r,u)+1]$ removing each $w$ from $\texttt{Expire}_v$. Clearly, the number of iterations over the course of the entire algorithm can be bounded by the total number of times a vertex is inserted into $\texttt{Expire}_v$ over all $v$. Since these insertions occur in the loops starting in lines \ref{line:loop2} and \ref{line:loop4}, we have by lemma \ref{lem:iscan}, that the time spend on the loop starting in line \ref{line:exp} is bound by $O(n^2 \log^4 n/\epsilon)$. Further, ignoring subcalls, each remaining operation in the procedure $\textsc{Decrement}(u,v)$ takes constant time. We further observe that since each invocation of the procedure $\textsc{Decrement}(u,v)$ decreases a distance estimate, the procedure is invoked at most $n\tau_{max} = O(n^2)$ times. Thus, we can bound the total time spent in procedure $\textsc{Decrement}(u,v)$ by $O(n^2 \log^4 n/\epsilon)$.

For the remaining procedures $\textsc{IncreaseHeaviness}(u)$ and $\textsc{DecreaseHeaviness}(u)$, we note that the calculations of $i'$ and $h(u)$ on lines~\ref{line:calc1}, \ref{line:calc2}, \ref{line:i'dec}, and \ref{line:sethd} can be implemented in $O(\log n)$ time using a binary tree over the elements of array $\texttt{Cache}_u$ for each $u \in V$. 
We observe that both procedures receive at most $O(n^2 \log^4 n / \epsilon)$ invocations and since we already bounded the running times of the loops that call them. Thus, the total update time excluding loops can be bound by $O(n^2 \log^5 n/\epsilon)$. The loops take total time $O(n^2 \log^4 n/\epsilon)$ by Lemma \ref{lem:iscan}. 
This concludes the proof.
\end{proof}

Using $\log n$ data structures, one for each distance threshold $\tau$, we obtain the following result.

\begin{theorem} \label{thm:mainResultUnweightedUpperBound}
There is a deterministic algorithm that given an unweighted directed graph $G=(V,E)$, subject to edge insertions, a vertex $r\in V$, and $\epsilon>0$, maintains for every vertex $v$ an estimate $\widetilde{\mathbf{dist}}(r,v)$ such that after every update $\mathbf{dist}(r,v) \leq \widetilde{\mathbf{dist}}(r,v) \leq (1+\epsilon)\mathbf{dist}(r,v)$, and runs in total time $O(n^2 \log^6 n/\epsilon)$. A query for the approximate shortest path from $r$ to any vertex $v$ can be answered in time linear in the number of edges on the path.
\end{theorem}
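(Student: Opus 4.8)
The plan is to obtain \Cref{thm:mainResultUnweightedUpperBound} by running the lazy ES-tree of the previous section at every relevant distance scale and combining the outputs. Concretely, for each power of two $\tau \in \{1,2,4,\dots,2^{\lceil \lg n\rceil}\}$ I would maintain one independent copy $\mathcal{E}_\tau$ of the data structure, and I would maintain explicitly, for every $v$, the combined estimate $\widetilde{\mathbf{dist}}(r,v) = \min_\tau \widetilde{\mathbf{dist}}_\tau(r,v)$, where each sentinel value $\tau_{max}+1$ is read as $\infty$; this combined value is then available in $O(1)$ time.

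Correctness I would prove in two steps. The lower bound is immediate: by \Cref{lem:correct1} (equivalently the lower-bound half of \Cref{lem:correct}) we have $\mathbf{dist}(r,v)\le\widetilde{\mathbf{dist}}_\tau(r,v)$ for every $\tau$ at every stage, so the minimum over scales is still an overestimate. For the upper bound, fix $v$ and a stage: if $v=r$ then every estimate is $0=\mathbf{dist}(r,v)$, and if $v$ is unreachable then every estimate is $\infty=\mathbf{dist}(r,v)$; otherwise let $\tau^\ast$ be the unique power of two with $\mathbf{dist}(r,v)\in[\tau^\ast,2\tau^\ast)$, and apply the upper-bound half of \Cref{lem:correct} to scale $\tau^\ast$, which gives $\widetilde{\mathbf{dist}}_{\tau^\ast}(r,v)\le(1+\epsilon)\mathbf{dist}(r,v)$. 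Since $\widetilde{\mathbf{dist}}(r,v)\le\widetilde{\mathbf{dist}}_{\tau^\ast}(r,v)$, the claimed sandwich follows.

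For the running time, \Cref{lma:mainResultUnweightedUpperBound} bounds each $\mathcal{E}_\tau$ by $O(n^2\log^5 n/\epsilon)$, and there are only $O(\log n)$ scales, so the copies together cost $O(n^2\log^6 n/\epsilon)$; keeping the running minimum adds at most an $O(\log n)$ overhead per change of any $\widetilde{\mathbf{dist}}_\tau(r,v)$ (a small heap over the scales, or a scan over them), and summing the number of such changes over all scales and vertices is $O(n^2)$, so this overhead is subsumed. For the path query I would additionally store in each $\mathcal{E}_\tau$ a predecessor pointer $p_\tau(v)$, set to the tail of the edge last used inside $\textsc{Decrement}$ to decrement $\widetilde{\mathbf{dist}}_\tau(r,v)$ (recall that every decrement of any estimate passes through $\textsc{Decrement}$, whether it was triggered by an insertion, a decrement of an in-neighbour, or a heaviness change). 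On a query for $v$ one follows these pointers from $v$ back to $r$ in the scale attaining the overall minimum, returning a valid path from $r$ to $v$ of length at most $\widetilde{\mathbf{dist}}(r,v)\le(1+\epsilon)\mathbf{dist}(r,v)$, so the traversal runs in time $O(|P|)$ where $P$ is the returned path.

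Essentially everything above is routine once the lemmas of the previous two sections are in place, and the heavy lifting (the correctness bound of \Cref{lem:correct} and the per-scale running-time bound of \Cref{lma:mainResultUnweightedUpperBound}) has already been done; the one spot here that needs a short additional argument is the path-query invariant, namely that at all times $(p_\tau(v),v)\in E$ and $\widetilde{\mathbf{dist}}_\tau(r,v)\ge\widetilde{\mathbf{dist}}_\tau(r,p_\tau(v))+1$. The edge membership is immediate; the inequality holds the instant $p_\tau(v)$ is assigned inside $\textsc{Decrement}$ (this is exactly the observation used in \Cref{lem:correct1}), and it persists afterwards because $\widetilde{\mathbf{dist}}_\tau(r,p_\tau(v))$ is monotonically non-increasing while every subsequent decrement of $\widetilde{\mathbf{dist}}_\tau(r,v)$ reassigns $p_\tau(v)$. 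Since estimates are non-negative integers with $\widetilde{\mathbf{dist}}_\tau(r,r)=0$ and $\widetilde{\mathbf{dist}}_\tau(r,u)\ge 1$ for $u\ne r$, chasing the pointers reaches $r$ in at most $\widetilde{\mathbf{dist}}(r,v)$ steps. Stitching the two correctness inequalities, the summation over the $O(\log n)$ scales, and this path-query argument together proves the theorem.
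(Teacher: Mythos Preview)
Your proposal is correct and follows the same approach as the paper: the paper's own proof is essentially the single sentence ``Using $\log n$ data structures, one for each distance threshold $\tau$, we obtain the following result,'' and you have simply spelled out the details---the minimum over scales, the application of \Cref{lem:correct} at the right $\tau^\ast$, and the $O(\log n)\cdot O(n^2\log^5 n/\epsilon)$ summation. Your predecessor-pointer argument for path queries is a reasonable fleshing-out of something the paper does not justify explicitly.
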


\section{Weighted graphs}

Finally, we show how to extend our data structure to deal with weights $[1, W]$. We first show how to handle edge weights with a linear dependency in the running time on $W$. Then, we employ a standard edge-rounding technique \cite{raghavan1987randomized, cohen1998fast, zwick2002all, bernstein2009fully, madry2010faster, bernstein2016maintaining} that decreases the dependency in $W$ to $\log W$ (we will use a set-up most similar to \cite{bernstein2016maintaining}).

\begin{lemma} \label{lma:mainResultweightedUpperBound}
There is a deterministic algorithm that given a weighted directed graph $G=(V,E, w)$, subject to edge insertions and weight changes, with weights in $[1, W]$, a vertex $r\in V$, and $\epsilon>0$, maintains for every vertex $v$ an estimate $\widetilde{\mathbf{dist}}(r,v)$ such that after every update $\mathbf{dist}(r,v) \leq \widetilde{\mathbf{dist}}(r,v) \leq (1+\epsilon)\mathbf{dist}(r,v)$ if $\mathbf{dist}(r,v) \in [\tau, 2\tau)$ for some $\tau \leq n$, and runs in total time $O(n^2 \log^6 n/\epsilon^{1.5})$. A query for the approximate shortest path from $r$ to any vertex $v$ can be answered in time linear in the number of edges on the path.
\end{lemma}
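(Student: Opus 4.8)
The plan is to reduce, \emph{one distance scale at a time}, the weighted problem to an integer-weighted instance of polynomially bounded depth, and then observe that the lazy ES-tree of \Cref{lma:mainResultUnweightedUpperBound} already works on such instances essentially unchanged. Fix a power of two $\tau\le n$; it suffices to build a structure that $(1+\epsilon)$-approximates $\mathbf{dist}(r,v)$ for every $v$ with $\mathbf{dist}(r,v)\in[\tau,2\tau)$, run one such structure for each $\tau\in\{1,2,4,\dots,n\}$, and return for each $v$ the minimum of the $O(\log n)$ estimates (exactly as the unweighted algorithm combines its $\mathcal{E}_\tau$'s). For a fixed $\tau$ let $g=\lceil \epsilon\tau/n\rceil$ and maintain a graph $G_\tau$ on the same vertex set by (i) never inserting into $G_\tau$ an edge whose current weight is $\ge 2\tau$ (such an edge cannot lie on an $r$-to-$v$ path of weight $<2\tau$, since all weights are at least $1>0$, and this remains true under further insertions), and (ii) giving each remaining edge $e$ the integer weight $w_\tau(e)=\lceil w(e)/g\rceil\in[1,\lceil 2\tau/g\rceil]=[1,O(n/\epsilon)]$.

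\textbf{Error accounting.} Any $r$-to-$v$ path of weight in $[\tau,2\tau)$ in $G$ is simple, hence has at most $n$ edges, so its $G_\tau$-weight lies in $[\tau/g,\,2\tau/g+n]\subseteq[D,3D)$ for $D:=\lceil n/\epsilon\rceil$; conversely $g\cdot w_\tau(e)\ge w(e)$ for every $e$, so $g\cdot\mathbf{dist}_{G_\tau}(r,v)$ never underestimates $\mathbf{dist}_G(r,v)$ and overestimates it by at most $n g=O(\epsilon\tau)$ along any $\le n$-edge path. Consequently a $(1+\epsilon)$-approximation of $\mathbf{dist}_{G_\tau}(r,v)$ for the $v$ with $\mathbf{dist}_{G_\tau}(r,v)\in[D,3D)$, multiplied by $g$, is a $(1+O(\epsilon))$-approximation of $\mathbf{dist}_G(r,v)$ for every $v$ with $\mathbf{dist}_G(r,v)\in[\tau,2\tau)$; rescaling $\epsilon$ by a constant recovers the claimed factor. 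Note that $W$ has disappeared: $G_\tau$ has weights and relevant depth bounded by $\poly(n)/\epsilon$ regardless of $W$.

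\textbf{The core step: integer weights in the lazy ES-tree.} Run the data structure of \Cref{lma:mainResultUnweightedUpperBound} on the integer-weighted graph $G_\tau$ with depth parameter $\tau_{max}=\Theta(D)$ ($O(1)$ copies with thresholds $D$ and $2D$ cover $[D,3D)$). Weights enter the algorithm only through the relaxation that decreases $\widetilde{\mathbf{dist}}_{\tau}(r,y)$ to $\widetilde{\mathbf{dist}}_{\tau}(r,x)+w_\tau(x,y)$; as already noted for the unweighted algorithm, a decrease of $\Delta$ is realized as $\Delta$ unit applications of $\textsc{Decrement}$. With this convention Invariants~\ref{inv:correct} and \ref{inv:time}, \Cref{lem:cachedecreases}, \Cref{lem:sizefn} and the correctness proof \Cref{lem:correct} go through verbatim, and the running-time analysis (\Cref{clm:iscans}, \Cref{lem:iscan}, \Cref{lem:insertEdge}) is unchanged up to replacing the depth $\tau_{max}=O(n)$ by $\tau_{max}=\Theta(n/\epsilon)$: the products $(\#\,i\text{-scans})\times(\text{cost per }i\text{-scan})$ still telescope, and the auxiliary $n\tau_{max}$ terms (number of $\textsc{Decrement}$ calls; the additive $n\tau_{max}$ in the $\textsc{InsertEdge}$ loop count) become $O(n^2/\epsilon)$, still dominated by $O(n^2\,\poly(\log n)/\epsilon)$. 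Edge weight \emph{decreases} are handled by registering only decreases that shrink an edge weight by a factor $(1+\epsilon)$ — the residual staleness adds only another additive $O(\epsilon\tau)$ per path — and treating each registered decrease as the insertion of a parallel copy of the edge with the new weight; parallel edges are harmless in an incremental graph (shortest paths take the cheapest copy), and the number of such pseudo-insertions is $O(\log_{1+\epsilon}\tau)=O(\epsilon^{-1}\log n)$ per edge, i.e.\ $O(n^2\epsilon^{-1}\log n)$ in total, which the $\textsc{InsertEdge}$ accounting absorbs.

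\textbf{Wrap-up and main obstacle.} Summing over the $O(\log n)$ scales $\tau\in\{1,\dots,n\}$, taking the minimum estimate per vertex, and bookkeeping the extra polylogarithmic and $\epsilon^{-1/2}$ overhead incurred by the coarser weight-decrease registration and by covering the widened range $[D,3D)$ yields total update time $O(n^2\log^6 n/\epsilon^{3/2})$; path queries are answered as in the unweighted case, by following the shortest-path tree of the scale attaining the minimum and expanding each scaled edge back to its original edge. The step I expect to be the real work is re-verifying the delicate heaviness / $i$-scan running-time argument once the depth has grown from $O(n)$ to $\Theta(n/\epsilon)$ and once weight-decrease updates are present — in particular confirming that the interaction between the rounding granularity $g$, the heaviness thresholds $(2^i-1)\tfrac{n\log n}{\epsilon\tau}$, and the pseudo-insertions costs no more than the stated $\epsilon^{-1/2}$ factor; everything else, including the final application of the weight-ratio reduction of \Cref{sec:reductionWeights} to pass from $\log(nW)$ scales to a $\log W$ factor and obtain \Cref{thm:ContributionIncrSSSPResult}, is routine.
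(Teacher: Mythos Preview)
Your reduction (drop edges heavier than $2\tau$, scale by $g=\lceil\epsilon\tau/n\rceil$) is fine and indeed makes $W$ disappear, but the claim that Invariant~\ref{inv:correct} and the correctness proof \Cref{lem:correct} ``go through verbatim'' on the resulting integer-weighted instance is where the argument breaks. In the unweighted proof of Invariant~\ref{inv:correct}, the forward direction $\widetilde{\mathbf{dist}}_{\tau}(r,v)-\widetilde{\mathbf{dist}}_{\tau}(r,u)\le 2^{h(u)}$ for $v\in\mathcal{FN}(u)$ relies on the fact that relaxing $(u,v)$ sets $\widetilde{\mathbf{dist}}_{\tau}(r,v)\le\widetilde{\mathbf{dist}}_{\tau}(r,u)+1$. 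With integer weights in $[1,O(n/\epsilon)]$ you only get $\widetilde{\mathbf{dist}}_{\tau}(r,v)\le\widetilde{\mathbf{dist}}_{\tau}(r,u)+w_\tau(u,v)$, so the invariant degrades to a bound of $2^{h(u)}+w_\tau(u,v)$. This kills \Cref{lem:correct}: there the key step picks $v_i\in\mathcal{FN}(r_i)\cap\mathcal{FN}(t_i)$ and uses the invariant twice to conclude $\widetilde{\mathbf{dist}}_{\tau}(r,t_i)-\widetilde{\mathbf{dist}}_{\tau}(r,r_i)\le 2\cdot 2^{h}$. But $v_i$ is \emph{not} on the shortest path, and the edges $(r_i,v_i)$, $(t_i,v_i)$ can have weight up to $\Theta(n/\epsilon)$ in your scaled graph, so the per-pair error is no longer $O(2^h)$ and the telescoping that gives total additive error $\epsilon\tau$ collapses. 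Realising the weight-$w$ relaxation as $w$ unit decrements does not help: the algorithm's state (caches, forward neighborhoods, heaviness) lives on the original vertices, not on virtual subdivision vertices.

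The paper confronts exactly this obstruction and fixes it by an algorithmic change rather than a reduction: when computing heaviness level $i$ it only counts out-neighbors $v$ with $w(u,v)<2^i$ in $\texttt{Cache}_u[\texttt{CacheIndex}(u,2^i),\tau_{max}]$, so that membership in $\mathcal{FN}(u)$ automatically bounds $w(u,v)$ by $2^{h(u)}$ and Invariant~\ref{inv:correct} is restored up to a factor $2$. A second change---scanning an edge $(u,v)$ with $v\notin\mathcal{FN}(u)$ only every $\epsilon\,w(u,v)$ decrements---is what produces the $\epsilon^{-1.5}$ (after rebalancing) rather than the $\epsilon^{-2}$ you would otherwise incur. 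Your scaling step alone does not avoid either modification; if you add them, you are essentially reproducing the paper's proof with an unnecessary pre-scaling. The weight-rounding idea you use is the content of the \emph{next} lemma in the paper (the $\tau_{hop},\tau_{depth}$ structure), where it serves to reduce large $W$ to the already-proved weighted lemma, not to bypass the weighted analysis.
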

\begin{proof}
Let us first describe an almost correct approach to modify the data structure $\mathcal{E}_{\tau}$ for unweighted graphs to handle edge weights and maintains shortest-paths of weight at most $\tau_{max}$ as follows: we change the if-condition in line  \ref{line:ifThenhLoop} from $\widetilde{\mathbf{dist}}_{\tau}(r,v)>\widetilde{\mathbf{dist}}_{\tau}(r,u)+1$ to $\widetilde{\mathbf{dist}}_{\tau}(r,v)>\widetilde{\mathbf{dist}}_{\tau}(r,u)+ w(u,v)$ and similarly in line \ref{line:ifThenDecrement} to $\widetilde{\mathbf{dist}}_{\tau}(r,y)>\widetilde{\mathbf{dist}}_{\tau}(r,x)+w(x,y)$. Further, we need to adapt indices in $\texttt{Cache}_u$ and $\texttt{Expire}_u$ accordingly to reflect the additional offset which is straightforward. 

Unfortunately, whilst the running time can still be bound as before, the correctness of the algorithm could no longer be guaranteed since invariant \ref{inv:correct} is no longer true. Recall that the invariant states that if $v\in \mathcal{FN}(u)$ then $|\widetilde{\mathbf{dist}}_{\tau}(r,v) - \widetilde{\mathbf{dist}}_{\tau}(r,u)| \leq 2^{h(u)}$. However, a vertex $u$ might now have a vertex $v$ in its forward-neighborhood at large distance but have a large edge weight on $(u,v)$ so it can not decrease its distance estimate.

However, a rather simple fix suffices: whenever we compute the heaviness $i$ by setting it to 
\[
\argmax_{i \in \mathbb{N}}\{| \texttt{Cache}_u[\texttt{CacheIndex}(u,2^i), \tau_{max}]| \geq (2^i-1) \frac{6 n \log n }{\epsilon\tau} \}
\]
we now no longer want to take all vertices in $\texttt{Cache}_u[\texttt{CacheIndex}(u,2^i), \tau_{max}]$ into account but only all neighbors $v$ such that the edge $(u,v)$ is of edge weight less than $2^i$ (observe that heaviness levels now depend on different sets). Similarly, we use the restriction on the neighbors for reducing heaviness, and it is only these edges that we then consider to be in the forward neighborhood. It is straightforward to conclude that invariant \ref{inv:correct} can be restored to guarantee that $v\in \mathcal{FN}(u)$ implies $|\widetilde{\mathbf{dist}}_{\tau}(r,v) - \widetilde{\mathbf{dist}}_{\tau}(r,u)| \leq 2 * 2^{h(u)}$. 

However, this change alone is not enough to get good running time. We also stipulate that each edge $(u,v)$ is scanned only every $\epsilon w(u,v)$ levels if $v \not\in \mathcal{FN}(u)$. It is straightforward to verify that this might induce a multiplicative error of $(1+\epsilon)$ on every edge. However, by rescaling $\epsilon$ by a constant factor, we can still conclude that by the restored invariant \ref{inv:correct}, the proof \ref{lem:correct} works as before and guarantees a $(1+\epsilon)$ multiplicative error on distances in $[\tau, 2\tau)$. 

Now let us bound the running time where we only bound the running time induced by scanning the weighted edges as described above since the bounds on the remaining running time carry seamlessly over from lemma \ref{lma:mainResultUnweightedUpperBound}. It can be verified that invariant \ref{inv:time} is still enforced for our new definition. Thus, if the heaviness is $h(u) = i$ for some vertex $u$, then the number of edges of weight in $(2^j, 2^{j+1}]$ for $j > i$ is at most $(2^j-1) \frac{12 n \log n }{\epsilon\tau}$. Since we scan these edges only every $\epsilon 2^j$ decrements of $\widetilde{\mathbf{dist}}_{\tau}(r,u)$, we obtain that the total running time required for all edge scans can be bound by
\[
    \sum_{v \in V} \sum_{j \in (0, \log n]} O\left( \left(2^j \frac{ n \log n }{\epsilon\tau} \right) \left(\frac{\tau_{max}}{\epsilon2^j} \right) \right) = O(n^2 \log^2 n /\epsilon^2).
\]
We point out that rebalancing terms slightly, we can reduce the $\epsilon$ dependency to $1/\epsilon^{1.5}$.
\end{proof}

We now prove the following lemma which implies \Cref{thm:ContributionIncrSSSPResult} as a corollary by maintaining a data structure $\mathcal{E}_{\tau_{hop}, \tau_{depth}}$ with parameters $\tau_{hop} = 2^i$ and $\tau_{depth}=2^j$, for every $i \in [0, \log n)$ and $j \in [0, \log nW)$. We point out that we define \emph{length} subsequently as the number of edges on a path and \emph{weight} as the sum over all edge weights on a path.

\begin{lemma}
There is a deterministic data structure $\mathcal{E}_{\tau_{hop}, \tau_{depth}}$ that given a weighted directed graph $G=(V,E, w)$, subject to edge insertions and weight changes, with weights in $[1, W]$, that takes parameters $\tau_{hop}$ and $\tau_{depth} \geq \tau_{hop}$, a vertex $r\in V$, and $\epsilon>0$, and maintains for every vertex $v$ with some shortest path in $G$ consisting of $[\tau_{hop}, 2\tau_{hop})$ edges and of weight in $[\tau_{depth}, 2\tau_{depth})$, an estimate $\widetilde{\mathbf{dist}}(r,v)$ such that after every update $\mathbf{dist}(r,v) \leq \widetilde{\mathbf{dist}}(r,v) \leq (1+\epsilon)\mathbf{dist}(r,v)$ and runs in total time $O(n^2 \log^8 n/\epsilon^{2.5})$. A query for the approximate shortest path from $r$ to any vertex $v$ can be answered in time linear in the number of edges on the path.
\end{lemma}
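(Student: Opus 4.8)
The plan is to combine the weighted incremental SSSP data structure from \Cref{lma:mainResultweightedUpperBound} with a standard weight-rounding (scaling) trick, but applied so that the rounding parameter depends on the hop-count scale $\tau_{hop}$ rather than on the total distance. First I would recall that \Cref{lma:mainResultweightedUpperBound} gives us, for any $\tau \le n$, a structure $\mathcal{E}_\tau$ maintaining $(1+\epsilon)$-approximate distances for vertices $v$ with $\mathbf{dist}(r,v) \in [\tau, 2\tau)$ in total time $O(n^2 \log^6 n/\epsilon^{1.5})$, and crucially this structure works on \emph{weighted} graphs but its running-time analysis there has no dependence on $W$ because it only depends on $\tau_{max} = 2\tau(1+\epsilon) \le O(n)$. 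The issue is that $\mathbf{dist}(r,v)$ can be as large as $nW$, so naively we would need $\tau$ up to $nW$, which blows up $\tau_{max}$ and hence the running time by a factor of $W$. The fix is to observe that if the shortest $r$-to-$v$ path has only $[\tau_{hop}, 2\tau_{hop})$ \emph{edges} and weight in $[\tau_{depth}, 2\tau_{depth})$, then the ``granularity'' we need is $\approx \epsilon \tau_{depth}/\tau_{hop}$: rounding every edge weight up to the nearest multiple of $g := \lceil \epsilon \tau_{depth}/\tau_{hop} \rceil$ distorts the path weight by at most $2\tau_{hop} \cdot g \le O(\epsilon \tau_{depth}) \le O(\epsilon \mathbf{dist}(r,v))$, which is absorbed into the multiplicative $(1+\epsilon)$ factor after rescaling $\epsilon$ by a constant.

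The key steps, in order: (1) Define the rescaled graph $G'$ with edge weights $w'(u,v) = \lceil w(u,v)/g \rceil$, which are integers in $[1, W/g + 1]$; maintain $G'$ incrementally alongside $G$ (each insertion or weight decrease in $G$ translates to one insertion/weight-decrease in $G'$). (2) On $G'$, run the weighted data structure of \Cref{lma:mainResultweightedUpperBound} with depth threshold $\tau := \Theta(\tau_{depth}/g) = \Theta(\tau_{hop}/\epsilon)$, which is at most $O(n/\epsilon)$ since $\tau_{hop} \le n$; note that a path of weight in $[\tau_{depth}, 2\tau_{depth})$ in $G$ with $[\tau_{hop}, 2\tau_{hop})$ edges has weight in roughly $[\tau, O(\tau/\epsilon)]$ in $G'$, so after at most $O(\log(1/\epsilon))$ sub-instances of distance-threshold $\tau, 2\tau, 4\tau, \dots$ we cover the whole relevant range of $G'$-distances. (3) Translate the distance estimate back: $\widetilde{\mathbf{dist}}(r,v) := g \cdot \widetilde{\mathbf{dist}}'(r,v)$, and verify $\mathbf{dist}_G(r,v) \le \widetilde{\mathbf{dist}}(r,v) \le (1+\epsilon)\mathbf{dist}_G(r,v)$ using that $\mathbf{dist}_{G'}(r,v) \le \lceil \cdot \rceil$-rounding gives $\mathbf{dist}_G(r,v)/g \le \mathbf{dist}_{G'}(r,v) \le \mathbf{dist}_G(r,v)/g + 2\tau_{hop}$, hence $g\cdot\mathbf{dist}_{G'}(r,v) \le \mathbf{dist}_G(r,v) + 2\tau_{hop} g \le \mathbf{dist}_G(r,v) + O(\epsilon \tau_{depth})$. (4) For the path query, retrieve the approximate path in $G'$ and return the corresponding path in $G$ (same edge set), observing its $G$-weight is at most its $G'$-weight times $g$, which is at most $(1+\epsilon)\mathbf{dist}_G(r,v)$.

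For the running time: each of the $O(\log(1/\epsilon))$ sub-instances runs in $O(n^2\log^6 n/\epsilon^{1.5})$ by \Cref{lma:mainResultweightedUpperBound} (applied on $G'$, whose $\tau_{max} = O(n/\epsilon)$ enters only the $\log$ factors), and we run the whole thing for each pair $(\tau_{hop}, \tau_{depth})$; but inside a single $\mathcal{E}_{\tau_{hop},\tau_{depth}}$ we only pay $O(\log(1/\epsilon)) \cdot O(n^2 \log^6 n/\epsilon^{1.5})$, and since $\log(1/\epsilon) \le \log n$ and re-examining the scan-frequency argument from \Cref{lma:mainResultweightedUpperBound} we pick up at most one extra $\log$ and one extra $1/\epsilon$ from the finer granularity, giving the stated $O(n^2\log^8 n/\epsilon^{2.5})$. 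I expect the main obstacle to be the bookkeeping in step (2)--(3): making precise why restricting attention to a specific \emph{hop-count} scale $\tau_{hop}$ is exactly what licenses the choice of granularity $g$ (so that the additive error $2\tau_{hop} g$ is controlled), and checking that the internal invariants of the weighted structure — in particular Invariant~\ref{inv:correct} and the heaviness-based scan-frequency bounds — still go through when edges are further rounded; but since the structure of \Cref{lma:mainResultweightedUpperBound} already handles arbitrary weights in $[1,W]$, this should be essentially a re-parametrization rather than a new argument.
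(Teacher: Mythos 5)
Your proposal is correct and follows essentially the same route as the paper: round edge weights up to the nearest multiple of $g\approx \epsilon\tau_{depth}/\tau_{hop}$ (the paper calls this $\alpha$), rescale so that weights become integral, observe that hop-boundedness bounds the additive rounding error by $O(\epsilon\tau_{depth})$, and run the weighted incremental structure of the preceding lemma on the rescaled graph. The only cosmetic divergence is in how the extended distance range $[\tau_{hop}/\epsilon, O(\tau_{hop}/\epsilon)]$ in the rescaled graph is covered: you propose $O(\log(1/\epsilon))$ geometrically spaced sub-instances, while the paper keeps a single $\mathcal{E}_{\tau}$ with $\tau=\tau_{hop}$ and simply runs it to depth $8\tau_{hop}/\epsilon$, paying the $1/\epsilon$ factor directly; both bookkeeping choices are valid and yield the stated $O(n^2\log^8 n/\epsilon^{2.5})$ bound.
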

\begin{proof}
Let us start by defining some constant $\alpha = \frac{ \epsilon \tau_{depth} }{ \tau_{hop} }$ (we assume that $\alpha$ is integer by slightly perturbing $\epsilon$). Then, we let $G_{\alpha}$ be the graph $G$ after rounding each edge up to the nearest multiple of $\alpha$. We claim that for every vertex $t \in V$, for which we have a shortest path $\pi_{r,t}$ from $r$ to $t$ of length in $[\tau_{hop}, 2\tau_{hop})$ and weight in $[\tau_{depth}, 2\tau_{depth}]$, we have
\[
    w_{G_{\alpha}}(\pi_{r,t}) \leq w_G(\pi_{r,t}) \leq (1+2\epsilon)w_{G_{\alpha}}(\pi_{r,t}).
\]
To see this observe that each edge incurs additive error at most $\alpha$. However, since the path is of length at most $2 \tau_{hop}$, the additive error has to be bound by $2\alpha \tau_{hop} =  2\frac{ \epsilon \tau_{depth} }{ \tau_{hop} } \tau_{hop} = 2 \epsilon \tau_{depth}$. But since the path $\pi_{r,t}$ is of weight at least $\tau_{depth}$, we have overall at most a $(1+3\epsilon)$-approximation and therefore by rescaling $\epsilon$ by a constant factor, the claim follows.

Next, we let $G^*_{\alpha}$ be the graph $G_{\alpha}$ where each edge is scaled down by factor $\alpha$ and note that weights are all integral and positive. We next claim that for every vertex $t \in V$, for which we have a shortest path $\pi_{r,t}$ from $r$ to $t$ of length in $[\tau_{hop}, 2\tau_{hop})$ and weight in $[\tau_{depth}, 2\tau_{depth}]$, we have
\[
    w_{G^*_{\alpha}}(\pi_{r,t}) \leq \tau_{hop}/\epsilon
\]
To see this, observe that the path $\pi_{r,t}$ in $G_{\alpha}$ has weight at most $(1+2\epsilon)2\tau_{depth}$ by our preceding claim. Thus, scaling it down by $\alpha$, the path has weight at most
\[
    (1+2\epsilon)2\tau_{depth} / \alpha = (1+2\epsilon)2\tau_{depth} \frac{ \tau_{hop} }{ \epsilon \tau_{depth} } = (1+2\epsilon)2 \tau_{hop} / \epsilon \leq 8 \tau_{hop} / \epsilon.
\]
in $G^*_{\alpha}$. It now remains to run a data structure $\mathcal{E}_{\tau}$ on $G^*_{\alpha}$ with $\tau = \tau_{hop}$ as described in Theorem \ref{lma:mainResultweightedUpperBound}, however run to depth $8\tau_{hop}/ \epsilon$ (instead of $\tau_{max}$ which increases the running time by an $1/\epsilon$ factor. We then forward for each vertex $t$, the distance estimate $\widetilde{\mathbf{dist}}_{\tau}(r,t)$ scaled up by $\alpha$. This concludes the lemma.
\end{proof}

\chapter{Conclusion}
\label{chap:conclusion}

In this thesis, we make substantial progress on the partially dynamic SSR, SCC and SSSP problems in various settings. 

In particular, we gave the first near-optimal algorithms for decremental SSR and SCC in general graphs, and the first near-optimal algorithms for partially dynamic $(1+\epsilon)$-approximate SSSP in very dense graphs. 

Further, we also give the first \emph{deterministic} data structures to improve upon the $O(mn)$ total update barrier in the hardest of all settings: partially dynamic SSSP. For the incremental setting, our data structure is even near-optimal for very dense graphs. In the decremental setting, where a $o(mn)$ total update time data structure was not even known for SSR and SCC, we give total update time $mn^{2/3+o(1)}$ for the decremental SSR and SCC and $n^{2+2/3+o(1)}$ update time for decremental SSSP\footnote{Much like in \Cref{chap:intro}, we assume in this discussion that $W$ is polynomial and $\epsilon$ is constant}.

These significant improvements of the state-of-the-art data structures also motivates the following questions and open problems:

\paragraph{Near-Linear Time Algorithms for Sparse Graphs.} While we give the first algorithms that are near-optimal for very dense graphs for partially dynamic SSSP meaning that there total update time is $\tilde{O}(n^2)$, this gives no improvement on very sparse graphs, i.e. graphs where $m = O(n)$. Even though \cite{nearOptDenseSSSP} and \cite{henzinger2014sublinear} also present data structures for sparse graphs with moderate improvements over the $O(mn)$-barrier, both data structures leave much to desire and designing data structures that are near-optimal for any graph density is a major open problem. 

We point out that even beating the $O(m\sqrt{n})$ barrier is a major open problem since all partially dynamic SSSP data structures designed for sparse graphs rely on hopset techniques such that the input graph $G$ is augmented by weighted graph $H$ where in $G \cup H$, every pair of vertices $u,v\in V$, have a path consisting of a sublinear number $h$ of edges that $(1+\epsilon)$-approximates the shortest path from $u$ to $v$ in $G$. While in undirected graphs, a hopset is given and maintained that has $h = n^{o(1)}$ which is then exploited to derive a $m^{1+o(1)}$ total update time algorithm, such a bound on $h$ in directed graphs is not possible. Currently, the best lower bounds for directed graphs achieve $h = \Omega(n^{1/17})$ \cite{hesse2003directed}. On the other hand, the best existential upper bound on $h$ in directed graphs is $\tilde{\Theta}(\sqrt{n})$ achieved by a rather trivial hopset construction (for an efficient construction of such a hopset we refer the reader to \cite{fineman2018nearly, liu2019parallel, cao2020improved}). Thus, breaking the $O(m\sqrt{n})$ barrier for partially dynamic SSSP in directed graphs would either require a departure from the hopset approach or an improvement on the existential properties of hopsets in directed graphs.

We also point out that there exists no near-linear time algorithm for the incremental SCC problem. Especially after learning that such an algorithm exists for the decremental version of the problem, it seems likely that such an algorithm exists. However, considerable effort was spend by a large number of researchers on the problem and any progress over the recent $\Otil(\min\{m^{4/3}, m\sqrt{n}\})$ total update time data structures \cite{bernstein2018incremental, bhattacharya2018improved} would receive enormous attention in the field.

\paragraph{Deterministic/ Adaptive Algorithms.} The current state of the art leaves a fundamental gap between the current state-of-the-art non-adaptive randomized data structures for partially dynamic SSR, SCC and SSSP and their deterministic/adaptive counterparts. Resolving this gap most likely requires major new techniques and might serve as a test bed to developing more general approaches to derandomize dynamic (directed) graph algorithms.

We believe that the problem of obtain deterministic/ adaptive decremental SSR and SCC data structures with near-linear update time might be the most natural direction to pursue since the randomized setting is understood. However, we point out that the techniques given in \Cref{chap:rand_scc} and \Cref{chap_deterministic_scc} seem rather incompatible and most likely, a completely new approach is required to achieve this goal.

The second of these problems that we want to emphasize is the decremental SSSP problem. As shown recently in the decremental setting \cite{Chuzhoy:2019:NAD:3313276.3316320}, decremental SSSP might be a feasible approach to solve complex flow problems. Obtaining a deterministic/ adaptive data structure that matches the update time of the randomized decremental SSSP data structure given in \Cref{chap_rand_decr_sssp} would most likely give an $\tilde{O}(n^2)$ time algorithm to solve the static Exact Maximum Flow problem. This would be a major breakthrough. Even under the consideration that other approaches (for example \cite{bipartiteMathching}) might achieve this running time before, it might still be of considerable interest since the reductions given in  \cite{Chuzhoy:2019:NAD:3313276.3316320} might allow for various more generalizations of the maximum flow problem to be solved efficiently. 

\paragraph{All-Pairs Shortest Paths.} Another problem of major interest is the partially dynamic $(1+\epsilon)$-approximate All-Pairs Shortest Path problem. Here, the decremental APSP problem is solved in \cite{bernstein2016maintaining} with $\tilde{O}(mn)$ total update time which is almost optimal since it matches the best running time for static "combinatorial" APSP algorithms up to subpolynomial factors. The incremental version of the problem is still not fully understood but was recently considered in \cite{karczmarz2019reliable}. 

Both algorithms are however heavily randomized. In \cite{karczmarz2020simple}, a folklore result for decremental APSP was given with total update time $\tilde{O}(n^3)$. But for very sparse graphs, no improvement over an almost trivial $\tilde{O}(mn^2)$ total update time bound has been given (for example using \cite{demetrescu2004new}). Matching the $\tilde{O}(mn)$ near-optimal total update time bound in the deterministic setting thus remains a major open problem.

\paragraph{Deterministic/ Adaptive Data Structures for Undirected Graphs.} Finally, there has been a long line of research \cite{bernstein2016deterministic, bernstein2017deterministic, bernstein2017deterministicweighted, Chuzhoy:2019:NAD:3313276.3316320, DetDecrementalSSSP, bernstein2020fully} that aims at obtaining a deterministic/ adpative near-linear $m^{1+o(1)}$ total update time algorithm for the partially dynamic SSSP problem in undirected graphs. For very dense graphs, such a data structure that works against an adaptive adversary was first given by \cite{Chuzhoy:2019:NAD:3313276.3316320} and is currently used in state-of-the-art algorithms to compute various flow problems in undirected graphs to an $(1+\epsilon)$-approximation. However, as pointed out in \cite{bernstein2017deterministic, DetDecrementalSSSP} breaking the $O(m\sqrt{n})$ barrier for sparse graphs likely requires a new set of techniques.

\pagebreak

\printbibliography[heading=bibintoc] 

\end{document}